\documentclass[11pt]{article}
\usepackage[margin=1in]{geometry}
\usepackage{fullpage}
\usepackage{tgtermes}
\usepackage[T1]{fontenc}
\usepackage[colorlinks,citecolor=blue,linkcolor=blue,urlcolor=black]{hyperref}
\usepackage{graphicx}
\usepackage{mathtools}
\usepackage{amsfonts,amsmath,amsthm,amssymb,dsfont}
\usepackage{xfrac,nicefrac}
\usepackage{mathdots}
\usepackage{bm,bbm}
\usepackage{url}
\usepackage{paralist}
\usepackage{enumerate}
\usepackage[normalem]{ulem}
\usepackage{xspace}
\xspaceaddexceptions{]\}}
\usepackage[capitalize]{cleveref}
\usepackage{comment}
\usepackage{tabu}
\usepackage{framed}
\usepackage{float,wrapfig}
\usepackage{tikz}
\usepackage[usenames,dvipsnames]{pstricks} 

\usepackage{enumitem}
\usepackage{adjustbox}
\usepackage{algorithm}
\usepackage[noend]{algpseudocode}

\usepackage{thmtools}
\usepackage{thm-restate}
\usepackage{tikzsymbols}

\usepackage{caption}
\usepackage{subcaption}

\usepackage{regexpatch}
\makeatletter
\xpatchcmd\thmt@restatable{%
\csname #2\@xa\endcsname\ifx\@nx#1\@nx\else[{#1}]\fi
}{%
\ifthmt@thisistheone
\csname #2\@xa\endcsname\ifx\@nx#1\@nx\else[{#1}]\fi
\else
\csname #2\@xa\endcsname[{Restated}]
\fi}{}{}
\makeatother

\algnewcommand{\algorithmicand}{\textbf{AND }}
\algnewcommand{\algorithmicor}{\textbf{OR }}
\algnewcommand{\algorithmicxor}{\textbf{XOR }}
\algnewcommand{\algorithmicnot}{\textbf{NOT }}
\algnewcommand{\OR}{\algorithmicor}
\algnewcommand{\AND}{\algorithmicand}
\algnewcommand{\XOR}{\algorithmicxor}
\algnewcommand{\NOT}{\algorithmicnot}
\algnewcommand{\var}{\texttt}

\algnewcommand{\algorithmicbreak}{\textbf{break}}
\algnewcommand{\Break}{\algorithmicbreak}

\usetikzlibrary{hobby}
\usetikzlibrary{decorations.markings}
\usetikzlibrary{arrows.meta, positioning}
\usetikzlibrary{quotes,angles}
\usepackage{xcolor}

\DeclarePairedDelimiter\abs{\lvert}{\rvert}%
\DeclarePairedDelimiter\norm{\lVert}{\rVert}%

\makeatletter
\let\oldabs\abs
\def\abs{\@ifstar{\oldabs}{\oldabs*}}
\let\oldnorm\norm
\def\norm{\@ifstar{\oldnorm}{\oldnorm*}}
\makeatother

\theoremstyle{plain}

\newtheorem{theorem}{Theorem}[section]
\newtheorem{lemma}[theorem]{Lemma}
\newtheorem*{lemma*}{Lemma}

\newtheorem{claim}[theorem]{Claim}
\newtheorem*{claim*}{Claim}
\newtheoremstyle{named}{}{}{\itshape}{}{\bfseries}{.}{.5em}{#3}
\theoremstyle{named}

\theoremstyle{definition}
\newtheorem{definition}[theorem]{Definition}
\newtheorem{observation}[theorem]{Observation}

\def\ShowAuthNotes{1}
\ifnum\ShowAuthNotes=1
\newcommand{\authnote}[2]{\ \\ \textcolor{red}{\parbox{0.9\linewidth}{[{\footnotesize {\bf #1:} { {#2}}}]}}\newline}
\else
\newcommand{\authnote}[2]{}
\fi
\newcommand{\xiao}[1]{}
\newcommand{\Aviad}[1]{}

\renewcommand{\epsilon}{\varepsilon}
\newcommand{\eps}{\varepsilon}
\renewcommand{\Pr}{\operatorname*{\mathbf{Pr}}}
\newcommand{\Ex}{\operatorname*{\mathbf{E}}}

\newcommand{\poly}{\operatorname{\mathrm{poly}}}
\newcommand{\polylog}{\poly\log}
\newcommand{\mA}{\mathcal{A}}

\newcommand{\ver}{{V}\xspace}
\newcommand{\e}{{E}\xspace}
\renewcommand{\tilde}{\widetilde}
\renewcommand{\hat}{\widehat}

\newcommand{\floor}[1]{\left\lfloor #1 \right\rfloor}
\newcommand{\ceil}[1]{\left\lceil #1 \right\rceil}
\newcommand{\round}[1]{\left\lfloor #1 \right\rceil}

\DeclareMathOperator{\md}{TD}
\DeclareMathOperator{\rd}{SD}
\DeclareMathOperator{\treemd}{\Wintertree TD}

\DeclareMathOperator{\treerd}{\Wintertree SD}

\DeclareMathOperator{\curveAviad}{Curv}
\newcommand{\vertex}[2]{\left\langle #1, #2 \right\rangle}
\makeatletter
\newsavebox{\@brx}
\newcommand{\llangle}[1][]{\savebox{\@brx}{\(\m@th{#1\langle}\)}%
  \mathopen{\copy\@brx\kern-0.5\wd\@brx\usebox{\@brx}}}
\newcommand{\rrangle}[1][]{\savebox{\@brx}{\(\m@th{#1\rangle}\)}%
  \mathclose{\copy\@brx\kern-0.5\wd\@brx\usebox{\@brx}}}
\makeatother
\newcommand{\vertexoriginal}[2]{\llangle #1, #2 \rrangle}

\DeclareMathOperator*{\sign}{sign}
\DeclareMathOperator*{\appr}{appr}
\DeclareMathOperator*{\apprew}{apprew}
\DeclareMathOperator*{\ed}{ED}
\DeclareMathOperator*{\lcs}{LCS}

\DeclareMathOperator{\ans}{ans}

\DeclareMathOperator{\est}{\hat{\ans}}
\DeclareMathOperator{\whest}{est}
\DeclareMathOperator{\estnaive}{bf}
\DeclareMathOperator{\ew}{ew}

\newcommand{\NC}{{\sf NC}\xspace}

\newcommand{\NTIME}{{\sf NTIME}\xspace}

\newcommand{\ESTSUM}{\textsc{EstSum}}

\newcommand{\COMESTED}{\textsc{ComputeEstimateED}}
\newcommand{\COMESTLCS}{\textsc{ComputeEstimateLCS}}

\newcommand{\mval}{0.109}

\definecolor{brightblue}{rgb}{0.1, 0.9, 1.0}   
\definecolor{deepred}{rgb}{0.6, 0.0, 0.0}      

\definecolor{tempred}{RGB}{192,0,0}
\definecolor{tempgreen}{RGB}{56,87,35}

\usepackage{fontawesome5}
\newcommand{\anchor}{\text{\faAnchor}}
\newcommand{\anc}{m}
	
\title{Approximations schemes for Edit Distance and LCS in quasi-strongly subquadratic time}
\author{  Xiao Mao\footnote{Supported by NSF CAREER Award CCF-2337901.} \\  Stanford University \\  \texttt{xiaomao@stanford.edu} \and Aviad Rubinstein\footnote{Supported by NSF CAREER Award CCF-2337901.}  \\  Stanford University \\  \texttt{aviad@stanford.edu}}

\begin{document}

	\setcounter{page}{0} \clearpage
	\maketitle
	\thispagestyle{empty}
    \begin{abstract}
We present novel randomized approximation schemes for the Edit Distance (ED) problem and the Longest Common Subsequence (LCS) problem that, for any constant $\epsilon>0$, compute a $(1+\epsilon)$-approximation for ED and a $(1-\epsilon)$-approximation for LCS in time $n^2 / 2^{\log^{\Omega(1)}(n)}$ for two strings of total length at most $n$. This running time improves upon the classical quadratic-time dynamic programming algorithms by a quasi-polynomial factor. 

Our results yield significant insights into fine-grained complexity: Firstly, for ED, prior work indicates that any exact algorithm cannot be improved beyond a few logarithmic factors without refuting established complexity assumptions [Abboud, Hansen, Vassilevska Williams, Williams, 2016]; our quasi-polynomial speed-up shows a separation the complexity of approximate ED from that of exact ED, even for approximation factor arbitrarily close to $1$. Secondly, for LCS, obtaining similar approximation-time tradeoffs via deterministic algorithms would imply breakthrough circuit lower bounds [Chen, Goldwasser, Lyu, Rothblum, Rubinstein, 2019]; our randomized algorithm demonstrates derandomization hardness for LCS approximation.\\\\

\end{abstract}

\newpage
\section{Introduction}
The {\em edit distance (ED)} between two strings $X$ and $Y$, henceforth denoted $\ed(X, Y)$, is the minimum number of character insertions, deletions, and substitutions required to transform $X$ to $Y$. The complexity of edit distance and its maximization counterpart\footnote{Technically, LCS is the complement of a variant of ED where substitutions are not allowed (or rather counted as 1 deletion + 1 insertion).}, {\em longest common subsequence (LCS)}, 
is a long-standing, central problem in algorithm design, with important applications, especially in bioinformatics. The simple quadratic-time dynamic programming solution is core material for undergraduate algorithms; this running time has been improved by one or two log factors (depending on the exact access model)~\cite{MP80,Grabowski16}. Despite extensive research for over half a century%
\footnote{Beating the quadratic time was proposed as an open problem as early as~\cite{knuth72}.}, no better algorithms for exact ED or LCS are known.

Indeed, in one of the flagship successes of fine-grained complexity,~\cite{BI18} showed that there are no exact algorithms for Edit Distance running in ``truly subquadratic'' time, namely it is impossible to improve by $n^{\delta}$ for any constant $\delta>0$, assuming fine-grained complexity hypothesis SETH (see also e.g., \cite{AWW14-Local_Alignment,ABW15-LCS,BK15-LCS}). Furthermore~\cite{AHWW16-polylog_shaved,AbboudB18} showed that improving even by a few log factors would refute other fine-grained complexity assumptions and imply new circuit complexity lower bounds.

Given the limitations of exact algorithms, and especially in light of fine-grained hardness results, approximation algorithms for Edit Distance have been extensively studied: A sequence of earlier works gave super-constant-factor approximation factors in $n^{1+o(1)}$ time~\cite{ADGIR03-edit, BEKMRRS03-edit-testing, BJKK04-edit, BES06-edit, OR07-edit,  AK10-edit-CC, AN10-edit-query, AKO10-edit, AO12-edit}. More recently, constant-factor approximation algorithms times were obtained by~\cite{CDGKS20,BEGHS21, Andoni20, BR20, GRS20, KS20,AN20}. The current state of the art has several algorithms with different trade-offs between running time and approximation (see \cref{table:tradeoffs}).  
While we cannot cover all the techniques used in previous work on approximate edit distance, at a high level we seem to have hit a fundamental obstacle: all known constant-factor approximation algorithms for ED rely on a triangle inequality technique that inherently incurs at least a 3-factor in the approximation~\cite{Rub18-blog}. No better-than-3-approximation algorithms are known with any non-trivial running time.

Remarkably, good multiplicative approximations for LCS are even more scarce than ED: despite significant attention (e.g., \cite{HSSS19,RS20,AkmalW21,RSSS23,BCD23,HeL23}), for worst-case instances the best previous known approximation factor is just barely sub-polynomial~\cite{Nosatzki21}. Furthermore, \cite{CGLRR19} shows that obtaining such approximation-time tradeoffs for approximate LCS with {\em deterministic} algorithms would imply new breakthrough circuit lower bounds%
\footnote{Specifically, Item 3 of Theorem 1.9 of~\cite{CGLRR19} says that if a deterministic algorithm for LCS can obtain even a much weaker $O(\polylog(N))$-approximation guarantee  in $N^2 / 2^{\omega(\log \log N)^3}$ time, then $\NTIME[2^{O(n)}]$ is not contained in non-uniform $\NC^1$.}. 

\paragraph{Main result}
We give approximation schemes for ED and LCS, i.e., for any constant $\eps>0$, we get a $(1+\eps)$-approximation for ED and a $(1-\eps)$-approximation for LCS. Our approximation schemes run in time 
\[
\frac{n^2}{2^{\log^{\Omega(1)}(n)}},
\]
for two strings with total length $n$. This running time represents a quasi-polynomial improvement over the classical quadratic-time algorithms.

\paragraph{Why this is interesting}
\begin{itemize}
    \item {\bf Algorithmic improvement:} The most obvious achievement of our main result is a quasi-polynomial improvement in running time. While the quantitative improvement is modest, in the field of fine-grained complexity quasi-polynomial~\cite{Williams18, LW17, AFK+24} and even smaller improvements (e.g., \cite{gawrychowski2019minimum, JX24, NewSSSP}) are often celebrated as significant steps towards pinpointing the complexity of fundamental problems (and approximate ED and LCS are certainly fundamental!).  More importantly, perhaps, this improvement shows the algorithmic potential of our new structural insights (see discussion below on ``tree total deviation''). 
    \item \textbf{Separation between $(1 + \eps)$-Approximate and Exact Computation (both ED and LCS):} For exact ED and LCS, prior work~\cite{AHWW16-polylog_shaved,AbboudB18} shows that even shaving off a few logarithmic factors is highly unlikely without refuting established fine-grained complexity assumptions. Although a sequence of approximation algorithms faster than this lower bound have recently been known~\cite{CDGKS20,BEGHS21, Andoni20, BR20, GRS20, KS20,AN20}, none of them achieve an approximation factor better than $3$. Therefore, subject to certain circuit complexity hardness assumptions (see~\cite{AHWW16-polylog_shaved,AbboudB18} for details), our result is the first to separate the complexity of $(1+\eps)$-approximation from exact computation of ED and LCS.
    \item \textbf{Separation Between Randomized and Deterministic Algorithms (LCS):} For LCS, obtaining similar approximation-time tradeoffs using \emph{deterministic} algorithms would imply breakthrough circuit lower bounds~\cite{CGLRR19} (see also~\cite{AB17-deterministic,AR18}). Indeed, many randomized approximation algorithms for both LCS and ED have been proposed, but it was not clear whether this randomization was necessary. Our algorithm is the first to break the frontier of deterministic lower bounds by using randomization, thereby demonstrating a fundamental separation: subject to certain circuit complexity hardness assumptions (see~\cite{CGLRR19}), randomized algorithms for approximate LCS are strictly faster than their deterministic counterparts.
\end{itemize}

\subsection*{Additional related Work}
In addition to aforementioned works on approximate ED and LCS on worst case inputs, many works also give improved guarantees under assumptions on the inputs such as small edit distance%
\footnote{The careful reader may notice that some of these citations do not explicitly mention ``small edit distance'' in the title, but they give fast algorithms for solving a ``gap edit distance'' problem of distinguishing e.g.~edit distance $k$ vs $k^{1.5}$; this kind of gap problem is only non-trivial when $k \ll n$.}~\cite{CGK16, GKS19,  KociumakaS20, BCR20, BCFN22b, BCFN22a, GKKS22, KS23, RSSS23}, semi-random strings~\cite{AK12,Kuszmaul19,BSS20}, or edit distance to error correcting codewords~\cite{HRS19}. See also e.g., \cite{KR09-edit, CGKK18, HSS19, BGS21, KPS21, CFHJLRSZ21, JNW21, BCFN22, KS24, BCFK24} for additional related works in other models. 

Ideas related to the breakthrough constant factor approximation algorithm for edit distance have also been applied to related problems, such as tree edit distance \cite{SS22} and, more recently, Frechet distance \cite{cheng2025constant}. For tree edit distance, we leave it for future work whether our techniques can also apply to improve \cite{BoroujeniGHS19}'s state of the art (nearly) quadratic running time for $(1+\eps)$-approximation. For the Frechet distance, existing hardness of approximation results rule out polynomial savings assuming SETH \cite{Bringmann14}, and it remains open whether quasi-polynomial saving is possible.

Our work gives a quasi-polynomial improvement in running times for central problems of interest in fine-grained complexity. Other notable examples of fine-grained complexity with quasi-polynomial improvements include All-Pairs Shortest Paths \cite{Williams18} (equivalent to the Min-plus matrix multiplication), online matrix-vector multiplication \cite{LW17}, and combinatorial boolean matrix multiplication \cite{AFK+24}.

\begin{table}
\centering
    \begin{tabular}{c|c|c}
    Approximation factor & Running time & Reference \\ \hline
    $\sqrt{n}$ & $O(n)$ & Folklore (building on~\cite{LMS98})\\
    $n^{1/3+o(1)}$ & $\tilde{O}(n)$
    & \cite{BES06-edit}\\
    $2^{O(\sqrt{\log(n)\log\log(n)})}$ & $n \cdot 2^{O(\sqrt{\log(n)\log\log(n)})}$  & \cite{AO12-edit}\\
    $1/\eps$ & $O(n^{1+\delta(\eps)})$ & \cite{AN20}\\
    $3+o(1)$ & $n^{1.6+o(1)}$ & \cite{GRS20}\\ \hline
    $1+o(1)$ & $n^2 / 2^{\log^{\Omega(1)}(n)}$ & {\bf this work}
    \end{tabular}
    \caption{State of the art trade-offs of approximation ratios vs running time for ED} \label{table:tradeoffs}
\end{table}




\section{High-level Technical Overview}\label{sec:techniques} 
First, note that if the actual ED or LCS is very small, say $\eps n$ for some $\eps = 2 ^ {-\log ^ {\Omega(1)}(n)}$, there are exact algorithms with running time $O(\eps n^2)$ for both problems. 
Since our goal is to obtain algorithms with a running time of $n ^ 2 / 2 ^ {\log ^ {\Omega(1)}(n)}$, we henceforth focus on the case where the ED or LCS is at least $\eps n$ for $\eps = 2 ^ {-\log ^ {O(1)}(n)}$.

Before we describe our main techniques for ED and LCS, we begin with a simple observation about sums of sequences and their sub-sequences (\cref{sub:sketchobs}). This observation plays a  crucial role in our algorithm, and analogous results have been applied to other areas of research, (see, for example, \cite{APred13}\footnote{The authors are grateful to Moses Charikar for bringing this connection to our attention.}, \cite{peng2025}). 
We then setup the notation with a standard reduction to optimal path problems on a 2D grid (\cref{sub:overview-grid}).
In \cref{sub:overview-additive} we explain the core algorithmic idea, which suffices to get an additive approximation scheme (i.e., approximate ED or LCS to within $\pm o(n)$). Finally in \cref{sub:overview-multiplicative} we extend those ideas to obtain multiplicative approximation schemes. 

\subsection{An Observation on Tree Total Deviation} \label{sub:sketchobs}


One of our core ideas is a simple but powerful observation about sums of sequences and their sub-sequences. Suppose that we have a length-$n$ binary sequence $A = (a_1, a_2, \ldots, a_{n})$ for some $n$ that is the power of some number $M$. We define the ($M$-way) {\em tree total deviation} as a measure of the total ``unevenness'' of $A$ across scales corresponding to blocks of size $M, M ^ 2, \ldots, M ^ {\log_M{n}}$. 

In more detail, to define the tree total deviation, let $S := \log_M{n}$ be the number of scales. For each scale $s \in [S]$, we go through all $n / M ^ {S - s + 1}$ intervals $(l, r]$ of form $\left(M ^ {S - s + 1} \cdot k, M ^ {S - s + 1} \cdot (k + 1)\right]$. Fixing some interval $(l,r]$, for each $0 \le i \le M - 1$, let $\sigma_i = \sum\limits_{l + iM ^ {S - s} < x \le l + (i + 1)M ^ {S - s}}{a_x}$. Namely, if we partition $a_{(l, r]}$ into $M$ equal parts, then $\sigma_i$ is the total sum of the $i$-th part. 
Let $\sigma = \sum_{l < x \le r}{a_x}$. The contribution from interval $(l, r]$ to the tree total deviation is equal to
$$\sum_i{\left|\sigma_i - \sigma / M\right|},$$
which is a measurement of how ``uneven'' the sequence $\{\sigma_0, \sigma_1, \ldots, \sigma_{M - 1}\}$ is.

Since $A$ is binary, it is easy to see that the local unevenness at an interval $(l, r]$ is $O(r - l)$. Thus the total unevenness on a particular scale is $O(n)$. Thus clearly the overall total unevenness (i.e., tree total deviation) is upper bounded by $O(n\log_M{n})$. 

We observe, however, that for a sequence to be maximally uneven on a particular scale $s$, the density of 1's has to be maximally different (i.e., close to either $0$ or $1$) at each block of width $M ^ {s - 1}$, and, intuitively, such density requirement makes it more difficult for the sequence to be maximally uneven on the remaining smaller scales. For the simplest example, suppose $n = 4$ and $M = 2$. We have two scales corresponding to blocks of size $2$ and $4$. The largest total deviation on the larger scale is achieved by $A = \{0, 0, 1, 1\}$ or $A = \{1, 1, 0, 0\}$, but for both sequences the total deviations on the smaller scale become zero. We prove that this is not a coincidence: 
\begin{quote}
(Ideal and informal version of \cref{lemma:totalmeandeviation})  For all sequences with each entry being $O(1)$ in absolute value, the overall total unevenness is roughly bounded by $n\sqrt{\log_M{n}}$, i.e., roughly $\sqrt{\log_M{n}}$ smaller than the na\"ive $O(n\log_M{n})$ upper bound. Moreover, if the total sum of the sequence is $\eps n$ and $\eps$ is at least $2 ^ {-\polylog(n)}$, then the total unevenness is roughly bounded by $\eps n \sqrt{\log_M{n}}$.
\end{quote}


\paragraph{Proof Sketch} We now give a sketch of the proof for $M = 2$, which contains all the techniques we need for proving the more general case but is much less tedious. The key idea is to decompose the total tree deviation into the contribution of each element, and then bound the contribution from an average element using a probabilistic argument. 

Towards the goal of decomposing the total tree deviation to individual elements' contributions, we first transform the sequence into another sequence with the same total tree deviation. Suppose we have a sequence $A$ with each entry being $O(1)$ with total sum $\eps n$. 
Let $S := \log_2{n}$ be the number of scales. For each scale $s \in [S]$, we go through all $n / 2 ^ {S - s + 1}$ intervals $(l, r]$ of form $\left(2 ^ {S - s + 1} \cdot k, 2 ^ {S - s + 1} \cdot (k + 1)\right]$. Fix a scale $s$. For every interval $(l,r]$ on scale $s$, we divide it into two halves $(l, (l + r) / 2]$ and $((l + r) / 2, r]$. We let $\sigma_L := \sum_{l < x \le (l + r) / 2}{a_x}$ be the sum on the left half, and $\sigma_R := \sum_{(l + r) / 2 < x \le r}{a_x}$ be the sum on the right half. In the case where $M = 2$, we can see that the contribution from the interval to the tree total deviation is
$$\left|\sigma_L - (\sigma_L + \sigma_R) / 2\right|+\left|\sigma_R - (\sigma_L + \sigma_R) / 2\right|=\left|\sigma_L - \sigma_R\right|.$$
In other words, the local unevenness is simply the difference between the sums of the two halves. Notice that if we swap the two halves to obtain a new sequence, the total tree deviation does not change: \begin{quote}
    $(l, r]$ is the only interval where sums of the two halves are swapped (which does not change their difference) --- for other intervals, the sums of the two halves remain identical.
\end{quote} We can apply this swap to every applicable interval. Therefore, without loss of generality, we can assume that $\sigma_L < \sigma_R$, and the contribution from the interval to the tree total deviation is
$$\sigma_R - \sigma_L = \left(\sum_{l < x \le (l + r) / 2}{-a_x}\right) + \left(\sum_{(l + r) / 2 < x \le r}{+a_x}\right).$$
Thus, for every $x \in (l, r]$, it contributes $-a_x$ if $x$ is in the first half, and $+a_x$ if $x$ is in the second half. Since $(l, r]$ is a scale-$s$ interval, the sign of $x$'s contribution depends only on the $s$-bit binary representation of $(x - 1)$: on scale-$s$, $x$ contributes $-a_x$ if $(x - 1)$'s $s$-th bit is $0$, and $+a_x$ if $(x - 1)$'s $s$-th bit is $1$. Summing up the total contribution from each index $x$ for each scale $s$, we have
$$\textrm{tree total deviation} = \sum_{1 \le x \le n}{a_x\left(\text{(number of 1's in $(x - 1)$ in binary) - (number of 0's in $(x - 1)$ in binary)}\right)}.$$
From well-known concentration bounds, for all but a small ``tail'' fraction of $x \in \{1,\dots, n\}$, the numbers of 0's and 1's in $(x - 1)$ in binary do not differ by too much more than $\sqrt{\log_2{n}}$. Therefore, if the total sum of the sequence is $\eps n$ and $\eps$ is not too small for the tail to become significant, 
the total unevenness is roughly bounded by $\eps n \sqrt{\log_2{n}}$.

\subsection{Setup and Notation (a Fairly Standard Reduction to Shortest/Longest Path on a 2D Grid)} \label{sub:overview-grid}
We reduce ED to a somewhat more general problem of finding the shortest path on a 2D grid\footnote{Do not confuse our 2D grid graphs with the very-closely-related but not identical {\em alignment graph}~\cite{CGMW21, AKW23, gorbachev2024bounded}, which is not rotated.} with diagonal short-cuts
(\cref{def:edgridoriginal} or \cref{fig:grid-basic}), where the weight of going from $\vertexoriginal{u}{v}$ to $\vertexoriginal{u}{v+1}$ or $\vertexoriginal{u+1}{v}$ is always $1$ (corresponding to insertion or deletion), and the weight of a {\em short-cut edge} from $\vertexoriginal{u}{v}$ to $\vertexoriginal{u+1}{v+1}$ is either $0$ (corresponding to a match) or $1$ (corresponding to a substitution). The goal is to find the length (sum of weights) of the shortest path from $\vertexoriginal{0}{0}$ to $\vertexoriginal{n}{n}$. 
This shortest path problem corresponds to the standard DP algorithm for ED. 

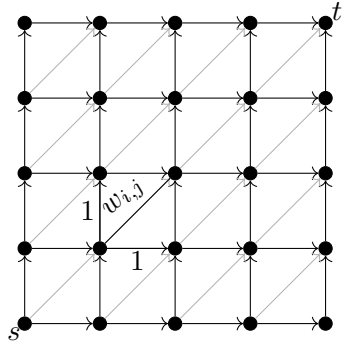
\begin{figure}[t]
\centering
\caption{Edit distance as a shortest path on 2D grid (standard)}

\hfill
\begin{minipage}[c]{0.48\textwidth}
\centering

\begin{tikzpicture}[scale=1, every node/.style={circle,draw,fill=black,inner sep=0pt,minimum size=5pt}]
\foreach \x in {1,...,5}{
\foreach \y in {1,...,5}{
\node (v\x\y) at (\x,\y) {};
}
}
\foreach \x in {1,...,4}{
\pgfmathtruncatemacro{\xx}{\x+1}
\foreach \y in {1,...,5}{
\draw[->] (v\x\y) -- (v\xx\y);
}
}
\foreach \x in {1,...,5}{
\foreach \y in {1,...,4}{
\pgfmathtruncatemacro{\yy}{\y+1}
\draw[->] (v\x\y) -- (v\x\yy);
}
}
\foreach \x in {1,...,4}{
\pgfmathtruncatemacro{\xx}{\x+1}
\foreach \y in {1,...,4}{
\pgfmathtruncatemacro{\yy}{\y+1}
\draw[->, gray!60] (v\x\y) -- (v\xx\yy);
}
}
\node[draw=none,fill=none,below left=2pt] at (v11) {$s$};
\node[draw=none,fill=none,above right=2pt] at (v55) {$t$};
\draw (v22) -- (v32) node[midway, below, draw=none,fill=none] {$1$};   
\draw (v22) -- (v23) node[midway, left,  draw=none,fill=none] {$1$};   
\draw (v22) -- (v33) node[midway, sloped, above, yshift=-3pt, draw=none,fill=none] {$w_{i,j}$}; 
\end{tikzpicture}

\end{minipage}
\begin{minipage}[c]{0.18\textwidth}
\centering
\begin{align*}
w_{i,j} &= \begin{cases}
    0 & X[i] = Y[j]\\
    1 & X[i] \neq Y[j].
\end{cases}
\end{align*}
\end{minipage}
\hfill
\label{fig:grid-basic}
\end{figure}

Similarly, we reduce LCS to a somewhat more general problem of finding the longest path on a 2D grid with diagonal short-cuts (\cref{def:lcsgridoriginal}), where the weight of going from $\vertexoriginal{u}{v}$ to $\vertexoriginal{u}{v+1}$ or $\vertexoriginal{u+1}{v}$ is always $0$, and the weight of a {\em short-cut edge} from $\vertexoriginal{u}{v}$ to $\vertexoriginal{u+1}{v+1}$ is either $0$ (no match) or $1$ (match). The goal is to find the length of the longest path from $\vertexoriginal{0}{0}$ to $\vertexoriginal{n}{n}$. 
This longest path problem corresponds to the standard DP algorithm for LCS.

We now use a 45\textdegree-rotated  $2^{-\mathrm{poly} \log n}$"system (see \cref{fig:coordinates}) to describe 2D grids with the transformation $\vertexoriginal{u}{v} \rightarrow \vertex{u + v}{v - u + n}$ (\cref{def:edgrid}). We use $\vertex{x}{y}$ when referring to the rotated coordinated system and refer to the x- and y-coordinates defined with respect to the first and second entries of $\vertex{x}{y}$.
In this notation, we are looking for the shortest/longest path from $\vertex{0}{n}$ to $\vertex{2n}{n}$. We use {\em column} to refer to a subset $\vertex{x}{\cdot}$ of the vertices, and {\em row} for $\vertex{\cdot}{y}$. It is easy to see that any $\vertex{0}{n}$-to-$\vertex{2n}{n}$ path visits exactly one vertex of each column.
A path $p$ in this notation can be given as a function $p:\{0,\dots.2n\}\rightarrow \{0,\dots.2n\}$ mapping each x-coordinate to the corresponding y-coordinate. %
Importantly, such a rotation transform ensures that adjacent vertices the path visits differ in their rows by at most 1 (i.e., $\abs{p(i) - p(i - 1)} \le 1$), which will be crucial to bounding the curvature of a path.

\begin{figure}[h!]
\caption{Convenient coordinate system}\label{fig:coordinates}
\begin{center}
\includegraphics[width=0.9\textwidth]{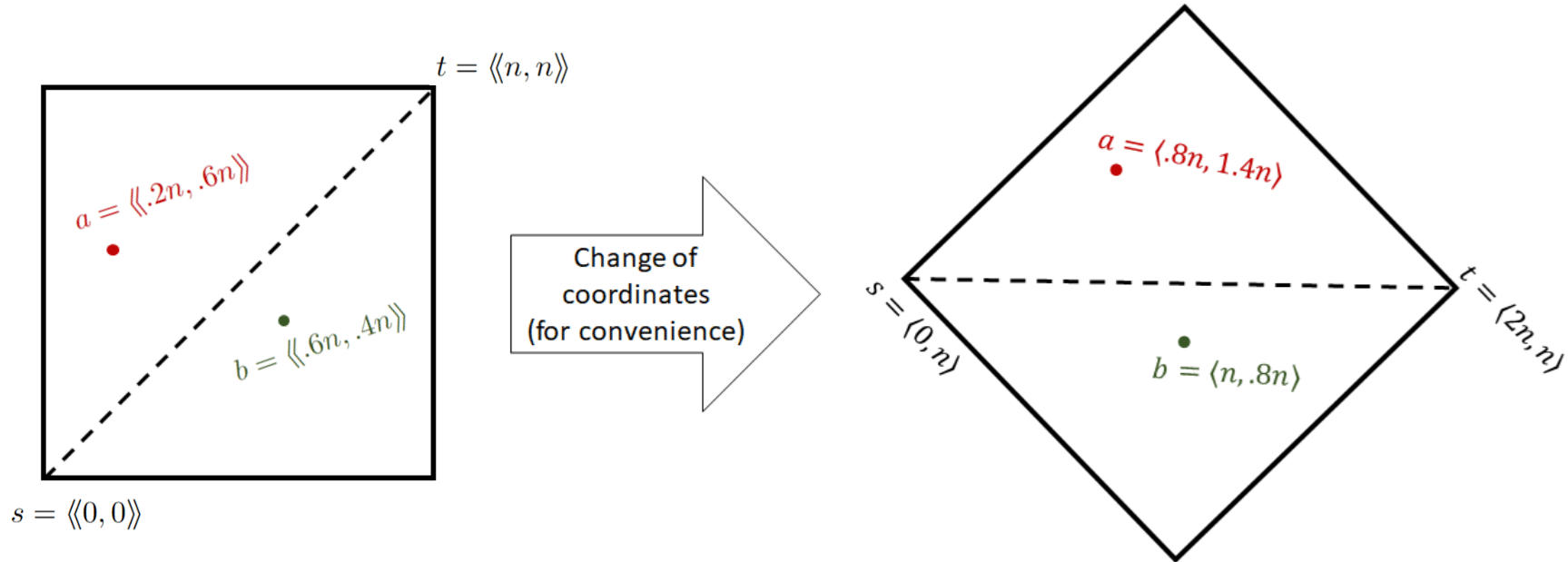}    
\end{center}
\end{figure}

\subsection{The Main Idea: Sample and Round} \label{sub:overview-additive}

To begin the discussion of our approaches, notice that it is possible to compute the shortest/longest path on the grid graphs exactly in a divide-and-conquer fashion: Suppose for the sub-problem on $(l, r]$, we want to know the boundary-to-boundary length of the shortest/longest path between $\vertex{l}{y_l}$ and $\vertex{r}{y_r}$ for every pair of rows $(y_l, y_r)$. For a branching factor $M$, we divide the sub-problem into $M$ equal parts $({\anc}_i, {\anc}_{i+1}]$ for ${\anc}_i = l+i\cdot\frac{r-l}{M}$ and solve them recursively%
\footnote{In the actual algorithm we only recurse until reaching a slightly sub-linear size problem (see ``rounding threshold for columns'' in Definitions~\ref{def:edgridoriginal} and~\ref{def:lcsgridoriginal})  and then use the standard DP in sub-quadratic time for smaller sub-problems.}. To find the optimal sub-path from $\vertex{l}{y_l}$ and $\vertex{r}{y_r}$, we look for the best \emph{anchors} $(p({\anc}_1), p({\anc}_2), \ldots, p({\anc}_{M - 1}))$ where $p({\anc}_i)$ is the row the path visits in column ${\anc}_i$ (i.e., the unique cell $\vertex{{\anc}_i}{p({\anc}_i)}$ that the sub-path visits). More formally, to compute the length of the shortest path%
\footnote{The computation for longest path follows analogously by replacing $\min$ with $\max$. We will highlight the difference between the setting when we discuss going from additive to multiplicative factor approximation.}
between $\vertex{l}{y_l}$ and $\vertex{r}{y_r}$, denoted by $\ans[l, y_l, r, y_r]$, we can use the following recursion:
\begin{equation} \label{eq:estsketch}
\ans[l, y_l, r, y_r] := \min\limits_{\hat{Y}}\sum\limits_{i = 1} ^ M{\ans[{\anc}_{i - 1}, \hat{y}_{i - 1}, {\anc}_i, \hat{y}_i]},
\end{equation}
where the minimization is over every anchor path defined by some sequence $\overline{Y} = (\overline{y}_i)$ that satisfies 
\begin{itemize}
    \item $\overline{y}_0 = y_l$ and $\overline{y}_M = y_r$;
    \item For $1 \le i \le M$, $\abs{\overline{y}_i - \overline{y}_{i - 1}} \le \frac{r - l}{M}$.
\end{itemize}

At a high level, the algorithm for constant additive error is very simple: we sub-sample the columns%
\footnote{For the sake of the overview, we slightly abuse notation and talk about discarding an interval of (consecutive) columns, by which we simply mean setting all the edge weights between columns in this interval to $0$.} of the (rotated) grid, discarding all but a $2^{-\log^{\Omega(1)}(n)}$-fraction (this is where we get our running time saving). Specifically, starting from $(0, 2n]$, we recursively partition the x-axis into equal-width intervals, with a branching factor $M := \polylog(n)$. Ideally, this gives rise to $S := \log_M(2n)$ {\em scales} in the divide-and-conquer recursion tree.
Each scale $s \in [S]$ contains all intervals of width $2n / M ^ {S - s}$ on the same level of the recursion tree. 
Each scale $s \in [S]$ is sampled i.i.d.~with probability $1/\log ^ c_M(n)$ for some constant $0 < c < 1$ into the set of {\em active scales} $\tilde{S}$. For each interval on an active scale, we randomly discard half of its sub-intervals, pretending that all edges weights in those sub-intervals are $0$. In total, we discard all but $2^{-|\tilde{S}|} = 2^{-\log^{\Omega(1)}(n)}$-fraction of the x-axis.

\begin{figure}[h!]\label{fig:subsampling}
\caption{Sub-sampling}
\begin{center}
\includegraphics[width=0.8\textwidth]{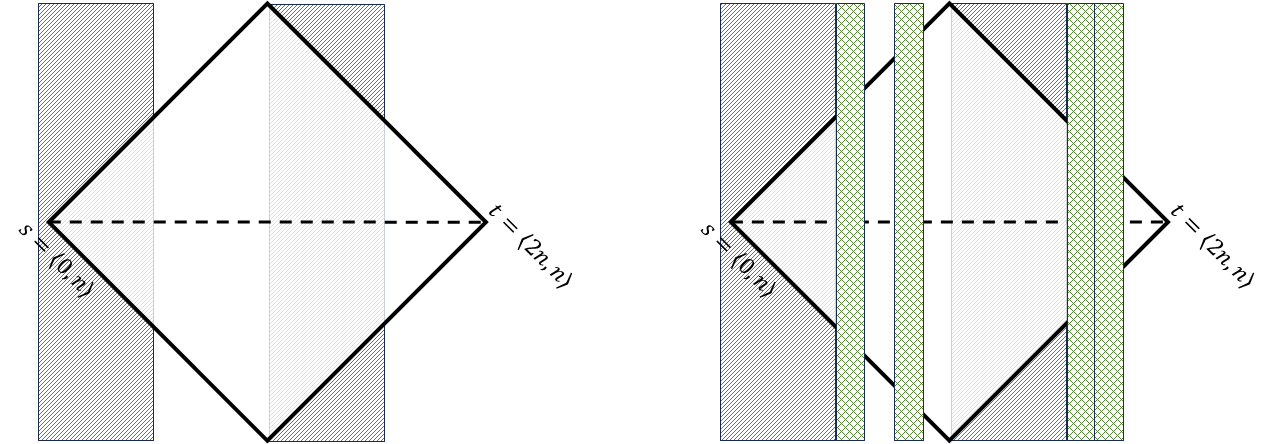}  \\    
\end{center}
{\small Illustration of sub-sampling columns with branching factor $M=4$. 
On the left, we have $\tilde{S} = \{S\}$, and {\color{blue} diagonal-shaded} intervals are discarded. On the right, we have $\tilde{S} = \{S - 1, S\}$, and {\color{olive} diamond-grid-shaded} intervals are also discarded.}
\end{figure}

\begin{figure}[htbp]
\caption{Overfitting Issue With Na\"ive Sub-sampling Approach for ED} \label{fig:diagrams}
\input{diagrams_.tex}
\end{figure}
Given a fixed good path, it is not hard to show that we can, with high probability, accurately estimate its true length by simply dividing its length in the surviving columns by the probability of survival, or more formally, we can change \eqref{eq:estsketch} to the following:
\begin{equation} \label{eq:estsketchnaive}
\ans[l, y_l, r, y_r] := 2\min\limits_{\hat{Y}}\sum\limits_{i = 1} ^ M{\eta_i\ans[{\anc}_{i - 1}, \hat{y}_{i - 1}, {\anc}_i, \hat{y}_i]},
\end{equation}
where each $\eta_i \in \{0, 1\}$ is uniformly sampled, and the $i$-th sub-interval is discarded if and only if $\eta_i = 0$.

However, the algorithm may hallucinate other paths that overfit the samples, as illustrated in \cref{fig:diagrams}. To handle this, we first observe that if there were only one candidate path, we would have obtained a good estimate. In general the algorithm should consider exponentially many candidate paths. But our divide-and-conquer algorithm considers the path one scale at a time, and we show that, at the granularity of a typical scale, it indeed suffices to consider just a single candidate path without losing too much precision.

The key observation driving our algorithm is that we can also restrict our attention to paths of bounded ``curvature.''
Specifically, consider an interval $(l,r]$ with sub-intervals $({\anc}_i, {\anc}_{i+1}]$ for ${\anc}_i = l+i\cdot\frac{r-l}{M}$. To reduce the possibility of hallucinating a path that overfits the samples, we reduce the number of candidate sequences of anchors $\left(p({\anc}_1), p({\anc}_2), \ldots, p({\anc}_{M - 1})\right)$ that the path $p$ can have to just one single ``straight-line'' sequence, ideally one where for every $i$ we have%
\footnote{Here and throughout, we use the standard notation of $\round{\cdot}$ for round-to-nearest-integer, with ties e.g.~rounded down.}
$p({\anc}_i) = \round{p(l) + i\frac{p(r) - p(l)}{M}}$, or more formally, we modify \eqref{eq:estsketchnaive} to the following:
\begin{equation*} 
\ans[l, y_l, r, y_r] := 2\sum\limits_{i = 1} ^ M{\eta_i\ans\left[{\anc}_{i - 1}, \round{p(l) + (i - 1)\frac{p(r) - p(l)}{M}}, {\anc}_i, \round{p(l) + i\frac{p(r) - p(l)}{M}}\right]}.
\end{equation*}
Namely, instead of going through all possible sequences of anchors, we force the anchors to resemble a straight line. 
An illustration of this candidate restriction is shown in \cref{fig:diagram_curves}.
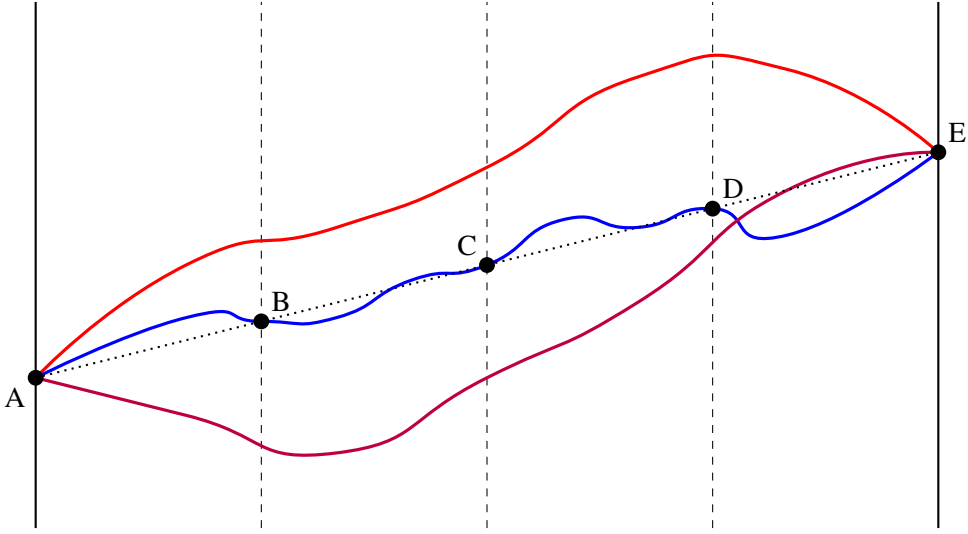
\begin{figure}[htbp]
\centering
\begin{tikzpicture}[scale=1]

  \draw[thick] (0,-2) -- (0,5);      
  \draw[dashed, thin] (3,-2) -- (3,5); 
  \draw[dashed, thin] (6,-2) -- (6,5); 
  \draw[dashed, thin] (9,-2) -- (9,5); 
  \draw[thick] (12,-2) -- (12,5);     
  
  \draw[red, very thick, domain=0:12, samples=100, smooth, tension=1] 
       plot coordinates { (0,0) (2, 1.5) (4,2) (6, 2.8) (8,4) (10, 4.1) (12,3) };
  
  \draw[blue, very thick, domain=0:12, samples=100, smooth, tension=1] 
       plot coordinates { (0,0) (2, 0.8) (3, 0.75) (4, 0.8) (5, 1.3) (6, 1.5) (7, 2.1) (8,2) (9, 2.25) (10, 1.9) (12,3) };
  
  \draw[purple, very thick, domain=0:12, samples=100, smooth, tension=1] 
       plot coordinates { (0,0) (2, -0.5) (4,-1) (6, 0) (8,1) (10, 2.5) (12,3) };
  
  \draw[dotted, thick] (0,0) -- (12,3);
  
  \fill[black] (0,0) circle (3pt);
  \node[anchor=north east] at (0,0) {A};
  
  \fill[black] (3,0.75) circle (3pt);
  \node[anchor=south west] at (3, 0.75) {B};
  
  \fill[black] (6,1.5) circle (3pt);
  \node[anchor=south east] at (6, 1.5) {C};
  
  \fill[black] (9,2.25) circle (3pt);
  \node[anchor=south west] at (9, 2.25) {D};
  
  \fill[black] (12,3) circle (3pt);
  \node[anchor=south west] at (12,3) {E};
  
\end{tikzpicture}
\caption{In this example, the branching factor is $M = 4$. Na\"ively, all three paths are candidate paths. However, in our approach we only consider as a valid candidate the \textcolor{blue}{middle path}, which is rounded to the straight line from A to E on ``anchors'' B, C, and D.} \label{fig:diagram_curves}
\end{figure}

We define the {\em curvature} of a path to be the total, across all scales and intervals, deviation from a linear path between the endpoints of each interval:
$$\curveAviad(p) := \sum_{s \in [\log_M{(2n)}]} \sum_{(l,r] \in s} \sum_{i \in [M]} \left|(p({\anc}_i)-p({\anc}_{i-1})) - \frac{p(r) -p(l)}{M}\right|,$$
where we slightly abuse notation and use $s$ to denote both the scale and the set of scale-$s$ intervals. Note that curvature is closely related to tree total deviation --- curvature of $p$ is the tree total deviation on the forward difference of $(p(0), p(1), \dots, p(2n))$. 

For ED, after we rotate the grid by 45\textdegree, all diagonal edges have weight 1. Thus for any interval $(l, r]$, $\abs{p(l) - p(r)}$ must be at most the length of the sub-path of $p$ inside $(l, r]$.
Suppose that the actual ED is $\eps n$. Our result on the tree total deviation (\cref{lemma:totalmeandeviation}) implies that the total curvature across all scales is roughly only $\eps n \sqrt{\log_M{n}}$. This saving of $\sqrt{\log_M{n}}$ from the na\"ive upper bound of $O(\eps n \log_M{n})$ is crucial to our ultimate quasi-polynomial saving in running time: it allows us to force parts of the paths to be a straight line to avoid overfitting to the sampled columns. 

If the curvature of the optimal path inside an interval $(l, r]$ is small, then the optimal path is very close to a ``straight line'' from $\vertex{l}{p(l)}$ to $\vertex{r}{p(r)}$. This means that, at least on the ${\anc}_i$'s, we can ``round'' the optimal path to agree with the straight line without incurring too much error. Thus when computing the answers inside $(l, r]$, even if we only consider the points $\vertex{{\anc}_i}{p({\anc}_i)}$ that agree with the straight line from $\vertex{l}{p(l)}$ to $\vertex{r}{p(r)}$, we do not lose too much precision. Fixing the endpoints of the path on each sampled column enables us to sub-sample without the risk of overfitting.

\begin{figure}[ht!]
\centering
\caption{Rounding to a ``Straight-Line'' Path}\label{fig:diagram_rounding}
\begin{tikzpicture}[scale=1.5]

\draw[-, black, very thick] (0,0) -- (4, 4) -- (8, 0) -- (4, -4) -- (0, 0);

\foreach \i in {0,...,8} {
\draw[very thin] (\i,-4) -- (\i,4);

  }
  
  \draw[brightblue, line width=1mm, smooth, tension=0.3] 
     plot coordinates { 
       (0,0) (0.8, 0.3) (1, 0.4)(1.2, 0.5) (2, 1) (2.8, 0.7) (2.9, 0.65) (3, 0.6) (3.1, 0.55) (3.2, 0.5) 
       (3.5, 0.4) (3.6, 0.37) (3.75, 0.35) (4, 0.3) 
       (4.25, 0.105) (4.3, 0.07) (4.4, 0.035) (4.5, 0) (4.8, -0.105) (5, -0.175) 
       (5.3, -0.28) (6, -0.7) (6.8, -0.35) (6.9, -0.315) (7, -0.28) (7.1, -0.245) (7.2, -0.21) (8, 0) 
     };

\draw[deepred, line width=0.6mm, smooth, tension=0.3] 
     plot coordinates { 
       (0,0) (0.8, 0.3) (1, 0.5) (1.2, 0.5) (2, 1) (2.8, 0.7) (2.9, 0.6) (3, 0.5) (3.1, 0.55) (3.2, 0.5) 
       (3.5, 0.4) (3.6, 0.37) (3.75, 0.2) (4, 0) 
       (4.25, 0.084) (4.3, 0.07) (4.4, 0.035) (4.5, 0) (4.8, -0.245) (5, -0.35) 
       (5.3, -0.28) (6, -0.7) (6.8, -0.35) (7, -0.35) (7.2, -0.21) (8, 0) 
     };

  \draw[dotted, thick] (0,0) -- (2, 1) -- (4, 0) -- (6, -0.7) -- (8, 0);
  
  \draw[dashed, thick] (0,0) -- (8, 0);
  
  \fill[black] (0,0) circle (1pt);
  \node[anchor=east] at (0,0) {$\vertex{0}{n}$};
  
  \fill[black] (8,0) circle (1pt);
  \node[anchor=west] at (8,0) {$\vertex{2n}{n}$};

  \fill[black] (4,0) circle (1pt);
  \node[anchor=north] at (4,0) {$\vertex{n}{n}$};

  \fill[black] (2,1) circle (1pt);
  \node[anchor=south] at (2,1) {$\vertex{0.5n}{1.25n}$};

  \fill[black] (3,0.5) circle (1pt);
  \node[anchor=south west] at (3,0.5) {$\vertex{0.75n}{1.125n}$};

  \fill[black] (2, 0) circle (1pt);
  \node[anchor=north] at (2,0) {$\vertex{0.5n}{n}$};

\end{tikzpicture}
\caption*{In this example, the branching factor $M = 2$, and $\tilde{S} = \{S - 2, S\}$ (i.e., scales with problem sizes $2n$ and $0.5n$). The \textcolor{brightblue}{\bf optimal path} can be rounded to the \textcolor{deepred}{\bf regular path} without losing too much precision. The regular path must resemble ``straight-lines'' inside sub-problems on active scales. For example, since scale $S$ is active, and the {\color{deepred}regular path} visits $\vertex{0}{n}$ and $\vertex{2n}{n}$, it must also visit $\vertex{n}{n}$; since we sub-sample on scale $(S - 2)$, and the {\color{deepred}regular path}  visits $\vertex{0.5}{1.25n}$ and $\vertex{n}{n}$, it must also visit $\vertex{0.75n}{1.125n}$. However, note that we do not sub-sample on scale $(S - 1)$, so despite the fact that the {\color{deepred}regular path} visits $\vertex{0}{n}$ and $\vertex{n}{n}$, it does not have to visit $\vertex{0.5n}{n}$.} 
\end{figure}

We show (\cref{claim:regularsinglescale}) that the error incurred by rounding the optimal path to straight line anchors inside $(l, r]$ is bounded by roughly $M$ times its curvature inside $(l, r]$. We only do rounding and sub-sampling on active scales, which constitute a $1/\log ^ c_M(n)$-fraction of the scales. 
Now, as long as 
\begin{align*}
|\tilde{S}| \ll \sqrt{|S|}/ M,
\end{align*} 
We expect the total error incurred by rounding the optimal path to straight lines on all active scales (which results in a ``regular path'' (\cref{def:regular}), see \cref{fig:diagram_rounding} for an example of such a rounding) to be:
\begin{align*}
    M \cdot \left(\text{$\curveAviad(p)$ on active scales}\right) & \lesssim M \cdot \curveAviad(p) \cdot (\text{fraction of active scales}) \\
    & \lesssim  M \cdot \eps n \sqrt{|S|} \cdot \frac{|\tilde{S}|}{|S|} \\
    & = o(\eps n).
\end{align*}

These ideas suffice for obtaining an additive $o(n)$-approximation for ED and LCS\footnote{We introduced part of our ideas for ED exclusively, but for additive $o(n)$-approximation, these two problems are equivalent since LCS is (mostly) a complementary version of ED.}. 

\subsection{Multiplicative Approximations} \label{sub:overview-multiplicative}

To obtain multiplicative $(1\pm o(1))$-factor approximations for both ED and LCS, we need to introduce a few additional ideas. Here also the differences between ED and LCS become more important. 

\paragraph{Challenge 1 (LCS): vertical difference could be much larger than weight of the path}
For ED, we use the fact that the vertical difference $\abs{p(l) - p(r)}$ is at most the length of the $\vertex{l}{p(l)} - \vertex{r}{p(r)}$ path to get a bound on the curvature of $p$ inside the interval $(l,r]$. For LCS, the vertical difference is no longer bounded by the weight of the path so locally the curvature could generally be much higher than the contribution to LCS. 
%
%

Intuitively, we can simply discard intervals with very high curvature compared to low contribution to the LCS. 
To bound the total error incurred by discarding such intervals, we introduce a scaled variant of the tree total deviation (\cref{sec:scaledmd}).
We use it to show that indeed most of the contribution to LCS comes from segments of the path where the ratio between curvature and weight is bounded, so we can safely discard the remaining segments.

\paragraph{Challenge 2 (ED+LCS): uneven contribution from sub-paths}
For a fixed candidate path, if we simply split it into $M$ parts with equal widths, sample a random subset of $M / 2$ parts, and let our estimate be twice the answer in the sample, one would hope that the estimate would concentrate around the true sum. However, this concentration may not hold when the sub-paths on different parts have very different weights; e.g., in the extreme case, the sub-path in one of the $M$ parts has a much higher weight than the rest combined, and the estimate would fluctuate greatly based on whether this part is in the sample or not. 

Fortunately, if we apply the total tree deviation result on the ``weight distribution'' of the path, we can also see that on average, the sub-paths should not differ too much in lengths. It turns out that it suffices to obtain an estimate that satisfies the following desiderata:
\begin{itemize}
\item The estimate is ``(nearly) sound'', i.e., it does not underestimate the answer for ED (by less than $(1 - o(1))$), and does not overstimate the answer for LCS (by more than $(1 + o(1))$). 
\item As long as the sub-paths do not differ too much in lengths, the estimate is also ``(nearly) complete'' , i.e., it does not overestimate the answer for ED (by more than $(1 + o(1))$), and does not underestimate the answer for LCS (by less than $(1 - o(1))$).
\item The estimate also has ``back-up'' worst-case error guarantee, i.e., even if the sub-paths lengths vary greatly, we incur an additive error of at most $O(1)$ times the length of the path. 
\end{itemize}
For LCS, a minimization problem, we can simply let our back-up estimate be $0$ when the sub-paths differ a lot in lengths, since this is always sound. For ED, 
we can use an off-the-shelf constant-factor approximation (e.g., \cite{CDGKS20,GRS20,AN20}) to obtain a sound back-up estimate with a multiplicative error of $O(1)$.

\paragraph{Putting it all together}
If we never misestimate by more than a multiplicative factor of $\left(1 + \frac{\eps'} {|\tilde{S}|}\right)$ for each interval $(l, r]$ on an active scale for some $\eps' = o(1)$, the total error over $\tilde{S}$ active layers is bounded by:
 $$\left(1 + \frac{\eps'} {|\tilde{S}|}\right) ^ {|\tilde{S}|} \approx (1 + \eps'),$$
which gives us a $(1 + o(1))$-approximation. In our final algorithms, for an interval $(l, r]$ on an active scale, we use a simple concentration inequality to show that, for any specific path $p$ and interval $(l,r]$, the probability of misestimating the answer by more than $\left(1 + \frac{\eps'} {|\tilde{S}|}\right)$ due to sub-sampling is roughly $2^{-M \frac{{(\eps')} ^ 2}{|\tilde{S}|^2}}$.
Suppose that
\begin{align*}
    M \gg \log ^ {\Omega(1)}(n) / {{(\eps')} ^ 2}.
\end{align*} 
Then the probability of a large misestimation is $2 ^ {-\log ^ {\Omega(1)}(n)}$. Even when we largely misestimate, the estimate inside an interval $(l, r]$ should never exceed its width $r - l$. Thus intuitively, the total additive error is expected to be bounded by $n / 2 ^ {-\log ^ {\Omega(1)}(n)}$ (we will formally show this using an observation on a special type of layered graphs (\cref{claim:directed})), which suffices since we assume the answer is $n / 2 ^ {\log ^ {O(1)}(n)}$.

\section{Preliminaries}
\label{sec:prelims}
\subsection{Notations}



For $i \in \mathbb{N}$, let $[i]$ denote the set $\{1, 2, \dots, i\}$.

For $r \in \mathbb{R}$, let 
$\sign(r) := 
\begin{cases}
	-1, & \text{if $r \le 0$} \\
	+1, & \text{if $r > 0$}
\end{cases}
$ be the sign of $r$.

For a sequence $A = (a_i)$ of length $n$, we write $\sum A$ to denote its sum $\sum\limits_{i = 1} ^ n{a_i}$.



\subsection{Concentration Bound}
As an ingredient for our analysis, we will use Hoeffding's Inequality \cite{hoeffding}:
\begin{lemma} \label{lemma:hoeffding}
Let \( X = (x_1, \ldots, x_N) \) be $N$ independent random variables in $[-c, c]$.
Then, for all \( \delta > 0 \),
\[
\Pr\left(\abs{\sum_{i = 1}^{N} x_i - \mu} \ge \delta \right) \leq 2\exp\left(-\frac{\delta^2}{2Nc ^ 2}\right),
\]
where \( \mu = \Ex\left[ \sum_{i = 1}^{N} x_i \right] \).
\end{lemma}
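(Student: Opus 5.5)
This is the classical Hoeffding inequality (cited from \cite{hoeffding}), and I would prove it by the standard exponential-moment (Chernoff) method. The plan has three steps: centre the variables, bound the moment generating function of each centred term via Hoeffding's lemma, then apply Markov's inequality and optimise the free parameter. A union bound over the two tails gives the factor $2$.

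\emph{Step 1: centring.} Set $y_i := x_i - \Ex[x_i]$. Then $\sum_i x_i - \mu = \sum_i y_i$, each $y_i$ has mean zero, and the $y_i$ are independent. Since $x_i \in [-c,c]$, each $y_i$ lies in an interval $[a_i, a_i + 2c]$ of length $2c$ that contains $0$.

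\emph{Step 2: Hoeffding's lemma.} I would show that a mean-zero variable $Y$ supported in $[a,b]$ with $b - a = 2c$ satisfies $\Ex[e^{\lambda Y}] \le \exp(\lambda^2 (b-a)^2/8) = \exp(\lambda^2 c^2/2)$ for every $\lambda \in \mathbb{R}$. The argument has two parts.
\begin{itemize}
\item By convexity of $t \mapsto e^{\lambda t}$ on $[a,b]$, we have $e^{\lambda Y} \le \frac{b - Y}{b-a} e^{\lambda a} + \frac{Y - a}{b-a} e^{\lambda b}$. Taking expectations and using $\Ex[Y]=0$ gives $\Ex[e^{\lambda Y}] \le e^{L(h)}$, where $h = \lambda(b-a)$, $p = -a/(b-a) \in [0,1]$, and $L(h) = -hp + \log(1 - p + p e^{h})$.
\item A direct computation gives $L(0) = 0$ and $L'(0) = 0$. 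Also $L''(h) = q(1-q) \le 1/4$, where $q = p e^h/(1-p+pe^h)$. Taylor's theorem with remainder then yields $L(h) \le h^2/8$.
\end{itemize}
This calculus step is the only non-routine part of the proof, and it is where I expect the main (mild) obstacle: the reparametrisation in terms of $p$ and $h$ is what makes the second-derivative bound come out cleanly.

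\emph{Step 3: Chernoff bound.} For $\lambda > 0$, Markov's inequality and independence give
\[
\Pr\Bigl(\sum_i y_i \ge \delta\Bigr) \le e^{-\lambda\delta} \prod_{i=1}^N \Ex\bigl[e^{\lambda y_i}\bigr] \le \exp\bigl(-\lambda\delta + N\lambda^2 c^2/2\bigr).
\]
Choosing $\lambda = \delta/(Nc^2)$ makes the right-hand side $\exp(-\delta^2/(2Nc^2))$. Applying the same argument to the variables $-y_i$ bounds the lower tail by the same quantity. A union bound over the two tails gives the stated inequality with the factor $2$.
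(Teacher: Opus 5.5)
Your proof is correct: the centring, the convexity bound giving $L(h) = -hp + \log(1-p+pe^{h})$ with $L(0)=L'(0)=0$ and $L''(h)\le 1/4$, and the choice $\lambda = \delta/(Nc^2)$ all check out, and since $(b-a)^2/8 = c^2/2$ they give exactly $2\exp(-\delta^2/(2Nc^2))$. The paper does not prove this lemma; it only cites Hoeffding's 1963 paper, and your argument is essentially the standard proof from that source.
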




\subsection{Black-Box Calls to Existing ED and LCS Algorithms}

We will use an $O(1)$-approximate ED algorithm, e.g., the following tradeoff proved in~\cite{GRS20}%
\footnote{We note that the choice of using~\cite{GRS20} is arbitrary. Since neither the running time nor the exact approximation factor is the bottleneck for our analysis, any of the truly sub-quadratic $O(1)$-approximation algorithms, starting with~\cite{CDGKS20} and including~\cite{CDGKS20, Andoni20, BR20, KS20, AN20,RSSS23} would suffice for our purpose.}: 
\begin{lemma} \label{lemma:estimate}
    Given two strings $X$ and $Y$ with lengths that sum to $n$, there is a randomized algorithm that computes a 4-approximation of $\ed(X, Y)$ in $O(n ^ {1.61})$ time.
\end{lemma}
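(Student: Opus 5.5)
This lemma is not proved from scratch. We obtain it from the approximation--time tradeoff of~\cite{GRS20}, listed in \cref{table:tradeoffs}: for any constant $\delta>0$, it gives a randomized $(3+\delta)$-approximation of $\ed(X,Y)$ in time $n^{1.6+o(1)}$. The plan is to fix a specific constant $\delta$ and check three things: the approximation factor, the running time, and the input format.

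\textbf{Parameters and running time.} Take $\delta := 1$, so the factor is $3+\delta = 4$, as stated. The running time is $n^{1.6+o(1)}$. Since the $o(1)$ term in the exponent eventually drops below $0.01$, this is $O(n^{1.61})$. For the finitely many small $n$ below that threshold, we can run the exact quadratic DP instead, which costs $O(1)$ time. I should confirm that the $o(1)$ in~\cite{GRS20} does not hide a dependence on $\delta$ that breaks this. Since $\delta$ is a fixed constant, it is absorbed into the constant of $O(\cdot)$.

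\textbf{Input format.} The lemma allows $|X|$ and $|Y|$ to differ, with $|X|+|Y|=n$. If the cited algorithm assumes equal lengths, I would not pad the strings, because padding changes the edit distance. Instead I would argue directly. The algorithm of~\cite{GRS20} works on the alignment grid of the two strings, and I expect its analysis to go through for rectangular grids. If it does not, there is a fallback. When $\bigl||X|-|Y|\bigr|$ is at least a constant fraction of $n$, both the trivial upper bound $\max(|X|,|Y|)$ and the lower bound $\bigl||X|-|Y|\bigr|$ are within a constant factor of the true distance. So the only case that needs the cited algorithm is nearly balanced lengths, where its guarantee applies.

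\textbf{Success probability.} I expect this to be the only step needing care. The algorithm of~\cite{GRS20} succeeds with constant (or high) probability. Its output can be taken to be the cost of an actual alignment, so it never underestimates $\ed(X,Y)$. We therefore repeat it $O(\log n)$ times and output the minimum. This keeps soundness, since every output is at least $\ed(X,Y)$, and it brings the failure probability down to $1/\poly(n)$. The extra $\log n$ factor is absorbed by the slack between $n^{1.6+o(1)}$ and $n^{1.61}$.
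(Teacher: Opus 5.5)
Your proposal is correct and matches the paper. The paper gives no proof of this lemma: it simply invokes the $(3+o(1))$-approximation of \cite{GRS20} running in $n^{1.6+o(1)}$ time, read at factor $4$, with the $n^{o(1)}$ factor and any $O(\log n)$ repetitions absorbed into $O(n^{1.61})$. Your extra bookkeeping is harmless: boosting by repetition also works by taking the median, if you prefer not to rely on the output always being an alignment cost. Note, however, that your unequal-length fallback only disposes of very unbalanced inputs, so the nearly-balanced-but-unequal case still relies, as the paper does, on the black box accepting strings of different lengths.
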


We also use Ukkonen's algorithm~\cite{Ukkonen85} for exact ED when the ED is small%
\footnote{There are faster algorithms for this problem, e.g., \cite{LMS98}. However the difference would not translate to a qualitative improvement for our purposes. It is interesting that Ukkonen's algorithm~\cite{Ukkonen85} also works for the more general setting of shortest paths on grids with diagonal shortcuts. Thus the only part in our algorithm for Edit Distance that requires actual strings is the $O(1)$-approximation algorithm from \cref{lemma:estimate}.}:
\begin{lemma} \label{lemma:bounded}
    Given two strings $X$ and $Y$ with lengths that sum to $n$ and an upper bound $k$, there is an algorithm running in $O(nk)$ time that can:
    \begin{itemize}
        \item Decide whether $\ed(X, Y) \le k$;
        \item If $\ed(X, Y) \le k$, compute $\ed(X, Y)$ exactly.
    \end{itemize}
\end{lemma}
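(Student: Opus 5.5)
Lemma~\ref{lemma:bounded} is Ukkonen's banded dynamic program, and I would prove it directly in the grid language of \cref{sub:overview-grid}. View $\ed(X,Y)$ as the shortest path from $\vertexoriginal{0}{0}$ to $\vertexoriginal{|X|}{|Y|}$ in the grid with diagonal short-cuts. The plan is: (i) show that a path of cost at most $k$ stays inside a diagonal band of width $O(k)$; (ii) run the standard DP restricted to that band, at cost proportional to its size; (iii) check that, with the right boundary convention, the restricted value is exact whenever it is at most $k$.

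\textbf{Step (i): the band.} Any path that visits a vertex $\vertexoriginal{u}{v}$ with $|u - v| = d$ uses at least $d$ horizontal or vertical edges before reaching it, since each diagonal edge preserves $u - v$. Each such edge has weight $1$, so the path already costs at least $d$. Hence every path of total cost at most $k$ stays inside the band $B = \{\vertexoriginal{u}{v} : |u - v| \le k\}$. If $\bigl||X| - |Y|\bigr| > k$ we may immediately answer ``no''. Otherwise $B$ contains $O(nk)$ vertices, since there are at most $n$ values of $u$ and $2k+1$ values of $v$ for each.

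\textbf{Step (ii): restricted DP.} Compute $D'[u][v]$ for vertices in $B$ by the usual recurrence
\[
D'[u][v] = \min\bigl(D'[u-1][v] + 1,\; D'[u][v-1] + 1,\; D'[u-1][v-1] + w_{u,v}\bigr),
\]
treating every vertex outside $B$ as having value $+\infty$. Process vertices in row-major order within the band; this takes $O(1)$ per vertex and $O(nk)$ time in total.

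\textbf{Step (iii): exactness.} By construction $D'$ is the minimum cost over paths that never leave $B$, so $D'[u][v] \ge D[u][v]$, where $D$ is the unrestricted shortest-path value. In the other direction, if $D[|X|][|Y|] \le k$, then by step (i) some optimal path lies entirely in $B$, so $D'[|X|][|Y|] = D[|X|][|Y|]$. Conversely, if $D'[|X|][|Y|] \le k$, then since $D \le D'$ we have $D \le k$, and therefore $D = D'$. So the algorithm outputs ``yes'' exactly when $D' \le k$, and in that case returns $D'$, which equals $\ed(X,Y)$.

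The one point needing care is that correctness uses only the threshold on $D$ and never compares $D'$ against $D$ for vertices in the interior of the band. The $+\infty$ boundary handles this cleanly, because $D'$ is then literally the restricted shortest path. Nothing in the argument uses the strings beyond the weights $w_{u,v}$, which matches the paper's remark that the procedure works for general grids with diagonal short-cuts.
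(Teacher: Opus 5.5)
Your banded dynamic program is correct: each horizontal or vertical edge changes $u-v$ by exactly one and costs $1$, which confines every path of cost at most $k$ to the band $|u-v|\le k$. The inequality $D\le D'$, together with $D'=D$ whenever $D\le k$, then yields both the decision and the exact value in $O(nk)$ time. This is the standard argument behind Ukkonen's algorithm, which the paper cites in place of a proof.
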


When the LCS is small, the following result has been known since the 70s \cite{10.1145/322033.322044}:
\begin{lemma} \label{lemma:boundedlcs}
    Given two strings $X$ and $Y$ with lengths that sum to $n$ and an upper bound $k$, there is an algorithm running in $O(nk + n \log (n))$ time that can:
    \begin{itemize}
        \item Decide whether $\lcs(X, Y) \le k$;
        \item If $\lcs(X, Y) \le k$, compute $\lcs(X, Y)$ exactly\footnote{The original result states that there is an algorithm running in $O(n\lcs(X, Y) + n \log (n))$ time, this can easily be applied to the decision problem here by simply testing if the algorithm terminates within $O(nk + n \log (n))$ time.}.
    \end{itemize}
\end{lemma}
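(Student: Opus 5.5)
We would prove \cref{lemma:boundedlcs} with the classical threshold-list dynamic program (essentially Hirschberg's algorithm), truncated at length $k+1$. Write $X = X[1..n_1]$ and $Y = Y[1..n_2]$ with $n_1+n_2 = n$. First we compress the alphabet: sort the characters of $X$ and $Y$ together in $O(n\log n)$ time and relabel them by integers in $[\sigma]$, where $\sigma \le n$. For each character $c$ we build the sorted list $P_c$ of positions $j$ with $Y[j]=c$; together these lists take $O(n)$ space.

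For $0 \le i \le n_1$ and $0 \le \ell \le k+1$, define
\[
T_i[\ell] := \min\{ j : \lcs(X[1..i], Y[1..j]) \ge \ell\},
\]
with value $\infty$ if no such $j$ exists. We start from $T_0[0]=0$ and $T_0[\ell]=\infty$ for $\ell \ge 1$. The standard recurrence is
\[
T_i[\ell] = \min\bigl(T_{i-1}[\ell],\ \mathrm{next}(X[i], T_{i-1}[\ell-1])\bigr),
\]
where $\mathrm{next}(c,j)$ is the smallest position $j' > j$ with $Y[j'] = c$. Its correctness is the usual exchange argument: an optimal common subsequence of length $\ell$ either leaves $X[i]$ unmatched, or matches $X[i]$ as its last character, and in that case it is optimal to match it to the earliest occurrence of $X[i]$ after position $T_{i-1}[\ell-1]$. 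Since $\lcs(X[1..i], Y[1..n_2]) = \max\{\ell : T_i[\ell] \le n_2\}$, we decide $\lcs(X,Y) \le k$ by checking whether $T_{n_1}[k+1] = \infty$. If the answer is yes, we read off $\lcs(X,Y)$ from the same row. Each row has only $k+2$ entries, so the whole computation is $O(nk)$ provided every $\mathrm{next}$ query costs amortized $O(1)$.

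The main obstacle is answering the $\mathrm{next}$ queries within that budget. A binary search in $P_c$ would cost an extra $\log n$ factor, and a full next-occurrence table would cost $\Theta(n\sigma)$. The plan is to exploit monotonicity in $i$. For fixed $\ell$, the sequence $T_i[\ell]$ is nonincreasing in $i$, because a longer prefix of $X$ can only lower the threshold. Hence for a fixed pair $(\ell, c)$, the arguments $T_{i-1}[\ell-1]$ of successive queries $\mathrm{next}(c,\cdot)$ (over those $i$ with $X[i]=c$) are nonincreasing, and so are the answers.

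We therefore keep, for each pair $(\ell,c)$, a pointer into $P_c$. It starts at the end of the list, and for each query it moves backward while the preceding occurrence is still larger than the query argument. The total movement of the pointers for a fixed $\ell$ is at most $\sum_c |P_c| \le n$, so over all $\ell$ it is $O(nk)$, plus $O(1)$ per query. The pointer array has $(k+2)\sigma = O(nk)$ entries, which we initialize explicitly; lazy initialization with timestamps would also work.

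Summing everything gives $O(n\log n)$ for the preprocessing plus $O(nk)$ for the dynamic program, which is the bound claimed in \cref{lemma:boundedlcs}. If one prefers, the decision version can alternatively be obtained, as the footnote to \cref{lemma:boundedlcs} indicates, by running the $O(n\lcs(X,Y) + n\log n)$ algorithm and aborting once its step count exceeds the $O(nk + n\log n)$ budget.
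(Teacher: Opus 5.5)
Your proof is correct, but it takes a different route from the paper. The paper gives no argument of its own: it cites Hirschberg's $O(n\lcs(X,Y) + n\log n)$ algorithm and, per the footnote, gets the decision version by running that algorithm under a clock of $O(nk + n\log n)$ steps, declaring $\lcs(X,Y) > k$ if it does not halt.

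You instead reconstruct a threshold-list algorithm and truncate it at length $k+1$, so the decision is read off directly from whether $T_{n_1}[k+1] = \infty$. This buys two things. First, you need no knowledge of the hidden constant in the black box's running time, which the clocking argument implicitly requires. Second, the $n\log n$ term is visibly just the cost of alphabet compression. The price is a longer, self-contained argument.

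The key step is sound. Because $T_{i-1}[\ell-1]$ is nonincreasing in $i$, the arguments of the $\mathrm{next}(c,\cdot)$ queries for a fixed pair $(\ell,c)$ never increase. The backward-moving pointers therefore amortize to $O(n)$ per level and $O(nk)$ overall. Just keep a sentinel position past the end of $P_c$ to represent the answer $\infty$.

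As a minor remark, you could process one length $\ell$ at a time over all $i$ instead of row by row. Then a single pointer per character can be reset and reused at each level, cutting pointer memory from $(k+2)\sigma$ to $O(\sigma)$ without affecting the time bound.
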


\subsection{Rounding}
To avoid fractional results after divisions, for $\lambda = 2 ^ {\floor{0.99\log(n)}} = \Theta(n ^ {0.99})$, we ensure that the lengths of two input strings are multiples of $\lambda$ by padding them with arbitrary characters. The error we introduce with this rounding is $O(\lambda)$, which is within $(1 + o(1))$ when the ED/LCS is $\omega(\lambda)$. When the ED/LCS is $O(\lambda)$, we can use \cref{lemma:bounded} or \cref{lemma:boundedlcs} to compute the answer exactly in $O(n\lambda) = O(n ^ {1.99})$ time. Thus we can assume that input strings both have a length that is a multiple of a power of $2$ of order $\Omega(n ^ {0.99})$.

\section{Tree Total Deviation} \label{sec:md}
In this section we will introduce the notion of tree total deviation, which will be an important tool in analyzing our main algorithm for Edit Distance and LCS. We first introduce it in its basic form (\cref{sec:basic}). We then study a scaled variant of it (\cref{sec:scaledmd}).
\subsection{Basic Form} \label{sec:basic}
Given a sequence of real numbers $A = (a_1, a_2, \ldots, a_{n})$ with $n = M ^ S$ for $M \in 2\mathbb{Z} ^ +$ and $S \in \mathbb{Z} ^ +$, we consider its $S$-scale recursive partition with branching factor $M$.  In other words, for each scale $s \in [S]$, we consider the partition of $A$ to disjoint length-$M ^ s$ subsequences $A_{(l,r]} = (a_{l+1} ,\dots ,a_{r})$. We say that $(l,r]$ is a \emph{scale-$s$ interval}. Every interval $(l, r]$ that is involved in the recursive partitioning (i.e., is a scale-$s$ interval for some scale $s \in [S]$) is called a \emph{recursion interval}.

For a scale-$s$ interval $(l,r]$, let the $i$-th ($0 \le i \le M$) \emph{anchor} be ${\anc}_i := l + i \cdot \frac{r - l}{M}$. Let $({\anc}_{i - 1}, {\anc}_{i}]$ denote its $i$-th \emph{sub-interval}. We consider the sum of each of $A_{(l,r]}$'s sub-intervals, and compute the total deviation of the sums, namely the sum of how far each sub-interval is from the mean of the sub-intervals:
$$\md_M(A_{(l,r]}) := \sum_{i= 1}^M {\abs{{A_{\Sigma({\anc}_{i - 1},{\anc}_{i}]}} - \frac{1}{M}{A_{\Sigma(l,r]}}  }},$$
where ${{A'}_{\Sigma(l', r']}}$ denotes the sum $\sum\limits_{l' < i \le r'}{{a'}_i}$, a shorthand we will use throughout the paper.

We are particularly interested in the ($M$-way) \emph{tree total deviation}, which sums the total deviation across the entire recursive partitioning of $A$: 
$$\treemd_{M, S}(A) := \sum_{s \in [S]}\sum_{(l,r] \in s}{\md_M(A_{(l,r]})},$$
where we slightly abuse notation and use $s$ to denote both the scale and the set of scale-$s$ intervals. We also define a refined version of tree total deviation on a subset $\mathcal{S} \subset [S]$ of scales.
$$\treemd_{M, \mathcal{S}}(A) := \sum_{s \in \mathcal{S}}\sum_{(l,r] \in s}{\md_M(A_{(l,r]})}.$$

Intuitively, $\md_M(A)$ is a measure of how  ``uneven'' $A$ is after dividing it into $M$ parts --- namely if we divide $A$ into $M$ parts, how far they are from all having the same sum.
$\treemd_{M, \mathcal{S}}(A)$ is a measure of the sum of the total deviation of $A$ over all scales in $\mathcal{S}$.

Suppose $|a_i| \le 1$ for all $i$. A na\"ive upper bound on $\treemd_{M, S}(A)$ would count a contribution of $O(1)$ from each $a_i$
to the total deviation on each scale, or $\treemd_{M, S}(A) = O(nS)$ in total. 
A key observation for our analysis is that in fact we can show that $\treemd_{M, S}(A)$ is close to $n\sqrt{S}$:

\begin{restatable}{lemma}{lemmatotalmeandeviation} \label{lemma:totalmeandeviation}
    Let $A = (a_1, a_2, \ldots, a_n)$ be a sequence of real numbers where $n = M ^ S$. 
    Then, 
    $$\treemd_{M, S}(A) = O\left(\left( \sum_x{\abs{a_x}}\right)S ^ {0.51}  + \max_x{\abs{a_x}} \cdot \left(n / 2 ^ {\log_M ^ {0.015}(n)}\right)\right).$$
\end{restatable}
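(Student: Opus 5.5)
The plan is to bypass the digit-counting argument of the $M=2$ sketch in \cref{sub:sketchobs}, which relied on a swapping trick that does not obviously extend to general $M$. Instead I would use an information-theoretic telescoping argument.

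\textbf{Step 1: reduce to $a\ge 0$.} Write $A=A^+-A^-$ with $A^\pm\ge 0$ entrywise. Each $\md_M(\cdot)$ is a sum of absolute values of expressions linear in $A$, so $\treemd_{M,S}(A)\le \treemd_{M,S}(A^+)+\treemd_{M,S}(A^-)$. Both the main term $\sum_x|a_x|$ and $\max_x|a_x|$ dominate the corresponding quantities for $A^\pm$. So it suffices to treat nonnegative $A$. Write $\Sigma:=\sum_x a_x$ and $a_{\max}:=\max_x a_x$.

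\textbf{Step 2: rewrite the deviation probabilistically.} For a recursion interval $I=(l,r]$ with $A_{\Sigma I}>0$, let $p^I_i:=A_{\Sigma({\anc}_{i-1},{\anc}_i]}/A_{\Sigma I}$. This is a distribution on $[M]$, and
\[
\md_M(A_I)=A_{\Sigma I}\sum_i\abs{p^I_i-\tfrac1M}=2A_{\Sigma I}\cdot \mathrm{TV}(p^I,u),
\]
where $u$ is uniform on $[M]$. Now sample an index $y$ with probability $a_y/\Sigma$. For each scale $s$, let $I_s(y)$ be the scale-$s$ interval containing $y$. Then $\Pr[I_s(y)=I]=A_{\Sigma I}/\Sigma$. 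Summing over intervals and scales gives
\[
\treemd_{M,S}(A)=2\Sigma\cdot \Ex_y\Big[\sum_{s}\mathrm{TV}(p^{I_s(y)},u)\Big].
\]
Moreover, conditioned on $I_s(y)=I$, the sub-interval $j_s$ of $I$ containing $y$ has law $p^I$. Descending the tree therefore picks children according to $p^{I}$ at each level.

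\textbf{Step 3: telescope and apply Pinsker.} The product $\prod_s M p^{I_s(y)}_{j_s}$ telescopes to $a_y n/\Sigma\le a_{\max}n/\Sigma$. Taking logarithms, $\sum_s\log(Mp^{I_s}_{j_s})\le \log(a_{\max}n/\Sigma)$ pointwise. Conditioned on $I_s$, the expectation of the $s$-th term is $\mathrm{KL}(p^{I_s}\|u)\ge 2\,\mathrm{TV}(p^{I_s},u)^2$ by Pinsker. Hence
\[
\Ex_y\sum_s \mathrm{TV}^2\le \tfrac12\log(a_{\max}n/\Sigma).
\]
Cauchy--Schwarz over the $S$ scales and over $y$ then yields
\[
\treemd_{M,S}(A)\le 2\Sigma\sqrt{S\cdot\tfrac12\log(a_{\max}n/\Sigma)}.
\]
Zero-mass intervals contribute nothing and can be ignored.

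\textbf{Step 4: case split on $\Sigma$.} Suppose first that $\Sigma\ge a_{\max}n/2^{2S^{0.015}}$. The log-ratio is then $O(S^{0.015})$, so $\treemd=O(\Sigma\, S^{0.5075})=O(\Sigma S^{0.51})$. Otherwise, use the naive bound $\treemd\le 2\Sigma S$, since each element contributes at most $2|a_x|$ per scale. This is at most $2a_{\max}nS/2^{2S^{0.015}}=O(a_{\max}n/2^{S^{0.015}})$, which is the second term because $S=\log_M n$.

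\textbf{Expected obstacle.} The delicate point is Step 3: the signs in $\md_M$ are adversarial and data-dependent, which is why the $M=2$ swapping argument does not generalize. The KL telescoping is exactly what controls this. I would double-check that the bound is free of any $\log M$ factor, which holds because $\log(a_{\max}n/\Sigma)$ is compared directly to the threshold $2^{2S^{0.015}}$, and that the conditional-expectation step is a valid martingale decomposition.
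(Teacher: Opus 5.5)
Your argument is correct, but it is not the route of the paper's main proof. It is essentially the information-theoretic proof the paper gives as an alternative in \cref{app:infotheory}: write each $\md_M$ as a total-variation distance under the measure $a_y/\Sigma$, pass to KL by Pinsker, telescope the KL terms along the root-to-leaf path to get $\ln(n a_{\max}/\Sigma)$, and apply Cauchy--Schwarz over the $S$ scales. This gives $\Sigma\sqrt{2S\ln(n a_{\max}/\Sigma)}$. Your case split at $\Sigma\approx a_{\max}n/2^{2S^{0.015}}$, against the naive $2\Sigma S$ bound, is exactly what is needed to recover the stated form; the appendix leaves that step implicit.

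Contrary to your remark, the main proof in \cref{sec:basic} does generalize the $M=2$ swapping idea. \cref{claim:totalmeandeviationtool} gives, for every block, a sign vector with exactly $M/2$ plus-ones whose doubled signed sum dominates $\md_M$. Hence the tree total deviation is at most $2\sum_x g_x a_x$, where for uniformly random $x$ the quantity $g_x$ is a simple $\pm1$ random walk of $S$ steps. Hoeffding with a core/tail split (core: $\abs{g_x}\le S^{0.51}$) then produces both terms of the lemma.

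\textbf{Trade-offs.} The paper's route is more elementary and handles signed entries directly. However, it uses that $M$ is even, and the exponent $0.51$ and the tail term are artifacts of a fixed threshold. Your route needs no parity assumption on $M$. It also adapts to how spread out the mass is: for example, it gives $O(\Sigma\sqrt{S\log S})$ whenever $\Sigma\ge a_{\max}n/\mathrm{poly}(S)$. The price is that it requires nonnegativity.

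Your Step~1, splitting $A=A^+-A^-$ and using subadditivity of each $\md_M$, is the right way to handle this. The appendix's shortcut of replacing $a_x$ by $\abs{a_x}$ is not valid as stated. For $M=2$, the sequence $(1,-1)$ has deviation $2$, while the appendix's bound evaluates to $0$ for it.
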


In \cref{app:infotheory}, we provide an information-theoretic proof for a slightly stronger version of the lemma. In this section, we provide a more accessible and self-contained proof using basic probabilistic arguments.

Towards proving \cref{lemma:totalmeandeviation}, we will first prove the following claim:
\begin{claim} \label{claim:totalmeandeviationtool}
    Let $B = (b_1, b_2, \ldots, b_{M})$ be a sequence of numbers. There exists $\gamma \in \{-1, +1\} ^ M$ with exactly $M / 2$ (+1)'s such that
    $$2\sum_{i}{b_i\gamma_i} \ge \md_M(B).$$
\end{claim}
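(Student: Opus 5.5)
Here $\md_M(B)=\sum_i |b_i - \mu|$ with $\mu := \frac1M\sum_i b_i$, i.e., the scale is a single level with each $b_i$ a sub-interval sum. The plan is to choose $\gamma$ by sorting: put $+1$ on the $M/2$ largest entries of $B$ and $-1$ on the rest, breaking ties arbitrarily. Since $\gamma$ has exactly $M/2$ entries of each sign, $\sum_i \gamma_i = 0$, and therefore
\[
\sum_i b_i\gamma_i=\sum_i (b_i-\mu)\gamma_i .
\]
This lets us work with the centered values $c_i := b_i-\mu$, which satisfy $\sum_i c_i = 0$ and $\md_M(B)=\sum_i |c_i|$.

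Let $P$ be the set of indices with $c_i>0$ and $N$ the rest, and set $T := \sum_{i\in P} c_i = -\sum_{i\in N} c_i$. Then $\md_M(B) = 2T$, so it suffices to show $\sum_i c_i\gamma_i \ge T$, which gives $2\sum_i b_i\gamma_i \ge 2T = \md_M(B)$.

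We now compare the chosen set $U$ (the top $M/2$ entries, where $\gamma_i = +1$) with $P$.
\begin{itemize}
\item If $|P|\le M/2$, then $U\supseteq P$. The sum $\sum_{i\in U} c_i$ equals $T$ plus some nonpositive terms, and since $U$ consists of the largest entries, this is the maximum of $\sum_{i \in U'} c_i$ over all sets $U'$ of size $M/2$.
\item If $|P|> M/2$, then $U\subseteq P$, and again $\sum_{i\in U} c_i$ is the maximum over size-$M/2$ sets.
\end{itemize}
In either case, $\sum_i c_i\gamma_i = 2\sum_{i\in U} c_i - \sum_i c_i = 2\sum_{i\in U} c_i$, so it remains to show $\sum_{i\in U} c_i \ge T/2$.

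This is an averaging step. The top $M/2$ entries of any sequence average at least as much as the full positive part spread over an appropriate size. Concretely, I would compare $U$ with its complement $U^c$. Because $U$ holds the larger entries and $|U|=|U^c|$, pairing them gives $\sum_{U} c_i \ge \sum_{U^c} c_i$. Combined with $\sum_U c_i + \sum_{U^c} c_i = 0$, this yields $\sum_U c_i \ge 0$ and
\[
\sum_U c_i - \sum_{U^c} c_i = 2\sum_U c_i .
\]

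Finally, show $\sum_U c_i - \sum_{U^c} c_i \ge T$. In the case $U\supseteq P$, all entries of $U^c$ are nonpositive, so $-\sum_{U^c} c_i$ is at least the magnitude of those terms. Hence $\sum_U c_i - \sum_{U^c} c_i = \sum_i |c_i| - 2\sum_{U \setminus P}|c_i| \ge \dots$, and it is cleaner to simply note $-\sum_{U^c}c_i=\sum_U c_i$. Actually the direct route is simpler: when $U\supseteq P$, $\sum_{U^c} c_i \ge \sum_N c_i = -T$, so $\sum_U c_i = -\sum_{U^c} c_i \le T$. That inequality points the wrong way, so I will avoid this path.

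The correct bound comes from combining two facts:
\begin{itemize}
\item $\sum_U c_i \ge \sum_{P\cap U} c_i$.
\item Since $U$ is the top half and $|P \cap U^c|\le |U\setminus P|$ or $|P\cap U^c| = 0$, one gets $\sum_U c_i \ge \frac{1}{2} T$ via the averaging inequality that the top $M/2$ entries of $P$ capture at least $\frac{M/2}{|P|}\cdot T \ge T/2$ when $|P|\le M$.
\end{itemize}
Either way, $2\sum_i b_i\gamma_i=4\sum_U c_i \ge 2T=\md_M(B)$.

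The main obstacle is the bookkeeping in this averaging step across the two cases. I expect it to be routine once stated via the "top $k$ average $\ge$ overall average of positives" inequality.
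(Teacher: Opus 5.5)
Your setup is right and uses the same $\gamma$ as the paper: sort, and put $+1$ on the top half $U$. The centering by $\mu$, the identity $\md_M(B)=2T$, and the reduction to showing $\sum_{i\in U}c_i\ge T/2$ are all correct. Your case $|P|>M/2$ is also fine: there $U\subseteq P$ is the top half of $P$, so $\sum_U c_i\ge \frac{M/2}{|P|}T\ge T/2$.

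\textbf{The gap is the case $|P|\le M/2$.} There $U\supseteq P$, and the extra indices $U\setminus P$ have $c_i\le 0$. So your first bullet, $\sum_U c_i\ge\sum_{P\cap U}c_i$, is reversed: in fact $\sum_U c_i=T+\sum_{U\setminus P}c_i\le T$. This is the same wrong-direction inequality you noticed and abandoned a paragraph earlier. The ``top $M/2$ entries of $P$'' averaging does not apply here, because $U$ is not contained in $P$. What must be shown is that the nonpositive entries in $U\setminus P$ have total magnitude at most $T/2$, and the proposal never shows this.

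\textbf{The missing step is to argue from the negative side.} When $|P|\le M/2$ we have $|N|\ge M/2$. So $U^c\subseteq N$ consists of the $M/2$ most negative entries, and their average is at most the average over $N$. Hence $\sum_{U^c}c_i\le\frac{M/2}{|N|}\sum_N c_i=-\frac{M/2}{|N|}T\le -T/2$, and therefore $\sum_U c_i=-\sum_{U^c}c_i\ge T/2$. Equivalently, apply your $|P|>M/2$ argument to $-B$; this swaps $U$ and $U^c$ and leaves $\sum_i c_i\gamma_i$ unchanged. This is exactly the ``by symmetry'' reduction in the paper.

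\textbf{How the paper finishes.} After that symmetry, the paper avoids averaging. It assumes at most $M/2$ entries lie strictly below $\bar B$ and pairs each such $b_i$ (in sorted order) with $b_{M-i+1}\ge\bar B$. Every pair $b_{M-i+1}-b_i$ with $i\le M/2$ is nonnegative, so $2\sum_{i\le M/2}(b_{M-i+1}-b_i)\ge 2\sum_{b_i<\bar B}(\bar B-b_i)=\md_M(B)$. Either fix closes your proof. The discarded ``wrong way'' paragraph should simply be removed.
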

\begin{proof}
    Without loss of generality, suppose $B$ is \emph{sorted in increasing order}, and let $\overline{B}$ denote the average over its elements. We will show that we can satisfy the inequality by setting $\gamma_{1 \ldots M / 2}$ to $-1$ and $\gamma_{(M/2 + 1) \ldots M}$ to $+1$. Namely,
    $$2\sum_{1 \le i \le M / 2}{(b_{M - i + 1} - b_i)} \ge \sum_{i}{\abs{b_i - \overline{B}}}.$$

    Let $t$ be the smallest index such that $b_t \ge \overline{B}$. Due to symmetry, we can assume that $t - 1 \le M / 2$. Let $s := \sum\limits_{1 \le i \le M}{\abs{b_i - \overline{B}}}$, $s ^ {-} := \sum\limits_{1 \le i \le t - 1}{(\overline{B} - b_i)}$, and $s ^ {+} := \sum\limits_{t \le i \le M}{(b_i - \overline{B})}$. Then we have $s = s ^ {-} + s ^ {+}$. Moreover, since $\overline{B}$ is the mean of $B$, we must have $s ^ {+} - s ^ {-} = 0$. Therefore, $s ^ {-} = s ^ {+} = s / 2$. Thus 
    \begin{align*}
        2\sum_{1 \le i \le M / 2}{(b_{M - i + 1} - b_i)}    &\ge 2\sum_{1 \le i \le t - 1}{(b_{M - i + 1} - b_i)} \\
                                                        &\ge 2\sum_{1 \le i \le t - 1}{(\overline{B} - b_i)} \\
                                                        &= 2s ^ {-} \\
                                                        &= s.
    \end{align*}
\end{proof}

Now we are ready to prove \cref{lemma:totalmeandeviation}:
\begin{proof} [Proof of \cref{lemma:totalmeandeviation}]
For every scale-$s$ interval $(l,r]$, by \cref{claim:totalmeandeviationtool} we can find a vector $$\left(\gamma_{({\anc}_0, {\anc}_1]}, \gamma_{({\anc}_1, {\anc}_2]}, \ldots, \gamma_{({\anc}_{M - 1}, {\anc}_M]}\right)$$ such that 
$$\md_M(A_{(l,r]}) \le 2\sum_{i=1}^M{\gamma_{({\anc}_{i - 1}, {\anc}_{i}]} {A_{\Sigma({\anc}_{i - 1},{\anc}_{i}]}}}.$$
For each index $x \in (l, r]$, let $g_{x,s}$ denote the sign of  $\gamma_{({\anc}_{i - 1}, {\anc}_{i}]}$ corresponding to the sub-interval $({\anc}_{i - 1}, {\anc}_{i}] \ni x$. We have
\begin{equation} \label{eq:totalmeandeviation0}
    \md_M(A_{(l,r]}) \le 2\sum_{x \in (l,r]} g_{x,s}a_x.
\end{equation}
    
Let $g_x := \sum_{s \in [S]} g_{x,s}$ denote the sum of the signs for index $x$ across all scales. By taking the sum of \eqref{eq:totalmeandeviation0} over every recursion interval, we get
\begin{gather} \label{eq:tcurv-bound}
    \treemd_{M, S}(A) \le 2\sum_x g_x a_x \le  2\sum_x \abs{g_x}\abs{a_x}.  
\end{gather}
    
We bound $\sum_x \abs{g_x}\abs{a_x}$ with a probabilistic argument bounding $\Ex_x \left[\abs{g_x}\abs{a_x}\right]$. Namely, we randomly sample an index $x \in [n]$.
As a warm-up, let us ignore the $\abs{a_x}$ term for now. Notice that for any scale-$s$ interval $A_{(l,r]}$, the values of $g_{x,s'}$ for $s' > s$ are identical for all $x \in (l,r]$ since those values depend on $\gamma_{(l',r']}$ of larger intervals  $(l',r'] \supseteq (l,r]$. Furthermore, $g_{x,s}$ is $+1$ (respectively $-1$) for exactly half of the $x \in (l,r]$. Hence $g_x$ is distributed identical to the sum of $S$ i.i.d.~$\pm 1$ coin flips. Therefore we have $\Ex_x \left[\abs{g_x}\right] \le \sqrt{S}$.

Similarly, bringing back $\abs{a_x}$, we essentially need to bound the $\sum_x \abs{a_x}$-tail of a distribution of $S$ i.i.d.~$\pm 1$ coin flips. Formally, we say that an index $x$ is in the ``tail'' if $|g_x| > S ^ {0.51}$, and we want to bound the sum of $\abs{a_x}$ for every $x$ in the tail, which is no more than $\max_x{\abs{a_x}}$ times the number of $x$'s in the tail. By applying Hoeffding's inequality (\cref{lemma:hoeffding}), the probability that $\abs{g_x} > S ^ {0.51}$ is  $1 / 2 ^ {\Omega\left(\log ^ {0.02}_M(n)\right)}$ (recall that $S = \log_M(n)$). Therefore, there are at most an $1 / 2 ^ {\Omega\left(\log ^ {0.02}_M(n)\right)}$-fraction of tail indices $x$ such that $\abs{g_x} > S ^ {0.51}$, and each of these $\abs{g_x}$ values is no more than $S$. 
By summing the contribution up of the ``core'' indices satisfying  $\abs{g_x} \le S ^ {0.51}$ and the tail indices, we have
\begin{align*} \sum_x \abs{g_x}\abs{a_x}   & \le  \left(\sum_x \abs{a_x}\right) S ^ {0.51} + \max_x{\abs{a_x}} \cdot \left|\{x \text{\;s.t.\;} \abs{g_x} > S ^ {0.51} \}\right| \cdot \max \{\abs{g_x}\} \\
& \le \left(\sum_x \abs{a_x} \right) S ^ {0.51}  + \max_x{\abs{a_x}} \cdot \left(n / 2 ^ {\Omega\left(\log ^ {0.02}_M(n)\right)}\right) \cdot S \\
& = \left(\sum_x \abs{a_x} \right) S ^ {0.51}  + \max_x{\abs{a_x}} \cdot o\left(n / 2 ^ {\log_M ^ {0.015}(n)}\right).
\end{align*}
Plugging this into \eqref{eq:tcurv-bound}, we have that indeed,
$$\treemd_{M, S}(A) \le O\left(\left( \sum_x{\abs{a_x}}\right)S ^ {0.51}  + \max_x{\abs{a_x}} \cdot \left(n / 2 ^ {\log_M ^ {0.015}(n)}\right)\right).$$
\end{proof}

\subsection{Tree Total Deviation Scaled by Another Sequence} \label{sec:scaledmd}
Suppose we have another sequence $B = (b_1, b_2, \ldots, b_n)$ of length $n$. We consider the following question: how much does $B$ concentrate on the high deviation regions of $A$? For an interval $(l, r]$ on some scale $s$, we define the \emph{scaled total deviation} of $A(l, r]$ by $B(l, r]$ to be $\md_M(A(l, r])$ scaled by the $B$-to-$A$ ratio:
\[
\rd_M{(A(l, r], B(l, r])} := \frac{B_{\Sigma(l,r]}}{A_{\Sigma(l, r]}}\md_M(A(l, r]),
\]
And specifically, if $A_{\Sigma(l, r]} = 0$, we let $\rd_M{(A(l, r], B(l, r])} := 0$.
We define the scaled tree total deviation of $A$ by $B$ to be the total sum of scaled total deviation over all recursion intervals:
\[\treerd_{M, S}(A, B) := \sum_{s \in [S]}\sum_{(l,r] \in s}{\rd_M(A(l, r], B(l, r])}\]

We have the following lemma which states that $B$ does not concentrate on the high deviation regions of $A$ as long as the $B$-to-$A$ ratio is within some reasonable range in every recursion interval:
\begin{lemma} \label{lemma:treerd}
    Let $A = (a_1, a_2, \ldots, a_n)$ and $B = (b_1, b_2, \ldots, b_n)$ be sequences of non-negative  real numbers where $n = M ^ S$. Suppose that there exists $0 < \alpha \le 1$ such that for every recursion interval $(l, r]$, we have
    \begin{align*}
        \alpha A_{\Sigma(l, r]} \le B_{\Sigma(l, r]} \le A_{\Sigma(l, r]}.
    \end{align*}
    Then we have
    \begin{equation} \label{eq:treerd}
    \treerd_{M, S}(A, B) = O\left(\left(1 + \log{\left(\frac{1}{\alpha}\right)}\right)\left(\left( \sum_x{b_x}\right)S ^ {0.51}  + \max_{x'}{(a_{x'})} \cdot \left(n / 2 ^ {\log_M ^ {0.015}(n)}\right)\right)\right)
    \end{equation}
\end{lemma}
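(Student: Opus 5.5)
We plan to reuse the sign-decomposition from the proof of \cref{lemma:totalmeandeviation}, now applied to a reweighted sequence. For each recursion interval $(l,r]$ on scale $s$, \cref{claim:totalmeandeviationtool} gives signs $g_{x,s}$ with $\md_M(A_{(l,r]}) \le 2\sum_{x\in(l,r]} g_{x,s}a_x$. Multiplying by the ratio $\rho_{(l,r]} := B_{\Sigma(l,r]}/A_{\Sigma(l,r]}\in[\alpha,1]$ gives
\[
\rd_M(A(l,r],B(l,r]) \le 2\sum_{x\in(l,r]} g_{x,s}\,\rho_{(l,r]}\,a_x .
\]
Summing over all recursion intervals, we get $\treerd_{M,S}(A,B) \le 2\sum_x a_x \sum_{s} \rho_{s}(x) g_{x,s}$. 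Here $\rho_s(x)$ is the ratio of the scale-$s$ interval containing $x$. The task is to control the weighted signed sum $h_x := \sum_s \rho_s(x) g_{x,s}$ against the weights $a_x$.

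To handle the varying ratios, we use a dyadic/Abel decomposition. Write $\rho_s(x) = \sum_{k\ge 0} 2^{-k}\mathbf{1}[\rho_s(x) \ge 2^{-k}]$ up to constant factors. Since $\rho\ge\alpha$, only $K = O(1+\log(1/\alpha))$ levels $k$ are needed. This is where the factor $1+\log(1/\alpha)$ in \eqref{eq:treerd} comes from. For each fixed level $k$, we compare with a sequence whose mass matches $B$ up to the factor $2^{-k}$.

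The cleaner route we would actually try first is a change of measure. Sample $x$ with probability proportional to $a_x$, and note that $\rho_s(x)$ is exactly the conditional $B$-mass fraction of the scale-$s$ interval. Along the root-to-leaf path of $x$, the product of the successive ratios $\rho_{s}/\rho_{s+1}$ telescopes to $b$-weights versus $a$-weights. This lets us rewrite $\sum_x a_x \sum_s \rho_s(x) g_{x,s}$ as an expectation under a $B$-weighted sampling, with each term reweighted by a ratio in $[\alpha,1]$. Under $a$-weighted sampling, conditioning on the larger scales, $\sum_{x} a_x g_{x,s}$ is not balanced. However, \cref{claim:totalmeandeviationtool}'s signs split each interval into halves by count, not mass, so the bound on $|g_x|$ must again go through counting.

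So the concrete plan is to bound $2\sum_x a_x |h_x|$ with $|h_x| \le \sum_{k} 2^{-k}\,\bigl|\sum_{s:\rho_s(x)\ge 2^{-k}} g_{x,s}\bigr| + (\text{error})$. For each $k$, the set of scales where $\rho_s(x)\ge 2^{-k}$ depends only on ancestors. We then hope a Hoeffding/martingale bound shows that partial sums of the uniform coin flips $g_{x,s}$ over any ancestor-determined (predictable) subset of scales are at most $S^{0.51}$, except on a tail of density $2^{-\Omega(\log_M^{0.02} n)}$. A martingale version of \cref{lemma:hoeffding} (Azuma), with a union bound over $S$ stopping points, should give this.

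On the core indices, we get $\sum_x a_x 2^{-k}S^{0.51}$ summed over levels. Using $a$-mass in intervals with $\rho\ge 2^{-k}$ at most $2^{k}$ times their $b$-mass, each level contributes $O((\sum_x b_x)S^{0.51})$. Summing over the $O(1+\log(1/\alpha))$ levels gives the first term of \eqref{eq:treerd}. The tail indices contribute at most $\max a \cdot S\cdot n2^{-\Omega(\log^{0.02}_M n)}$ per level, which gives the second term.

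The main obstacle is this conversion from $a$-mass to $b$-mass at level $k$. The sets $\{x : \rho_s(x)\ge 2^{-k}\}$ differ across scales, so a single $x$'s weight $a_x$ cannot be charged to one interval's ratio. We will try to charge the contribution of scale $s$ directly: $\sum_{x\in(l,r]} a_x\,2^{-k}\mathbf{1}[\rho_{(l,r]}\ge 2^{-k}] \le 2\rho_{(l,r]}A_{\Sigma(l,r]} = 2B_{\Sigma(l,r]}$. We would then need to exchange the order of "partial sum over scales" and "charging" carefully. If this fails, we fall back to grouping consecutive scales on which the level-$k$ indicator is constant.
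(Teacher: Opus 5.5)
Your overall plan (sign-decompose via \cref{claim:totalmeandeviationtool}, split by ratio level, bound partial sums of the fair signs $g_{x,s}$ over predictable sets of scales by Azuma, then convert $a$-mass to $b$-mass) can be made to work. As written, however, it does not close. The two steps that are supposed to produce the factor $1+\log(1/\alpha)$, rather than $S^{0.49}$ or $1/\alpha$, both fail.

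First, you cannot write $\rho_s(x)=\sum_k 2^{-k}\mathbf{1}[\rho_s(x)\ge 2^{-k}]$ ``up to constant factors'' inside the signed sum $h_x=\sum_s\rho_s(x)g_{x,s}$. The leftover $\sum_s e_s(x)g_{x,s}$, with $|e_s(x)|$ of order $\rho_s(x)$, is your unspecified ``(error)'', and nothing controls it. Bounding it termwise costs $\sum_x a_x\sum_s\rho_s(x)=S\sum_x b_x$. Your martingale bound (the coefficients are predictable but supported on all of $[n]$) costs $S^{0.51}\sum_x a_x$, which can be as large as $S^{0.51}\alpha^{-1}\sum_x b_x$.

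The repair is to split by level before introducing signs, as the paper does with $\sigma_k$. Give each recursion interval $J$ the unique $k_J$ with $\rho_J\in(2^{-k_J},2^{-k_J+1}]$. Use $\rho_J\md_M(A_J)\le 2^{-k_J+1}\md_M(A_J)$, which is legitimate since $\md_M\ge 0$, and only then apply the claim. This gives exactly $\treerd_{M,S}(A,B)\le\sum_k 2^{-k+2}\sum_x a_x P_k(x)$, where $P_k(x):=\sum_{s:\,k_{J_s(x)}=k}g_{x,s}$ and $J_s(x)$ is the scale-$s$ interval containing $x$. An exact layer-cake or binary-digit expansion of $\rho$ would also work.

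Second, you need $\sum_{x\in U_k}a_x\le 2^k\sum_x b_x$, where $U_k$ is the set of indices lying in some level-$k$ interval (outside which $P_k$ vanishes). You state this but do not justify it, and the justification you propose fails. Charging scale by scale via $\sum_{x\in(l,r]}a_x2^{-k}\mathbf{1}[\cdot]\le 2B_{\Sigma(l,r]}$ and summing over scales gives $S\sum_x b_x$, which throws away the cancellation.

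The missing observation is that $U_k$ is the union of a laminar family of level-$k$ intervals. It is therefore a disjoint union of its maximal members, each with $A$-mass at most $2^k$ times its $B$-mass. So each $a_x$ is charged once, to the maximal level-$k$ interval containing it. This is exactly the paper's dominating subfamily $I'$ and \eqref{eq:rdtotalsum}.

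With both repairs your route goes through:
\begin{itemize}
\item $\mathbf{1}[k_{J_s(x)}=k]$ is determined by $J_s(x)$, and $g_{x,s}$ is a fair sign given $J_s(x)$.
\item Azuma then gives $|P_k(x)|\le S^{0.51}$ outside a set of density $2^{-\Omega(S^{0.02})}$.
\item Each level contributes $O(S^{0.51}\sum_x b_x)$ on the core, and the tails sum geometrically in $k$.
\end{itemize}

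The paper reaches the same bound without any martingale. It zeroes $A,B$ outside $U_k$ to get $C,D$ and observes that intervals of other levels only add non-negative terms to $\treemd_{M,S}(C)$. It then applies \cref{lemma:totalmeandeviation} to $C$ as a black box.
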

\begin{proof}
Let us consider the constant-ratio case where $\alpha = \Omega(1)$.
    In which case, we have
\[
\rd_M(A(l, r], B(l, r]) = \frac{B_{\Sigma(l,r]}}{A_{\Sigma(l, r]}}\md_M(A(l, r]) = O(\md_M(A(l, r]))
\]
(when $A_{\Sigma(l, r]} = 0$, this term becomes $0$ by definition of scaled total deviation, which is also $O(\md_M(A(l, r]))$ since total deviation is non-negative). 
Therefore, we have
\begin{align*}
\treerd_{M, S}(A, B) &= 
\sum_{s \in [S]}\sum_{(l,r] \in s}{\rd_M(A(l, r], B(l, r])} \\ 
&= \sum_{s \in [S]}\sum_{(l,r] \in s}{O(\md_M(A(l, r]))} \\ &= O(\treemd_{M, S}(A))\\
&= O\left(\left( \sum_x{a_x}\right)S ^ {0.51}  + \max_{x'}{(a_{x'})} \cdot \left(n / 2 ^ {\log_M ^ {0.015}(n)}\right)\right) && \text{(from \cref{lemma:totalmeandeviation})}.
\end{align*}
Since $\alpha = \Omega(1)$, we have $B_{\Sigma(0, n]} = \Omega(A_{\Sigma(0, n]})$. Thus we also have
\begin{equation} \label{eq:rdlemma}
\treerd_{M, S}(A, B) = O\left(\left( \sum_x{b_x}\right)S ^ {0.51}  + \max_{x'}{(a_{x'})} \cdot \left(n / 2 ^ {\log_M ^ {0.015}(n)}\right)\right).
\end{equation}

\xiao{added transition and simplified argument via linearity} Before going into the fully general case, we partially generalize \eqref{eq:rdlemma} by stating that it holds when for some $\alpha \le \beta \le 1$ we have 
    \begin{align*}
        B_{\Sigma(l, r]} = \Theta(\beta A_{\Sigma(l, r]})
    \end{align*}
    for every recursion interval $(l, r]$ (i.e., the $B$-to-$A$ ratio falls within an $O(1)$-narrow range for every recursion interval). To see this, we can easily see that $\treerd$ is a linear function of $B$: for any $\gamma \in \mathbb{R}$,
    $$\treerd_{M, S}(A, \gamma \cdot B) = \gamma \cdot \treerd_{M, S}(A, B),$$
    where $\gamma \cdot b$ is the element-wise product of $\gamma$ and $B$. Note that the case with $A$ and $(1 / \beta) \cdot B$ falls into the constant-ratio case above. We have
    \begin{align*}
    \treerd_{M, S}(A, B) &= \beta \cdot \treerd_{M, S}(A, (1 / \beta) \cdot B) && \text{from linearity} \\
    &= \beta \cdot O\left(\left(\sum_x{\left(b_x / \beta\right)}\right)S ^ {0.51}  + \max_{x'}{(a_{x'})} \cdot \left(n / 2 ^ {\log_M ^ {0.015}(n)}\right)\right) && \text{from \eqref{eq:rdlemma}} \\
    &= O\left(\left(\beta \cdot \sum_x{\left(b_x / \beta\right)}\right)S ^ {0.51}  + \beta \cdot \max_{x'}{(a_{x'})} \cdot \left(n / 2 ^ {\log_M ^ {0.015}(n)}\right)\right)  \\
    &= O\left(\left(\beta \cdot \sum_x{\left(b_x / \beta\right)}\right)S ^ {0.51}  + \max_{x'}{(a_{x'})} \cdot \left(n / 2 ^ {\log_M ^ {0.015}(n)}\right)\right) && \text{since $\beta \le 1$} \\
    &= O\left(\left(\sum_x{b_x}\right)S ^ {0.51}  + \max_{x'}{(a_{x'})} \cdot \left(n / 2 ^ {\log_M ^ {0.015}(n)}\right)\right).
\end{align*} \xiao{changed part ends}

Motivated by this, we now consider partitioning the range of possible $B$-to-$A$ ratio $[\alpha, 1]$ into $\ceil{{\log_2{\left(\frac{1}{\alpha}\right)}}}$ parts:
\[[\alpha, 1] \subset \left[\frac{1}{2}, 1\right] \cup \left[\frac{1}{4}, \frac{1}{2}\right] \cup \left[\frac{1}{8}, \frac{1}{4}\right] \cup \cdots \cup\left[\frac{1}{2 ^ {\ceil{{\log_2{\left(\frac{1}{\alpha}\right)}}}}}, \frac{1}{2 ^ {\ceil{{\log_2{\left(\frac{1}{\alpha}\right)}}} - 1}}\right].\]
For each part $\left[\frac{1}{2 ^ {k}}, \frac{1}{2 ^ {k - 1}}\right]$, we define the sum
\[\sigma_k := \sum_{s \in [S]}\sum_{(l,r] \in s}{\mathbbm{1}\left[B_{\Sigma(l, r]} \in \left[\frac{1}{2 ^ {k}}A_{\Sigma(l, r]}, \frac{1}{2 ^ {k - 1}}A_{\Sigma(l, r]}\right]\right]\rd_M(A(l, r], B(l, r])},\]
Namely the contribution to \eqref{eq:treerd} from intervals whose $B$-to-$A$ ratios are within $\left[\frac{1}{2 ^ {k}}, \frac{1}{2 ^ {k - 1}}\right]$. By definition we have
\[\sum_{s \in [S]}\sum_{(l,r] \in s}{\rd_M(A(l, r], B(l, r])} \le \sum_{k=1}^{\ceil{{\log_2{\left(\frac{1}{\alpha}\right)}}}}\sigma_k.\]
Since $\ceil{{\log_2{\left(\frac{1}{\alpha}\right)}}} < 1 + \log_2{\left(\frac{1}{\alpha}\right)}$, to show \eqref{eq:treerd} it suffices to show that for each $k \in \left[1, \ceil{{\log_2{\left(\frac{1}{\alpha}\right)}}}\right]$, we have
\[\sigma_k = O\left(\left( \sum_x{b_x}\right)S ^ {0.51}  + \max_{x'}{(a_{x'})} \cdot \left(n / 2 ^ {\log_M ^ {0.015}(n)}\right)\right).\]

Fix a $k \in \left[1, \ceil{{\log_2{\left(\frac{1}{\alpha}\right)}}}\right]$. Consider the set of intervals $I$ containing all recursion intervals $(l, r]$ such that $B_{\Sigma(l, r]} \in \left[\frac{1}{2 ^ {k}}A_{\Sigma(l, r]}, \frac{1}{2 ^ {k - 1}}A_{\Sigma(l, r]}\right]$, namely intervals that contribute to $\sigma_k$. 
Consider length-$n$ sequences $C = \{c_x\}$ and $D = \{d_x\}$ defined as copies of $A, B$ with the regions that do not contribute to $\sigma_k$ zeroed out:
\begin{align*}
c_x &:= \begin{cases}
    a_x & \text{if $\exists (l, r] \in I$ s.t.~$x \in (l, r]$} \\
    0 & \text{otherwise}
\end{cases}, \\
d_x &:= \begin{cases}
    b_x & \text{if $\exists (l, r] \in I$ s.t.~$x \in (l, r]$} \\
    0 & \text{otherwise}
\end{cases}.
\end{align*}
For every $(l, r] \in I$, we then have
\begin{equation} \label{eq:abcdrel}
\rd_M(A(l, r], B(l, r]) = \rd_M(C(l, r], D(l, r])
\end{equation}
and
\begin{equation} \label{eq:rdtdrelcd}
\rd_M(C(l, r], D(l, r]) = O\left(\frac{1}{2 ^ k} \cdot \md_M(C(l, r])\right).
\end{equation}

For any distinct $(l_0, r_0], (l_1, r_1] \in I$, either $(l_0, r_0]$ and $(l_1, r_1]$ are disjoint, or one contains another. Thus, it is possible to find a ``dominating'' subset $I' \subset I$ of disjoint intervals such that for every $(l, r] \in I$, there exists $(l', r'] \in I'$ such that $(l, r] \subseteq (l', r']$. By taking the sums over each of these disjoint intervals in $I'$, we must have 
\begin{equation} \label{eq:rdtotalsum}
    \sum_x{d_x} = \Theta\left(\frac{1}{2 ^ {k}}\sum_{x'}{c_{x'}}\right).
\end{equation}

Now we have
\begin{align*}
\sigma_k &= \sum_{s \in [S]}\sum_{(l,r] \in s}{\mathbbm{1}\left[(l, r] \in I]\right] \cdot \rd_M(A(l, r], B(l, r])} \\
&= \sum_{s \in [S]}\sum_{(l,r] \in s}{\mathbbm{1}\left[(l, r] \in I]\right] \cdot \rd_M(C(l, r], D(l, r])} && \text{due to \eqref{eq:abcdrel}} \\
            &= O\left(\sum_{s \in [S]}\sum_{(l,r] \in s}{\mathbbm{1}\left[(l, r] \in I]\right] \cdot \frac{1}{2 ^ k} \cdot \md_M(C(l, r])}\right) && \text{due to \eqref{eq:rdtdrelcd}} \\
            &\le O\left(\sum_{s \in [S]}\sum_{(l,r] \in s}{\frac{1}{2 ^ k} \cdot \md_M(C(l, r])}\right) \\
            &= O\left(\frac{1}{2 ^ k} \cdot \treemd_{M, S}(C)\right) && \text{by definition} \\ 
            &= O\left(\frac{1}{2 ^ k} \cdot \left(\left( \sum_x{c_x}\right)S ^ {0.51}  + \max_{x'}{(c_{x'})} \cdot \left(n / 2 ^ {\log_M ^ {0.015}(n)}\right)\right)\right) && \text{from \cref{lemma:totalmeandeviation}} \\
            &= O\left(\frac{1}{2 ^ k} \cdot \left( \sum_x{c_x}\right)S ^ {0.51}  + \frac{1}{2 ^ k} \cdot \max_{x'}{(c_{x'})} \cdot \left(n / 2 ^ {\log_M ^ {0.015}(n)}\right)\right) \\
            &= O\left(\frac{1}{2 ^ k} \cdot \left( \sum_x{c_x}\right)S ^ {0.51}  + \max_{x'}{(c_{x'})} \cdot \left(n / 2 ^ {\log_M ^ {0.015}(n)}\right)\right) && \text{since $\frac{1}{2 ^ k} \le 1$}\\
            &= O\left(\left(\sum_x{d_x}\right)S ^ {0.51}  + \max_{x'}{(c_{x'})} \cdot \left(n / 2 ^ {\log_M ^ {0.015}(n)}\right)\right) && \text{due to \eqref{eq:rdtotalsum}} \\
            &= O\left(\left(\sum_x{b_x}\right)S ^ {0.51}  + \max_{x'}{(c_{x'})} \cdot \left(n / 2 ^ {\log_M ^ {0.015}(n)}\right)\right) && \text{since $b_{x} \le d_{x}$} \\
            &= O\left(\left(\sum_x{b_x}\right)S ^ {0.51}  + \max_{x'}{(a_{x'})} \cdot \left(n / 2 ^ {\log_M ^ {0.015}(n)}\right)\right). && \text{since $c_{x'} \le a_{x'}$}
\end{align*}
\end{proof}

In an earlier manuscript, we mistakenly thought the lemma would hold without the requirement on the $W$-to-$A$ ratio (without the $\left(1 + \log{\left(\frac{1}{\alpha}\right)}\right)$ term in the bound). As an anonymous reviewer kindly pointed out, the requirements on the $W$-to-$A$ ratio are necessary, see \cref{appendix:treerdold} for further discussion. 

\section{Shortest Path Problems on Grids}
\subsection{Introduction}
\subsubsection{Grid Graph for LCS and Edit Distance} \label{sec:gridgraphbasic}
Given two strings $X,Y$ of length $n_X$ and ${n_Y}$, we define an edge-weighted graph, corresponding to the cost of moving between states in the standard quadratic-time DP algorithm for ED (closely related to the \emph{alignment graph} defined in e.g., \cite{AKW23}):
\begin{definition}[Edit Distance Grid Graph] \label{def:edgridoriginal}
	The Edit Distance Grid Graph is a grid graph with diagonal shortcuts that can be used to compute $\ed(X, Y)$, and is defined as follows:
\begin{description}
    \item[Vertices] The grid points of $[0,n_X] \times [0,{n_Y}]$. The vertex at the intersection of the $x$-th column and the $y$-th row is $\vertexoriginal{x}{y}$;
    \item[Edges] The set of edges includes standard grid edges and diagonal shortcut edges:
    \begin{itemize}
        \item Standard grid edges from $\vertexoriginal{u}{v}$ to $\vertexoriginal{u}{v + 1}$ or $\vertexoriginal{u + 1}{v}$, always with weight $1$ (corresponding to insertion or deletion);
        \item Diagonal shortcut edges from $\vertexoriginal{u}{v}$ to $\vertexoriginal{u + 1}{v + 1}$, whose weight is $0$ if $X[u] = Y[v]$ (match) and $1$ otherwise (substitution).
    \end{itemize}
    \item[Objective] $\ed(X, Y)$ is the length (sum of weights) of the shortest $(0,0)$-$(n_X,{n_Y})$ path. 
\end{description}
\end{definition}

Given two strings $X,Y$ of length $n_X$ and ${n_Y}$, we define an edge-weighted graph, corresponding to the profit of moving between states in the standard quadratic-time DP algorithm for LCS:
\begin{definition}[LCS Grid Graph] \label{def:lcsgridoriginal}
	The LCS Grid Graph is a grid graph with diagonal shortcuts that can be used to compute $\lcs(X, Y)$, and is defined as follows:
\begin{description}
    \item[Vertices] The grid points of $[0,n_X] \times [0,{n_Y}]$. The vertex at the intersection of the $x$-th column and the $y$-th row is $\vertexoriginal{x}{y}$;
    \item[Edges] The set of edges includes standard grid edges and diagonal shortcut edges:
    \begin{itemize}
        \item Standard grid edges from $\vertexoriginal{u}{v}$ to $\vertexoriginal{u}{v + 1}$ or $\vertexoriginal{u + 1}{v}$, always with weight $0$ (corresponding to no match between the two characters);       \item Diagonal shortcut edges from $\vertexoriginal{u}{v}$ to $\vertexoriginal{u + 1}{v + 1}$, whose weight is $1$ if $X[u] = Y[v]$ (match) and $0$ otherwise (no match).
    \end{itemize}
    \item[Objective] $\lcs(X, Y)$ is the length (sum of weights) of the longest $(0,0)$-$(n_X,{n_Y})$ path. 
\end{description}
\end{definition}

\subsubsection{Rotated Grid Graph: $\vertexoriginal{u}{v} \rightarrow \vertex{u + v}{v - u + {n_X}}$ } \label{sec:grid}
	To describe our algorithm, we will rotate the grids by 45 degrees and obtain graphs where for any fixed path, each column (corresponding to a diagonal in the original grid) is visited exactly once. 
\begin{restatable}[Rotated Edit Distance Grid Graph]{definition}{defedgrid} \label{def:edgrid}
We use $\vertexoriginal{x}{y}$ to denote a vertex in the (original) Edit Distance Grid Graph at the intersection of column $x$ and row $y$ and $\vertex{x}{y}$ to denote a vertex in the Rotated Edit Distance Grid Graph at the intersection of column $x$ and row $y$. The description of the new graph on the rotated grid is as follows:
\begin{description}
    \item[Vertices] The vertex for $\vertexoriginal{u}{v}$ in the original grid corresponds to $\vertex{u + v}{v - u + {n_X}}$ on the new grid;
    \item[Edges] We start with the same set of edges as in the original graph. Then, we divide each shortcut edge with weight $w$ ($\in \{0, 1\}$) from $\vertex{u}{v}$ to $\vertex{u + 2}{v}$ into two edges, one from $\vertex{u}{v}$ to $\vertex{u + 1}{v}$ with weight $0$ and the other from $\vertex{u + 1}{v}$ to $\vertex{u + 2}{v}$ with weight $w$. 
    \item[Objective] The length of the shortest $\vertex{0}{{n_X}}$-$\vertex{n_X + {n_Y}}{n_Y}$ path equals $\ed(X, Y)$.
\end{description}
\end{restatable}
\begin{definition}[Rotated LCS Grid Graph] \label{def:lcsgrid}
We use $\vertexoriginal{x}{y}$ to denote a vertex in the (original) LCS Grid Graph at the intersection of column $x$ and row $y$ and $\vertex{x}{y}$ to denote a vertex in the Rotated LCS Grid Graph at the intersection of column $x$ and row $y$. The description of the new graph on the rotated grid is as follows:
\begin{description}
    \item[Vertices] The vertex for $\vertexoriginal{u}{v}$ in the original grid corresponds to $\vertex{u + v}{v - u + {n_X}}$ on the new grid;
    \item[Edges] We start with the same set of edges as in the original graph. Then, we divide each shortcut edge with weight $w$ ($\in \{0, 1\}$) from $\vertex{u}{v}$ to $\vertex{u + 2}{v}$ into two edges, one from $\vertex{u}{v}$ to $\vertex{u + 1}{v}$ with weight $0$ and the other from $\vertex{u + 1}{v}$ to $\vertex{u + 2}{v}$ with weight $w$. 
    \item[Objective] The length of the longest path from $\vertex{0}{{n_X}}$ to $\vertex{n_X + {n_Y}}{n_Y}$ equals $\lcs(X, Y)$.
\end{description}
\end{definition}

	For the remainder of the paper, a \emph{path} refers to a route from any $\vertex{l}{y_l}$ to any $\vertex{r}{y_r}$. Due to the split edges, any path $p$ in the grid graph from any column $l$ to any column $r$ visits each column $x \in [l, r]$ exactly once. We denote by $p(x)$ the y-coordinate (row) of the vertex $\vertex{x}{p(x)}$ where the path crosses column $x$. By the construction of the rotated grid graphs, $\abs{p(x) - p(x - 1)} \le 1$ for every $l < x \le r$. For an interval $(l ^ {\prime}, r ^ {\prime}] \subset(l, r]$, we write $p(l ^ {\prime}, r ^ {\prime}]$ to denote the sub-path of $p$ from $l ^ {\prime}$ to $r ^ {\prime}$. 
 We define $w(p)$ as the length (total weight of edges) of $p$. 

We have the following observation:
\begin{observation} \label{obs:pathdiff}
For a path $p$, and for any two columns $i \le j$ that the path visits, $$\abs{p(i) - p(j)} \le j - i.$$    
\end{observation}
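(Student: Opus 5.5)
The plan is to reduce the statement to the one-step bound $\abs{p(x) - p(x-1)} \le 1$, which holds for consecutive columns by construction of the rotated grid graphs (\cref{def:edgrid} and \cref{def:lcsgrid}), and then sum these bounds with the triangle inequality.

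First, I will justify the one-step bound, since it is the only substantive point. In the rotated coordinates, a standard grid edge from $\vertexoriginal{u}{v}$ to $\vertexoriginal{u}{v+1}$ maps $\vertex{u+v}{v-u+n_X}$ to $\vertex{u+v+1}{v-u+n_X+1}$. This edge advances the column by one and changes the row by $+1$. Similarly, an edge from $\vertexoriginal{u}{v}$ to $\vertexoriginal{u+1}{v}$ advances the column by one and changes the row by $-1$. A diagonal shortcut maps to a horizontal move of two columns with the row unchanged. That move was split into two unit-column edges, each with row change $0$. So every edge of the rotated graph goes from column $x-1$ to column $x$ and changes the row by at most $1$ in absolute value. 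In particular, a path visits each column exactly once, and $p$ is well defined with $\abs{p(x)-p(x-1)} \le 1$, as already noted in the text.

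Second, for columns $i \le j$ visited by $p$, I will telescope:
\[
\abs{p(j) - p(i)} = \abs{\sum_{x=i+1}^{j} \bigl(p(x) - p(x-1)\bigr)} \le \sum_{x=i+1}^{j} \abs{p(x)-p(x-1)} \le j - i .
\]
When $i = j$ the sum is empty and the bound is trivially $0$.

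I do not expect a real obstacle here. The only care needed is in the first step: checking that the split shortcut edges also advance exactly one column with zero vertical change, so that the telescoping runs over every intermediate column.
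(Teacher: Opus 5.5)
Your proof is correct and matches the paper's argument. The paper inducts on $j-i$, using that every edge out of the column-$(j-1)$ vertex of the path changes the row by at most $1$; that is your telescoping-plus-triangle-inequality bound written recursively, and your explicit check of the row changes under the rotation (including the split shortcut edges) just verifies the one-step fact that the paper asserts.
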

\begin{proof}
    We prove this by induction by increasing value of $j - i$. When $i = j$, the claim is trivially true since the path visits each column exactly once. Now suppose that the claim holds for $i, j-1$. The only edges from $\vertex{j-1}{p(j-1)}$ go to $\vertex{j}{p(j-1)}$, and (possibly) $\vertex{j}{p(j-1)\pm 1}$. 
    Therefore, 
    $$ \abs{p(i) - p(j)} \le \abs{p(i) - p(j-1)} +1 \le (j-1 -i) + 1 = j - i.$$
\end{proof}
Due to \cref{obs:pathdiff}, given two vertices $\vertex{l}{y_l}$ and $\vertex{r}{y_r}$, if $\vertex{r}{y_r}$ is reachable from $\vertex{l}{y_l}$, we must have $\abs{y_l - y_r} \le r - l$. 

\subsection{Reduction to Query-Based Approximations on Sparse Grids} \label{sec:sparsified}
We now show that the problems on the Rotated Grid Graphs can be further reduced to query-based approximation problems on sparsified versions of these grid graphs defined just now in \cref{sec:grid}. 

Let $M := 2 ^ {\floor{\log (\log ^ {\mval} (n))}} = \Theta(\log ^ {\mval} (n))$ and $S := 3 \cdot \floor{\log_M{(n ^ {0.01})}}$\footnote{Making $S$ divisible by $3$ is convenient for our algorithm for LCS.}. Let $n := n_X + n_Y$\footnote{Notice that this is different from the high-level overview where the last column had index $2n$.}.
Similar to \cref{sec:md}, for each scale $s \in [S]$, we consider the partition of $[n]$ into disjoint intervals $(l, r]$ of length $I_s := n / M ^ {S - s}$, and we say that $(l,r]$ is a \emph{scale-$s$ interval}. Every interval $(l, r]$ that is involved in the recursive partitioning (i.e., is a scale-$s$ interval for some scale $s \in [S]$) is called a \emph{recursion interval}.

Na\"ively, we can design an $M$-way $S$-scale divide-and-conquer scheme to compute the length of the shortest $\vertex{l}{y_l}$-$\vertex{r}{y_r}$ path where $(l, r]$ is a scale-$s$ interval for $s \in [S] \cup \{0\}$\footnote{As an edge case, sometimes we also use consider scale-$0$ intervals which are the sub-intervals of scale-$1$ intervals. We usually ignore scale $0$ unless specified otherwise. }, henceforth notated by $\estnaive[l, y_l, r, y_r]$. If $s = 0$, we calculate $\estnaive[l, y_l, r, y_r]$ using brute force. Suppose that $s > 0$. For $0 \le i \le M$ we say that the $i$-th {\em anchor} is ${\anc}_i := l + i \cdot \frac{r - l}{M}$. Every sequence $\hat{Y} = (y_l  = \hat{y}_0, \hat{y}_1, \ldots, \hat{y}_M = y_r)$ where for $1 \le i \le M$, $\abs{\hat{y}_i - \hat{y}_{i - 1}} \le \frac{r - l}{M}$ defines an \emph{anchor path} visiting vertices $\left(\vertex{{\anc}_i}{\hat{y}_i}\right)_i$. $\estnaive[l, y_l, r, y_r]$ can be found via minimizing over the anchor paths:
\begin{equation*}
\estnaive[l, y_l, r, y_r] = \min\limits_{\hat{Y} \in \anchor}\sum\limits_{i = 1} ^ M{\estnaive[{\anc}_{i - 1}, \hat{y}_{i - 1}, {\anc}_i, \hat{y}_i]}.
\end{equation*}

The idea towards improving the efficiency of the brute force idea above is to only consider a small subset of the possible anchor paths, at the cost of some error.
If we incur an additive error of $I_s / \phi$ for every recursion interval where length-relative factor $\phi = \omega(\log (n))$, we should only incur an additive error of $n / \omega(\log n)$ per scale and the total additive error across $S = O(\log(n))$ scales is $n / \omega(1)$. Consider the following ``hierarchically-sparsified'' grid graph:
\begin{restatable}[Sparsified Edit Distance Grid Graph]{definition}{defgridgraphmin} \label{def:gridgraphmin}
The Sparsified Edit Distance Grid Graph is defined given the following parameters:
\begin{itemize}
    \item Rounding threshold for columns: ${\mathcal{B}}_x := n / M ^ S = \tilde{\Theta} (n ^ {0.99})$;
    \item Length-relative rounding factor for rows: $\phi_y := 2 ^ {\floor{\log ^ {0.009} (n)}}$.
\end{itemize}
\begin{description}
    \item[Vertices] We only consider vertices $\vertex{x}{y}$ such that
    \begin{itemize}
        \item ${\mathcal{B}}_x \mid x$;
        \item Let $s \in [S] \cup \{0\}$ be the largest scale such that $I_s \mid x$. Then $\left(I_s / \phi_y\right) \mid y$. 
    \end{itemize} 
    \item[Edges] There is an edge from $\vertex{x}{y}$ to $\vertex{x + {\mathcal{B}}_x}{y'}$ such that $\abs{y' - y} \le {\mathcal{B}}_x$ with weight $\ew[x, y, x + {\mathcal{B}}_x, y']$ equal to the shortest $\vertex{x}{y}$-$\vertex{x + {\mathcal{B}}_x}{y'}$ path\footnote{Note that both coordinates (and in fact all coordinates involved in our algorithm) are always even, meaning that these vertices are never created from splitting diagonal edges in the original grid.} in the Rotated Edit Distance Grid Graph.
    \item[Objective] Find the shortest path from $\vertex{0}{{n_X}}$ to $\vertex{n_X + {n_Y}}{n_Y}$. 
\end{description}
\end{restatable}

We can identically define the Sparsified Grid Graph for LCS: 
\begin{restatable}[Sparsified LCS Grid Graph]{definition}{defgridgraphmax}\label{def:gridgraphmax}
The Sparsified LCS Grid Graph is defined given the following parameters:
\begin{itemize}
    \item Rounding threshold for columns: ${\mathcal{B}}_x := n / M ^ S = \tilde{\Theta} (n ^ {0.99})$;
    \item Length-relative rounding factor for rows: $\phi_y := 2 ^ {\floor{\log ^ {0.009} (n)}}$.
\end{itemize}
\begin{description}
    \item[Vertices] We only consider vertices $\vertex{x}{y}$ such that
    \begin{itemize}
        \item ${\mathcal{B}}_x \mid x$;
        \item Let $s \in [S] \cup \{0\}$ be the largest scale such that $I_s \mid x$. Then $\left(I_s / \phi_y\right) \mid y$. 
    \end{itemize} 
    \item[Edges] There is an edge from $\vertex{x}{y}$ to $\vertex{x + {\mathcal{B}}_x}{y'}$ such that $\abs{y' - y} \le {\mathcal{B}}_x$ with weight $\ew[x, y, x + {\mathcal{B}}_x, y']$ equal to the longest $\vertex{x}{y}$-$\vertex{x + {\mathcal{B}}_x}{y'}$ path in the Rotated LCS Grid Graph.
    \item[Objective] Find the longest path from $\vertex{0}{{n_X}}$ to $\vertex{n_X + {n_Y}}{n_Y}$. 
\end{description}
\end{restatable}

We show that such sparsifications do not incur too much error: 
\begin{restatable}[Sparsified problems approximates original problems]{lemma}{lemmasparsify} \label{lemma:sparsify}\hfill \\
Let $p$ be the shortest $\vertex{0}{{n_X}}$-$\vertex{n_X + {n_Y}}{n_Y}$ path in the Rotated Edit Distance Grid Graph. There exists a path $\hat{p}$ with $w(\hat{p}) \le w(p) + n / 2 ^ {\Omega\left(\log ^ {0.009} (n)\right)}$ in the Sparsified Edit Distance Grid Graph.

Similarly, let $p$ be the longest $\vertex{0}{{n_X}}$-$\vertex{n_X + {n_Y}}{n_Y}$ path in the Rotated LCS Grid Graph. There exists a path $\hat{p}$ with $w(\hat{p}) \ge w(p) - n / 2 ^ {\Omega\left(\log ^ {0.009} (n)\right)}$ in the Sparsified LCS Grid Graph.
\end{restatable}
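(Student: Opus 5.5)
We build $\hat p$ by snapping the optimal path $p$ to the allowed rows on the sparsified columns. For every column $x$ with $\mathcal{B}_x \mid x$, let $s(x)$ be the largest scale with $I_{s(x)} \mid x$ and set $\hat p(x)$ to a multiple of $I_{s(x)}/\phi_y$ nearest to $p(x)$. Rows already divisible are kept unchanged; this applies in particular to the two endpoints, since $n_X$ is divisible by the grid spacing after the padding step. So $\abs{\hat p(x) - p(x)} \le I_{s(x)}/\phi_y$. Between consecutive kept columns $x$ and $x+\mathcal{B}_x$ we use the sparsified edge from $\vertex{x}{\hat p(x)}$ to $\vertex{x+\mathcal{B}_x}{\hat p(x+\mathcal{B}_x)}$. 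Its weight is the true shortest (resp.\ longest) path between these vertices. So it suffices to exhibit, for each block, a concrete rotated-grid path between the snapped endpoints whose weight differs from $w(p(x, x+\mathcal{B}_x])$ by a controlled amount, and then sum over blocks.

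The local repair step works as follows. Given $p$ restricted to $(x, x']$ with $x' = x + \mathcal{B}_x$, and target endpoints shifted by $\delta = \hat p(x) - p(x)$ and $\delta' = \hat p(x') - p(x')$, we build a path that first drifts with row steps of $\pm 1$ for $\abs{\delta}$ columns to rejoin $p$, then follows $p$, then drifts again for $\abs{\delta'}$ columns to reach $\hat p(x')$. By Observation~\ref{obs:pathdiff}, this requires $\abs{\delta}+\abs{\delta'}$ to be at most the slack available. More precisely, the detour replaces a segment of $p$ near each end with a monotone segment; we need the detour to merge with $p$, which holds by an intermediate-value argument since $p$ changes by at most $1$ per column. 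For ED, the changed portion costs at most $O(\abs{\delta}+\abs{\delta'})$ extra weight, since every edge weighs at most $1$. For LCS, we lose at most that much, since edge weights are in $\{0,1\}$. Feasibility also requires $\abs{\hat p(x') - \hat p(x)} \le \mathcal{B}_x$. This needs care when $p$ already moves at maximal slope; I would handle that case by choosing the rounding direction consistently, e.g.\ rounding toward the row of the neighbouring column, or alternatively by allowing the repair to span into the next block.

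The error accounting uses the fact that column $x$ with $s(x)=s$ occurs for at most $n/I_s$ values of $x$. Its rounding error is at most $I_s/\phi_y$, and it is charged to at most two adjacent blocks. The total error is therefore
\[
\sum_{s=0}^{S} \frac{n}{I_s}\cdot O\!\left(\frac{I_s}{\phi_y}\right) = O\!\left(\frac{nS}{\phi_y}\right).
\]
Here $\phi_y = 2^{\floor{\log^{0.009}(n)}}$ and $S = O(\log n)$, so the total is $n/2^{\Omega(\log^{0.009} n)}$, matching the statement. The argument is symmetric for both the ED and the LCS versions.

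The main obstacle is the repair step when $p$ is locally steep. Then there is no slack to absorb the shift $\delta$ within a short distance, and the snapped endpoints could even violate $\abs{y'-y}\le\mathcal{B}_x$. I would first try a cleaner repair instead: the vertical shift of a whole path segment. Shifting a path by a constant number of rows is always valid geometrically, and it changes the weight by at most the number of "mismatch" columns. But shifting changes the edge weights arbitrarily, so this fails. So the plan is to rely on the detour argument, with the detour of length $\abs{\delta}\le I_s/\phi_y \ll \mathcal{B}_x$ placed inside the $\mathcal{B}_x$-wide block. If the block has no slack because $p$ has slope $\pm1$ throughout, the rounded rows can be chosen on the same diagonal, since the row-divisibility grid is aligned with the diagonals up to parity.
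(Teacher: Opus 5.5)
Your final accounting (rounding displacement at most $I_s/\phi_y$ on about $n/I_s$ columns per scale, summed over $S$ scales) matches the paper's. However, the construction it is meant to pay for does not go through. The obstacle you flag at the end is a genuine gap, not a detail.

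First, snapping each sparsified column independently to its own grid $I_{s(x)}/\phi_y$ can produce consecutive targets with no sparsified edge between them. Let $x$ be a multiple of $I_1$ but not of $I_2$, so its spacing is $c=I_1/\phi_y$. Its neighbours $x\pm\mathcal{B}_x$ then have spacing $c'=\mathcal{B}_x/\phi_y=c/M$. Let $p$ have slope $+1$ on $[x-\mathcal{B}_x,x+\mathcal{B}_x]$, with $p(x)$ a multiple of $c'$ but not of $c$. Then $\hat p(x\pm\mathcal{B}_x)=p(x)\pm\mathcal{B}_x$, while $\hat p(x)=p(x)+\delta$ with $\delta\neq 0$. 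So one of the two adjacent blocks needs vertical displacement $\mathcal{B}_x+|\delta|>\mathcal{B}_x$.
\begin{itemize}
\item Rounding toward the neighbour cannot satisfy both neighbours at once.
\item Moving the neighbours onto the diagonal through $\hat p(x)$ displaces them by up to $c\gg c'$, and this propagates along the whole steep stretch.
\item That breaks both $|\hat p(x)-p(x)|\le I_{s(x)}/\phi_y$ and the per-block charge $O(|\delta|+|\delta'|)$ that your sum relies on.
\end{itemize}

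Second, the drift-and-merge repair fails exactly where it matters. If $p$ moves at slope $\pm1$ in the drift direction, the gap never closes, so the intermediate-value argument does not apply.

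Separately, padding only makes $n_X$ divisible by $\lambda=\Theta(n^{0.99})$, whereas column $0$ requires divisibility by $I_S/\phi_y=n/\phi_y$. So your claim that the endpoints need no rounding is unsupported; the paper's setup glosses over this as well.

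The missing ingredient is the one behind \cref{lemma:pathdiff}, which you came close to and then discarded. Translating a \emph{maximal-slope} stretch is free, because such a stretch uses only insertion/deletion edges, all of weight $1$ for ED and $0$ for LCS.
\begin{itemize}
\item \textbf{Endpoint shifts.} To lower the right endpoint by $2$, the paper takes the last step of the path that is not a downward step. It then translates the all-downward tail after it by two rows, which changes at most two edge weights. Iterating costs $|y_l-y_l'|+|y_r-y_r'|$ whenever $|y_l'-y_r'|\le r-l$.
\item \textbf{Feasibility.} The paper rounds in phases, one scale at a time. In the phase for scale $s$, every column that is a multiple of $I_s$ is snapped to the \emph{common} grid $I_s/\phi_y$. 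This spacing divides the segment width $I_s$, so nearest-multiple rounding is monotone and commutes with shifts by $I_s$. Hence it preserves $|y_l-y_r|\le I_s$.
\item \textbf{Cost.} Displacements are measured against the current, already modified path, not against $p$. Each phase therefore costs $O(n/\phi_y)$.
\end{itemize}
Processing scales from coarse to fine makes this transparent. Each phase then alters the path only strictly inside width-$I_s$ segments whose endpoints are already fixed. Coarser columns are not moved, because their rows are already multiples of $I_{s+1}/\phi_y$ and hence of $I_s/\phi_y$.
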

Towards proving \cref{lemma:sparsify}. For both the Rotated LCS Grid Graph and the Rotated Edit Distance Grid Graph, we have the following lemma, which says that we can slightly change the endpoints of a path so that the resulting paths differ in weights only after a few columns.
\begin{lemma} [Rounding the endpoints of a path] \hfill \\ \label{lemma:pathdiff}
	Let $p_A$ be a path from any $\vertex{l}{{y_l}}$ to any $\vertex{r}{{y_r}}$ on the Rotated LCS Grid Graph or the Rotated Edit Distance Grid Graph, where $l, y_l, r, y_r$ are all even. For any even $y_l^{\prime}$ and $y_r^{\prime}$with $\abs{y_l^{\prime} - y_r^{\prime}} \le r - l$, we can change $p_A$ into a path $p_B$ from $\vertex{l}{y_l^{\prime}}$ to $\vertex{r}{y_r^{\prime}}$ such that $\abs{w(p_A) - w(p_B)} \le \abs{y_l - y_l^{\prime}} + \abs{y_r - y_r^{\prime}}$. Moreover, there are at most $\abs{y_l - y_l^{\prime}} + \abs{y_r - y_r^{\prime}}$ columns before which $p_A$ and $p_B$ differ in weights.
\end{lemma}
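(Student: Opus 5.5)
We prove \cref{lemma:pathdiff} by modifying $p_A$ only near its two ends: near column $l$ we build a detour that starts at $y_l'$ and merges into $p_A$, and symmetrically near column $r$ we build a detour that leaves $p_A$ and ends at $y_r'$. By symmetry (reverse the path and the x-axis) it suffices to handle the left endpoint with $y_r' = y_r$. The right endpoint is then treated the same way, and the two budgets simply add.

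For the left endpoint, set $d := \abs{y_l - y_l'}$ and assume $y_l' > y_l$ (the other case is a mirror image). On the rotated grid we define $p_B(x) := \max(p_A(x),\, y_l' - (x - l))$ for $x \in [l, r]$. Then $p_B(l) = y_l'$ and $p_B(r) = y_r$. The latter holds because $y_r \ge y_l' - (r - l)$, which follows from the hypothesis $\abs{y_l' - y_r'} \le r - l$ with $y_r' = y_r$. The two arguments of the max meet at the first column $x^*$ where the line $y_l' - (x - l)$ drops to $p_A(x)$. Since $p_A$ changes by at most $1$ per column while the line decreases by exactly $1$, the gap $y_l' - (x-l) - p_A(x)$ is non-increasing. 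Because both endpoints are even, the gap starts at the even value $d$, and we must check that it reaches $0$ exactly rather than jumping over it. This follows from parity: $p_A(x) - x$ and the line minus $x$ have fixed parities along grid paths in the rotated coordinates, where $u+v$ and $v-u$ have equal parity. Hence $x^* \le l + d$, and for $x \ge x^*$ we have $p_B = p_A$.

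It remains to check that $p_B$ is a valid path and to bound its weight. On $[l, x^*]$, the path $p_B$ is the straight diagonal descent, which consists of original grid edges of weight $1$ for ED and $0$ for LCS. We must check that it is a legal sequence of rotated edges. A down-step of $1$ per column corresponds to a unit grid edge in the original grid, so it is legal. The part on $[l,x^*]$ therefore has weight at most $x^* - l \le d$. For ED, the removed part of $p_A$ has nonnegative weight, so $w(p_B) - w(p_A) \le d$. Also, $w(p_A) - w(p_B)$ is at most the weight of $p_A$ on $[l,x^*]$, which is at most $x^*-l \le d$, since each rotated column step carries weight at most $1$. For LCS the same two-sided bound holds. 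The weights of $p_A$ and $p_B$ differ only on columns in $(l, x^*]$, and there are at most $d$ such columns. The right endpoint is handled symmetrically and contributes $\abs{y_r - y_r'}$, provided the two modified regions do not overlap. If they do overlap, the hypothesis $\abs{y_l'-y_r'} \le r-l$ still guarantees that the max/min envelope construction, applied with both lines simultaneously, gives a valid path with the same budget.

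The main obstacle is this bookkeeping: the parity and even-coordinate issues, which ensure the envelope rejoins $p_A$ exactly at a real vertex, and the overlap case. We handle the overlap case by noting that the combined construction, clipping $p_A$ between the two lines emanating from the new endpoints, is feasible precisely because of the reachability condition $\abs{y_l' - y_r'} \le r-l$.
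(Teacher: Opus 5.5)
Your envelope construction, $p_B=\max(p_A,L)$ with $L(x)=y_l'-(x-l)$, is a reasonable alternative to the paper's iterated $\pm 2$ shifts. However, the step that bounds the damage rests on a false claim: $x^*\le l+d$ does not hold. The gap $g(x)=L(x)-p_A(x)$ changes by $-1-\delta$, where $\delta\in\{-1,0,+1\}$ is the step of $p_A$. So it stays \emph{constant} whenever $p_A$ itself takes a $(+1,-1)$ step. If $p_A$ descends diagonally for a long stretch, $x^*-l$ is unbounded in terms of $d$. For example, take $l=0$, $y_l=200$, $y_l'=202$, and $p_A(x)=200-x$ on $[0,100]$ followed by flat steps; then $d=2$ but $x^*=102$. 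So ``the part on $[l,x^*]$ has weight at most $x^*-l\le d$'' is false for ED, where that part has weight exactly $x^*-l$. The claim ``at most $d$ columns in $(l,x^*]$'' is likewise unjustified.

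The repair uses the same fact the paper relies on when it picks the \emph{last} non-descending index, so that the shifted tail consists only of $(+1,-1)$ steps. A $(+1,-1)$ step is an original insertion/deletion edge whose weight ($1$ for ED, $0$ for LCS) does not depend on position. Hence on every column where $g$ is constant, $p_A$ and $p_B$ carry the same weight. Weights can differ only where $g$ strictly decreases. Since $g$ goes from $d$ to $0$ in decrements of $1$ or $2$, there are at most $d$ such columns, each contributing at most $1$.

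Your parity remark is also inaccurate: $p_A(x)-x$ does not have fixed parity, since it flips at split-shortcut midpoints. The correct reason $g$ cannot jump over $0$ is that $g$ is odd only at midpoints. From a midpoint $p_A$ is forced to take a flat step, so $g$ drops by exactly $1$ there.

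Second, the reduction ``handle the left endpoint with $y_r'=y_r$, then the right one'' fails in general, because the intermediate endpoint pair can be unreachable. With $r-l=2$, $y_l=y_r=0$, $y_l'=y_r'=4$, both $(y_l',y_r)$ and $(y_l,y_r')$ violate the $|\cdot|\le r-l$ condition. Your step $y_r\ge y_l'-(r-l)$ applies the hypothesis to the pair $(y_l',y_r)$, which is not what is given. The paper sidesteps this by interleaving its elementary $\pm2$ moves in a suitable order.

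Your fallback of clipping by both lines simultaneously does work, but it has to be argued, not asserted. For $y_l'>y_l$ and $y_r'>y_r$, use $\max(p_A,L,R)$ with $R(x)=y_r'-(r-x)$. For $y_l'>y_l$ and $y_r'<y_r$, use $\min(\max(p_A,L),R')$ with $R'(x)=y_r'+(r-x)$; here $L\le R'$ is exactly $y_l'-y_r'\le r-l$. You then need to check three things:
\begin{itemize}
\item the endpoint values come out right, which is where $|y_l'-y_r'|\le r-l$ is used;
\item the envelope respects the midpoint structure;
\item every step of $p_B$ follows one of $p_A$, $L$, $R$.
\end{itemize}
The corrected count above then applies separately to the $L$-part and the $R$-part, giving the total budget $|y_l-y_l'|+|y_r-y_r'|$.
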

\begin{proof}
\Aviad{TBD: re-read the proof of this lemma after Xiao makes another pass.}
We will give the proof for the Rotated LCS Grid Graph. The proof for the Rotated Edit Distance Grid Graph is identical.

We first argue that we can locally adjust the endpoint of the path by $\pm 2$, losing at most a weight of $+2$. In other words, there is a path of weight $\ge w(p_A) - 2$ from $\vertex{l}{y_l}$ to $\vertex{r}{y_r - 2}$. (The analogous claim also follows by symmetry for paths from $\vertex{l}{y_l}$ to $\vertex{r}{y_r + 2}$, and similarly at the left end.)

Suppose $y_l - (y_r - 2) \le r - l$, we will adjust $p_A$ to get a path ${p_A}'$ from $\vertex{l}{y_l}$ to $\vertex{r}{y_r - 2}$ with a weight difference of at most $2$. Let $I$ be the maximum index such that $p_A(I) - p_A(I - 1) \ge 0$. Such an index exists since $y_l - y_r < r - l$ (otherwise $y_l - (y_r - 2) > r - l$). If $p_A(I) = p_A(I - 1) + 1$, we let ${p_A}'$ be such that
\[
{p_A}'(i) = p_A(i) - 2 \cdot \mathbbm{1}[i \ge I].
\]
Otherwise we have $p_A(I) = p_A(I - 1)$. This would imply that $p_A(I - 2) = p_A(I - 1)$ since $\vertex{I - 1}{p_A(I - 1)}$ is a vertex created by spliting a shortcut edge. We let ${p_A}'$ be such that
\[
{p_A}'(i) = p_A(i) - \mathbbm{1}[i \ge I - 1]- \mathbbm{1}[i \ge I].
\]
Since every edge connecting vertices on different rows has identical weights ($0$ for LCS), both $p_A$ and ${p_A}'$ have identical weights per column after column $I$. Therefore, in either case, the only possible weight differences between $p_A$ and ${p_A}'$ involve at most two edges, which means that the weight difference between $p_A$ and ${p_A}'$ is at most $2$.

To prove the lemma, we now apply this local “$\pm2$” adjustment $|y_l - y_l^{\prime}| / 2 + |y_r - y_r^{\prime}| / 2$ times. (Notice that there always exists an order of the local adjustments where a path exists between every intermediate pair of endpoints.) Each adjustment changes the weight by at most $2$, and affects at most one column. Therefore, the total weight difference is at most $|y_l - y_l^{\prime}| + |y_r - y_r^{\prime}|$, and the number of columns where the path differs is also at most $|y_l - y_l^{\prime}| + |y_r - y_r^{\prime}|$, as claimed.
\end{proof}
Now we prove \cref{lemma:sparsify}:
\lemmasparsify*
\begin{proof}
    From \cref{lemma:pathdiff}, if we round $p$ on $X$ columns so that the rows that $p$ visit on these columns become multiples of $Y$,  we incur a total additive error of at most $X \cdot Y$.

    We will perform rounding in $S$ phases, where for each $s = 0, 1, 2, \ldots, S$ in order, we generate a new path $p_{s}$ that satisfies the rounding condition for columns divisible by $I_s$. We start by setting $p_{-1} := p$.

    To go from $p_{s - 1}$ to $p_{s}$, for every column $x$ that is a multiple of $I_s$, we round $p(x)$ to a multiple of $I_s / \phi_y$. This incurs an error of at most $\left(n / I_s\right) \cdot \left(I_s / \phi_y\right) = n / \phi_y$.

    Thus after $S$ rounds, we incur an error of $S \cdot n / \phi_y = n / 2 ^ {\Omega\left(\log ^ {0.009} (n)\right)}$.
\end{proof}

From now on, unless specified otherwise, vertices, edges and paths refer to those in the Sparsified Grid Graphs. 
For LCS, our algorithm solves a query-based problem on the Sparsified LCS Grid Graph:
\begin{restatable}[Query-Based Approximation for LCS]{lemma}{lemmagridgraphmax} \label{lemma:gridgraphmax}
    Suppose that to know the exact edge weight of any edge from any $\vertex{x}{y}$ to any $\vertex{x + {\mathcal{B}}_x}{y'}$ requires an edge-weight query. We can approximate the problem in \cref{def:gridgraphmax} in $o(n ^ {0.1})$ time with high probability\footnote{i.e., a probability of $1 - 1 / n ^ c$ for some constant $c > 0$.} using $\frac{n ^ 2{\phi_y} ^ 2}{{{\mathcal{B}}_x} ^ 2} / 2 ^ {\Omega\left(\log ^ {0.01} (n)\right)}$ edge-weight queries, up to a multiplicative factor of $\left(1 - o(1)\right)$, assuming that the answer is at least $n / 2 ^ {o\left(\log ^ {0.009} (n)\right)}$.
\end{restatable}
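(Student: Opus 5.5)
We plan to design a randomized recursive estimator $\est[l,y_l,r,y_r]$ on the Sparsified LCS Grid Graph, following the sample-and-round scheme of \cref{sub:overview-additive}. We sample each scale $s\in[S]$ independently into the active set $\tilde S$ with probability $1/\log_M^{c}(n)$, for a small constant $c$, so that $|\tilde S| = \Theta(S^{1-c})$ with high probability. On an inactive scale we use the exact recursion $\est[l,y_l,r,y_r]=\max_{\hat Y}\sum_i \est[\anc_{i-1},\hat y_{i-1},\anc_i,\hat y_i]$ over anchor paths. On an active scale we instead do the following.
\begin{itemize}
\item We force the anchors to be the rounded straight line $\hat y_i=\round{y_l+i(y_r-y_l)/M}$, further rounded to the row granularity $I_{s-1}/\phi_y$ of the sparsified grid.
\item We draw $\eta\in\{0,1\}^M$ with exactly $M/2$ ones.
\item We set the estimate to $2\sum_i\eta_i\est[\cdots]$ if the sampled and unsampled halves have comparable sums. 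Otherwise we fall back to $0$, which is sound for a maximization problem.
\end{itemize}
The saving in queries comes from the discarded halves: a leaf $(x,x+\mathcal B_x]$ is queried only if every active ancestor keeps it, which is a $2^{-|\tilde S|}=2^{-\Omega(\log^{0.01}n)}$ fraction, given $S=\Theta(\log n/\log\log n)$ and $c$ chosen small enough. Within a surviving leaf we query all $O((\mathcal B_x)^2\cdot(n\phi_y/\mathcal B_x)^2/\mathcal B_x^2)$-type row pairs. Summing gives the stated query count. The running time bookkeeping is $o(n^{0.1})$ for the recursion excluding queries, which follows from the size of the sparsified grid at the top scales.

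\textbf{Soundness.} We want $\est\le(1+o(1))\,\mathrm{OPT}+n/2^{\log^{\Omega(1)}n}$ with high probability. On active scales the anchors are a single fixed path per subproblem, so there is no union over exponentially many anchor paths, only over the polynomially many subproblem endpoints. Fix an interval and endpoints, and fix the best sub-estimates. We apply Hoeffding (\cref{lemma:hoeffding}) to the sampled sum over $M/2$ of the $M$ parts; each part is bounded by $I_{s-1}$. The balance test guarantees that overshooting by a factor $(1+\eps'/|\tilde S|)$ happens with probability $2^{-M\eps'^2/|\tilde S|^2}$. We choose $\eps'=o(1)$ so that this is $2^{-\log^{\Omega(1)}n}$, using $M=\Theta(\log^{\mval}n)$ against $|\tilde S|^2$; this is where the exponents $c$ and $\mval$ must be tuned. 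Multiplying errors across the $|\tilde S|$ active levels gives $(1+\eps'/|\tilde S|)^{|\tilde S|}=1+O(\eps')$. The rare failures contribute at most the interval width each, summing to an additive $n/2^{\log^{\Omega(1)}n}$ via a layered-graph charging argument.

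\textbf{Completeness.} We take the optimal sparsified path $p$ and the weight sequence $B$ of its per-leaf contributions, and let $A$ be per-leaf curvature plus weight, so that $B\le A$. Intervals where $B_\Sigma<\alpha A_\Sigma$, with $\alpha=1/\polylog$, contribute little LCS. We zero them out, charging their loss against $\alpha\cdot\sum A$ by a dominating-disjoint-intervals argument, which needs the total curvature to be bounded. On the remaining intervals, \cref{lemma:treerd} bounds the curvature weighted by LCS contribution by $\log(1/\alpha)\,(\sum b)\,S^{0.51}$ plus a negligible tail. Rounding $p$ to a regular path on active scales costs $O(M)$ times the local curvature (via \cref{lemma:pathdiff}), plus $n/\phi_y$ per scale. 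Because scales are active independently of $p$, the expected cost is $M\cdot\mathrm{polylog}(1/\alpha)\cdot\mathrm{OPT}\cdot S^{0.51}\,|\tilde S|/S=o(\mathrm{OPT})$ when $c>0.51$ and $M\ll S^{c-0.51}$. A similar application of \cref{lemma:totalmeandeviation} to $B$ shows that on most active intervals the sub-path weights are balanced, so the test passes and sampling underestimates only by a $(1-\eps'/|\tilde S|)$ factor, again by Hoeffding. Unbalanced intervals get estimate $0$; their loss is bounded by the tree total deviation of $B$ on active scales, which is $o(\mathrm{OPT})$.

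\textbf{Main obstacle.} The hardest step is making the parameters simultaneously consistent. We need $c>0.51$ for rounding, $M\gtrsim|\tilde S|^2/\eps'^2$ for concentration, $M\ll S^{c-0.51}$, and $|\tilde S|\ge\log^{0.01}n$ for savings. Combined with $M=\Theta(\log^{\mval}n)$ and $S\approx\log n/\log\log n$, these constraints are tight. I would first fix $c$ close to $1$ and verify the inequalities with $\mval=0.109$. The LCS-specific low-ratio discarding is the second main difficulty, and I would handle it with the scaled lemma as outlined above.
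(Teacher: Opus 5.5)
The gap is in your completeness step, where you round the optimal path $p$ to an $\tilde S$-regular one.

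\textbf{Where the argument breaks.} You bound the loss on an active interval $(l,r]$ by $O(M)$ times the \emph{unweighted} curvature $\md_M(\Delta p(l,r])$. This is the ED argument of \cref{claim:regularsinglescale} via \cref{lemma:pathdiff}. You then substitute the bound of \cref{lemma:treerd}, which only controls the \emph{weighted} quantity $\frac{B_{\Sigma(l,r]}}{A_{\Sigma(l,r]}}\md_M(A(l,r])$, and nothing links the two.
\begin{itemize}
\item What you can certify per interval is a loss of $\min\bigl(B_{\Sigma(l,r]},\,O(M\cdot\md_M(\Delta p(l,r]))\bigr)$. This exceeds $M$ times the weighted curvature by up to a factor $A_{\Sigma(l,r]}/B_{\Sigma(l,r]}$.
\item For LCS, vertical movement of $p$ is free, so curvature is not charged to $w(p)$ (Challenge~1 in \cref{sub:overview-multiplicative}). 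Take a path of match density $\beta=\mathrm{OPT}/n\ll 1/M$ whose curvature is about $\beta$ times the width on every interval. It is consistent with \cref{lemma:totalmeandeviation} and \cref{lemma:treerd} for $\beta\lesssim S^{-0.49}$. Yet your bound permits losing all of $B_\Sigma$ on every active interval.
\item The ratio floor cannot close this. On surviving intervals the conversion costs a factor $1/\alpha$. On zeroed-out intervals you lose $\alpha\sum A$, and $\sum A$ contains the total vertical movement, which can be $\Theta(n)$.
\item With $\alpha=1/\polylog(n)$ that loss is $n/\polylog(n)$. This is not $o(\mathrm{OPT})$ in the allowed regime $\mathrm{OPT}\ge n/2^{o(\log^{0.009}n)}$. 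With $\alpha=2^{-\log^{0.01}n}$, the $1/\alpha$ factor is fatal instead.
\item Also, if ``per-leaf curvature'' means $\abs{(\Delta p)_i}$, the deviation of your $A$ does not control the signed curvature of $p$ at all.
\end{itemize}

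\textbf{The missing idea.} For LCS one shows that rounding is \emph{local}, not that it is cheap in curvature.
\begin{itemize}
\item The paper restricts active scales to multiples of $3$. \cref{claim:nohit} shows that if $\md_{M^3}(\Delta p(l,r])/\bigl((r-l)-\abs{p(r)-p(l)}\bigr)=o(M^{-3})$, then $p$ can be rerouted through the $\overline{y}_i$ while changing only columns within $I_{s-3}$ of an anchor (hitters).
\item The analysis is then per weight-$1$ column $x$. By optimality of $p$, at least as many columns drop from weight $1$ to $0$ as rise, so it suffices to count drops.
\item $x$ is rarely a frequent hitter (\cref{claim:frequenthitter}). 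A non-frequent hitter lands in an active hitter window with probability at most $S^{0.83}/S^{0.98}$.
\item The normalized-curvature condition fails on some active interval containing $x$ only rarely. This is where \cref{lemma:treerd} fits, since the scaled deviation is exactly an expectation over a random weight-$1$ column.
\item It is applied with $A=\Delta p+\mathcal{B}_x$. Then $\md(A)=\md(\Delta p)$, $A_\Sigma$ is the slack from the diagonal, and $B\le A$ because diagonal steps have weight $0$.
\item The floor $\alpha=2^{-\log^{0.01}n}$ is enforced through dispensable columns charged against interval \emph{widths}. It therefore costs only a $\log^{0.01}n$ factor.
\end{itemize}

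\textbf{Secondary issues.}
\begin{itemize}
\item Your balance test compares the sampled half with the \emph{unsampled} half. That requires computing all $M$ sub-estimates and forfeits the $2^{-\abs{\tilde S}}$ saving. The paper instead clips sampled values at $2\log^{0.02}(n)$ times the sampled mean (\ESTSUM).
\item ``Scales are active independently of $p$'' glosses over the fact that the rounded paths and the sub-estimates depend on $\tilde S$. The paper handles this by induction over scales and by the monotone clamped lower bound $\whest'$.
\end{itemize}
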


We argue that this is sufficient for approximating LCS:
\begin{lemma} [Reducing Approximation of LCS to \cref{lemma:gridgraphmax}]
    Suppose that we can approximate the problem in \cref{def:gridgraphmax} in $o(n ^ {0.1})$ time with high probability using $\frac{n ^ 2{\phi_y} ^ 2}{{{\mathcal{B}}_x} ^ 2} / 2 ^ {\Omega\left(\log ^ {0.01} (n)\right)}$ edge-weight queries, up to a multiplicative factor of $\left(1 - o(1)\right)$, assuming that the answer is at least $n / 2 ^ {o\left(\log ^ {0.009} (n)\right)}$.
    For any constant $\eps > 0$, there is a randomized algorithm that runs in time $n^2/2^{\Omega\left(\log^{0.009}(n)\right)}$ and, with high probability, $(1 - \eps)$-approximates LCS for two strings with lengths that sum to $n$.
\end{lemma}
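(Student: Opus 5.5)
We will prove the reduction by cases on the size of $\lcs(X,Y)$, running the exact small-LCS algorithm and the query-based scheme and returning the better estimate. Set the threshold $k := n / 2^{\log^{0.0095}(n)}$; since $0.0095>0.009$, we have $k = n/2^{\omega(\log^{0.009}(n))}$, so $k$ is small enough to be handled cheaply yet below the regime assumed in \cref{lemma:gridgraphmax}. First run \cref{lemma:boundedlcs} with bound $k$, in time $O(nk + n\log n) = n^2/2^{\Omega(\log^{0.009}(n))}$. If it reports $\lcs(X,Y)\le k$, output the exact value. Otherwise $\lcs(X,Y) > k$, which in particular means $\lcs(X,Y) \ge n/2^{o(\log^{0.009}(n))}$. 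As in the preliminaries, we may assume both string lengths are multiples of a suitable power of two, at additive cost $O(n^{0.99}) = o(\lcs)$.

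In the large case we pass to the Sparsified LCS Grid Graph. By \cref{lemma:sparsify}, its optimum $\mathrm{OPT}'$ satisfies $\mathrm{OPT}' \ge \lcs(X,Y) - n/2^{\Omega(\log^{0.009}(n))}$. Every sparsified path expands into a genuine path of the Rotated LCS Grid Graph, because each edge weight is the weight of an actual longest sub-path. Hence $\mathrm{OPT}' \le \lcs(X,Y)$. Choosing the constant in the threshold so that the $\Omega$-exponent beats $\log^{0.0095}$ is the delicate step: we must make sure the sparsification loss $n/2^{\Omega(\log^{0.009} n)}$ is $o(\lcs)$. If the hidden constant is too weak, we will instead raise the threshold to $k = n/2^{c\log^{0.009}(n)}$ for a small constant $c$ below that hidden constant. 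This choice is still compatible with the running-time bound and makes the loss an $o(1)$ fraction. In either choice $\mathrm{OPT}' = (1-o(1))\lcs(X,Y)$, and $\mathrm{OPT}'$ meets the lower-bound hypothesis of the assumed query algorithm. Running that algorithm then yields a $(1-o(1))$-approximation of $\mathrm{OPT}'$ with high probability, and therefore a $(1-\eps)$-approximation of $\lcs(X,Y)$ for large $n$.

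It remains to implement the queries and bound the time. Each edge-weight query $\ew[x,y,x+\mathcal{B}_x,y']$ asks for the longest path between two vertices whose columns differ by $\mathcal{B}_x$. Naively each query costs $O(\mathcal{B}_x^2)$ by DP. Better, one DP over the $\mathcal{B}_x \times O(\mathcal{B}_x)$ strip started at source row $y$ gives all targets $y'$ at once, so we answer queries in batches by source vertex. The total cost is then at most $O(\mathcal{B}_x^2)$ times the number of queries, namely
\[
O(\mathcal{B}_x^2)\cdot \frac{n^2\phi_y^2}{\mathcal{B}_x^2}\Big/ 2^{\Omega(\log^{0.01}(n))} = n^2\,\phi_y^2 / 2^{\Omega(\log^{0.01}(n))} = n^2/2^{\Omega(\log^{0.01}(n))},
\]
because $\phi_y^2 = 2^{2\log^{0.009}(n)}$ is dominated by $2^{\Omega(\log^{0.01}(n))}$. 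Adding the $o(n^{0.1})$ bookkeeping time and the small-case cost gives total time $n^2/2^{\Omega(\log^{0.009}(n))}$. The success probability is high by the assumption, since the small case is deterministic.
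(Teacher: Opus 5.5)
Your overall route matches the paper's. You use \cref{lemma:boundedlcs} for small LCS, pass to the sparsified graph via \cref{lemma:sparsify}, and answer each query of the assumed algorithm with an $O(\mathcal{B}_x^2)$ DP. Your query-time accounting is correct, and so is your observation that $\mathrm{OPT}'\le\lcs(X,Y)$. The gap is the case threshold. With $k=n/2^{\log^{0.0095}(n)}$, the inference ``$\lcs(X,Y)>k$ implies $\lcs(X,Y)\ge n/2^{o(\log^{0.009}(n))}$'' runs the wrong way. Lowering the cutoff only weakens the lower bound you get, and you note yourself that $k$ is $n/2^{\omega(\log^{0.009}(n))}$. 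Your large case therefore contains values such as $\lcs(X,Y)=n/2^{\log^{0.0092}(n)}$. There the only available bound on the sparsification loss, $Sn/\phi_y=n/2^{(1-o(1))\log^{0.009}(n)}$, exceeds $\lcs(X,Y)$ itself, and the assumed query algorithm promises nothing. Your proposed remedy, choosing the constant ``so that the $\Omega$-exponent beats $\log^{0.0095}$'', is impossible, since $c\log^{0.009}(n)=o(\log^{0.0095}(n))$ for every constant $c$.

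Your fallback $k=n/2^{c\log^{0.009}(n)}$ with $c<1$ does fix the sparsification step: the relative loss becomes $O(\log n)\cdot 2^{-(1-c)\log^{0.009}(n)}=o(1)$, and the small case stays within $n^2/2^{\Omega(\log^{0.009}(n))}$. But your claim that ``in either choice'' $\mathrm{OPT}'$ meets the hypothesis of the assumed algorithm is still false, because $n/2^{c\log^{0.009}(n)}$ is not of the form $n/2^{o(\log^{0.009}(n))}$. No single cutoff can satisfy both $k\le n/2^{\Omega(\log^{0.009}(n))}$, needed for the running time, and $k\ge n/2^{o(\log^{0.009}(n))}$, needed for the black-box hypothesis. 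The paper glosses over this with the informal dichotomy ``either $\lcs=n/2^{\Omega(\log^{0.009}(n))}$ or $\lcs\ge n/2^{o(\log^{0.009}(n))}$''. To close the gap rigorously you have two options:
\begin{itemize}
\item Open the black box. \cref{lemma:regularlcs} and \cref{lemma:subsamplinglcs}, on which \cref{lemma:gridgraphmax} rests, only require the answer to be $n/2^{o(\log^{0.01}(n))}$, and your fallback cutoff guarantees that.
\item Take $k=n/2^{\log^{0.009}(n)/h(n)}$ with $h\to\infty$ slowly. This costs only a weakening of the exponent $0.009$ in the running time to $0.009-o(1)$.
\end{itemize}
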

\begin{proof}
    Firstly, if the LCS is $n / 2 ^ {\Omega\left(\log ^ {0.009} (n)\right)}$, we can use \cref{lemma:boundedlcs} to compute the answer exactly in $n ^ 2 / 2 ^ {\Omega\left(\log ^ {0.01} (n)\right)}$ time. Thus we assume the answer is at least $n / 2 ^ {o\left(\log ^ {0.009} (n)\right)}$. Due to \cref{lemma:sparsify}, $(1 - o(1))$-approximating the problem on the sparsified graph suffices for $(1 - o(1))$-approximating LCS.

    We run the algorithm in \cref{lemma:gridgraphmax}. To answer an edge weight query, since the Rotated LCS Grid Graph is acyclic and every vertex $\vertex{x'}{y'}$ reachable from $\vertex{x}{y}$ satisfies $\max(x' - x, y' - y) \le {\mathcal{B}}_x$, the longest $\vertex{x}{y}$-$\vertex{x + {\mathcal{B}}_x}{y'}$ path can be found in $O({{\mathcal{B}}_x} ^ 2)$ time using standard techniques on acyclic graphs\footnote{The shortest/longest path on an acyclic graph can be found in linear time by combining topological sorting with dynamic programming. The optimal path to each vertex only depends its predecessors in the topological order. Thus we compute the optimal path to each vertex using dynamic programming in that order.}.  Since the algorithm only uses $\frac{n ^ 2{\phi_y} ^ 2}{{{\mathcal{B}}_x} ^ 2} / 2 ^ {\Omega\left(\log ^ {0.01} (n)\right)}$ edge weight queries. The total running time for this part is 
    \begin{align*}
    \frac{n ^ 2{\phi_y} ^ 2}{{{\mathcal{B}}_x} ^ 2} / 2 ^ {\Omega\left(\log ^ {0.01} (n)\right)} \cdot O({{\mathcal{B}}_x} ^ 2) &= n ^ 2{\phi_y} ^ 2 / 2 ^ {\Omega\left(\log ^ {0.01} (n)\right)} \\
    &= n ^ 2 / 2 ^ {\Omega\left(\log ^ {0.01} (n)\right)}.
    \end{align*}
\end{proof}

For Edit Distance, our algorithm solves a query-based problem on the Sparsified Edit Distance Grid Graph, assuming that we have a constant-factor approximation of every edge weight:
\begin{restatable}[Query-Based Approximation for ED]{lemma}{lemmagridgraphmin} \label{lemma:gridgraphmin}
    Suppose that for every edge from $\vertex{x}{y}$ to $\vertex{x + {\mathcal{B}}_x}{y'}$, we are given (``for free'') a constant-factor approximation $\apprew[x, y, x + {\mathcal{B}}_x, y']$ of the edge weight $\ew[x, y, x + {\mathcal{B}}_x, y']$ (i.e., $\ew[x, y, x + {\mathcal{B}}_x, y'] \le \apprew[x, y, x + {\mathcal{B}}_x, y'] \le 4\ew[x, y, x + {\mathcal{B}}_x, y']$). To know the exact edge weight requires an edge-weight query. Then we can approximate the problem in \cref{def:gridgraphmin} in $o(n ^ {0.1})$ time with high probability using $\frac{n ^ 2{\phi_y} ^ 2}{{{\mathcal{B}}_x} ^ 2} / 2 ^ {\Omega\left(\log ^ {0.01} (n)\right)}$ edge-weight queries, up to a multiplicative factor of $\left(1 + o(1)\right)$, assuming that the answer is at least $n / 2 ^ {o\left(\log ^ {0.009} (n)\right)}$.
\end{restatable}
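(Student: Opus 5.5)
We will prove \cref{lemma:gridgraphmin} by a randomized, sparsified version of the anchor recursion for $\estnaive$ from \cref{sec:sparsified}. First sample the set of active scales $\tilde{S} \subseteq [S]$, putting each scale in independently with probability $1/S^{c}$ for a suitable small constant $c$; with high probability $|\tilde S| = \Theta(S^{1-c})$. Then define a recursive estimate $\est[l,y_l,r,y_r]$ on recursion intervals $(l,r]$ whose endpoints are vertices of the Sparsified Edit Distance Grid Graph. The definition depends on the scale of $(l,r]$.

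\begin{itemize}
\item \textbf{Inactive scales.} We minimize over all anchor paths exactly as in $\estnaive$. The row rounding by $I_s/\phi_y$ keeps the number of anchor choices at $O(\phi_y)$ per anchor.
\item \textbf{Active scales.} We force the anchors onto the rounded straight line $\hat y_i=\round{y_l+i(y_r-y_l)/M}$, rounded to the admissible grid. We sample $\eta\in\{0,1\}^M$ with exactly $M/2$ ones and take $2\sum_i \eta_i \est[\anc_{i-1},\hat y_{i-1},\anc_i,\hat y_i]$.
\item \textbf{Back-up.} At active scales we guard the estimate with a back-up. Using the free values $\apprew$ (combined along the straight line), we compute a $4$-approximate value $\tilde w$ of the straight-line subpath cost. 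If the sampled estimate falls below $\tilde w/8$, or the sub-estimates are too uneven (some term exceeds a $\mathrm{poly}(|\tilde S|)/M$ fraction of the total), we output the sound back-up instead.
\item \textbf{Leaves.} At scale $0$ (width $\mathcal B_x$), each leaf costs one edge-weight query.
\end{itemize}

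Since sub-intervals are discarded at every active scale, only a $2^{-|\tilde S|}=2^{-\Omega(\log^{0.01} n)}$ fraction of the $\mathcal B_x$-columns is ever queried. Each surviving column is queried only for row pairs within the $\phi_y$-rounded grid, which gives the claimed $\frac{n^2\phi_y^2}{\mathcal B_x^2}/2^{\Omega(\log^{0.01}n)}$ query bound. The bookkeeping time is $o(n^{0.1})$ because $n/\mathcal B_x=M^S\le n^{0.03}$.

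The correctness argument has two directions.

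\textbf{Completeness (upper bound on the estimate).} Fix the optimal sparsified path $p$ and apply \cref{lemma:totalmeandeviation} to its forward differences, with $\sum|a_x|\le w(p)$ by \cref{obs:pathdiff}. Its curvature is then $O(w(p)S^{0.51})$ plus a negligible tail. Because $\tilde S$ is independent of $p$, the curvature restricted to active scales is in expectation an $S^{-c}$ fraction of this. Rounding $p$ to a regular path (straight anchors on active scales) via \cref{lemma:pathdiff} costs $O(M)$ times that curvature, plus the $\phi_y$-rounding losses already bounded as in \cref{lemma:sparsify}. This total is $o(w(p))$ once $M S^{0.51-c}\ll 1$. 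With $M=\Theta(\log^{0.109}n)$ and $S=\Theta(\log n/\log\log n)$ this forces $c>0.62$; the exponents must be checked, and this is the intended role of $0.109$.

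For the regular path, the sampled sum at each active interval concentrates by Hoeffding (\cref{lemma:hoeffding}) within a factor $1+\eps'/|\tilde S|$, provided the subpath weights are even. \cref{lemma:treerd}, applied to the weight distribution of $p$ scaled against the uniform distribution, shows that uneven intervals carry only an $o(1)$ fraction of the weight. There the back-up costs only an $O(1)$ factor, so those intervals contribute $o(w(p))$ in total. Multiplying the per-level factors over the active scales gives $(1+\eps'/|\tilde S|)^{|\tilde S|}=1+O(\eps')$.

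\textbf{Soundness (no large underestimate).} This is the main obstacle. The minimization at inactive scales ranges over exponentially many paths, and some of them may overfit the samples. The plan exploits the fact that at active scales the anchors are determined by the endpoints. So for each fixed tuple $(l,y_l,r,y_r)$ at an active scale, the estimate is a fixed function of the independent sub-estimates and of fresh randomness $\eta$.

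We proceed by induction over scales. The claim is that every $\est[l,y_l,r,y_r]$ is at least $(1-\eps'/|\tilde S|)^{k}$ times the true value minus small additive slack, where $k$ counts the active scales below. Hoeffding gives failure probability $2^{-M\eps'^2/|\tilde S|^2}$ per tuple. The unevenness check makes the Hoeffding range small, and the $\tilde w/8$ floor keeps underestimates bounded even when the check fails.

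There are only polynomially many tuples, but a union bound does not quite suffice because $M\eps'^2/|\tilde S|^2$ is only polylogarithmic. Instead we charge failures additively. A failed active interval loses at most its width $r-l$, since it is floored by a sound back-up. Then the layered-graph accounting promised as \cref{claim:directed} bounds the expected total additive loss along the minimizing path by $n\cdot 2^{-\log^{\Omega(1)}n}$. This is $o$ of the answer, since the answer is assumed to be at least $n/2^{o(\log^{0.009}n)}$. Markov's inequality and a few independent repetitions, taking the median, convert this to high probability.
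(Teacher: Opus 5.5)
Your architecture matches the paper's: randomly chosen active scales, $\tilde S$-regular straight-line anchors, keeping half the sub-intervals on active scales, an $\apprew$-based constant-factor fallback with an evenness test, tree total deviation for both curvature and weight distribution, and \cref{claim:directed} for soundness. The gap is in the regularization step of your completeness argument. Rounding is done top-down, and at an active scale $s$ it acts on the already-rounded path $p_s$, not on $p$. Since \cref{lemma:pathdiff} shifts whole stretches of rows, the cost $O(M\,\treemd_{M,\{s\}}(\Delta p_s))$ from \cref{claim:regularsinglescale} depends on which higher scales were active. Independence of $\tilde S$ from $p$ therefore says nothing about $\sum_{s\in\tilde S}\treemd_{M,\{s\}}(\Delta p_s)$. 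The paper's proof of \cref{lemma:regular} exists to handle exactly this:
\begin{itemize}
\item Given the choices at higher scales, scale $s$ contributes $S^{-0.98}\treemd_{M,\{s\}}(\Delta p_s)$ in expectation, which gives a martingale.
\item Summed over all $s\in[S]$, these terms involve only $O(|\tilde S|)$ distinct paths, each with $\treemd_{M,S}(\Delta p_s)=O(w(p_s)S^{0.51})$ by \cref{lemma:treemd-bounds}.
\item The bound $w(p_s)\le w(p)+\mathcal X$ makes the estimate self-referential, and it is then solved for $\mathcal X$.
\end{itemize}
This loses a factor $|\tilde S|=\Theta(S^{1-c})$, so your derived constraint $c>0.62$ does not suffice. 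Your own Hoeffding bound $2^{-M\eps'^2/|\tilde S|^2}$ also needs $M\gg|\tilde S|^2$; at $c=0.62$ that exponent tends to $0$. Together these force $c$ close to $1$, hence the paper's $S^{-0.98}$.

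Second, the evenness you need is that of $W(\hat p)$, the regular path along which the estimator runs. The deviation bound can only be applied to $W(p)$, because only $p$ is independent of $\tilde S$. (\cref{lemma:treerd} is not the right tool: with uniform $A$ the scaled deviation vanishes, and with $A=W(p)$ its ratio hypothesis fails. The paper applies the plain \cref{lemma:totalmeandeviation} to $W(p)$.) Each column where the weights of $p$ and $\hat p$ differ can change $\treemd_{M,\tilde S}$ by $O(|\tilde S|)$. The fallback is charged $M$ times $\treemd_{M,\tilde S}$, up to your threshold. So you need roughly $o(w(p)/(M|\tilde S|))$ differing columns, not $o(w(p))$. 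This is why \cref{lemma:regular} carries the factor $\frac{1}{M\log^{0.02}(n)}$ and why \cref{claim:technicalityed} has a Part II.

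Finally, your completeness count multiplies per-level $(1+\eps'/|\tilde S|)$ factors without charging Hoeffding failures, and your $\tilde w/8$ floor only guards against underestimates. The paper instead clamps from above with $\min\bigl(2\sum_i\eta_{l,r,i}\whest[\cdot],\appr[l,y_l,r,y_r]\bigr)$. This bounds each failure's overshoot by $\appr\le 4w(\hat p(l,r])$ and keeps $\whest\le\appr$ everywhere. That invariant is what bounds the Hoeffding range of the actual random sub-estimates in \cref{claim:onesideded}. It works together with running the evenness test on the free $\appr$ values of all $M$ sub-intervals, since only the sampled half of the recursive estimates is ever computed.
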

We argue that this is sufficient for approximating Edit Distance:
\begin{lemma} [Reducing Approximation of Edit Distance to \cref{lemma:gridgraphmin}]
    Assume that given a constant-factor approximation of pairwise distances, with high probability, by querying only $\frac{n ^ 2{\phi_y} ^ 2}{{{\mathcal{B}}_x} ^ 2} / 2 ^ {\Omega\left(\log ^ {0.01} (n)\right)}$ edge weights, we can approximate the problem in \cref{def:gridgraphmin} in $o(n ^ {0.1})$ time up to a multiplicative factor of $\left(1 + o(1)\right)$, assuming that the answer is at least $n / 2 ^ {o\left(\log ^ {0.009} (n)\right)}$.
    For any constant $\eps > 0$, there is a randomized algorithm that runs in time $n^2/2^{\Omega\left(\log^{0.009}(n)\right)}$ and, with high probability, $(1 + \eps)$-approximates Edit Distance for two strings with lengths that sum to $n$.
\end{lemma}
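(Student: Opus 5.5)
The plan mirrors the LCS reduction, with one extra step: supplying the constant-factor approximations of edge weights that \cref{lemma:gridgraphmin} takes ``for free''.

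\textbf{Small edit distance.} First handle the case where $\ed(X,Y)$ is small. Run \cref{lemma:bounded} with threshold $k = n / 2^{\Theta(\log^{0.009}(n))}$, choosing the hidden constant so that $nk = n^2/2^{\Omega(\log^{0.009}(n))}$. If $\ed(X,Y)\le k$, this computes the answer exactly. Otherwise $\ed(X,Y) \ge n/2^{o(\log^{0.009}(n))}$, the hypothesis of \cref{lemma:gridgraphmin} holds, and we proceed.

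\textbf{Reduction to the sparsified graph.} By \cref{lemma:sparsify}, the optimum of the Sparsified Edit Distance Grid Graph is at most $\ed(X,Y) + n/2^{\Omega(\log^{0.009}(n))}$. This is a $(1+o(1))$ factor once the answer exceeds the threshold, with the constants in the exponents chosen compatibly.

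For soundness we also need the sparsified optimum to be at least $\ed(X,Y)$. This holds because every sparsified path is a concatenation of genuine shortest paths between consecutive sparsified vertices in the Rotated Edit Distance Grid Graph. Consequently a $(1+o(1))$-approximation of the sparsified problem is a $(1+o(1))$-approximation of $\ed(X,Y)$, which is at most $1+\eps$ for large $n$.

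\textbf{Answering the oracle calls.} There are two kinds of calls.

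\begin{itemize}
\item An exact edge-weight query for $\vertex{x}{y}\to\vertex{x+\mathcal{B}_x}{y'}$ is a shortest path in an acyclic subgraph with $O(\mathcal{B}_x^2)$ vertices, so it costs $O(\mathcal{B}_x^2)$ by DP. Multiplying by the query bound gives total time $n^2\phi_y^2/2^{\Omega(\log^{0.01}(n))}$, which is $n^2/2^{\Omega(\log^{0.01}(n))}$ since $\phi_y^2 = 2^{O(\log^{0.009}(n))}$.
\item A free approximation $\apprew$ is needed for potentially every edge. Each edge weight is the edit distance between a substring of $X$ and a substring of $Y$ of total length $O(\mathcal{B}_x)$, with endpoint cells recovered from the inverse rotation. \cref{lemma:estimate} approximates it within factor $4$ in $O(\mathcal{B}_x^{1.61})$ time.
\end{itemize}

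The lemma as stated needs $\ew\le\apprew\le 4\ew$. The estimate of \cref{lemma:estimate} can be taken to be an upper bound that is at most $4$ times the truth, since known algorithms output the cost of an actual alignment. If needed, we rescale and absorb constants.

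\textbf{Main obstacle: the cost of the approximations.} The number of sparsified edges is roughly $(n/\mathcal{B}_x)\cdot(n\phi_y/\mathcal{B}_x)\cdot\mathcal{B}_x$ per appropriate scale. The sparsified row spacing is only guaranteed at multiples of larger $I_s$, and at generic columns the spacing is $\mathcal{B}_x/\phi_y$, giving about $\phi_y$ choices of $y'$ per vertex. The total is therefore about $n^2\phi_y^2/\mathcal{B}_x^2\cdot \mathcal{B}_x^{1.61}$. With $\mathcal{B}_x=\tilde\Theta(n^{0.99})$, this is $n^{2-0.39\cdot0.99+o(1)}$, which is truly subquadratic.

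To be safe, we compute $\apprew$ lazily, only for edges the algorithm actually inspects. This is also consistent with the $o(n^{0.1})$ overhead.

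\textbf{Success probability.} The algorithm fails only if the query algorithm fails or one of the randomized approximation calls fails. Amplify each call of \cref{lemma:estimate} by taking the minimum of $O(\log n)$ repetitions of the upper-bound estimate. A union bound over polynomially many calls then gives high probability overall, and the final running time is $n^2/2^{\Omega(\log^{0.009}(n))}$.
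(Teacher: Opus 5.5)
Your proposal is correct and takes the paper's route: Ukkonen's algorithm when the edit distance is small, \cref{lemma:sparsify} to pass to the sparsified graph, and \cref{lemma:estimate} on every sparsified edge, which is affordable because there are only about $n^2\phi_y^2/\mathcal{B}_x^2$ edges, followed by an $O(\mathcal{B}_x^2)$ acyclic DP per exact query. Your additions only tighten points the paper leaves implicit: the reverse inequality for the sparsified optimum, ensuring $\apprew \ge \ew$, amplifying the randomized estimator, and charging $O(\mathcal{B}_x^{1.61})$ rather than $O(n^{1.61})$ per edge.
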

\begin{proof}
    Firstly, if the Edit Distance is $n / 2 ^ {\Omega\left(\log ^ {0.009} (n)\right)}$, we can use Ukkonen’s algorithm (\cref{lemma:bounded}) to compute the answer exactly in $n ^ 2 / 2 ^ {\Omega\left(\log ^ {0.009} (n)\right)}$ time. Thus we assume the answer is at least $n / 2 ^ {o\left(\log ^ {0.009} (n)\right)}$. Due to \cref{lemma:sparsify}, $(1 + o(1))$-approximating the problem on the sparsified graph suffices for $(1 + o(1))$-approximating ED.

    Notice that the weight of an edge from $\vertex{x}{y}$ to $\vertex{x + {\mathcal{B}}_x}{y'}$ is equal to the Edit Distance between a sub-string of either input string. Using \cref{lemma:estimate}, we can obtain a constant-factor approximation of each edge weight of the Sparsified Edit Distance Grid Graph in $O(n ^ {1.61})$ time. Since the number of vertices is $O(n ^ 2 \phi_y / {{\mathcal{B}}_x} ^ 2) = o(n ^ {0.03})$, the number of edges is no more than $o(n ^ {0.06})$.
    Thus the total running time is $o(n ^ {1.61} \cdot n ^ {0.06}) = o(n ^ {1.67})$. 
    
    Now we have the ingredients for the algorithm in \cref{lemma:gridgraphmin}. To answer an edge weight query, since the Rotated Edit Distance Grid Graph is acyclic and every vertex $\vertex{x'}{y'}$ reachable from $\vertex{x}{y}$ satisfies $\max(x' - x, y' - y) \le {\mathcal{B}}_x$, the shortest $\vertex{x}{y}$-$\vertex{x + {\mathcal{B}}_x}{y'}$ path can be found in $O({{\mathcal{B}}_x} ^ 2)$ time using standard techniques on acyclic graphs. Since the algorithm only uses $\frac{n ^ 2{\phi_y} ^ 2}{{{\mathcal{B}}_x} ^ 2} / 2 ^ {\Omega\left(\log ^ {0.01} (n)\right)}$ edge weight queries, the total running time for this part is 
    \begin{align*}
    \frac{n ^ 2{\phi_y} ^ 2}{{{\mathcal{B}}_x} ^ 2} / 2 ^ {\Omega\left(\log ^ {0.01} (n)\right)} \cdot O({{\mathcal{B}}_x} ^ 2) &= n ^ 2{\phi_y} ^ 2 / 2 ^ {\Omega\left(\log ^ {0.01} (n)\right)} \\
    &= n ^ 2 / 2 ^ {\Omega\left(\log ^ {0.01} (n)\right)}.
    \end{align*}
\end{proof}

\begin{table}[h!]
    \caption{Table of notation}
    \renewcommand{\arraystretch}{1.25}{
    \begin{tabular}{c|c|c}
        \hline
        Symbol & What it stands for & Note \\\hline
        $M$ & branching factor &  
        $M := 2 ^ {\floor{\log (\log ^ {\mval} (n))}}$ \\\hline
        $S$ & number of scales & $S := 3 \cdot \floor{\log_M{(n ^ {0.01})}}$\\\hline
        $I_s$ & width of scale $s$ & $I_s := n / M ^ {S - s}$ \\\hline
        ${\mathcal{B}}_x$ & rounding threshold for columns & ${\mathcal{B}}_x := n / M ^ S$\\\hline
        $\phi_y$ & length-relative rounding factor for rows & $\phi_y := 2 ^ {\floor{\log ^ {0.009} (n)}}$\\\hline
        $\tilde{S}$ & set of active scales & 
        \begin{tabular}{c}
        \textbf{ED:} each scale chosen w.p.~$S ^ {-0.98}$ u.a.r. \\\hline
        \textbf{LCS:} each scale in $[S] \cap 3\mathbb{Z}$ chosen w.p.~$S ^ {-0.98}$ u.a.r.
        \end{tabular}\\\hline
        $\eta_{l, r, i}$ & whether to sample $i$-th sub-interval of $(l, r]$ & $\eta_{l, r, 1 \ldots M} \sim \textrm{Uni}(\eta': \eta' \in \{0,1\}^M, \sum \eta'_i = M/2)$\\\hline
        $\ew$ & edge weight of sparsified grid & equal to the length of best path in rotated grid \\\hline
        $\apprew$ & constant approximation of $\ew$ & for ED only \\\hline
        $\est$ & length of the best $\tilde{S}$-regular path & computed via \eqref{eq:est} \\\hline
        $\appr$ & constant approximation of $\est$ & for ED only, computed via \eqref{eq:appr} \\\hline
        $\whest$ & our estimate for $\est$ & computed via \eqref{eq:whest} \\\hline
        ${\anc}_i$ & the $i$-th anchor, given $(l, r]$ in context & ${\anc}_i := l + i \cdot \frac{r - l}{M}$ \\\hline
        $\anchor$ & the set of possible anchor paths $\hat{Y}$ & $\hat{Y} = \left(\hat{y}_i\right)$ defines a path that visits each $\vertex{m_i}{\hat{y}_i}$ \\\hline
        $\overline{y}_i$ & y-coordinate for $i$-th anchor on active scale & $\overline{y}_i := (I_{s - 1} / \phi_y) \cdot \round{\frac{y_l + i \cdot \frac{y_r - y_l}{M}}{I_{s - 1} / \phi_y}}$ \\\hline
    \end{tabular}
    }
    \label{tab:notation}
\end{table}

\section{Algorithm for Edit Distance}
In this section, we present our algorithm for Edit Distance:
\begin{restatable}{theorem}{theomain} \label{theo:mained}
For any constant $\eps > 0$, there exists a randomized algorithm that runs in time 
\[
\frac{n^2}{2^{\log^{\Omega(1)}(n)}}
\]
and, with high probability\footnote{A probability of at least $1 - 1/n$.}, computes a $(1+\eps)$-approximation to the edit distance for two strings with total length $n$.
\end{restatable}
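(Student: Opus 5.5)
By the reduction lemma stated just before this section, \cref{theo:mained} follows once I establish \cref{lemma:gridgraphmin}. So the plan is to design and analyze a sampled divide-and-conquer on the Sparsified Edit Distance Grid Graph.

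\emph{Algorithm.} Choose the active scales $\tilde{S}$ by including each $s\in[S]$ independently with probability $S^{-0.98}$. Call a path \emph{$\tilde{S}$-regular} if, inside every active-scale interval $(l,r]$, its anchors equal the rounded straight-line values $\overline{y}_i$ from \cref{tab:notation}. Let $\est$ denote the weight of the best $\tilde{S}$-regular path. We estimate it by a recursion $\whest[l,y_l,r,y_r]$ that depends on the scale of $(l,r]$:
\begin{itemize}
\item On inactive scales, minimize over all anchor paths in $\anchor$, exactly as for $\estnaive$.
\item On active scales, fix the anchors to $\overline{y}_i$. Draw a uniformly random half $\eta_{l,r,\cdot}$ of the sub-intervals and output twice the sum of the sampled sub-estimates.
\item Also compute a backup value $\appr[l,y_l,r,y_r]$ from the $\apprew$ values via the same recursion but without sampling. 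Whenever the sampled sub-estimates are too uneven relative to $\appr$, fall back to a constant multiple of $\appr$. This keeps the estimate sound in the worst case.
\end{itemize}

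\emph{Query count.} Queries are made only in the surviving leaf blocks. Each active scale halves the surviving columns, so a $2^{-|\tilde{S}|}=2^{-\Omega(S^{0.02})}=2^{-\Omega(\log^{0.01} n)}$ fraction of columns survives. The row choices per column number $O(n\phi_y/\mathcal{B}_x)$, which gives the stated query bound. The $o(n^{0.1})$ time bound for the outer recursion follows from the small size of the sparsified graph.

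\emph{Completeness: regularization error.} Take the optimal path $p$ with weight $\mathrm{OPT}\ge n/2^{o(\log^{0.009}n)}$. Its forward differences satisfy $|p(x)-p(x-1)|\le$ (the local weight), so $\sum|a_x|\le \mathrm{OPT}$ (up to the sparsification constant).
\begin{itemize}
\item By \cref{lemma:totalmeandeviation}, the curvature of $p$ is $O(\mathrm{OPT}\,S^{0.51})$ plus a negligible tail.
\item Rounding $p$ to an $\tilde{S}$-regular path costs, via \cref{lemma:pathdiff}, about $M$ times the curvature on active scales plus the row-rounding error $n/\phi_y$ per scale.
\item Because $\tilde{S}$ is independent of $p$, the expected curvature on active scales is an $S^{-0.98}$ fraction of the total. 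The expected cost is therefore $O(M\,\mathrm{OPT}\,S^{-0.47})$, which is $o(\mathrm{OPT})$ since $M=\log^{0.109}n\ll S^{0.47}$. By Markov's inequality this holds with constant probability; independent repetitions boost it to high probability.
\end{itemize}

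\emph{Sampling error.} This is the main obstacle. For a fixed regular path and an active interval, Hoeffding (\cref{lemma:hoeffding}) bounds the relative error of the half-sample estimate by $\eps'/|\tilde{S}|$, except with probability $2^{-\Omega(M\eps'^2/|\tilde{S}|^2)}$. This requires the $M$ sub-path weights to be comparable to their mean. I would apply \cref{lemma:totalmeandeviation} to the weight sequence of $p$ to show that the total deviation of the sub-path weights on active scales is small. Intervals with large deviation then carry only an $o(1)$ fraction of $\mathrm{OPT}$, and the $\appr$ backup charges only $O(1)$ times their weight.

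I need to confirm one parameter: $|\tilde{S}|^2\approx S^{0.04}$ must be at most $M\approx \log^{0.109}n$ for the exponent to be polylogarithmic. Since $S^{0.04}\approx\log^{0.04}n$, this holds with margin. Multiplying $(1+\eps'/|\tilde{S}|)$ over the $|\tilde{S}|$ active levels gives a factor $1+O(\eps')$.

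\emph{Soundness.} Failure events with probability $2^{-\log^{\Omega(1)}n}$ must be controlled simultaneously for the hallucinated minimizing paths, not only for $p$. The key observation is that on active scales the anchors are forced, so each active interval's estimate depends on only one candidate per endpoint pair. A layered-graph argument in the style of the paper's \cref{claim:directed} then bounds the expected total underestimate by $n\cdot 2^{-\log^{\Omega(1)}n}$: each bad event can distort an interval by at most its width, weighted by its failure probability. This is $o(\mathrm{OPT})$.

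Combining completeness, soundness and the query count proves \cref{lemma:gridgraphmin}, and hence \cref{theo:mained}.
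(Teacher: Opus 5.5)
Your plan follows the paper's architecture: reduction to \cref{lemma:gridgraphmin}, $\tilde{S}$-regular paths, half-sampling on active scales with an $\appr$ fallback, tree total deviation for both curvature and outliers, and a \cref{claim:directed}-style layered-graph bound for soundness. However, the regularization step does not go through as written. You charge $M$ times the curvature of $p$ on active scales and then invoke independence of $\tilde{S}$ from $p$. But the rounding is sequential. Once an active scale $s$ is rounded, the rows at multiples of $I_{s-1}$ move. So at a lower active scale $s'$, \cref{claim:regularsinglescale} charges $M\,\treemd_{M,\{s'\}}(\Delta p_{s'})$ for the partially rounded path $p_{s'}$. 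That path is not $p$, and it depends on $\tilde{S}\cap(s',S]$, so $S^{-0.98}\treemd_{M,S}(\Delta p)$ is not the quantity you need to bound. Tracking the perturbation instead fails too. Each column changed at scale $s$ can raise curvature at every lower scale by $O(1)$, which the next active scale multiplies by $M$ again, so the naive bound compounds like $M^{|\tilde{S}|}$.

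The missing idea is to never compare against $p$; this is the ``technical hurdle'' handled in the proof of \cref{lemma:regular}.
\begin{itemize}
\item Whether $s\in\tilde{S}$ is independent of $p_s$. A conditional-expectation (martingale) argument therefore gives $\Ex[\mathcal{X}]=O\bigl(M S^{-0.98}\,\Ex\bigl[\sum_{s}\treemd_{M,\{s\}}(\Delta p_s)\bigr]\bigr)$.
\item By the ED-specific \cref{lemma:treemd-bounds}, \emph{every} intermediate path satisfies $\treemd_{M,S}(\Delta p_s)=O(w(p_s)S^{0.51})$. There are at most $|\tilde{S}|+1$ distinct $p_s$, so the sum is $O(|\tilde{S}|\,S^{0.51}\max_s w(p_s))$.
\item Since $\max_s w(p_s)\le w(p)+\mathcal{X}$, you can solve the resulting inequality for $\mathcal{X}$.
\end{itemize}
This costs an extra factor $|\tilde{S}|\approx S^{0.02}$ (net $S^{-0.45}$ instead of your $S^{-0.47}$). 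That is still fine, since $M^2\log^{0.02}n\ll S^{0.45}$. It yields the refined bound of $o(w(p)/(M\log^{0.02}n))$ changed columns.

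You also need that refined count for the sampling analysis. Your outlier bound is stated for $W(p)$, but the estimator runs along $\hat p$, which depends on $\tilde{S}$. The paper transfers $\treemd_{M,\tilde{S}}$ from $W(p)$ to $W(\hat p)$ at cost $O(|\tilde{S}|)$ per changed column (\cref{claim:technicalityed}).

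Finally, make the outlier test compare $\appr$ of all $M$ sub-intervals (computed without queries) to $\appr$ of the parent, and clamp the sampled estimate at $\appr$. Then $\whest\le\appr$ holds inductively and bounds the Hoeffding range. A test on the sampled sub-estimates alone misses an unsampled outlier with probability $1/2$, far above the $O(1/d^2)$ failure rate that \cref{claim:directed} needs.
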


We now focus on the query-based algorithm on the sparsified grid introduced in \cref{sec:sparsified}.
\defgridgraphmin*
\lemmagridgraphmin*

We first revisit the na\"ive recursion scheme we briefly talked about in \cref{sec:sparsified}, but in the context of the sparsified grid graph. 
Let $M := 2 ^ {\floor{\log (\log ^ {\mval} (n))}} = \Theta(\log ^ {\mval} (n))$ and $S := 3 \cdot \floor{\log_M{(n ^ {0.01})}}$.
For each {\em scale} $s \in [S] \cup \{0\}$, we consider the partition of $[n]$ into disjoint intervals of length $I_s := n / M ^ {S - s}$, and we say that each interval in this partition is a scale-$s$ interval. Suppose we want to compute the length of the shortest $\vertex{l}{y_l}$-$\vertex{r}{y_r}$ path where $(l, r]$ is a scale-$s$ interval, denoted by $\ans[l, y_l, r, y_r]$. If $s = 0$, we obtain $\ans[l, y_l, r, y_r]$ by querying the edge weight $\ew[l, y_l, r, y_r]$. 

If $s > 0$, for $0 \le i \le M$, let ${\anc}_i := l + i \cdot \frac{r - l}{M}$ be the $i$-th anchor. Then $\ans[l, y_l, r, y_r]$ can be found via minimizing over every anchor path defined by some sequence $\overline{Y} = (\overline{y}_i)$ that satisfies 
\begin{itemize}
    \item $\overline{y}_0 = y_l$ and $\overline{y}_M = y_r$;
    \item For $1 \le i \le M$, $\abs{\overline{y}_i - \overline{y}_{i - 1}} \le \frac{r - l}{M}$;
    \item For $0 \le i \le M$, $\vertex{{\anc}_i}{\overline{y}_i}$ exists in the Sparsified Grid Graph.
\end{itemize}
The recursion is as follows:
\begin{equation*}
\ans[l, y_l, r, y_r] := 
\begin{cases}
	\ew[l, y_l, r, y_r], & \text{if $s = 0$} \\
	\min\limits_{\hat{Y} \in \anchor}\sum\limits_{i = 1} ^ M{\ans[{\anc}_{i - 1}, \hat{y}_{i - 1}, {\anc}_i, \hat{y}_i]}, & \text{otherwise}
\end{cases}.
\end{equation*}
Unfortunately, the na\"ive recursion scheme will eventually visit every possible $(l, y_l, r, y_r)$, and it will query every edge weight. Thus the total number of edge-weight queries is $O(n ^ 2{\phi_y} ^ 2 / {{\mathcal{B}}_x} ^ 2)$.
Our final algorithm will shave a $2 ^ {\log ^ {\Omega(1)}(n)}$ factor off this query count. 

\subsection{Regular Paths}
We first review the high-level idea for estimating the shortest path on a sparsified grid graph. The bound on tree total deviation roughly tells us that a path should resemble a ``straight line'' on an average interval on an average scale. Thus if we sample a small fraction of scales to be \emph{active}, and force the path to be a straight line on these active scales, we should not lose too much precision. 

Let $\tilde{S} \subset [S]$ be a subset of active scales (other scales are \emph{passive}), we define the notion of $\tilde{S}$-regularity: 
\begin{definition} [$\tilde{S}$-regular paths] \label{def:regular}
    Given $\tilde{S} \subset [S]$, a path $\hat{p}$ is {\em $\tilde{S}$-regular} 
    if for every $s \in \tilde{S}$ and every scale-$s$ interval $(l, r]$, for every $0 \le i \le M$, we have
    $$\hat{p}({\anc}_i) = (I_{s - 1} / \phi_y) \cdot \round{\frac{\hat{p}(l) + i \cdot \frac{\hat{p}(r) - \hat{p}(l)}{M}}{I_{s - 1} / \phi_y}} =: \overline{y}_i.$$
    In other words, inside every active scale interval, an $\tilde{S}$-regular path is the straight line rounded to nearest multiples of $I_{s-1} / \phi_y$ (rounding is necessary due to the way the sparsified grid is defined). 
\end{definition}

For a column $x$, we use the shorthand ``the weight of $p$ before column $x$'' to denote the weight of the edge from $\vertex{x - 1}{p(x - 1)}$ to $\vertex{x}{p(x)}$.
We show that if $\tilde{S} \subset [S]$ is such that each scale is chosen u.a.r.~with probability $1 / S ^ {0.98}$ (thus of size $\Theta(S ^ {0.02})$ with probability $1 - o(1)$), then enforcing $\tilde{S}$-regularity does not increase the answer too much:
\begin{restatable}[Best regular path approximates best path]{lemma}{lemmaregular} \label{lemma:regular}
    Let $\tilde{S} \subset [S]$ be such that each scale is chosen u.a.r.~with probability $1 / S ^ {0.98}$. Let $p$ be the shortest $\vertex{0}{{n_X}}$-$\vertex{n_X + {n_Y}}{n_Y}$ path with $w(p) = n / 2 ^ {o\left(\log ^ {0.01} (n)\right)}$. With probability at least $19 / 20$, we can round $p$ to an $\tilde{S}$-regular $\vertex{0}{{n_X}}$-$\vertex{n_X + {n_Y}}{n_Y}$ path $\hat{p}$ such that $$w(\hat{p}) = \left(1 + o\left(\frac{1}{M\log ^ {0.02} (n)}\right)\right)w(p).$$
    Moreover, there are at most $o\left(\frac{w(p)}{M\log ^ {0.02} (n)}\right)$ columns before which the weights of $p$ and $\hat{p}$ differ.
\end{restatable}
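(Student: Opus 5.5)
We plan to build $\hat p$ from $p$ by rounding scale by scale, from the largest active scale down to the smallest. We charge the cost of each rounding step to the curvature of $p$ restricted to the active scales, and then use \cref{lemma:totalmeandeviation} together with the randomness of $\tilde S$ to bound the expected total curvature on active scales.

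\textbf{Step 1 (curvature is tree total deviation).} Let $a_x := p(x)-p(x-1)$ for $x\in(0,n]$, grouped at the granularity $\mathcal{B}_x$ of the sparsified grid. Since every edge between different rows has weight $1$ in the Rotated Edit Distance Grid Graph, a row change of $\abs{p(x)-p(x-\mathcal{B}_x)}$ forces weight at least that much on the corresponding edge. Hence the edge-level differences satisfy $\abs{a_x}\le \ew_x$ (the weight of $p$ before column $x$), so $\sum_x\abs{a_x}\le w(p)$. For a scale-$s$ interval $(l,r]$, the quantity $\md_M(A_{(l,r]})$ is exactly the deviation of the anchors $p({\anc}_i)$ from the straight line between $p(l)$ and $p(r)$: indeed $\abs{p({\anc}_i)-\text{line}_i}\le \sum_{j\le i}\abs{A_{\Sigma({\anc}_{j-1},{\anc}_j]}-A_{\Sigma(l,r]}/M}\le \md_M(A_{(l,r]})$. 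Applying \cref{lemma:totalmeandeviation}, with the number of scales $S$ and $\max\abs{a_x}\le\mathcal{B}_x$ handled by working at the $\mathcal{B}_x$ granularity, gives
\[
\treemd_{M,S}(A)=O\!\left(w(p)S^{0.51}+n/2^{\log_M^{0.015}(n)}\right)=O\!\left(w(p)S^{0.51}\right).
\]
The last equality uses $w(p)=n/2^{o(\log^{0.01}n)}$ and $\log_M^{0.015}n=\Omega(\log^{0.015}n/\log\log n)$.

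\textbf{Step 2 (single-scale rounding, the analogue of \cref{claim:regularsinglescale}).} Fix an active scale-$s$ interval $(l,r]$ and suppose its endpoints have already been fixed by rounding at larger scales. We move each anchor $p({\anc}_i)$ to $\overline y_i$ using \cref{lemma:pathdiff} on the sub-intervals. Each anchor moves by at most $\md_M(A_{(l,r]})$ plus the endpoint shifts inherited from the parent, plus the rounding to a multiple of $I_{s-1}/\phi_y$. The weight and column-count cost is therefore $O\big(M\cdot\md_M(A_{(l,r]})+M\cdot(\text{endpoint shift})+M I_{s-1}/\phi_y\big)$.

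The main obstacle is that endpoint shifts propagate downward through the active scales, and the curvature of the modified path differs from that of $p$. To control this, we will measure curvature on the original $p$ and note that each modification is local. Any shift introduced at an active scale contributes at most its own size to the deviation in every descendant, and there are only $\abs{\tilde S}=O(S^{0.02})$ active levels. This yields a blow-up factor of at most $O(\abs{\tilde S})$, or at worst $O(M)$ per level, which we absorb. If this accounting breaks down, the fallback is to round the anchors of all active scales simultaneously, each to its line computed from the original $p$'s already-rounded endpoints, and bound by induction over active levels.

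\textbf{Step 3 (randomness over $\tilde S$).} Since each scale is active independently with probability $S^{-0.98}$,
\[
\Ex\Big[\sum_{s\in\tilde S}\sum_{(l,r]\in s}\md_M\Big]=S^{-0.98}\treemd_{M,S}(A)=O\!\left(w(p)S^{-0.47}\right).
\]
By Markov's inequality, with probability at least $19/20$ this is $O(w(p)S^{-0.47})$. Multiplying by $M\cdot O(S^{0.02})$ (the propagation factor from Step 2) gives total error $O(w(p)MS^{-0.45})$.

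The rounding terms contribute $\sum_{s\in\tilde S}(n/I_s)\,M I_{s-1}/\phi_y=O(\abs{\tilde S}\,n/\phi_y)$. This is $o(w(p)/(M\log^{0.02}n))$ because $\phi_y=2^{\Theta(\log^{0.009}n)}$ and $w(p)\ge n/2^{o(\log^{0.009}n)}$; this needs the $w(p)$ assumption at the $0.009$ exponent, which we will check.

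Finally, with $M=\Theta(\log^{0.109}n)$ and $S=\Theta(\log n/\log\log n)$, we have $M^2\log^{0.02}n\cdot S^{-0.45}=o(1)$, since $0.218+0.02<0.45$ up to $\log\log$ factors. So both the weight error and the number of differing columns are $o\big(w(p)/(M\log^{0.02}n)\big)$, as required.
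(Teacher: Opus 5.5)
Your Steps 1 and 3 reproduce the paper's warm-up computation, which bounds the curvature of the original $p$ on the active scales. The gap is in Step~2, where you handle the propagation of endpoint shifts. Your own cost bound there, which contains the term $M\cdot(\text{endpoint shift})$, shows why the claimed $O(\abs{\tilde S})$ blow-up does not follow.

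\cref{lemma:pathdiff} only re-routes the path along diagonal stretches adjacent to a shifted endpoint. So the interior anchors of a scale-$s$ interval generally keep their old rows, while the new straight line between the shifted endpoints moves by up to the endpoint shift $e$ at every anchor. An inherited shift $e$ therefore costs up to $\Theta(Me)$ at the next active level. The $M-1$ freshly shifted anchors are then inherited as endpoint shifts by the level below, where the same thing happens again.

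The statement that a shift ``contributes at most its own size to the deviation in every descendant'' is fine. However, \cref{claim:regularsinglescale} then converts deviation into cost with a fresh factor $M$ at each level. Write $D_k$ for the cost at the $k$-th active level and $c_k$ for the original curvature there. Your accounting only yields $D_k \le M c_k + O(M)\sum_{j<k} D_j$, which compounds to $M^{\Theta(\abs{\tilde S})} = 2^{\Theta(S^{0.02}\log\log n)}$. That is super-polylogarithmic and cannot be absorbed into the polylogarithmic slack you have. Your ``simultaneous'' fallback has the same defect: lower-level targets inherit interpolated ancestor shifts, and the cost of \cref{lemma:pathdiff} is the $\ell_1$ sum of anchor shifts.

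The missing idea is to stop measuring curvature on the original $p$. The paper charges the rounding at an active scale $s$ to $O(M\treemd_{M,\{s\}}(\Delta p_s))$ for the \emph{current} path $p_s$. Since $p_s$ is determined by $\tilde S\cap(s,S]$, scale $s$ is active with probability $S^{-0.98}$ independently of $p_s$ (a martingale argument). So the expected number $\mathcal X$ of changed columns is $O\bigl(MS^{-0.98}\,\Ex\sum_{s\in[S]}\treemd_{M,\{s\}}(\Delta p_s)\bigr)$.

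The paths $p_s$ take at most $\abs{\tilde S}+1$ distinct values, and each is a genuine path. Hence \cref{lemma:treemd-bounds} (that is, \cref{lemma:totalmeandeviation} applied to the modified path itself) bounds each full tree deviation by $O(w(p_s)S^{0.51})$, with $w(p_s)\le w(p)+\mathcal X$. This gives an additive, self-referential inequality $\Ex\mathcal X \le o\bigl(1/(M\log^{0.02}n)\bigr)\,(w(p)+\Ex\mathcal X)$, which solves directly. The factor $\abs{\tilde S}$ enters only through counting distinct paths, not through propagation.

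Separately, your additive rounding term $O(\abs{\tilde S}\,n/\phi_y)$ is not $o\bigl(w(p)/(M\log^{0.02}n)\bigr)$ under the stated hypothesis $w(p)=n/2^{o(\log^{0.01}n)}$. It is also unnecessary. Since $p_s(\anc_i)$ is already a multiple of $I_{s-1}/\phi_y$, rounding the line value to the nearest multiple at most doubles its distance from $p_s(\anc_i)$, exactly as in \cref{claim:regularsinglescale}.
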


Towards proving \cref{lemma:regular}, we first show that if the sum of total deviation is indeed small on some scale, then straight-line regularization on that scale does not incur too much error. For a path $p$ from $\vertex{l}{y_l}$ to $\vertex{r}{y_r}$, we let the forward difference operator $\Delta p$ be the sequence $$\Delta p := \left(p(l + {\mathcal{B}}_x) - p(l), p(l + 2{\mathcal{B}}_x) - p(l + {\mathcal{B}}_x), \ldots, p(r) - p(r - {\mathcal{B}}_x)\right).$$ 
We show the following claim:
\begin{claim}[Regularizing on a single scale] \label{claim:regularsinglescale}
    Let $s \in [S]$ be some scale. Let $p_s$ be a $\vertex{0}{{n_X}}$-$\vertex{n_X + {n_Y}}{n_Y}$ path. We can round $p_s$ to an $\{s\}$-regular path $p_{s - 1}$. Specifically, for every scale-$s$ interval $(l, r]$, we use \cref{lemma:pathdiff} to adjust $p_s({\anc}_i)$ to $p_{s - 1}({\anc}_i) := (I_{s - 1} / \phi_y) \cdot \round{\frac{p_s(l) + i \cdot \frac{p_s(r) - p_s(l)}{M}}{I_{s - 1} / \phi_y}}$.

    We have
    $$w(p_{s - 1}) \le w(p_s) + O(M\treemd_{M, \{s\}}(\Delta {p_s})).$$
    Moreover, there are at most $O(M\treemd_{M, \{s\}}(\Delta {p_s}))$ columns where the weights of $p_s$ and $p_{s - 1}$ differ.
\end{claim}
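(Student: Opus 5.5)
We work one scale-$s$ interval $(l,r]$ at a time. The plan is to bound the total row displacement $\sum_{i}\abs{p_s({\anc}_i)-p_{s-1}({\anc}_i)}$ inside that interval and then apply \cref{lemma:pathdiff} to each sub-interval $({\anc}_{i-1},{\anc}_i]$ separately.

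The endpoints ${\anc}_0=l$ and ${\anc}_M=r$ are left unchanged. This is consistent with the formula: at $i=0$ and $i=M$ the straight-line target equals $p_s(l)$ or $p_s(r)$. Those values are already multiples of $I_{s-1}/\phi_y$, because the largest scale dividing $l$ or $r$ is at least $s$ and $I_{s'}/\phi_y$ is a multiple of $I_{s-1}/\phi_y$ for $s'\ge s$. Hence $p_{s-1}(l)=p_s(l)$ and $p_{s-1}(r)=p_s(r)$, and different scale-$s$ intervals can be modified independently. Moreover, the new anchor values $\overline{y}_i$ are multiples of $I_{s-1}/\phi_y$, so the vertices $\vertex{{\anc}_i}{\overline{y}_i}$ exist in the sparsified grid.

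For the displacement at anchor $i$, write $d_j := p_s({\anc}_j)-p_s({\anc}_{j-1}) - \frac{p_s(r)-p_s(l)}{M}$. Each sub-interval sum of $\Delta p_s$ is exactly $p_s({\anc}_j)-p_s({\anc}_{j-1})$, so $\sum_j\abs{d_j} = \md_M(\Delta p_s(l,r])$. Telescoping gives
$$\abs{p_s({\anc}_i) - \overline{y}_i} \le \sum_{j\le i}\abs{d_j} + \tfrac12 I_{s-1}/\phi_y \le \md_M(\Delta p_s(l,r]) + \tfrac12 I_{s-1}/\phi_y .$$
The first term comes from comparing with the unrounded line and the second from rounding. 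Summing over the $M-1$ interior anchors gives a displacement of $O(M\,\md_M(\Delta p_s(l,r]))$ plus a rounding term of $O(M I_{s-1}/\phi_y) = O(I_s/\phi_y)$.

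Next we apply \cref{lemma:pathdiff} on each sub-interval $({\anc}_{i-1},{\anc}_i]$, changing its endpoints from $p_s({\anc}_{i-1}),p_s({\anc}_i)$ to $\overline{y}_{i-1},\overline{y}_i$. This needs the reachability condition $\abs{\overline{y}_i-\overline{y}_{i-1}}\le {\anc}_i-{\anc}_{i-1}$. It holds because the unrounded line has slope $\abs{p_s(r)-p_s(l)}/(r-l)\le 1$ by \cref{obs:pathdiff}, and rounding to a common lattice of multiples of $I_{s-1}/\phi_y$, which is $\le I_{s-1}$, cannot make consecutive rounded values differ by more than the lattice-rounded length $I_{s-1}$. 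The same argument, using the slope bound with the lattice, also handles parity and evenness. The lemma then bounds both the weight change and the number of changed columns on that sub-interval by $\abs{p_s({\anc}_{i-1})-\overline{y}_{i-1}}+\abs{p_s({\anc}_i)-\overline{y}_i}$. Each anchor is counted at most twice, so summing over $i$ and over all scale-$s$ intervals gives $O(M\treemd_{M,\{s\}}(\Delta p_s))$.

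The main obstacle is the additive rounding term $O(I_s/\phi_y)$ per interval, which is not visibly charged to the total deviation. My first attempt would be to note that when $\md_M(\Delta p_s(l,r])$ is small relative to $I_{s-1}/\phi_y$, the anchors $p_s({\anc}_i)$ already lie within rounding distance of the line. In that case, choosing the rounding consistently (round-to-nearest with a fixed tie rule) should make $\overline{y}_i$ coincide with $p_s({\anc}_i)$ whenever the latter is a lattice point at the nearest position, so that interval costs nothing. Otherwise the rounding term is dominated by $M$ times the deviation. If this charging fails, the fallback is to accept an extra $O(n/\phi_y)$ term per scale, which is negligible against the target error in \cref{lemma:regular} by the choice of $\phi_y$, and to absorb it into the $O(\cdot)$ there.
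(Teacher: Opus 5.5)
Everything except the rounding step is sound and matches the paper's route, even in more detail. That includes the endpoint invariance, the telescoping bound $\abs{p_s({\anc}_i)-b_i}\le \md_M(\Delta p_s(l,r])$ with $b_i := p_s(l)+i\frac{p_s(r)-p_s(l)}{M}$, the per-sub-interval use of \cref{lemma:pathdiff} with each anchor charged twice, and the reachability check for consecutive rounded anchors.

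The gap is the step you flagged yourself. As written, the proof does not remove the additive $\tfrac12 I_{s-1}/\phi_y$ per anchor. Your case split is only sketched ("should make", "whenever the latter is a lattice point"). The fallback proves a weaker statement than the claim, which has no additive term, and the "Moreover" part would acquire one too. The fallback is also not justified downstream. \cref{lemma:regular} only assumes $w(p)=n/2^{o(\log^{0.01}n)}$, while $n/\phi_y=n/2^{\Theta(\log^{0.009}n)}$ can far exceed the target $o\bigl(w(p)/(M\log^{0.02}n)\bigr)$.

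The missing observation is the one you already used for $l$ and $r$, applied to the interior anchors. Since $I_{s-1}\mid {\anc}_i$ and $p_s$ is a path in the sparsified grid, $p_s({\anc}_i)$ is itself a multiple of $c:=I_{s-1}/\phi_y$. Then $\overline{y}_i$, the multiple of $c$ nearest to $b_i$, is at least as close to $b_i$ as the lattice point $p_s({\anc}_i)$ is. Hence
$$\abs{p_s({\anc}_i)-\overline{y}_i}\le \abs{p_s({\anc}_i)-b_i}+\abs{b_i-\overline{y}_i}\le 2\abs{p_s({\anc}_i)-b_i}\le 2\,\md_M(\Delta p_s(l,r]),$$
for any tie-breaking rule. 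This is exactly the paper's argument, and it replaces both your case analysis and the fallback.

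Your per-interval split can also be made rigorous once you know $p_s({\anc}_i)$ is a lattice point, using threshold exactly $c/2$. If $\md_M(\Delta p_s(l,r])<c/2$, every $b_i$ rounds to $p_s({\anc}_i)$ and the interval costs nothing. Otherwise $c/2\le \md_M(\Delta p_s(l,r])$ absorbs the rounding term. The one-line inequality above is cleaner.
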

\begin{proof}
    It suffices to show that for every scale-$s$ interval $(l, r]$, rounding inside $(l, r]$ affects $O(M\md_M(\Delta p_s(l, r]))$\footnote{As a reminder, this is the notation for the total deviation (not tree total deviation).} weights. Due to \cref{lemma:pathdiff}, it suffices to show that
    \begin{equation} \label{eq:regularsinglescale0}
        \sum_{1 \le i < M}{\abs{p_s({\anc}_i) - (I_{s - 1} / \phi_y) \cdot \round{\frac{p_s(l) + i \cdot \frac{p_s(r) - p_s(l)}{M}}{I_{s - 1} / \phi_y}}}} = O(M\md_M(\Delta p_s(l, r])),
    \end{equation}

    Note that for $a, b \in \mathbb{R}$ and $c \in \mathbb{R} ^ +$, if $c \mid a$ and $\hat{b}$ is $b$ rounded to the closest multiple of $c$, then $\abs{a - \hat{b}} \le 2\abs{a - b}$. For every $i$, since $(I_{s - 1} / \phi_y) \mid p_s({\anc}_i)$,  we must have
    $$\abs{p_s({\anc}_i) - (I_{s - 1} / \phi_y) \cdot \round{\frac{p_s(l) + i \cdot \frac{p_s(r) - p_s(l)}{M}}{I_{s - 1} / \phi_y}}} \le 2\abs{p_s({\anc}_i) - \left(p_s(l) + i \cdot \frac{p_s(r) - p_s(l)}{M}\right)}.$$
    Thus to show \eqref{eq:regularsinglescale0}, it suffices to show that
    \begin{equation} \label{eq:regularsinglescale1}
        \sum_{1 \le i < M}{\abs{p_s({\anc}_i) - \left(p_s(l) + i \cdot \frac{p_s(r) - p_s(l)}{M}\right)}} = O(M\md_M(\Delta p_s(l, r])).
    \end{equation}

    By telescoping, we have
    \begin{equation*}
        \abs{\textcolor{red}{p_s({\anc}_i)} - \left(p_s(l) + i \cdot \frac{p_s(r) - p_s(l)}{M}\right)} = \abs{\textcolor{red}{\left(p_s(l) + \sum_{j \le i}{\left(p_s({\anc}_{j}) - p_s({\anc}_{j - 1})\right)}\right)} - \left(p_s(l) + i \cdot \frac{p_s(r) - p_s(l)}{M}\right)}.
    \end{equation*}
    The two $p_s(l)$ terms cancel out, thus
    \begin{align} \label{eq:regularsinglescale3}
        \abs{p_s({\anc}_i) - \left(p_s(l) + i \cdot \frac{p_s(r) - p_s(l)}{M}\right)} &= \abs{\left(\sum_{1 \le j \le i}{\left(p_s({\anc}_{j}) - p_s({\anc}_{j - 1})\right)}\right) - i \cdot \frac{p_s(r) - p_s(l)}{M}} \nonumber \\
        &= \abs{\sum_{1 \le j \le i}{\left(p_s({\anc}_{j}) - p_s({\anc}_{j - 1}) - \frac{p_s(r) - p_s(l)}{M}\right)}} \nonumber \\
        &\le \sum_{1 \le j \le i}{\abs{p_s({\anc}_{j}) - p_s({\anc}_{j - 1}) - \frac{p_s(r) - p_s(l)}{M}}} \nonumber \\
        &\le \sum_{1 \le j \le M}{\abs{p_s({\anc}_{j}) - p_s({\anc}_{j - 1}) - \frac{p_s(r) - p_s(l)}{M}}} \nonumber \\
        &= \md_M(\Delta p_s(l, r]).
    \end{align}
    Now that we have the bound \eqref{eq:regularsinglescale3}  on every summand we can easily deduce \eqref{eq:regularsinglescale1}: 
    \begin{align*} \label{eq:regularsinglescale1}
        \sum_{1 \le i < M}{\abs{p_s({\anc}_i) - \left(p_s(l) + i \cdot \frac{p_s(r) - p_s(l)}{M}\right)}}   &\le \sum_{1 \le i < M}{\md_M(\Delta p_s(l, r])}  \\
                                                                                                &= O(M\md_M(\Delta p_s(l, r]))
    \end{align*}
\end{proof}

The following ingredient is important towards proving our main regularity lemma:
\begin{lemma}[Deviation and weight bounds for arbitrary paths] \label{lemma:treemd-bounds}
Let $p'$ be any path from $\vertex{0}{n_X}$ to $\vertex{n_X + n_Y}{n_Y}$ on the Sparsified Rotated Edit Distance Grid Graph. Then:
\begin{equation} \label{eq:regulareq2}
    \sum_i \left| (\Delta p')_i \right| \le w(p'),
\end{equation}
and
\begin{equation} \label{eq:regulareq1}
    \treemd_{M, S}(\Delta p') = O(w(p') \cdot S^{0.51}),
\end{equation}
where
\[
\Delta p' := \left(p'({\mathcal{B}}_x) - p'(0),\; p'(2{\mathcal{B}}_x) - p'({\mathcal{B}}_x),\; \ldots,\; p'(n_X + n_Y) - p'(n_X + n_Y - {\mathcal{B}}_x)\right).
\]
\end{lemma}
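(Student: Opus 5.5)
The plan is to prove \eqref{eq:regulareq2} directly from the edit-distance weights and then derive \eqref{eq:regulareq1} from it, using an entropy-based form of the tree total deviation bound rather than \cref{lemma:totalmeandeviation} with its additive tail term.

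\emph{Step 1: proving \eqref{eq:regulareq2}.} Fix a block $(x-{\mathcal{B}}_x, x]$ of $p'$. Its edge is realized by a path in the Rotated Edit Distance Grid Graph. In the rotated coordinates every row change comes from an original insertion or deletion edge, which has weight $1$. Shortcut halves keep the row fixed. Hence the row displacement $\abs{p'(x)-p'(x-{\mathcal{B}}_x)}$ is at most the number of weight-$1$ steps, which is at most $\ew[x-{\mathcal{B}}_x,p'(x-{\mathcal{B}}_x),x,p'(x)]$. Summing over blocks gives \eqref{eq:regulareq2}.

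\emph{Step 2: reducing \eqref{eq:regulareq1} to nonnegative sequences.} By the triangle inequality, $\md_M$ is subadditive: $\md_M(A+A')\le \md_M(A)+\md_M(A')$. Write $\Delta p'=a^+-a^-$ with $a^\pm\ge 0$ entrywise. Then $\treemd_{M,S}(\Delta p')\le \treemd_{M,S}(a^+)+\treemd_{M,S}(a^-)$. By \eqref{eq:regulareq2}, $\sum a^\pm\le w(p')$. So it suffices to show $\treemd_{M,S}(a)=O\bigl((\sum a)\,S^{0.51}\bigr)$ for a nonnegative sequence $a$ of length $N=M^S$ with $\max a\le{\mathcal{B}}_x$.

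\emph{Step 3: the entropy argument.} Let $\mu$ be the mass distribution $a_x/\sum a$, and use the Pinsker and chain-rule argument of \cref{app:infotheory}. For a recursion interval with sum $\sigma$ and child proportions $q$, we have $\md_M=\sigma\norm{q-u}_1\le\sigma\sqrt{2\,\mathrm{KL}(q\Vert u)}$, where $u$ is uniform on the $M$ children. Summing over intervals, applying Cauchy--Schwarz across the $S$ scales, and using the chain rule for KL along the tree gives
\[
\treemd_{M,S}(a)\le \Bigl(\sum a\Bigr)\sqrt{2S\cdot \mathrm{KL}(\mu\Vert \mathrm{Unif}[N])}.
\]
It therefore remains to show $\mathrm{KL}(\mu\Vert\mathrm{Unif})=O(S^{0.02})$.

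\emph{Step 4: bounding the KL term (main obstacle).} We have $\mathrm{KL}(\mu\Vert\mathrm{Unif})\le\log(N\max a/\sum a)\le\log(n/w(p'))$. When $w(p')\ge n/2^{O(S^{0.02})}$, this bound is enough.

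For small $w(p')$ I would not use the crude KL bound. Instead I would localize. Intervals carrying zero mass contribute nothing to $\treemd_{M,S}(a)$. Let $(l^*,r^*]$ range over the maximal recursion intervals whose mass density is at least $\max a/2^{S^{0.02}}$. Below each such interval, rerun the Step 3 argument relative to its own mass. This gives a local KL term of at most $S^{0.02}$ and a contribution of $O(\sigma_{(l^*,r^*]}\,S^{0.51})$.

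Above these intervals, the path is sparse: every interval has density below the threshold. There I would charge each scale's deviation, which is at most $2\sigma$, per unit of mass along the root-to-leaf chain of its heavy ancestor. The density drops by a factor of at least $M$ only across scales where the mass concentrates, and a KL chain-rule count limits how many such scales there can be.

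This last accounting is the step I expect to be hardest. If it fails, I would fall back on the per-leaf decomposition $g_x$ from the proof of \cref{lemma:totalmeandeviation}, but with the children balanced by mass instead of by count. I would then bound the resulting martingale under $\mu$ using \cref{lemma:hoeffding}, which does not produce an $n$-dependent tail term.
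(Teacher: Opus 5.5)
\textbf{There is a genuine gap in the small-weight case.} Your Steps~1--3 and the large-weight half of Step~4 are sound. The small-weight case, however, cannot be completed along your route. The global split in Step~2 throws away exactly the sign cancellation that keeps the tree deviation small, and the reduced claim is false there. Here is a concrete instance. Take $X=Y$ over a unary alphabet. Let $p'$ be flat, except that in the first two blocks of some scale-$1$ interval it rises by $\delta:=\mathcal{B}_x/\phi_y$ and immediately falls back. The middle column has scale $0$, so that row is allowed. Then $w(p')=2\delta$ and $\treemd_{M,S}(\Delta p')=2\delta$. But $a^+$ is a single spike of height $\delta$, so $\treemd_{M,S}(a^+)=2\delta(1-1/M)S\ge \delta S$. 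Hence neither the density localization nor a mass-balanced martingale applied to $a^{\pm}$ separately can work: both aim at a false inequality. Note that the paper does not prove this case either. It applies \cref{lemma:totalmeandeviation} directly to the signed sequence $\Delta p'$ (the sign-vector argument works for arbitrary reals). It then absorbs the additive tail $n/2^{\log^{0.015}(n)}$ using $w(p')\ge n/2^{o(\log^{0.01}(n))}$. That hypothesis is missing from the statement, but every path the lemma is applied to satisfies it: in \cref{lemma:regular}, $w(p_s)\ge w(p)$ because $p$ is shortest.

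\textbf{Under that hypothesis your argument is a valid alternative route.} The hypothesis gives your threshold $w(p')\ge n/2^{O(S^{0.02})}$. Splitting into $a^{\pm}$ and applying the Pinsker/chain-rule bound of \cref{app:infotheory} yields $O\bigl(w(p')\sqrt{S\ln(n/w(p'))}\bigr)$, with no additive tail. Your split is also the correct way to use that appendix, whose ``replace $a_x$ by $\abs{a_x}$'' step is invalid for signed sequences (compare $(1,-1)$ with $(1,1)$). There is one slip to fix. Since $\sum a^{\pm}\le w(p')$, the inequality $\ln(n/\sum a^{\pm})\le\ln(n/w(p'))$ goes the wrong way. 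Instead, use that $t\mapsto t\sqrt{\ln(n/t)}$ is increasing on $(0,ne^{-1/2}]$.

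\textbf{Two ideas are needed for every path.} First, charge cancellation locally rather than globally. For a recursion interval $J=(l,r]$, let the child displacements $p'({\anc}_i)-p'({\anc}_{i-1})$ have positive and negative totals $P_J,N_J$. Set $D_J=\max(P_J,N_J)$, let $q_J$ be the normalized dominant-sign child masses, and let $u$ be uniform on the $M$ children. Then
\[
\md_M(\Delta p'(l,r])\le 2\min(P_J,N_J)+D_J\lVert q_J-u\rVert_1 ,
\]
and $\sum_J 2\min(P_J,N_J)\le\sum_i\abs{(\Delta p')_i}$ by telescoping. Second, use the row rounding of the sparsified grid, which your proposal never invokes. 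Every nonzero displacement across a scale-$s$ interval is a multiple of $I_s/\phi_y$ and at most $I_s$. So every nonzero density $D_J/I_J$ lies in $[1/(M\phi_y),1]$, and $\sum_J D_J\,\mathrm{KL}(q_J\Vert u)$ telescopes to $O(w(p')\ln(M\phi_y))$. Pinsker and Cauchy--Schwarz over scales then give
\[
\treemd_{M,S}(\Delta p')=O\bigl(w(p')\sqrt{S\ln(M\phi_y)}\bigr)=o\bigl(w(p')S^{0.51}\bigr).
\]
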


\begin{proof}
The underlying (unsparsified) Rotated Edit Distance Grid Graph ensures that every vertical move between adjacent rows has weight at least 1. Thus, the total vertical deviation over any subinterval $(l, r]$ satisfies:
\[
\sum_i \left| (\Delta p')_i \right| \le w(p'),
\]
establishing \eqref{eq:regulareq2}.

Recall that $n = n_X + n_Y$. From \cref{lemma:totalmeandeviation}, we also have: 
\[
\treemd_{M, S}(\Delta p') = O\left( \left( \sum_i \left| (\Delta p')_i \right| \cdot S^{0.51} \right) + \left( \underbrace{{\mathcal{B}}_x}_{\text{max absolute value of $\Delta p'$}} \cdot \underbrace{(n / {\mathcal{B}}_x)}_{\text{number of entries in $\Delta p'$}} \cdot \frac{1}{2^{\log^{0.015}(n)}} \right) \right).
\]
The second term is \( o(w(p')) \) since \( w(p') \ge \frac{n}{2^{o(\log^{0.01} n)}} \). Plugging in \eqref{eq:regulareq2}:
\[
\treemd_{M, S}(\Delta p') = O(w(p') \cdot S^{0.51} ) + o(w(p')) = O(w(p') \cdot S^{0.51} ),
\]
which gives \eqref{eq:regulareq1}.
\end{proof}

We now prove our main regularity lemma:
\lemmaregular*
\begin{proof}
    We perform rounding on a scale-by-scale basis. Let $p_S := p$. For each scale $s \in [S]$ in decreasing order, if $s \in \tilde{S}$, we round $p_{s}$ in the manner described in \cref{claim:regularsinglescale} to obtain $p_{s - 1}$, and if $s \notin \tilde{S}$ we simply let $p_{s - 1} := p_{s}$. Let the final path be $\hat{p} := p_0$. This affects $O\left(M\sum_{s \in \tilde{S}}\treemd_{M, \{s\}}{\left(\Delta p_s\right)}\right)$ weights. 

    As a warm-up, suppose that the error is $O\left(M\sum_{s \in \tilde{S}}\treemd_{M, \{s\}}{(\Delta p)}\right)$ (namely $p$ instead of $p_s$). This error is equal to $O(M\treemd_{M, \tilde{S}}{(\Delta p)})$. Since each scale is in $\tilde{S}$ u.a.r.~with probability $1 / S ^ {0.98}$, we must have $\Ex_{\tilde{S}}\left[\treemd_{M, \tilde{S}}{(\Delta p)}\right] = \treemd_{M, S}{(\Delta p)} / S ^ {0.98}$. From \eqref{eq:regulareq1}, this means that $\Ex_{\tilde{S}}\left[M\treemd_{M, \tilde{S}}{(\Delta p)}\right] = O(Mw(p) \cdot S ^ {0.51}) / S ^ {0.98}$. Since $M = \Theta \left(\log ^ {\mval} (n)\right)$ and $S := 3 \cdot \floor{\log_M{(n ^ {0.01})}} = \Theta(\log (n) / \log \log (n))$, $\Ex_{\tilde{S}}\left[M\treemd_{M, \tilde{S}}{(\Delta p)}\right] = o\left(\frac{w(p)}{M\log ^ {0.02} (n)}\right)$, which implies that $M\treemd_{M, \tilde{S}}{(\Delta p)} = o\left(\frac{w(p)}{M\log ^ {0.02} (n)}\right)$ with probability at least $19 / 20$ due to Markov's Inequality.

    A technical hurdle for going from $p$ to $p_s$ is that every $p_s$ changes in response to scales being added to $\tilde{S}$, so we need to be more careful in our analysis. 
    
    Let $\mathcal{X}$ be the total number of columns before which the weights of $p$ and $\hat{p}$ differ. Then to prove the lemma we just need to show $\mathcal{X} = o\left(\frac{w(p)}{M\log ^ {0.02} (n)}\right)$ with probability at least $19 / 20$. Let 
    $$X_k := \begin{cases}
	\treemd_{M, \{S - k + 1\}}(\Delta p_{S - k + 1}), & \text{if $S - k + 1 \in \tilde{S}$} \\
	0, & \text{if $S - k + 1 \notin \tilde{S}$}
 \end{cases},$$
    which captures the additive error introduced on scale $(S - k + 1)$ due to \cref{claim:regularsinglescale} (scaled by roughly a factor of $1 / M$). From \cref{claim:regularsinglescale}, We have $\mathcal{X} = O\left(M \cdot \sum_{k \in [S]}{X_k}\right)$.
    Moreover, since each scale $s$ is added to $\tilde{S}$ with probability $1 / S ^ {0.98}$, the expectation of $X_k$, given $X_{1 \ldots k - 1}$, over whether $(S - k + 1) \in \tilde{S}$, is 
    \begin{equation} \label{eq:regulardefy}
        \Ex_{(S - k + 1) \in \tilde{S}}[X_k \mid X_1, \ldots, X_{k - 1}] = \frac{\treemd_{M, \{S - k + 1\}}(\Delta p_{S - k + 1})}{S ^ {0.98}} =: Y_k.
    \end{equation} 
    Thus the random process $Z_K = \sum_{k \le K}{\left(X_k - Y_k\right)}$ is a martingale over the randomness of $\tilde{S}$ and we have 
    \begin{equation} \label{eq:regularxyequal}
        \Ex_{\tilde{S}}\left[\sum_{k \in [S]}{X_k}\right] = \Ex_{\tilde{S}}\left[\sum_{k \in [S]}{Y_k}\right].
    \end{equation} 
    Now we can write out the following equation for $\Ex_{\tilde{S}}\left[\mathcal{X}\right]$:
    \begin{align} \label{eq:regularwhatpbound}
    \Ex_{\tilde{S}}\left[\mathcal{X}\right]   &= O\left(M \cdot \Ex_{\tilde{S}}\left[\sum_{k \in [S]}{X_k}\right]\right) \nonumber \\
                                        &= O\left(M \cdot \Ex_{\tilde{S}}\left[\sum_{k \in [S]}{Y_k}\right]\right). && \text{from \eqref{eq:regularxyequal}}
    \end{align}
    Recall that our goal is to give an upper bound on $\mathcal{X}$. From \eqref{eq:regularwhatpbound}, an upper bound on $\Ex_{\tilde{S}}\left[\sum_{k \in [S]}{Y_k}\right]$ can give an upper bound on $\Ex_{\tilde{S}}\left[\mathcal{X}\right]$ and we can then use Markov's inequality to convert that to a probabilistic bound. 
    \paragraph{An upper bound on $\Ex_{\tilde{S}}\left[\sum_{k \in [S]}{Y_k}\right]$} 
    For any fixed $\tilde{S}$, note that \begin{equation} \label{eq:regularysum}
        \sum_{k \in [S]}{Y_k} := \frac{1}{S ^ {0.98}}\sum_{s \in [S]}\treemd_{M, \{s\}}{\left(\Delta p_s\right)}.
    \end{equation}
    Unfortunately, $\sum_{s \in [S]}\treemd_{M, \{s\}}{\left(\Delta p_s\right)}$ is not equal to the total tree deviation of any particular sequence. To get around this, note that since there are only $\abs{\tilde{S}}$ active scales, the number of distinct $p_s$ is at most $\abs{\tilde{S}}$. Therefore, $\sum_{s \in [S]}\treemd_{M, \{s\}}{\left(\Delta p_s\right)}$ is at most the sum of the tree total deviations of $\Delta p'$ for $\abs{\tilde{S}}$ distinct paths $p'$. Due to \eqref{eq:regulareq1}, this is the sum of $O(w(p') \cdot S^{0.51})$ for $\abs{\tilde{S}}$ distinct paths $p'$.
    Let $s' \in \tilde{S}$ be the maximizer of $w(p_{s'})$. We have
    $$\sum_{s \in [S]}\treemd_{M, \{s\}}{(\Delta p_s)} \le O(\abs{\tilde{S}}w(p_{s'})S^{0.51}).$$
    Plugging into \eqref{eq:regularysum}, we have
    \begin{equation*}
        \sum_{k \in [S]}{Y_k} = O(\abs{\tilde{S}}w(p_{s'})S^{0.51} / S ^ {0.98}).
    \end{equation*}
    With probability $1 - o(1)$, $\abs{\tilde{S}} = O(S ^ {0.02})$, and we have\begin{equation*}
        \sum_{k \in [S]}{Y_k} =  O(w(p_{s'}) / S ^ {0.45}).
    \end{equation*}
    Recall that $M = \Theta \left(\log ^ {\mval} (n)\right)$ and $S = \Theta(\log (n) / \log \log (n))$, and therefore with probability $1 - o(1)$,
    \begin{equation*}
    \sum_{k \in [S]}{Y_k} = O(w(p_{s'}) / S ^ {0.45}) = o\left(\frac{w(p_{s'})}{M^2\log ^ {0.02} (n)}\right).
    \end{equation*}
    Thus if we consider the expections over the randomness of $\tilde{S}$, with probability $1 - o(1)$,
    \begin{equation} \label{eq:regulareqbound}
    \Ex_{\tilde{S}}\left[\sum_{k \in [S]}{Y_k}\right] = o\left(\frac{\Ex_{\tilde{S}}\left[w(p_{s'})\right]}{M^2\log ^ {0.02} (n)}\right).
    \end{equation}
    
    \paragraph{Finishing it off} 
    By plugging \eqref{eq:regulareqbound} into \eqref{eq:regularwhatpbound}, with probability $1 - o(1)$, we have
    $$\Ex_{\tilde{S}}\left[\mathcal{X}\right] = o\left(\frac{\Ex_{\tilde{S}}\left[w(p_{s'})\right]}{M\log ^ {0.02} (n)}\right).$$
    Further note that $w(p_{s'}) \le w(p) + \mathcal{X}$ since modifying an edge increases the length of a path by at most $1$. Thus we have
    $$\Ex_{\tilde{S}}\left[\mathcal{X}\right] = o\left(\frac{w(p) +\Ex_{\tilde{S}}\left[\mathcal{X}\right]}{M\log ^ {0.02} (n)}\right).$$
    Solving for $\Ex_{\tilde{S}}\left[\mathcal{X})\right]$, we get $\Ex_{\tilde{S}}\left[\mathcal{X}\right] = o\left(\frac{w(p)}{M\log ^ {0.02} (n)}\right)$. The lemma follows from Markov's Inequality.
\end{proof}

In \cref{sec:subsampling}, we will present how we can approximate the length of $\hat{p}$ with a small multiplicative error with probability at least $3 / 4$:
\begin{restatable}{lemma}{lemmasubsampling} \label{lemma:subsampling}
Let $\tilde{S} \subset [S]$ be such that each scale is chosen u.a.r.~with probability $1 / S ^ {0.98}$. Let $\hat{p}$ be the shortest $\tilde{S}$-regular $\vertex{0}{n_X}$-$\vertex{n_X + n_Y}{n_Y}$ path. With probability at least $3 / 4$, we can $(1 \pm o(1))$-approximate $w(\hat{p})$ in $o(n ^ {0.09})$ time by querying only $\frac{n ^ 2{\phi_y} ^ 2}{{{\mathcal{B}}_x} ^ 2} / 2 ^ {\Omega\left(\log ^ {0.01} (n)\right)}$ edge weights.
\end{restatable}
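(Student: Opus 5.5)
We will compute a sub-sampled recursive estimate $\whest[l,y_l,r,y_r]$ of $\est[l,y_l,r,y_r]$, the length of the shortest $\tilde{S}$-regular path between the two endpoints. On passive scales we recurse exactly as in the na\"ive scheme, minimizing over all anchor paths $\hat{Y}\in\anchor$. On an active scale $s\in\tilde{S}$ the anchors are forced to be the rounded straight line $\overline{y}_i$ of \cref{def:regular}, so only one candidate remains. For this candidate we draw $\eta_{l,r,1\ldots M}$ with exactly $M/2$ ones. The estimate is $2\sum_i \eta_{l,r,i}\,\whest[{\anc}_{i-1},\overline{y}_{i-1},{\anc}_i,\overline{y}_i]$, but only if this value is consistent with a back-up estimate $\appr$. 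Here $\appr$ is computed by the same recursion using the free approximations $\apprew$ in place of queries, and it is a sound $O(1)$-approximation of $\est$ (by induction, $\est\le\appr\le4\est$, since the minimization and the forced anchors commute with the pointwise $4$-approximation). When the sampled sum deviates from $\appr$ by more than a factor of about $\Theta(1)$, or when the sub-interval weights are very uneven as measured through $\appr$, we fall back to $\appr$.

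\textbf{Query count.} A scale-$0$ edge is queried only if every active ancestor sampled the sub-interval containing it. Otherwise the recursion never descends into it, because on active scales we recurse only into sampled sub-intervals. Each active scale halves the surviving column fraction, so at most a $2^{-|\tilde{S}|}$ fraction of the columns are ever touched. For each touched column, the number of reachable $(y_l,y_r)$ pairs is at most the full count. Hence the number of queries is at most $\frac{n^2\phi_y^2}{\mathcal{B}_x^2}\cdot 2^{-|\tilde{S}|}$. With probability $1-o(1)$ we have $|\tilde{S}|=\Theta(S^{0.02})=\Theta((\log n/\log\log n)^{0.02})$, which gives the saving $2^{-\Omega(\log^{0.01}n)}$ with room to spare. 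The bookkeeping time is bounded similarly, by the number of recursion states times $M$, which is polynomial in $n^{0.03}$ and $\polylog$ factors, so it stays in $o(n^{0.09})$.

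\textbf{Accuracy.} We want $\whest\ge(1-\eps'/|\tilde{S}|)\est$ at every active node (soundness), and an upper bound $w(\hat{p})(1+o(1))$ along the optimal regular path $\hat{p}$ itself (completeness).

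For completeness, fix $\hat{p}$ and an active interval $(l,r]$ on it. The sampled sum is an average of $M/2$ draws without replacement from the sub-path weights $w_i$. By Hoeffding (\cref{lemma:hoeffding}, which also applies to sampling without replacement), the relative error exceeds $\eps'/|\tilde{S}|$ only with probability $\exp(-\Omega(M(\eps')^2(\sum w_i)^2/(|\tilde{S}|^2 M\max_i w_i^2)))$ type bounds. This is small when the $w_i$ are balanced, i.e.\ when $\md_M$ of the weight sequence is small compared to its sum. \cref{lemma:totalmeandeviation}, applied to the weight distribution of $\hat{p}$ across all scales, shows that the total weight sitting in unbalanced active intervals is $o(w(\hat{p}))$ in expectation over $\tilde{S}$. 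In those intervals the back-up $\appr$ costs only an $O(1)$ factor on an $o(1)$ fraction of the weight. The multiplicative losses compound to $(1+\eps'/|\tilde{S}|)^{|\tilde{S}|}=1+O(\eps')$.

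For soundness, at every reachable state the minimum over candidates could be driven down by an unlucky sample. This is handled by a union bound over all $n^{O(1)}$ states, each failing with probability $2^{-\log^{\Omega(1)}n}$, together with the cap that any single estimate is at least $0$ and at most $I_s$. The resulting additive loss is $n\cdot 2^{-\log^{\Omega(1)}n}$ in expectation. This is $o(w(\hat{p}))$ because the answer is at least $n/2^{o(\log^{0.009}n)}$. A final Markov step gives success probability $3/4$.

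\textbf{Main obstacle.} The hardest step is making soundness rigorous despite the adaptive minimization over exponentially many passive-scale anchor paths. The estimates of different candidates share randomness, so bad events cannot simply be union-bounded per path. My first attempt is to do the union bound over the polynomially many recursion states $(l,y_l,r,y_r)$ rather than over paths: each state's local estimate has independent $\eta$'s given its children. I would then propagate errors through the layered DAG of states, bounding the expected total additive slack by summing per-state failure probabilities times $I_s$.
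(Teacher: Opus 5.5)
You are right that soundness (showing the root estimate is not much below $w(\hat p)$) is the crux. However, neither remedy you propose works, and this is exactly where the paper needs an extra idea. A single sampled sum fails with only quasi-polynomially small probability. Hoeffding gives $\exp(-\Omega(M/\log^{0.06}n))$ with $M=\Theta(\log^{0.109}n)$, and $M$ cannot be raised to $\Omega(\log n)$, because the rounding loss in \cref{lemma:regular} is proportional to $M$ and must remain $o(w(p))$. There are $n^{\Theta(1)}$ states, so a union bound over them gives nothing.

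Summing per-state failure probability $q$ times $I_s$ over all scale-$s$ states also fails. There are about $n^2\phi_y^2/I_s^2$ such states, so the sum is about $q\,n^2\phi_y^2/I_s$, which exceeds $n$ on all but the coarsest scales since $q=n^{-o(1)}$. Summing only over states on the minimizing path is not legitimate either. On passive scales the minimizer is chosen after the $\eta$'s are drawn, so it can route through the rare sub-intervals whose estimates came out too low. For the same reason, your per-node target $\whest\ge(1-\eps'/|\tilde S|)\est$ cannot hold at all nodes simultaneously.

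The missing ingredient is \cref{claim:directed}. The paper separates two kinds of loss on an active interval: a typical Hoeffding deviation of $o(w(\hat p(l,r])/\log^{0.02}n)$, and rare ``bad'' estimates that may lose up to $I_s$. It then views the scale-$s$ graph as a layered DAG. Each edge is bad with probability $\exp(-\omega(\log^{0.02}n))$, independently across layers, because distinct scale-$s$ intervals use independent $\eta$'s.

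The key structural fact is locality. From any vertex, only $d=\exp(\Theta(\log^{0.02}n))$ edges are reachable within the next $B=\exp(\Theta(\log^{0.02}n))$ layers, since the row moves by at most $I_s$ per layer at granularity $I_s/\phi_y$. Because the bad probability is $O(1/d^2)$, with high probability every window of $\min(d,B)$ consecutive layers on every path contains only $O(\sqrt{\log N})$ bad edges, where $N=n^{O(1)}$ is the number of vertices. Hence even the adaptively chosen minimizer loses only $n\exp(-\Omega(\log^{0.02}n))$ per scale, which is $o(w(\hat p))$ after summing over scales.

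There is also a smaller gap in your completeness step. You cannot average $\treemd_{M,\tilde S}(W(\hat p))$ over $\tilde S$, because $\hat p$ itself depends on $\tilde S$. \cref{claim:technicalityed} instead bounds this quantity for the unconstrained optimum $p$, which is independent of $\tilde S$. It then transfers the bound to the regular path using the few changed columns guaranteed by \cref{lemma:regular}.
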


Assuming~\cref{lemma:subsampling}, we now show how to finish the rest of the proof for \cref{lemma:gridgraphmin}. 
\lemmagridgraphmin*
\begin{proof}
    Let $\tilde{S} \subset [S]$ be such that each scale is chosen u.a.r.~with probability $1 / S ^ {0.98}$. Let $p$ be the shortest $\vertex{0}{{n_X}}$-$\vertex{n_X + {n_Y}}{n_Y}$ path with $w(p) = n / 2 ^ {o\left(\log ^ {0.01} (n)\right)}$. Let $\hat{p}$ be the shortest $\tilde{S}$-regular $\vertex{0}{{n_X}}$-$\vertex{n_X + {n_Y}}{n_Y}$ path. 
 
    From \cref{lemma:regular}, with probability at least $19 / 20$, we have 
    \begin{equation*}
        w(\hat{p}) > \left(1 + o(1)\right)w(p).
    \end{equation*} 
    
    From \cref{lemma:subsampling}, with probability at least $3 / 4$, we can obtain $(1 \pm o(1))$-approximate $w(\hat{p})$ by querying only $\frac{n ^ 2\phi_y}{{{\mathcal{B}}_x}^2} / 2 ^ {\Omega\left(\log ^ {0.01} (n)\right)}$ edge weights. By scaling up the answer by $(1 + o(1))$ this becomes a $(1 + o(1))$-approximation.

    By the union bound, with probability at least $7 / 10$, we can $\left(1 + o(1)\right)$-approximate $w(p)$ in $o(n ^ {0.09})$ time by querying only $\frac{n ^ 2{\phi_y} ^ 2}{{{\mathcal{B}}_x} ^ 2} / 2 ^ {\Omega\left(\log ^ {0.01} (n)\right)}$ edge weights. By repeating the procedure $\Theta(\log (n))$ times and taking the median, we can approximate $w(p)$ with a multiplicative error of $\left(1 + o(1)\right)$ with high probability in $o(n ^ {0.09} \log (n)) = o(n ^ {0.1})$ time by querying only $\frac{n ^ 2{\phi_y} ^ 2\log (n)}{{{\mathcal{B}}_x} ^ 2} / 2 ^ {\Omega\left(\log ^ {0.01} (n)\right)} = \frac{n ^ 2{\phi_y} ^ 2}{{{\mathcal{B}}_x} ^ 2} / 2 ^ {\Omega\left(\log ^ {0.01} (n)\right)}$ edge weights. 
\end{proof}

\subsection{Algorithm Description} \label{sec:subsampling}
Let $\tilde{S} \subset [S]$ be such that each scale is chosen u.a.r.~with probability $1 / S ^ {0.98}$. Our goal now is to approximate the length of the shortest $\tilde{S}$-regular $\vertex{0}{n_X}$-$\vertex{n_X + n_Y}{n_Y}$ path $\hat{p}$:
\lemmasubsampling*
We can design an $M$-way $S$-scale divide-and-conquer scheme to compute $\est[l, y_l, r, y_r]$ which is the length of the shortest $\tilde{S}$-regular $\vertex{l}{y_l}$-$\vertex{r}{y_r}$ path where $(l, r]$ is a scale-$s$ interval. 
\begin{itemize}
    \item If $s = 0$, we obtain $\est[l, y_l, r, y_r]$ using an edge weight query;
    \item Suppose that $s > 0$ and $s \notin \tilde{S}$. Recall that the $i$-th anchor is ${\anc}_i := l + i \cdot \frac{r - l}{M}$. We find $\est[l, y_l, r, y_r]$ by minimizing over every anchor path defined by the sequence $\hat{Y} = (y_l  = \hat{y}_0, \hat{y}_1, \ldots, \hat{y}_M = y_r)$, where for $1 \le i \le M$, $\abs{\hat{y}_i - \hat{y}_{i - 1}} \le \frac{r - l}{M}$, visiting valid vertices $\left(\vertex{{\anc}_i}{\hat{y}_i}\right)_i$ in the sparsified grid graph. We use $\anchor$ to denote the set of possible anchor paths $\hat{Y}$.
    \item Suppose $s \in \tilde{S}$. Recall that we regularize the row after column ${\anc}_i$ to $\overline{y}_i := (I_{s - 1} / \phi_y) \cdot \round{\frac{y_l + i \cdot \frac{y_r - y_l}{M}}{I_{s - 1} / \phi_y}}$.
    $\est[l, y_l, r, y_r]$ is simply the sum of $\est[{\anc}_{i - 1}, \overline{y}_i, {\anc}_i, \overline{y}_i]$ over $1 \le i \le M$.  
\end{itemize} 
The full recursion is as follows:
\begin{equation} \label{eq:est}
\est[l, y_l, r, y_r] := 
\begin{cases}
	\ew[l, y_l, r, y_r], & \text{if $s = 0$} \\
	\min\limits_{\hat{Y} \in \anchor}\sum\limits_{i = 1} ^ M{\est[{\anc}_{i - 1}, \hat{y}_{i - 1}, {\anc}_i, \hat{y}_i]}, & \text{if $s \notin \tilde{S}$} \\
	\sum\limits_{i = 1} ^ M{\est[{\anc}_{i - 1}, \overline{y}_i, {\anc}_i, \overline{y}_i]}, & \text{if $s \in \tilde{S}$}
\end{cases}.
\end{equation}
Unfortunately, this scheme still queries too many edge weights. 

The key to efficiency is to use a \emph{sub-sampling} technique for active scales. We first show a naive 
attempt that can guarantee a small additive error (we later improve it to also obtain a bound on the multiplicative error). Notice that if $s \in \tilde{S}$, since no overfitting can happen when the path is fixed, we can just sample half of these values and double our sum from the sample. Thus, for every scale-$s$ interval $(l, r]$, we first uniformly randomly sample $M$ 0-1 variables $\eta_{l, r, 1 \ldots M}$ containing exactly $M / 2$ 1's. We use the following scheme:
\begin{equation} \label{eq:estasterisk}
\whest ^ *[l, y_l, r, y_r] := 
\begin{cases}
	\ew[l, y_l, r, y_r], & \text{if $s = 0$} \\
	\min\limits_{\hat{Y} \in \anchor}\sum\limits_{i = 1} ^ M{\whest ^ *[{\anc}_{i - 1}, \hat{y}_{i - 1}, {\anc}_i, \hat{y}_i]}, & \text{if $s \notin \tilde{S}$} \\
	2\sum\limits_{i = 1} ^ M{\eta_{l, r, i}\whest ^ *[{\anc}_{i - 1}, \overline{y}_i, {\anc}_i, \overline{y}_i]}, & \text{if $s \in \tilde{S}$}
\end{cases},
\end{equation}
Note that for every active scale, half of the sub-intervals $({\anc}_{i - 1}, {\anc}_i]$ for which $\eta_{l, r, i} = 0$ get discarded. Over $\abs{\tilde S}$ active scales, only $2 ^ {-\abs{\tilde S}}$ fraction of the intervals remain, which is a significant speed-up.

Unfortunately, the scheme with \eqref{eq:estasterisk} is not enough for a small multiplicative error, as $\whest ^ *[{\anc}_{i - 1}, \overline{y}_{i - 1}, {\anc}_i, \overline{y}_i]$ may be imbalanced over $1 \le i \le M$. In the most extreme example, if $\whest ^ *[{\anc}_{i - 1}, \overline{y}_{i - 1}, {\anc}_i, \overline{y}_i] \ne 0$ for only a single \emph{outlier} $i$, then \eqref{eq:estasterisk} either gets us an estimate that is $0$ or an estimate that is twice the actual value, depending on the value of $\eta_{l, r, i}$. 

Due to the bound on tree total deviation, the weight distribution on the shortest path should intuitively not be imbalanced everywhere and there are not too many outliers in total. Thus even if we ``give up'' when we detect outliers, there should not be too much error. We first need to detect outliers. Recall that for each edge from $\vertex{x}{y}$ to $\vertex{x + {\mathcal{B}}_x}{y'}$, we are given a constant factor approximation $\apprew[x, y, x + {\mathcal{B}}_x, y']$ of the edge weight $\ew[x, y, x + {\mathcal{B}}_x, y']$. We can (detailed later) use this to obtain a constant-factor approximation $\appr[l', {y'}_{l'}, r', {y'}_{r'}]$ of the length of the shortest $\vertex{l'}{{y'}_{l'}}$-$\vertex{r'}{{y'}_{r'}}$ $\tilde{S}$-regular path for every valid $\left(l', {y'}_{l'}, r', {y'}_{r'}\right)$ where $(l', r']$ is a recursion interval. We now describe how we can obtain our estimate:
\begin{enumerate}
    \item We detect outliers by checking if $\max\limits_{1 \le i \le M}{\appr[{\anc}_{i - 1}, \overline{y}_{i - 1}, {\anc}_i, \overline{y}_i]} > \frac{\appr[l, y_l, r, y_r]}{M}\log \log(n)$.
    \item If $\max\limits_{1 \le i \le M}{\appr[{\anc}_{i - 1}, \overline{y}_{i - 1}, {\anc}_i, \overline{y}_i]} \le \frac{\appr[l, y_l, r, y_r]}{M}\log \log(n)$, we let $$\whest[l, y_l, r, y_r] := \min\left(2\sum\limits_{i = 1} ^ M{\eta_{l, r, i}\whest[{\anc}_{i - 1}, \overline{y}_i, {\anc}_i, \overline{y}_i]}, \appr[l, y_l, r, y_r]\right).$$ Namely we use the same formula as in \eqref{eq:estasterisk}, but clamp $\whest[l, y_l, r, y_r]$ to at most $\appr[l, y_l, r, y_r]$ to make sure that $\whest[l, y_l, r, y_r]$ does not over-estimate the answer worse than the constant-factor approximation does;
    \item If $\max\limits_{1 \le i \le M}{\appr[{\anc}_{i - 1}, \overline{y}_{i - 1}, {\anc}_i, \overline{y}_i]} > \frac{\appr[l, y_l, r, y_r]}{M}\log \log(n)$, we simply ``give up'' and let $\whest[l, y_l, r, y_r] := \appr[l, y_l, r, y_r]$, which still only gives us a multiplicative error of $\left(1 + O(1)\right)$.
\end{enumerate}
Given a path $p'$ from column $l$ to column $r$, we let $W(p')$ be the following sequence capturing the ``weight distribution'' of $p'$:
$$W(p') = \left(w(p'(l, l + {\mathcal{B}}_x]), w(p'(l + {\mathcal{B}}_x, l + 2{\mathcal{B}}_x]), \ldots, w(p'(r - {\mathcal{B}}_x, r])\right)$$
We later show that this procedure satisfies the following claim:
\begin{restatable}{claim}{claimonesideded} \label{claim:onesideded}
    For $\whest[l, y_l, r, y_r]$ where $(l, r]$ is a scale-$s$ interval such that $s \in \tilde{S}$. 
    Let $\overline{p}_{l, r}$ be the shortest $\tilde{S}$-regular $\vertex{l}{y_l}$-$\vertex{r}{y_r}$ path. We have the following:
    \begin{description}
        \item[We only give up when weight distribution is imbalanced] if $\max\limits_{1 \le i \le M}{\appr[{\anc}_{i - 1}, \overline{y}_{i - 1}, {\anc}_i, \overline{y}_i]} = \omega\left(\frac{\appr[l, y_l, r, y_r]}{M}\right)$, then $\md_M(W(\overline{p}_{l, r})) = \omega\left(\frac{w(\overline{p}_{l, r})}{M}\right)$. 
        \item[If we do not give up, we obtain a good estimate] if $\max\limits_{1 \le i \le M}{\appr[{\anc}_{i - 1}, \overline{y}_{i - 1}, {\anc}_i, \overline{y}_i]}  \le \frac{\appr[l, y_l, r, y_r]}{M}\log \log (n)$, then with probability $1 - \exp\left(-\omega\left(\log ^ {0.02}(n)\right)\right)$, $\left(2\sum\limits_{i = 1} ^ M{\eta_{l, r, i}\whest[{\anc}_{i - 1}, \overline{y}_i, {\anc}_i, \overline{y}_i]}\right)$ approximates $\left(\sum\limits_{i = 1} ^ M{\whest[{\anc}_{i - 1}, \overline{y}_i, {\anc}_i, \overline{y}_i]}\right)$ with additive error $o\left(\frac{w(\overline{p}_{l, r})}{\log ^ {0.02} (n)}\right)$.
    \end{description}
\end{restatable}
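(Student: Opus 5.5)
Both parts follow from the constant-factor guarantees of $\appr$, plus Hoeffding (\cref{lemma:hoeffding}) for the second. I assume, as the construction intends, that $\appr[l', y'_{l'}, r', y'_{r'}]$ lies between $\est[l', y'_{l'}, r', y'_{r'}]$ and $C\cdot\est[l', y'_{l'}, r', y'_{r'}]$ for an absolute constant $C$ (inherited from the factor $4$ in $\apprew$ by taking sums and minima in the same recursion as \eqref{eq:est}). I also assume inductively that each child estimate $\whest[{\anc}_{i-1},\overline{y}_{i-1},{\anc}_i,\overline{y}_i]$ is at most $\appr$ of that child, which the clamping guarantees at active scales and which holds trivially at scale $0$ and passive scales. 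Because $s$ is active, the regular path $\overline{p}_{l,r}$ is forced through the anchors $\vertex{{\anc}_i}{\overline{y}_i}$. Hence $w(\overline{p}_{l,r})=\sum_i \est[{\anc}_{i-1},\overline{y}_{i-1},{\anc}_i,\overline{y}_i]$, and the $i$-th child sum of $W(\overline{p}_{l,r})$ is $w_i:=\est[{\anc}_{i-1},\overline{y}_{i-1},{\anc}_i,\overline{y}_i]$.

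\textbf{First part.} Suppose $\max_i \appr[\text{child } i]=\omega(\appr[l,y_l,r,y_r]/M)$. Since $w_{i^*}\ge \appr[\text{child } i^*]/C$ and $\appr[l,y_l,r,y_r]\ge w(\overline{p}_{l,r})$, some child satisfies $w_{i^*}=\omega(w(\overline{p}_{l,r})/M)$. The definition of $\md_M$ then gives
\[
\md_M(W(\overline{p}_{l,r}))\ge \Bigl|w_{i^*}-\tfrac{1}{M}w(\overline{p}_{l,r})\Bigr|=\omega\bigl(w(\overline{p}_{l,r})/M\bigr).
\]
Here I use that the sub-interval sums of $W$ at this level are exactly the $w_i$.

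\textbf{Second part.} Let $v_i:=\whest[\text{child } i]$. Then $0\le v_i\le \appr[\text{child } i]\le \frac{\log\log n}{M}\appr[l,y_l,r,y_r]\le \frac{C\log\log n}{M}w(\overline{p}_{l,r})=:c$. The $v_i$ are determined by randomness at strictly smaller scales, which is independent of $\eta_{l,r,\cdot}$, so I condition on them. The estimator $2\sum_i\eta_{l,r,i}v_i$ has mean $\sum_i v_i$. Sampling exactly $M/2$ indices without replacement is the main technical wrinkle, since Hoeffding as stated needs independence. I would handle it by Hoeffding's classical observation that sampling without replacement is at least as concentrated as with replacement, or more simply by pairing the indices into a random perfect matching and letting one element of each pair be chosen by an independent fair coin. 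The deviation is then a sum of $M/2$ independent terms in $[-c,c]$, namely $\pm(v_a-v_b)$. Applying \cref{lemma:hoeffding} with $\delta=w(\overline{p}_{l,r})/\log^{0.02}(n)\cdot\epsilon_n$ for a slowly vanishing $\epsilon_n$ gives failure probability at most
\[
2\exp\Bigl(-\Omega\Bigl(\tfrac{\delta^2}{Mc^2}\Bigr)\Bigr)=2\exp\Bigl(-\Omega\Bigl(\tfrac{M\epsilon_n^2}{(\log\log n)^2\log^{0.04}(n)}\Bigr)\Bigr).
\]
With $M=\Theta(\log^{\mval} n)$ the exponent is $\log^{0.069-o(1)}(n)=\omega(\log^{0.02} n)$, as required.

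\textbf{Main obstacle.} The delicate point is justifying that the child estimates $v_i$ are bounded by the child $\appr$ and are independent of $\eta_{l,r,\cdot}$. Independence follows from the recursion structure, since children use only sub-interval coins. The upper bound follows from the clamping and give-up rules applied inductively at every active scale, with the passive-scale minima preserving it.
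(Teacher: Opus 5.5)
Your proposal is correct and follows essentially the same route as the paper: the constant-factor sandwich between the approximate and exact regular-path lengths gives the first part, and Hoeffding with range $c = O\bigl(w(\overline{p}_{l,r})\log\log(n)/M\bigr)$ gives the second. You also make explicit several points the paper's proof passes over when it applies Hoeffding: the inductive argument that each child estimate is bounded by its constant-factor approximation, the independence of the child estimates from $\eta_{l,r,\cdot}$, and the random-matching coupling that handles sampling exactly $M/2$ indices without replacement.
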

To sum up, we compute the estimate in the following way:
\begin{equation} \label{eq:whest} 
\whest[l, y_l, r, y_r] := 
\begin{cases}
	\ew[l, y_l, r, y_r], & \text{if $s = 0$} \\
	\min\limits_{\hat{Y} \in \anchor}\sum\limits_{i = 1} ^ M{\whest[{\anc}_{i - 1}, \hat{y}_{i - 1}, {\anc}_i, \hat{y}_i]}, & \text{if $0< s \notin \tilde{S}$} \\
    \min{\left(
    \begin{tabular}{c}
    $2\sum\limits_{i = 1} ^ M{\eta_{l, r, i}\whest[{\anc}_{i - 1}, \overline{y}_i, {\anc}_i, \overline{y}_i]},$ \\
     $\appr[l, y_l, r, y_r]$
    \end{tabular}
    \right)}
        , & \text{if $\left(
    \begin{tabular}{c}
    $\left(s \in \tilde{S}\right)~\wedge$ \\
    $\left(\begin{tabular}{c}
     $\max\limits_{1 \le i \le M}{\appr[{\anc}_{i - 1}, \overline{y}_{i - 1}, {\anc}_i, \overline{y}_i]}$ \\
     $ \le \frac{\appr[l, y_l, r, y_r]}{M}\log \log (n)$
    \end{tabular}\right)$
    \end{tabular}\right)$} \\
        \appr[l, y_l, r, y_r]
        , & \text{if $\left(
    \begin{tabular}{c}
    $\left(s \in \tilde{S}\right)~\wedge$ \\
    $\left(\begin{tabular}{c}
     $\max\limits_{1 \le i \le M}{\appr[{\anc}_{i - 1}, \overline{y}_{i - 1}, {\anc}_i, \overline{y}_i]}$ \\
     $ > \frac{\appr[l, y_l, r, y_r]}{M}\log \log (n)$
    \end{tabular}\right)$
    \end{tabular}\right)$}
\end{cases},
\end{equation}

Finally, to obtain $\appr[l, y_l, r, y_r]$ which is a constant-factor approximation of the length of the shortest $\tilde{S}$-regular $\vertex{l}{y_l}$-$\vertex{r}{y_r}$ path where $(l, r]$ is a scale-$s$ interval, we can use the following recursion:
\begin{equation} \label{eq:appr}
\appr[l, y_l, r, y_r] := 
\begin{cases}
	\apprew[l, y_l, r, y_r], & \text{if $s = 0$} \\
	\min\limits_{\hat{Y} \in \anchor}\sum\limits_{i = 1} ^ M{\appr[{\anc}_{i - 1}, \hat{y}_{i - 1}, {\anc}_i, \hat{y}_i]}, & \text{if $s \notin \tilde{S}$} \\
	\sum\limits_{i = 1} ^ M{\appr[{\anc}_{i - 1}, \overline{y}_i, {\anc}_i, \overline{y}_i]}, & \text{if $s \in \tilde{S}$}
\end{cases},
\end{equation}

To efficiently minimize over anchor path $\hat{Y}$ on passive scale $s$, consider the following ``Scale-$(s - 1)$ Sparsified Grid Graph'':
\begin{definition} [Scale-$(s - 1)$ Sparsified Grid Graph] \label{def:scalesparse}
    The Scale-$(s - 1)$-Sparsified Grid Graph defined on $\whest$ is as follows:
    \begin{description}
        \item[Vertices] We only consider vertices $\vertex{x}{y}$ in the Sparsified Edit Distance Grid Graph with the additional constraint that $I_{s - 1} \mid x$;
        \item[Edges] There is an edge from $\vertex{x}{y}$ to $\vertex{x + I_{s - 1}}{y'}$ (if both exist) such that $\abs{y' - y} \le I_{s - 1}$ with a weight of $\whest[x, y, x + I_{s - 1}, y']$ (or $\appr[x, y, x + I_{s - 1}, y']$ for computing $\appr$).
    \end{description}
\end{definition} 
To obtain $\min\limits_{\hat{Y} \in \anchor}\sum\limits_{i = 1} ^ M{\whest[{\anc}_{i - 1}, \hat{y}_{i - 1}, {\anc}_i, \hat{y}_i]}$ (or  $\min\limits_{\hat{Y} \in \anchor}\sum\limits_{i = 1} ^ M{\appr[{\anc}_{i - 1}, \hat{y}_{i - 1}, {\anc}_i, \hat{y}_i]}$), we can simply find the shortest $\vertex{l}{y_l}$-$\vertex{r}{y_r}$ path. 

We give the full pseudocode for computing $\whest$ in \cref{alg:whest}. We use the memoization technique to compute $\whest$ values on demand (i.e., Line \ref{line:memoizationstart}-\ref{line:memoizationend} and Line \ref{line:memoizationstart2}-\ref{line:memoizationend2}).
\begin{algorithm} 
    \caption{Computation of $\whest$ for Edit Distance} \label{alg:whest}
    \begin{algorithmic}[1]
        \Function{$\COMESTED$}{$l, y_l, r, y_r, s$}
            \Statex \textbullet~\textbf{requirement 1:} $\vertex{l}{y_l}$ and $\vertex{r}{y_r}$ are both vertices in the sparsified grid graph
            \Statex \textbullet~\textbf{requirement 2:} $\vertex{r}{y_r}$ is reachable from $\vertex{l}{y_l}$
            \Statex \textbullet~\textbf{requirement 3:} $s \in [S] \cup \{0\}$
            \Statex \textbullet~\textbf{requirement 4:} $(l, r]$ is a scale-$s$ interval
            \Statex \textbullet~\textbf{returns:} an estimate $\whest[l, y_l, r, y_r]$ for the shortest $\vertex{l}{y_l}$-$\vertex{r}{y_r}$ path
            \If {$s = 0$}
                \Return $\ew[l, y_l, r, y_r]$
            \ElsIf {$s \notin \tilde{S}$}
                \State $\ver \gets \emptyset $\Comment{Initialize vertices for Scale-$(s - 1)$ Sparsified Grid Graph}
                \State $\e \gets \emptyset $\Comment{Initialize edges for Scale-$(s - 1)$ Sparsified Grid Graph}
                \For {$x \in $ multiples of $I_{s - 1}$ in $(l, r]$}
                    \For {$y \in $ multiples of $I_{s - 1} / \phi_y$ in $[0, n]$}
                        \State add $\vertex{x}{y}$ to $\ver$
                        \If {$x + I_{s - 1} \le r$}
                            \For {$y' \in $ multiples of $I_{s - 1} / \phi_y$ in $[0, n]$}
                                \If {$\abs{y' - y} \le I_{s - 1}$}
                                    \If {$\whest[x, y, x + I_{s - 1}, y'] = \perp$} \Comment{Estimate not yet computed} \label{line:memoizationstart}
                                        \State $\whest[x, y, x + I_{s - 1}, y'] \gets \COMESTED\left(x, y, x + I_{s - 1}, y', s - 1\right)$ \Comment{Compute estimate for sub-interval}
                                    \EndIf \label{line:memoizationend}
                                    \State add an edge from $\vertex{x}{y}$ to $\vertex{x + I_{s - 1}}{y'}$ with weight $\whest[x, y, x + I_{s - 1}, y']$ to $\e$
                                \EndIf
                            \EndFor
                        \EndIf
                    \EndFor
                \EndFor
                \State $G \gets \langle \ver, \e \rangle$ \Comment{Create the Scale-$(s - 1)$ Sparsified Grid Graph}
                \State \Return length of shortest $\vertex{l}{y_l}$-$\vertex{r}{y_r}$ path in $G$ via standard techniques for acyclic graphs
            \Else
                \State $\overline{y}_i \gets (I_{s - 1} / \phi_y) \cdot \round{\frac{y_l + i \cdot \frac{y_r - y_l}{M}}{I_{s - 1} / \phi_y}}$ for $1 \le i \le M$
                \State ${\anc}_i \gets l + i \cdot \frac{r - l}{M}$ for $1 \le i \le M$
                \For {$i \in [1, M]$}  
                    \If {$\eta_{l, r, i} = 1$} \Comment{$i$-th sub-interval is in sample}
                        \If {$\whest[{\anc}_{i - 1}, \overline{y}_{i - 1}, {\anc}_i, \overline{y}_{i}] = \perp$} \Comment{Estimate not yet computed} \label{line:memoizationstart2}
                            \State $\whest[{\anc}_{i - 1}, \overline{y}_{i - 1}, {\anc}_i, \overline{y}_{i}] \gets \COMESTED\left({\anc}_{i - 1}, \overline{y}_{i - 1}, {\anc}_i, \overline{y}_i, s - 1\right)$ \Comment{Compute estimate for sub-interval}
                        \EndIf \label{line:memoizationend2}
                    \EndIf
                \EndFor
                \If {$\max\limits_{1 \le i \le M}{\appr[{\anc}_{i - 1}, \overline{y}_{i - 1}, {\anc}_i, \overline{y}_i]} \le \frac{\appr[l, y_l, r, y_r]}{M}\log \log (n)$}
                    \State \Return $\min\left(2\sum\limits_{i = 1} ^ M{\eta_{l, r, i}\whest[{\anc}_{i - 1}, \overline{y}_i, {\anc}_i, \overline{y}_i]}, \appr[l, y_l, r, y_r]\right)$
                \Else
                    \State \Return $\appr[l, y_l, r, y_r]$
                \EndIf
            \EndIf
        \EndFunction
    \end{algorithmic}
\end{algorithm}

\subsection{Algorithm Analysis} \label{sec:analysised}
\subsubsection{Overview}
Our goal is to use our algorithm to prove \cref{lemma:subsampling}:
\lemmasubsampling*
The analysis of our algorithm proof consists of three parts:
\begin{restatable}[Our estimate is complete]{lemma}{lemmacompleteed}
    Let $\hat{p}$ be the shortest $\tilde{S}$-regular $\vertex{0}{n_X}$-$\vertex{n_X + n_Y}{n_Y}$ path.
     With probability at least $19 / 25$, $\whest[0, n_X, n_X + n_Y, n_Y] < \left(1 + o(1)\right)w(\hat{p})$.
\end{restatable}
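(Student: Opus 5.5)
We follow the single fixed path $\hat p$ down the recursion and bound how far $\whest$ overestimates along it. Since $\whest$ takes a minimum over anchor paths on passive scales and a clamped sum on active scales, an induction over scales shows that $\whest[0,n_X,n_X+n_Y,n_Y]$ is at most a quantity obtained by evaluating $\hat p$ itself. On passive scales we simply plug in $\hat p$'s anchors. On active scales we use one of two bounds on the sub-interval $(l,r]$:
\[
\whest[l,y_l,r,y_r]\le \appr[l,y_l,r,y_r] \quad\text{or}\quad \whest[l,y_l,r,y_r]\le 2\sum_i \eta_{l,r,i}\whest[\anc_{i-1},\overline y_{i-1},\anc_i,\overline y_i].
\]
Here the $\overline y_i$ are exactly $\hat p$'s anchors, by $\tilde S$-regularity. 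We always take the second bound when the max-test passes; we are forced to take the first when we give up.

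The total overestimate is then a sum of two kinds of error over the active recursion intervals $(l,r]$ that $\hat p$ passes through.

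\textbf{Sampling error.} When we do not give up, the second item of \cref{claim:onesideded} gives additive error $o\bigl(w(\overline p_{l,r})/\log^{0.02}(n)\bigr)$, except with probability $\exp(-\omega(\log^{0.02}n))$. We have $w(\overline p_{l,r})\le w(\hat p(l,r])$ because $\overline p_{l,r}$ is optimal. On a given active scale the intervals are disjoint, so these errors sum to $o(w(\hat p)/\log^{0.02} n)$ per active scale. With $|\tilde S|=O(S^{0.02})=O(\log^{0.02} n)$ active scales, this totals $o(w(\hat p))$. Errors compound multiplicatively down the levels, roughly $(1+o(1/|\tilde S|))^{|\tilde S|}$, which is fine. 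Failure events are handled in expectation: an interval is bad with probability $2^{-\omega(\log^{0.02}n)}$, and there its estimate is still at most $\appr\le 4\cdot(\text{true value})$, i.e.\ $O(r-l)$. So the expected extra error is $n\cdot S\cdot 2^{-\omega(\log^{0.02} n)}$, which is $o(w(\hat p))$ because $w(\hat p)\ge n/2^{o(\log^{0.01} n)}$. Markov's inequality turns this into a constant-probability statement.

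\textbf{Give-up error.} Giving up costs at most $\appr-\text{true}\le 3\,w(\hat p(l,r])$ on that interval. By the first item of \cref{claim:onesideded}, we give up only when $\md_M(W(\overline p_{l,r}))=\omega(w(\overline p_{l,r})/M)$, so
\[
w(\hat p(l,r]) = o\bigl(M\,\md_M(W(\hat p(l,r]))\bigr),
\]
up to the care needed to identify $\overline p_{l,r}$ with $\hat p$ restricted to $(l,r]$. Summing over the given-up intervals on active scales bounds the total by $o(M\,\treemd_{M,\tilde S}(W(\hat p)))$. Now apply \cref{lemma:totalmeandeviation} to the nonnegative sequence $W(\hat p)$, whose entries are at most $\mathcal B_x$: $\treemd_{M,S}(W)=O(w(\hat p)S^{0.51})+o(w(\hat p))$. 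Random selection of active scales with rate $S^{-0.98}$ gives, in expectation, $\treemd_{M,\tilde S}=O(w(\hat p)S^{-0.47})$. Multiplying by $M=\Theta(\log^{0.109}n)$ leaves $o(w(\hat p))$, since $S^{0.47}\gg \log^{0.109}n\cdot\text{polyloglog}$. Markov's inequality then gives failure probability below $1/25$.

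\textbf{Main obstacle.} The hardest step is the dependence between $\hat p$, which is the optimal $\tilde S$-regular path and hence depends on $\tilde S$, and the randomness of $\tilde S$ used in the averaging over active scales. Here $\hat p$ is a function of $\tilde S$, just as $p_s$ was in \cref{lemma:regular}. I would handle this with the same trick used there. Replace $\hat p$ by the rounded path from \cref{lemma:regular}, which has weight $(1+o(1))w(p)\approx w(\hat p)$ and is $\tilde S$-regular. Then redo the martingale or "at most $|\tilde S|$ distinct paths" argument, so that $\treemd_{M,\tilde S}$ is effectively averaged against a fixed base path. Finally, union-bound the three failure events (sampling, give-ups, and the regularity lemma) to reach probability $19/25$.
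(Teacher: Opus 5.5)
This is essentially the paper's own argument. It uses the same telescoping of the estimate along a witness path (passive scales cannot increase it), the same three error types from \cref{claim:onesideded}, the same charging of give-ups to $M\,\treemd_{M,\tilde S}(W(\cdot))$, and the same decoupling from $\tilde S$ via the rounded path of \cref{lemma:regular}. The one difference is that the paper needs no new martingale for the decoupling: it bounds $\treemd_{M,\tilde S}(W(p))$ for the fixed optimal $p$ by averaging over $\tilde S$, then invokes the ``moreover'' clause of \cref{lemma:regular}, since each of the $o(w(p)/(M\log^{0.02}n))$ changed columns moves $\treemd_{M,\tilde S}(W(\cdot))$ by only $O(|\tilde S|)$.

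One caveat applies equally to your write-up and to the paper's \cref{claim:technicalityed}. This deviation bound controls the rounded path $q$, not the shortest regular path $\hat p$, so the telescoping has to be run along $q$. Because $q$ need not be locally optimal, your identification of $\overline p_{l,r}$ with the witness's restriction then fails, and you must add a splicing observation to replace it:
\begin{itemize}
\item On each active scale, the intervals with $w(q(l,r]) > 2\,w(\overline p_{l,r})$ have total $q$-weight below $2(w(q)-w(p)) = o(w(p)/(M\log^{0.02}n))$. To see this, splice in the paths $\overline p_{l,r}$; this preserves $\tilde S$-regularity, and the result cannot beat $w(\hat p)\ge w(p)$.
\item On every other give-up interval, $w(\overline p)\le w(q)$ holds on each sub-interval and $w(\overline p_{l,r})\ge w(q(l,r])/2$. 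Together these turn the give-up test into $\md_M(W(q(l,r]))=\omega(w(q(l,r])/M)$, so the charge lands on $q$'s own imbalance.
\end{itemize}
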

\begin{restatable}[Our estimate is sound]{lemma}{lemmasounded} 
    Let $\hat{p}$ be the shortest $\tilde{S}$-regular $\vertex{0}{n_X}$-$\vertex{n_X + n_Y}{n_Y}$ path. Suppose $w(\hat{p}) = n / 2 ^ {o\left(\log ^ {0.01} (n)\right)}$.
    With probability $1 - o(1)$, $\whest[0, n_X, n_X + n_Y, n_Y] > \left(1 - o(1)\right)w(\hat{p})$;
\end{restatable}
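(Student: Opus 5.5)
The plan is to prove soundness by a bottom-up induction over the recursion tree of \eqref{eq:whest}, comparing each computed value $\whest[l, y_l, r, y_r]$ with the true value $\est[l, y_l, r, y_r]$ of \eqref{eq:est}. For a scale-$s$ node I aim for an invariant of the form
\[
\whest[l, y_l, r, y_r] \ge \Bigl(1 - \tfrac{\eps'}{|\tilde S|}\Bigr)^{k_s}\est[l, y_l, r, y_r] - \mathrm{Bad}[l, y_l, r, y_r],
\]
where $k_s$ is the number of active scales $\le s$, and $\mathrm{Bad}$ is the total width of ``failed'' active-scale nodes used below this node. Two structural facts drive the induction. First, $\apprew \ge \ew$ implies, by the same recursion \eqref{eq:appr} versus \eqref{eq:est}, that $\appr \ge \est$ everywhere. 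Second, $\min$ is monotone, so a lower bound on every child propagates through the minimization over $\anchor$ on passive scales.

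\textbf{Case analysis at a node.} At $s=0$ the value is exact. At a passive scale, the minimizing anchor path $\hat Y$ of $\whest$ is also a feasible anchor path for $\est$. Applying the induction hypothesis to its children gives the invariant, with $\mathrm{Bad}$ equal to the sum of the children's $\mathrm{Bad}$ along that path. At an active scale there are two cases.
\begin{itemize}
\item In the ``give up'' branch, and whenever the clamp by $\appr$ is the active term, the returned value is $\appr \ge \est$, so these branches are trivially sound.
\item In the remaining branch, the anchors $\overline y_i$ are fixed, so there is no overfitting. The estimate $2\sum_i \eta_{l,r,i}\whest[\cdot]$ is an unbiased sample of $\sum_i\whest[\cdot]$ with each summand at most $\appr_{\text{child}} \le \frac{\appr[l,y_l,r,y_r]}{M}\log\log n \le \frac{4\est[l,y_l,r,y_r]}{M}\log\log n$. (For children that themselves under-reported, I instead use the uniform bound $\whest_{\text{child}} \le \appr_{\text{child}}$.) Hoeffding's inequality (\cref{lemma:hoeffding}), applied to sampling without replacement (or via independent-coordinate coupling), bounds the probability of a downward deviation exceeding $\frac{\eps'}{|\tilde S|}\est$ by $\exp\bigl(-\Omega(M\eps'^2/(|\tilde S|^2(\log\log n)^2))\bigr)$. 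With $M=\Theta(\log^{\mval} n)$ and $|\tilde S| = O(S^{0.02})$ with probability $1-o(1)$, this is $\exp(-\log^{\Omega(1)} n)$ for a suitable $\eps' = o(1)$.
\end{itemize}
I call a node \emph{failed} if this deviation event occurs. A failed node still returns a value $\ge 0$, so it loses at most $\est \le r-l$, and I charge its width to $\mathrm{Bad}$. At a successful node the multiplicative loss is one factor $(1-\eps'/|\tilde S|)$, and over all $|\tilde S|$ active scales these compound to $1-O(\eps') = 1-o(1)$.

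\textbf{Main obstacle.} It remains to show that, with probability $1-o(1)$, every root-to-leaf structure selected by the minimizations has total failed width $o(w(\hat p))$. Recall $w(\hat p) \ge n/2^{o(\log^{0.01} n)}$. A naive union bound over all nodes fails, because there are far more candidate $(l,y_l,r,y_r)$ nodes than $\exp(\log^{\Omega(1)} n)$. The passive-scale minimization can, in principle, route through failed nodes adversarially.

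This is where I would invoke the layered-graph observation (\cref{claim:directed}) promised in the overview. Fix a scale and view the nodes at that scale as vertices of a layered DAG whose layers are columns and whose edges are the allowed transitions. Each vertex fails with probability $q = \exp(-\log^{\Omega(1)} n)$, independently across disjoint intervals, since the $\eta$'s are independent. Out-degrees are bounded by $\mathrm{poly}(\phi_y) = 2^{O(\log^{0.009} n)}$, which is much smaller than $1/q$. The claim should then give that the expected maximum, over all paths, of the total failed width is at most $n\cdot q\cdot 2^{O(\log^{0.009}n)}$. Summing over the $S$ scales and applying Markov's inequality yields total $\mathrm{Bad} \le n/2^{\log^{\Omega(1)}n} = o(w(\hat p))$ with probability $1-o(1)$. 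Combining this with the multiplicative bound gives $\whest[0,n_X,n_X+n_Y,n_Y] > (1-o(1))w(\hat p)$.

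I expect this step to be the delicate part. In particular, the dependence between a node's failure and its children's random values must be handled by conditioning on the children and using only the uniform bound $\whest_{\text{child}} \le \appr_{\text{child}}$ in Hoeffding's inequality.
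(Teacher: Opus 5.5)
Your proposal follows essentially the paper's route: Hoeffding concentration at each active node, made possible by the give-up test and harmless in the fallback branches because $\appr\ge\est$, followed by \cref{claim:directed} at each scale to control failed nodes that the passive-scale minimizations can route through, and a union bound over scales. Your bottom-up invariant (a multiplicative loss relative to $\est$ plus an additive failed-width term) is a different bookkeeping of the paper's comparison of $\mathcal{A}_{s-1}$ with $\mathcal{A}_s$. It is somewhat cleaner, because it measures the sampling error against the true cost of whatever route the minimization picks rather than against $w(\hat p(l,r])$.

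One correction is needed in the step you flag as delicate: \cref{claim:directed} does not give a failed width of order $n\cdot q\cdot 2^{O(\log^{0.009} n)}$ per scale.
\begin{itemize}
\item The claim's hypothesis is not about out-degree. It bounds the number $d$ of edges reachable within $B$ layers, and at scale $s$ that number is about $B\phi_y^2$.
\item The requirement $q=O(1/d^2)$ then forces $B\approx q^{-1/2}\phi_y^{-2}$.
\item The conclusion $h\sqrt{\log N}/\min(d,B)$ therefore becomes roughly $(n/I_s)\sqrt{\log n}\,\sqrt{q}\,\phi_y^{2}$ failed nodes, i.e., failed width about $n\sqrt{\log n}\,\sqrt{q}\,\phi_y^{2}$.
\end{itemize}
The square root is intrinsic, not an artifact of the claim. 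The minimizing route is chosen after the failures are realized. When failures within a layer are independent, which the claim's hypotheses allow, the maximum number of failures on a route really is of order $h\phi_y\sqrt{q}$, as in last-passage percolation. A bound linear in $q$ is what you would get for a single fixed route.

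Your conclusion survives, but only because your $q$ is small enough. With, say, $\eps'=1/\log\log n$, your Hoeffding exponent is $\log^{0.069-o(1)}n$, so $S\cdot\sqrt{\log n}\,\sqrt{q}\,\phi_y^{2}$ is still $2^{-\log^{0.069-o(1)}n}$. This is $o(w(\hat p)/n)$ because $w(\hat p)\ge n/2^{o(\log^{0.01}n)}$. You should carry $\sqrt{q}$ rather than $q$ through the final accounting and verify this exponent comparison explicitly.
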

\begin{restatable}[Our estimates can be computed efficiently]{lemma}{lemmaefficiencyed} 
    Computation of all $\appr$ values take $o(n ^ {0.09})$ time.
    With probability $1 - o(1)$, to compute $\whest[0, n_X, n_X + n_Y, n_Y]$, the running time is at most $o(n ^ {0.09})$, and the total number of edge weight queries is at most $\frac{n ^ 2{\phi_y} ^ 2}{{{\mathcal{B}}_x} ^ 2} / 2 ^ {\Omega\left(\log ^ {0.01} (n)\right)}$.
\end{restatable}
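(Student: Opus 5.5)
The proof has three parts: count the memoized states, bound the work per state, and show that edge-weight queries only happen inside column blocks that survive the sub-sampling at every active scale. Throughout I use $\mathcal{B}_x = n/M^S = \tilde\Theta(n^{0.99})$ and $\phi_y = 2^{\floor{\log^{0.009}(n)}} = n^{o(1)}$.

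\textbf{Counting states.} Fix a scale $s\in[S]\cup\{0\}$. There are $n/I_s \le n/\mathcal{B}_x$ scale-$s$ intervals. For each one, the left endpoint $\vertex{l}{y_l}$ lies on a column divisible by $I_s$. By the definition of the Sparsified Edit Distance Grid Graph, $y_l$ is then a multiple of $I_{s'}/\phi_y$ for some $s'\ge s$, so there are at most $O(n\phi_y/I_s)$ choices for $y_l$. Reachability (\cref{obs:pathdiff}) forces $\abs{y_r-y_l}\le I_s$, and $y_r$ is again a multiple of at least $I_s/\phi_y$, so given $y_l$ there are only $O(\phi_y)$ choices for $y_r$. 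Hence the number of tuples $(l,y_l,r,y_r)$ at scale $s$ is
\[
O\!\left(\frac{n^2\phi_y^2}{I_s^2}\right) \le O\!\left(\frac{n^2\phi_y^2}{\mathcal{B}_x^2}\right) = \tilde O(n^{0.02})\cdot n^{o(1)}.
\]
Summing over the $S+1=O(\log n)$ scales keeps this $n^{0.02+o(1)}$, and the same count bounds the number of $\appr$ tuples.

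\textbf{Work per state.} Memoization (the $\perp$ checks in \cref{alg:whest}) ensures each $\whest$ tuple, and likewise each $\appr$ tuple from \eqref{eq:appr}, is computed at most once. For a passive scale, the work is one shortest-path computation in the Scale-$(s-1)$ Sparsified Grid Graph restricted to $(l,r]$. That graph has $M+1$ columns and $O(n\phi_y/I_{s-1})$ rows per column, with $O(\phi_y)$ out-edges per vertex. The useful rows are only those within distance $I_s = M\,I_{s-1}$ of $y_l$, so restricting to them gives $O(M^2\phi_y)$ vertices and $O(M^2\phi_y^2)$ edges; this costs $\poly(M,\phi_y)=n^{o(1)}$ time. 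An active scale costs $O(M)$ time. The recursion for $\appr$ uses only the given $\apprew$ values, so it makes no edge-weight queries. Therefore both the $\appr$ computation and the $\whest$ computation take $n^{0.02+o(1)}=o(n^{0.09})$ time deterministically, which gives the running-time claims.

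\textbf{Query bound (the main step).} A query $\ew[l,y_l,r,y_r]$ occurs only at a scale-$0$ tuple reached by the recursion. I will argue that if block $(l,r]$ is reached, then for every active scale $s\in\tilde S$, the scale-$(s-1)$ ancestor of $(l,r]$ is the $i$-th sub-interval of its scale-$s$ ancestor $(l',r']$ with $\eta_{l',r',i}=1$. This holds because in the active branch \cref{alg:whest} recurses only on sampled sub-intervals; passive scales recurse on everything, but that does not help an unsampled block get reached. It must be checked that the give-up branch and the clamping with $\appr$ never trigger recursive $\whest$ calls, which is the case. Each $\eta_{l',r',\cdot}$ is a uniformly random balanced vector, so a fixed sub-interval is sampled with probability exactly $1/2$, and the relevant $\eta$'s for distinct active scales are independent. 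Conditioned on $\tilde S$, each block is therefore reached with probability $2^{-|\tilde S|}$. Each block carries at most $O(n\phi_y/\mathcal{B}_x\cdot\phi_y)$ tuples, so the expected number of queries is at most
\[
O\!\left(\frac{n^2\phi_y^2}{\mathcal{B}_x^2}\right)\cdot 2^{-|\tilde S|}.
\]
With probability $1-o(1)$ we have $|\tilde S|=\Theta(S^{0.02})$, and $S^{0.02}=\Theta\big((\log n/\log\log n)^{0.02}\big)=\Omega(\log^{0.01} n)$. Markov's inequality then yields, with probability $1-o(1)$, at most $\frac{n^2\phi_y^2}{\mathcal{B}_x^2}/2^{\Omega(\log^{0.01}(n))}$ queries; the loss from Markov can be absorbed into the $\Omega(\cdot)$ in the exponent.

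The step I expect to need the most care is the reachability claim in the query bound: verifying that no branch of the recursion, including the give-up branch and the passive-scale graphs, ever reaches a scale-$0$ tuple inside a discarded sub-interval of an active scale.
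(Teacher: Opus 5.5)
Your proof is correct and follows the paper's argument. The running time is (number of memoized tuples) $\times$ (work per tuple), and a scale-$0$ edge is queried only if its block is sampled at every active scale, which happens with probability $2^{-|\tilde S|}$, where $|\tilde S| = \Omega(S^{0.02}) = \Omega(\log^{0.01} n)$ with probability $1-o(1)$.

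Your bookkeeping is tighter than the paper's. You count only recursion-interval tuples, $O(n^2\phi_y^2/I_s^2)$ per scale, and restrict each Scale-$(s-1)$ graph to the $O(M\phi_y)$ reachable rows per column; the paper instead bounds both the tuple count and the per-tuple edge count by $n^4\phi_y^2/\mathcal{B}_x^4$. You also make explicit the Markov step, and the fact that neither the give-up branch nor the $\appr$ recursion issues queries. The extra slack is genuinely useful: since $S = 3\lfloor\log_M(n^{0.01})\rfloor$, one actually has $\mathcal{B}_x = \tilde\Theta(n^{0.97})$ rather than $\tilde\Theta(n^{0.99})$. Your count still gives $n^{0.06+o(1)} = o(n^{0.09})$ in that case, whereas the paper's cruder count would not yield the stated exponent.
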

By union bound, we can see that our algorithm is both correct and efficient with probability at least $3 / 4$, thereby proving \cref{lemma:subsampling}.
\subsubsection{Our Estimate Is Complete}
We first show that our algorithm is complete. Recall that our sub-routine for obtaining the estimate has an error that is contingent on there being no outliers (\cref{claim:onesideded}). To show that outliers are rare on average, we will use the following ingredient, which shows that the deviation of the weight distribution of $\hat{p}$ is small on active layers:
\begin{restatable}[The deviation of the weight distribution of $\hat{p}$ is small on active layers]{claim}{claimtechnicalityed} \label{claim:technicalityed}    
    Assume that the length of the shortest $\vertex{0}{n_X}$-$\vertex{n_X + n_Y}{n_Y}$ path $p$ is at least $n / 2 ^ {o\left(\log ^ {0.02} (n)\right)}$.
    Let $\tilde{S} \subset [S]$ be such that each scale is chosen u.a.r.~with probability $1 / S ^ {0.98}$. Let the shortest $\tilde{S}$-regular $\vertex{0}{n_X}$-$\vertex{n_X + n_Y}{n_Y}$ path be $\hat{p}$.
    With probability at least $4 / 5$, we have $\treemd_{M, \tilde{S}}(W(\hat{p})) = o\left(\frac{w(\hat{p})}{M}\right)$. 
\end{restatable}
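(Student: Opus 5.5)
We plan to mirror the proof of \cref{lemma:regular}: first bound the full tree total deviation of the weight distribution of a fixed path via \cref{lemma:totalmeandeviation}, then use the random choice of $\tilde{S}$ to pay only an $S^{-0.98}$ fraction of it. The difficulty is that $\hat{p}$ depends on $\tilde{S}$. We handle this by passing through the rounded path from \cref{lemma:regular}.

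\textbf{Step 1 (a fixed path).} Let $p$ be the shortest path. The sequence $W(p)$ is non-negative, sums to $w(p)$, and has entries at most $O({\mathcal{B}}_x)$ with $n/{\mathcal{B}}_x$ entries. \cref{lemma:totalmeandeviation} then gives
\[
\treemd_{M,S}(W(p)) = O\left(w(p)S^{0.51} + n/2^{\log_M^{0.015}(n)}\right) = O\left(w(p)S^{0.51}\right),
\]
using the assumption $w(p) \ge n/2^{o(\log^{0.02} n)}$ and $\log_M^{0.015} n = \Omega(\log^{0.015-o(1)} n)$. I should double-check that this exponent comparison works; if not, the tail term still fits inside the slack of the final $o(\cdot)$. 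Since each scale is active with probability $S^{-0.98}$, $\Ex_{\tilde S}[\treemd_{M,\tilde S}(W(p))] = O(w(p)S^{-0.47})$. With $M=\Theta(\log^{\mval} n)$ and $S=\Theta(\log n/\log\log n)$, this is $o(w(p)/M)$. Markov's inequality makes it hold with probability $1-o(1)$ of the desired order, say at most $1/20$ failure.

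\textbf{Step 2 (transfer to $\hat p$).} Because $\hat{p}$ minimizes weight, it is not directly comparable to $p$. I would instead compare with the rounded path $p'$ of \cref{lemma:regular}. With probability $19/20$, $p'$ is $\tilde S$-regular, satisfies $w(p') \le (1+o(1))w(p)$, and differs from $p$ in the weights before only $o(w(p)/(M\log^{0.02}n))$ columns. For the ED grid the edge weights in the sparsified graph change by $O(1)$ per changed underlying column. Thus $\|W(p')-W(p)\|_1 = o(w(p)/(M\log^{0.02} n))$.

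Total deviation is $1$-Lipschitz up to a factor $2$ in $\ell_1$ on each interval, so $\treemd_{M,\tilde S}$ is $2|\tilde S|$-Lipschitz in $\ell_1$. With $|\tilde S| = O(S^{0.02}) = o(\log^{0.02} n)$, we get $\treemd_{M,\tilde S}(W(p')) = o(w(p)/M)$.

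\textbf{Step 3 (from $p'$ to $\hat p$) — the main obstacle.} The statement is about $\hat p$ itself, the shortest regular path, and $\hat p$ may differ structurally from $p'$. My first attempt would be to reapply Step 1 directly to $\hat p$ without comparing to $p'$. Note that $\hat p$ is a fixed function of $\tilde S$, so Step 1's expectation argument fails as stated.

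Instead, I would use the martingale trick from \cref{lemma:regular}. Process scales top-down and let $\hat p_s$ be the optimal path that is regular on active scales $\ge s$. Then $\treemd_{M,\{s\}}(W(\hat p))$ depends on the activity of scale $s$ only through an indicator with conditional mean $S^{-0.98}$. The number of distinct paths is at most $|\tilde S|+1$, each with weight at most $(1+o(1))w(p)$. Bounding each of their full tree total deviations by Step 1 gives an expected sum of $O(|\tilde S|\,w(p)S^{0.51}/S^{0.98}) = o(w(p)/M)$.

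The delicate point is justifying that the deviation of $\hat p$ on scale $s$ is controlled by that of $\hat p_s$, since regularity at lower active scales can reshape $\hat p$. If this fails, the fallback is to define $\hat p$ in the subsequent analysis as the rounded $p'$; completeness only needs some good regular path. Finally, I would conclude with a union bound: failure probability at most $1/20+1/20+o(1)\le 1/5$, and $w(\hat p)=\Theta(w(p))$.
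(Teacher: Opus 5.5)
Your Steps 1 and 2 are exactly the paper's proof. Part I bounds $\treemd_{M,\tilde S}(W(p))$ via \cref{lemma:totalmeandeviation}, the $S^{-0.98}$ sampling of scales, and Markov. Part II transfers this bound through the rounding of \cref{lemma:regular}, charging $O(|\tilde S|)$ per changed column. The paper stops there: in Part II it simply names the rounded path produced by \cref{lemma:regular} $\hat p$. So what it actually proves is the bound for your $p'$, and the mismatch with the shortest $\tilde S$-regular path that you raise in Step 3 is not resolved in the paper either.

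One correction to Step 1: under the claim's literal hypothesis the tail term does \emph{not} fit in the slack. If only $w(p)\ge n/2^{o(\log^{0.02}n)}$ is known, the tail $n/2^{\log_M^{0.015}(n)}$, with $\log_M^{0.015}(n)=\Theta(\log^{0.015}n/(\log\log n)^{0.015})$, can exceed $w(p)$ by a factor $2^{\log^{\Omega(1)}n}$. No polylogarithmic factor such as $S^{0.98}$ or $M$ absorbs that. You need $w(p)\ge n/2^{o(\log^{0.01}n)}$, or any exponent below $0.015$. This is what the paper's proof actually uses; it follows from the standing assumption of \cref{lemma:gridgraphmin}, and it is also the hypothesis of \cref{lemma:regular}, which your Step 2 invokes. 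The tail-free bound of \cref{lemma:strong_totalmeandeviation} would handle Step 1 under the $0.02$ hypothesis, but Step 2 would still need it.

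Step 3 does not go through. The martingale in \cref{lemma:regular} works because the path charged at scale $s$, namely $p_s$, is determined by the random choices for scales strictly above $s$. That is why $\mathbbm{1}[s\in\tilde S]$ contributes an independent factor $S^{-0.98}$. The quantity you must control is $\mathbbm{1}[s\in\tilde S]\cdot\treemd_{M,\{s\}}(W(\hat p))$, and $\hat p$ depends on the choice for $s$ itself and on those for all lower scales. Re-optimizing under regularity at lower active scales can change the path, and its scale-$s$ weight distribution, arbitrarily, since minimality controls only total weight, not shape. Hence nothing bounds $\treemd_{M,\{s\}}(W(\hat p))$ by the deviation of your $\hat p_s$. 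Moreover, \cref{lemma:totalmeandeviation} only caps the deviation summed over all scales at $O(wS^{0.51})$, which leaves room for $\Theta(w)$ on each active scale. So none of the available tools prevents an $\tilde S$-dependent minimizer from being imbalanced exactly on the active scales.

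Drop Step 3 and prove the statement for $p'$, which is what the paper's argument really yields. Be aware, though, that ``completeness only needs some good regular path'' is not automatic. The completeness proof uses this claim together with \cref{claim:onesideded}, whose give-up test is governed, through $\appr$, by the \emph{shortest} regular paths between the current endpoints. Sub-paths of the global minimizer are such local minimizers, but sub-paths of $p'$ need not be. Running the completeness argument along $p'$ would therefore also require bounding the gap between the weights of $p'$'s sub-paths and the corresponding local optima.
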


Before proving the ingredient above, we first give our main analysis that our estimate is complete:
\lemmacompleteed*
\begin{proof}
    For each active scale $s$, we consider the following question: how much would we overestimate if there were no subsampling on active scales larger than $s$? We can see that the final estimate is at most:
    $$\mathcal{P}_s := \sum\limits_{i = 1} ^ {n / I_s}{\whest[(i - 1) \cdot I_s, \hat{p}((i - 1) \cdot I_s), i \cdot I_s, \hat{p}(i \cdot I_s)]}.$$
    We can see that $\mathcal{P}_0 = w(\hat{p})$, and that $\mathcal{P}_S = \whest[0, n_X, n_X + n_Y, n_Y]$, namely our final estimate for $w(\hat{p})$. To see how much larger $\mathcal{P}_S$ can be compared to $\mathcal{P}_0$, we go through each scale $s$ and examine the difference between $\mathcal{P}_{s - 1}$ and $\mathcal{P}_s$. 

    If scale-$s$ is passive, then since we compute every $\whest[l, \hat{p}(l), r, \hat{p}(r)]$ to be the length of the shortest $\tilde{S}$-regular $\vertex{l}{\hat{p}(l)}$-$\vertex{r}{\hat{p}(r)}$ path in the Scale-$(s - 1)$ Sparsified Grid Graph, $\mathcal{P}_s \le \mathcal{P}_{s - 1}$.  
    
    If scale-$s$ is active, fix a scale-$s$ interval $(l, r]$. From \cref{claim:onesideded}, the magnitude of overestimation is bounded by the sum of the following:
    \begin{itemize}
        \item Type I: If $\md_M(W(\hat{p}(l, r]]))   = \omega\left(\frac{w(\hat{p}(l, r])}{M}\right)$, an additive error of $O(w(\hat{p}(l, r]))$;
        \item Type II: An additive error of $o\left(\frac{w(\hat{p}(l, r])}{\log ^ {0.02} (n)}\right))$;
        \item Type III: With probability $\exp\left(-\omega\left(\log ^ {0.02}(n)\right)\right)$, an additive error of $O(w(\hat{p}(l, r]))$.
    \end{itemize}

    The total sum of type III error is at most $\exp\left(-\omega\left(\log ^ {0.02}(n)\right)\right)$ times the total sum of $\hat{p}(l, r]$ on every interval $(l, r]$ on an active scale. The total sum of $\hat{p}(l, r]$ is at most $O(w(\hat{p})\abs{\tilde{S}})$ and the total total sum of type III error is at most $w(\hat{p})\abs{\tilde{S}}\exp\left(-\omega\left(\log ^ {0.02}(n)\right)\right) = o(w(\hat{p}))$. 

    By summing over every interval in each of the $\abs{\tilde{S}}$ active scales (which is $O(S ^ {0.02}) = o\left(\log ^ {0.02} (n)\right)$ with probability $1 - o(1)$), the total sum of type II error is $o(w(\hat{p}))$.
    
    Finally, since $O(w(\hat{p}(l, r])) = o(M) \cdot \omega\left(\frac{w(\hat{p}(l, r])}{M}\right)$, the total sum of type I error is $o(M)$ times the total sum of $\md_M(W(\hat{p}(l, r]]))$ over interval $(l, r]$ on an active scale, namely $o(M\treemd_{M, \tilde{S}}{\left(W(p)\right)})$, which from \cref{claim:technicalityed} is $o(w(\hat{p}))$ with probability at least $4 / 5$.

    Combining all three types, with probability $4 / 5 - o(1) > 19 / 25$, the total overestimation is bounded by $o(w(p))$.
\end{proof}

We now prove the auxiliary claim from before:
\claimtechnicalityed*
\begin{proof}
    We show a stronger result: with probability at least $4 / 5$, we have $\treemd_{M, \tilde{S}}(W(\hat{p})) = o\left(\frac{w(p)}{M}\right)$ (which is stronger since $w(p) \le w(\hat{p})$). 

    We first show that $\treemd_{M, \tilde{S}}(W(\textcolor{red}{p})) = o\left(\frac{w(p)}{M}\right)$. Then we argue that the difference between $\treemd_{M, \tilde{S}}(W(\textcolor{blue}{\hat{p}}))$ and $\treemd_{M, \tilde{S}}(W(\textcolor{red}{p}))$ is also $o\left(\frac{w(p)}{M}\right)$.
   \paragraph{Part I: $\treemd_{M, \tilde{S}}(W(p))$ is small} Using the bound on tree total deviation in \cref{lemma:totalmeandeviation} we know that, 
    \begin{align*} 
    \treemd_{M, S}(W(p)) &= O\left(w(p) \cdot S ^ {0.51}  + \underbrace{{\mathcal{B}}_x}_{\text{max length of $p$ over scale $0$ interval}} \cdot \underbrace{\left(n / {\mathcal{B}}_x\right)}_{\text{number of scale $0$ intervals}} /  2 ^ {\log_M ^ {0.015}(n)}\right) \\
    &= O\left(w(p) \cdot S ^ {0.51}  + n /  2 ^ {\log_M ^ {0.015}(n)}\right).
    \end{align*}
    Since each scale is in $\tilde{S}$ u.a.r~with probability $1 / S ^ {0.98}$, and $p$ does not depend on $\tilde{S}$, with probability at least $9 / 10$,
    \begin{align*}
        \treemd_{M, \tilde{S}}(W(p)) &= O\left(\left(w(p) \cdot S ^ {0.51} + \textcolor{red}{n /  2 ^ {\log_M ^ {0.015}(n)}}\right) / S ^ {0.98}\right) \\
        &= \left(O\left(\left(w(p) \cdot S ^ {0.51}\right) + \textcolor{red}{o\left(\frac{w(p)}{M}\right)}\right)\right) / S ^ {0.98} && \text{since $w(p) = n / 2 ^ {o\left(\log ^ {0.01} (n)\right)}$} \\
        &= w(p) / S ^ {0.47} + o\left(\frac{w(p)}{M}\right) \\
        &= o\left(\frac{w(p)}{M}\right). && \text{since $M = \Theta\left(\log ^ {\mval} (n)\right)$}
    \end{align*}
    
    \paragraph{Part II: $\treemd_{M, \tilde{S}}(W(\hat{p}))$ is not too different from $\treemd_{M, \tilde{S}}(W(p))$} From \cref{lemma:regular}, with probability at least $19 / 20$. We can round $p$ to an $\tilde{S}$-regular path $\hat{p}$ such that there are at most $o\left(\frac{w(p)}{M\log ^ {0.02}(n)}\right)$ columns $x$ before which their weights differ. Each of these columns contributes at most $O(\abs{\tilde{S}})$ to the difference between $\treemd_{M, \tilde{S}}(W(p))$ and $\treemd_{M, \tilde{S}}(W(\hat{p}))$. Since the size of $\tilde{S}$ is $O\left(\log ^ {0.02} (n)\right)$ with probability $1 - o(1) > 19 / 20$, by the union bound, with probability at least $9 / 10$, the difference between $\treemd_{M, \tilde{S}}(W(p))$ and $\treemd_{M, \tilde{S}}(W(\hat{p}))$ is at most $o\left(\frac{w(p)}{M}\right)$.
    
    Combining the two parts, with probability at least $4 / 5$, we have $\treemd_{M, \tilde{S}}(W(\hat{p})) = o\left(\frac{w(\hat{p})}{M}\right)$. 
\end{proof}

\subsubsection{Our Estimate Is Sound}
We now show that our algorithm is sound. In order to rule out potential overfitting from our algorithm, we will need the following ingredient showing that certain random 0-1 weighted graphs with a few $1$ in weights likely does not admit a path from a fixed source that visits too many of these $1$'s:
\begin{restatable}{claim}{claimdirected} \label{claim:directed}
    Let $G$ be an $N$-vertex $h$-layer directed graph with 0-1 edge weights with a source $u$. Specifically:
    \begin{itemize}
        \item Every vertex belongs to a layer in $[h]$;
        \item For any edge from vertex $v$ to vertex $w$, $v$'s layer is equal to $w$'s layer minus one;
        \item Each edge has a weight $\in \{0, 1\}$.
    \end{itemize}
    
    We say an edge is a layer-$L$ edge if it connects a layer-$L$ vertex to a layer-$(L + 1)$ vertex. 
    
    Suppose that for any vertex $v$ on layer $L$, for any layer $L' < L + B$, the number of layer-$L'$ edges reachable by $v$ is at most $d$.
    
    Suppose that the edge weights satisfy the following:
    \begin{itemize}
        \item Each edge has weight $1$ with probability at most $p = O(1 / d ^ 2)$;
        \item Edges from different layers are independent in weights. Specifically, let $\e$ be a set of edges, and let $e$ be a layer-$L$ edge. Suppose that no edge in $\e$ is also a layer-$L$ edge. Then $\Pr\left(\text{$e$'s weight is $1$}\right) =  \Pr\left(\text{$e$'s weight is $1$} \mid \text{weights of $\e$ are known}\right)$.
    \end{itemize}
    With probability $1 - N ^ {-\omega(1)}$, no path from $u$ has a length (sum of edge weights) of $\omega(h\sqrt{\log (N)} / \min(d, B))$.
\end{restatable}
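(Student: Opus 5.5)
We chunk the layers into consecutive blocks of length $B' := \min(d, B)$, giving $h/B'$ blocks, and aim to show that with probability $1 - N^{-\omega(1)}$, every path from $u$ collects only $O(\sqrt{\log N})$ weight-$1$ edges inside each block. Summing over the $h/B'$ blocks then bounds every path from $u$ by $O(h\sqrt{\log N}/B')$, which is the claim (indeed a slightly stronger $O(\cdot)$ rather than $\omega(\cdot)$ statement; any slack can absorb constants).

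\textbf{Step 1 (reduce to a union bound over starting vertices).} For a block of layers $[L, L + B')$, any path from $u$ enters the block at some vertex $v$ on layer $L$. We argue uniformly over all at most $N$ such $v$: it suffices to show that, for each fixed $v$, the probability that some path from $v$ staying within the block has length at least $k := C\sqrt{\log N}$ is $N^{-\omega(1)}$. A union bound over $v$ and over the at most $h \le N$ blocks then finishes. Crucially, $v$ is fixed before looking at weights, so reachability structure from $v$ is deterministic.

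\textbf{Step 2 (bound the event by a count of weight-$1$ edge tuples).} If a path from $v$ within the block has length $\ge k$, then there exist $k$ weight-$1$ edges $e_1, \ldots, e_k$ on $k$ distinct layers $L_1 < \cdots < L_k$ in $[L, L + B')$, all lying on one path from $v$. Enumerate such chains greedily. First choose $e_1$ among the edges reachable from $v$ on layer $L_1$; there are at most $d$ choices per layer, since $L_1 < L + B$. Then choose $e_2$ reachable from the head of $e_1$, again at most $d$ per layer, and so on. The number of candidate chains with a given layer sequence is at most $d^k$, and the number of layer sequences is $\binom{B'}{k} \le (eB'/k)^k \le (ed/k)^k$.

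By the layer-independence hypothesis, applied sequentially, each chain is all-ones with probability at most $p^k = O(1/d^2)^k$. Hence the failure probability is at most
\[
(ed/k)^k d^k (c/d^2)^k = (ec/k)^k = 2^{-\Omega(k\log k)}.
\]
With $k = C\sqrt{\log N}$ this is $2^{-\Omega(\sqrt{\log N}\log\log N)}$. This is only $N^{-o(1)}$, which is insufficient, so $k$ must be tuned. Taking $k = \log N/\log\log N$ gives $N^{-\Omega(1)}$, and slightly larger $k$ gives $N^{-\omega(1)}$.

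\textbf{Main obstacle.} The crux is reconciling this tuning with the $\sqrt{\log N}$ in the statement. The crude union bound gives per-block length $\log N/\log\log N$, not $\sqrt{\log N}$. The remedy I would try first is to make blocks shorter, of length $B'' = B'/\sqrt{\log N}$. A block then contains expected weight roughly $B''\cdot d\cdot p \ll 1$, and the same chain count gives $(eB''/(kd))^k\cdot O(1)^k$. With $k = O(1)$ constant per block this yields a failure probability of $(\log N)^{-\Omega(k)}$ per start vertex, which still does not survive a union bound over $N$ vertices.

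So instead I would exploit that $\omega(\cdot)$ in the statement permits any superconstant slack: target total length $h\sqrt{\log N}\, g(N)/B'$ for a slowly growing $g$. I would bound the whole path's count directly by a single chain count over all $h$ layers from $u$ (no per-block union bound), with $K = h\sqrt{\log N}\,g/B'$ ones. Counting chains as $\prod$ over gaps, with at most $d$ edges per layer within distance $< B$ and at most $N$ otherwise, gives roughly $(h/K)^K\, d^K\, N^{(\#\text{gaps} \ge B)}\, p^K$. Gaps of length at least $B'$ number at most $h/B' = K/(\sqrt{\log N}\,g)$, costing $N^{K/(\sqrt{\log N}\,g)} = 2^{K\sqrt{\log N}/g}$. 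Meanwhile $(h/(Kd))^K$ contributes $2^{-K\log(\sqrt{\log N}\,g)}$. Balancing these two terms is the delicate step; I expect that handling long gaps via the "$\le d$ edges within $B$ layers" hypothesis iterated (reachable edge counts multiply across $B$-windows) rather than the trivial $N$ bound is what makes the exponent net negative.
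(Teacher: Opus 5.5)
Your Steps 1--2 follow the paper's proof. Both cut the layers into blocks of $\min(d,B)$ layers. Both bound, for each fixed entry vertex, the number of layers in the block that carry a reachable weight-$1$ edge: the paper writes this as a sum of independent indicators of mean $O(1/d)$, you write it as a chain count, and both give $(ec/k)^k$. Both then take a union bound over at most $N$ entry vertices and concatenate $O(h/\min(d,B))$ blocks.

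Your arithmetic is correct, and it pinpoints where this argument stops. A tail of $e^{-\Theta(k\log k)}$ is $N^{-\omega(1)}$ only when $k=\omega(\log N/\log\log N)$, not at $k=\Theta(\sqrt{\log N})$. The paper's proof gets past this point only by asserting that a sum of independent Bernoullis with mean $O(1)$ exceeds $\omega(\sqrt{\log N})$ with probability $N^{-\omega(1)}$. That treats a Poisson-type tail as if it were Gaussian; the same sentence even speaks of $\omega(\log N)$ layers. So your proposal does not prove the statement. Everything after ``Main obstacle'' is a hope, and the final balancing step is never carried out.

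It also cannot be carried out, because the claim as stated is false. Let $d=w^2$ and $B=d$. Take the source $u$ with edges to all $w$ vertices of the next layer, followed by $d$ layers of $w$ vertices with complete bipartite edges between consecutive layers, and i.i.d.\ weights with $p=d^{-2}$. In this graph:
\begin{itemize}
\item every vertex reaches at most $w^2=d$ edges in any layer, $N=1+d^{3/2}$, and $h/\min(d,B)=O(1)$;
\item each complete bipartite layer independently contains a weight-$1$ edge with probability between $1/(2d)$ and $1/d$;
\item any set of such layers with pairwise gaps at least $2$ is threaded by a single path from $u$.
\end{itemize}
Restricting to layers of one parity, a path from $u$ of length $k\le\sqrt d$ exists with probability at least $e^{-O(k\log k)}$. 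This is $N^{-o(1)}$ for $k=C\sqrt{\log N}$, and even for $k=\sqrt{\log N}\log\log N$. So neither shorter blocks nor a global chain count can give failure probability $N^{-\omega(1)}$ at the $\sqrt{\log N}$ scale.

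What your Steps 1--2 do prove, taking $k=\log N$ per block, is the following. The failure probability is $N^{-\Omega(\log\log N)}$ per entry vertex, which is still $N^{-\omega(1)}$ after the union bound. Hence, with probability $1-N^{-\omega(1)}$, every path from $u$ has length $O(h\log N/\min(d,B))$. You should state and prove this corrected claim instead.

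It is all that the two soundness proofs need. There $\min(d,B)=\exp(\Theta(\log^{0.02}n))$ and $\log N=O(\log n)$, so the resulting error bound $n\log N/\exp(\Theta(\log^{0.02}n))$ is still $n\cdot\exp(-\Omega(\log^{0.02}n))$.
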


Before proving the ingredient above, we first give our main analysis that our estimate is sound:
\lemmasounded*
\begin{proof}   
For each active scale $s$, we consider the following question: how much would we underestimate if there were no subsampling on active scales larger than $s$? To capture this, consider the Scale-$s$ Sparsified Grid Graph $\mathcal{G}_s$ defined on $\whest$ (\cref{def:scalesparse}), namely the following graph:
    \begin{description}
        \item[Vertices] We only consider vertices $\vertex{x}{y}$ in the Sparsified Edit Distance Grid Graph with the additional constraint that $I_{s} \mid x$;
        \item[Edges] There is an edge from $\vertex{x}{y}$ to $\vertex{x + I_{s}}{y}$ (if both exist) such that $\abs{y' - y} \le I_{s}$ with a weight of $\whest[x, y, x + I_{s}, y']$.
    \end{description}
    Let $\mathcal{A}_s$ be the length of the shortest $\tilde{S}$-regular $\vertex{0}{{n_X}}$-$\vertex{n_X + {n_Y}}{n_Y}$ path in $\mathcal{G}_s$. We can see that $\mathcal{A}_0 = w(\hat{p})$, and that $\mathcal{A}_S = \whest[0, n_X, n_X + n_Y, n_Y]$, namely our final estimate for $w(\hat{p})$. To see how much smaller $\mathcal{A}_S$ can be compared to $\mathcal{A}_0$, we go through each scale $s$ and examine the difference between $\mathcal{A}_{s - 1}$ and $\mathcal{A}_s$. 

    If scale-$s$ is passive, then since we compute $\whest[l, y_l, r, y_r]$ to be the length of the shortest $\tilde{S}$-regular $\vertex{l}{y_l}$-$\vertex{r}{y_r}$ path in the Scale-$(s - 1)$ Sparsified Grid Graph, $\mathcal{A}_s = \mathcal{A}_{s - 1}$.

    If scale-$s$ is active, then from \cref{claim:onesideded}, for $\whest[l, y_l, r, y_r]$ such that $(l, r]$ is a scale-$s$ interval:
    \begin{itemize}
        \item The shortest $\tilde{S}$-regular $\vertex{l}{y_l}$-$\vertex{r}{y_r}$ path in $\mathcal{G}_{s - 1}$ is $X := \sum\limits_{i = 1} ^ M{\whest[{\anc}_{i - 1}, \overline{y}_i, {\anc}_i, \overline{y}_i]}$;
        \item The shortest $\tilde{S}$-regular $\vertex{l}{y_l}$-$\vertex{r}{y_r}$ path in $\mathcal{G}_{s}$, namely $\whest[l, y_l, r, y_r]$, is at least $X - o\left(\frac{w(\hat{p}(l, r])}{\log ^ {0.02} (n)}\right)$ with probability $1 - \exp\left(-\omega\left(\log ^ {0.02} (n)\right)\right)$. Even if the error guarantee fails, the underestimation is at most $X \le I_s$.
    \end{itemize}
    Thus, going from $\mathcal{A}_{s - 1}$ to $\mathcal{A}_s$, we have:
    \begin{itemize}
        \item Type I: per scale-$s$ interval $(l, r]$, underestimation of at most $o\left(\frac{w(\hat{p}(l, r])}{\log ^ {0.02} (n)}\right)$;
        \item Type II: for $\whest[l, y_l, r, y_r]$ such that $(l, r]$ is a scale-$s$ interval, there is a probability of $\exp\left(-\omega\left(\log ^ {0.02} (n)\right)\right)$ that $\whest[l, y_l, r, y_r]$ has an underestimation of at most $I_s$.
    \end{itemize}
    We now prove that with probability $1 - o(1)$, the total error across both types and all scales and intervals is $o(w(\hat{p}))$. 

    Recall that $|\tilde{S}| = O(S ^ {0.02}) = o\left(\log ^ {0.02} n\right)$ w.p.~$1 - o(1)$.
    Thus the total type I underestimation across all $\abs{\tilde{S}}$  active scales is $o((w(\hat{p})))$ w.p.~$1 - o(1)$.
    
    For type II underestimation, we say an estimate $\whest[l, y_l, r, y_r]$ is ``bad'' if it exhibits an underestimation of $\Omega(w(p(l, r]) / \log ^ {0.02} (n))$. The total underestimation from estimates being bad is upper bounded by the following model:
    \begin{definition} [Upper Bound for Underestimation due to ``Bad'' Estimates]
        Consider the following graph:
        \begin{description}
            \item[Vertices] We only consider vertices $\vertex{x}{y}$ in the Sparsified Edit Distance Grid Graph with the additional constraint that $I_s \mid x$;
            \item[Edges] There is an edge from $\vertex{x}{y}$ to $\vertex{x + I_s}{y'}$ such that $\abs{y' - y} \le I_s$ with a weight of either $I_s$, if $\whest[x, y, x + I_s, y']$ is ``bad,'' or $0$ otherwise.
            \item[Objective] We obtain the upper bound by taking the longest $\vertex{0}{{n_X}}$-$\vertex{n_X + {n_Y}}{n_Y}$ path in this graph.
        \end{description}
    \end{definition}
    We consider applying \cref{claim:directed} to this graph. Note that:
    \begin{itemize}
        \item The graph has $o(n ^ {0.03})$ vertices and has $(n / I_s + 1)$ layers. Edges only connect vertices from adjacent layers (i.e., layer-$L$ to layer-$(L + 1)$ for some $L$) with weight either $0$ or $I_s$;
        \item Each edge weight is $I_s$ with probability $\exp\left(-\omega\left(\log ^ {0.02} (n)\right)\right)$;
        \item Since we sample $\eta_{l, r, 1 \ldots M}$ independently for each $(l, r]$, edges from different layers are independent in weights;
        \item Fix a vertex $\vertex{x}{y}$. If an edge from $\vertex{x'}{y'}$ to $\vertex{x' + I_s}{y''}$ is reachable from $\vertex{x}{y}$ within $\exp\left(\Theta\left(\log ^ {0.02} (n)\right)\right)$ hops (i.e., number of edges visited), then $\max(\abs{x' - x}, \abs{y' - y}) = I_s \cdot \exp\left(\Theta\left(\log ^ {0.02} (n)\right)\right)$ and $\abs{y'' - y'} \le I_s$. Since $I_s$ divides each of $x, x'$ and $(I_s / \phi_y)$ divides each of $y, y', y''$, the number of reachable edges is less than $$\underbrace{\exp\left(\Theta\left(\log ^ {0.02} (n)\right)\right)}_{x'} \cdot \underbrace{\exp\left(\Theta\left(\log ^ {0.02} (n)\right)\right) \cdot \phi_y}_{y'} \cdot \underbrace{\phi_y}_{y''} = \exp\left(\Theta\left(\log ^ {0.02} (n)\right)\right).$$
    \end{itemize} 
    From \cref{claim:directed} (with weights scaled up by $I_s$) with 
    \begin{itemize}
        \item $h := O(n / I_s)$;
        \item $N := O(n ^ {0.03})$;
        \item $B := \exp\left(\Theta\left(\log ^ {0.02} (n)\right)\right)$;
        \item $d := \exp\left(\Theta\left(\log ^ {0.02} (n)\right)\right)$;
        \item $p := \exp\left(-\omega\left(\log ^ {0.02} (n)\right)\right)$;
        \item $u := \vertex{0}{n_X}$.
    \end{itemize}
    we obtain that with probability $1 - n ^ {-\omega(1)}$, an upper bound for the underestimation on each scale due to ``bad'' estimates (type II error) is 
    $$n \cdot \sqrt{\log (n ^ {0.03})} / \exp\left(\Theta\left(\log ^ {0.02} (n)\right)\right) = n \cdot \exp\left(-\Omega\left(\log ^ {0.02} (n)\right)\right).$$
    By union bound across all scales, we obtain that with probability $1 - o(1)$, the total type II error is
    $$n \cdot S \cdot \exp\left(-\Omega\left(\log ^ {0.02} (n)\right)\right) = n \cdot \exp\left(-\Omega\left(\log ^ {0.02} (n)\right)\right).$$
    Since $w(\hat{p}) = n / 2 ^ {o\left(\log ^ {0.01} (n)\right)}$, this is $o(w(\hat{p}))$.
\end{proof}

We now prove the auxiliary claim from before:
\claimdirected*
\begin{proof}
    Fix a vertex $v$. Fix an integer $0 \le B' < B$. Suppose that $v$ is on layer $L$. We know that there are at most $d$ layer-$(L + B')$ edges reachable by $v$. Note that $p = O(1 / d ^ 2)$. From the union bound, with probability at least $1 - O(1 / d)$, no layer-$(L + B')$ edges reachable by $v$ has weight $1$. Let $X_{k}$ be the indicator variable of whether the $(L + k - 1)$-th layer contains a reachable edge with weight $1$. From the independence condition in the claim statement, $\sum\limits_{1 \le k \le \min(d, B)}{X_k}$ is the sum of $\min(d, B)$ independent Bernoulli random variables each with mean $O(1 / d)$, and this sum has mean $O(1)$. By the standard concentration bound on this sum, the probability that, before layer-$(L + \min(d, B))$, there are $\omega(\log (N))$ layers on which there are at least one reachable edge with weight $1$ (i.e., the sum is in the $\omega(\sqrt{\log (N)})$-tail), is $N ^ {-\omega(1)}$. Thus with probability $1 - N ^ {-\omega(1)}$, there are no paths from $v$ visiting at most $\min(d, B)$ edges with length (sum of edge weights) $\omega(\sqrt{\log (N)})$. By union bound over every vertex $v$, there are no paths visiting at most $\min(d, B)$ edges with length $\omega(\sqrt{\log (N)})$.

    Consider the longest path $p_u$ from $u$. This path visits at most $h$ edges. Consider partitioning $p_u$ into $O(h / \min(d, B))$ consecutive blocks containing at most $B$ edges. Each block is a path visiting no more than $\min(d, B)$ edges. Thus we know that with probability $1 - N ^ {-\omega(1)}$, the length of each block is $O\left(\sqrt{\log (N)}\right)$ and the length of $p_u$ is $O(h\sqrt{\log (N)} / \min(d, B))$.
\end{proof}

\subsubsection{Our Can be computed Efficiently}
Finally we show that our algorithm is efficient:
\lemmaefficiencyed*
\begin{proof}
    For the running time, the total number of $(l, y_l, r, y_r)$ is 
    $$\underbrace{(n / {\mathcal{B}}_x)}_{l} \cdot \underbrace{(n / {\mathcal{B}}_x)}_{r} \cdot \underbrace{(n\phi_y / {\mathcal{B}}_x)}_{y_l} \cdot \underbrace{(n\phi_y / {\mathcal{B}}_x)}_{y_r} = n ^ 4 \cdot {\phi_y} ^ 2 / {{\mathcal{B}}_x} ^ 4 = o(n ^ {0.045}).$$
    For each $(l, y_l, r, y_r)$ with $(l, r]$ on scale $s$, the bottleneck for computing $\whest[l, y_l, r, y_r]$ is to compute the shortest path in the Scale-$(s - 1)$ Sparsified Grid Graph, which takes linear time in the number of edges. This number of edges is no more than the total number of $(l, y_l, r, y_r)$'s which is $o(n ^ {0.045})$. The overall running time is therefore at most $o(n ^ {0.09})$. We can do a similar analysis on the computation of the $\appr$ values and also obtain a running time of $o(n ^ {0.09})$.
    
    For query complexity, for each edge from $\vertex{x}{y}$ to $\vertex{x + {\mathcal{B}}_x}{y'}$, the edge is queried if every interval $(l, r] \supset (x, x + {\mathcal{B}}_x]$ on an active scale is sub-sampled, which happens with probability $1 / 2$ per active scale independently. Thus the probability that an edge ``survives'' sub-sampling is $2 ^ {-\abs{\tilde{S}}}$.
    With probability $1 - o(1)$, $\abs{\tilde{S}} = \Omega(S ^ {0.02})$, and the number of queries is at most $2 ^ {-\Omega(S ^ {0.02})} < 2 ^ {-\Omega\left(\log ^ {0.01} (n)\right)}$ times the total number of edges. The total number of edges is (notice that $\abs{y - y'} \le \mathcal{B}_x$) 
    $$\underbrace{(n / {\mathcal{B}}_x)}_{x} \cdot \underbrace{(n\phi_y / {\mathcal{B}}_x)}_{y} \cdot \underbrace{\phi_y}_{y'} = \frac{n ^ 2{\phi_y} ^ 2}{{{\mathcal{B}}_x} ^ 2}$$
    And the total number of queries is $\frac{n ^ 2{\phi_y} ^ 2}{{{\mathcal{B}}_x} ^ 2} / 2 ^ {\Omega\left(\log ^ {0.01} (n)\right)}$.
\end{proof}

\subsubsection{Proof of \cref{claim:onesideded}}
Finally, we give the missing proof of \cref{claim:onesideded}.
 \claimonesideded*
\begin{proof}
For $({\anc}_{i - 1}, {\anc}_i]$ we write $\overline{p}_{{\anc}_{i - 1}, {\anc}_i}$ as a shorthand for the part of $\overline{p}_{l, r}$ from column ${\anc}_{i - 1}$ to column ${\anc}_i$ (i.e., $\overline{p}_{l, r}({\anc}_{i - 1}, {\anc}_i]$).

\paragraph{We only give up when weight distribution is imbalanced} Since $\appr[{\anc}_{i - 1}, \overline{y}_{i - 1}, {\anc}_i, \overline{y}_i] = \Theta(w(\overline{p}_{{\anc}_{i - 1}, {\anc}_i}))$ and $\appr[l, y_l, r, y_r] = \Theta(w(\overline{p}_{l, r}))$, if we have $\max\limits_{1 \le i \le M}{\appr[{\anc}_{i - 1}, \overline{y}_{i - 1}, {\anc}_i, \overline{y}_i]} = \omega\left(\frac{\appr[l, y_l, r, y_r]}{M}\right)$, then we must also have $\max\limits_{1 \le i \le M}{w(\overline{p}_{{\anc}_{i - 1}, {\anc}_i})} = \omega\left(\frac{w(\overline{p}_{l, r})}{M}\right)$. Let $i'$ be the maximizier of $w(\overline{p}_{{\anc}_{i' - 1}, {\anc}_{i'}})$. We have
\begin{align*}
    \md_M(W(\overline{p}_{l, r}))   &=  \sum\limits_{1 \le i \le M}{\abs{w(\overline{p}_{{\anc}_{i - 1}, {\anc}_i}) - \frac{w(\overline{p}_{l, r})}{M}}} \\
                                                                &\ge \abs{w(\overline{p}_{{\anc}_{i' - 1}, {\anc}_{i'}}) - \frac{w(\overline{p}_{l, r})}{M}} \\
                                                                &= \abs{\omega\left(\frac{w(\overline{p}_{l, r})}{M}\right) - \frac{w(\overline{p}_{l, r})}{M}} \\
                                                                &= \omega\left(\frac{w(\overline{p}_{l, r})}{M}\right)
\end{align*}

\paragraph{If we do not give up, we obtain a good estimate} Notice that 
\begin{itemize}
    \item For every sub-interval, $\whest[{\anc}_{i - 1}, \overline{y}_{i - 1}, {\anc}_i, \overline{y}_i] \le \appr[{\anc}_{i - 1}, \overline{y}_{i - 1}, {\anc}_i, \overline{y}_i]$;
    \item For the parent interval, $\appr[l, y_l, r, y_r] = O(w(\overline{p}_{l, r}))$.
\end{itemize} 
If $\max\limits_{1 \le i \le M}{\appr[{\anc}_{i - 1}, \overline{y}_{i - 1}, {\anc}_i, \overline{y}_i]} \le \frac{\appr[l, y_l, r, y_r]}{M}\log \log (n)$, then for every sub-interval, $\whest[{\anc}_{i - 1}, \overline{y}_{i - 1}, {\anc}_i, \overline{y}_i] = O\left(\frac{w(\overline{p}_{l, r})}{M}\log \log (n)\right)$. By applying Hoeffding's Inequality (\cref{lemma:hoeffding}, with $c := O\left(\frac{w(\overline{p}_{l, r})}{M}\log \log (n)\right)$ and $\delta := \frac{w(\overline{p}_{l, r})\log \log (n)}{\log ^ {0.03} (n)} = o\left(\frac{w(\overline{p}_{l, r})}{\log ^ {0.02} (n)}\right)$), we obtain that
    \begin{equation*}
    \abs{2\sum\limits_{i = 1} ^ M{\eta_{l, r, i}\whest[{\anc}_{i - 1}, \overline{y}_i, {\anc}_i, \overline{y}_i]} - \sum\limits_{i = 1} ^ M{\whest[{\anc}_{i - 1}, \overline{y}_i, {\anc}_i, \overline{y}_i]}} = o\left(\frac{w(\overline{p}_{l, r})}{\log ^ {0.02} (n)}\right)
    \end{equation*}
    holds except with probability $\exp\left(-\Omega\left(M / \log ^ {0.06}(n)\right)\right) = \exp\left(-\omega\left(\log ^ {0.02}(n)\right)\right)$. 
\end{proof}

\section{Algorithm for LCS}
In this section, we present our algorithm for LCS:
\begin{theorem} \label{theo:mainlcs}
For any constant $\eps > 0$, there exists a randomized algorithm that runs in time 
\[
\frac{n^2}{2^{\log^{\Omega(1)}(n)}}
\]
and, with high probability, computes a $(1-\eps)$-approximation for the longest common subsequence of two strings with total length $n$.
\end{theorem}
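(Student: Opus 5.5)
Theorem~\ref{theo:mainlcs} follows the same template as Theorem~\ref{theo:mained}. First, if $\lcs(X,Y)\le n/2^{\Omega(\log^{0.009}(n))}$, we compute it exactly with Lemma~\ref{lemma:boundedlcs}. Otherwise, the reduction lemma for LCS shows that it suffices to prove Lemma~\ref{lemma:gridgraphmax} on the Sparsified LCS Grid Graph. We again sample active scales $\tilde S$, now only from $[S]\cap 3\mathbb{Z}$, each with probability $S^{-0.98}$. We define $\tilde S$-regular paths exactly as in Definition~\ref{def:regular}. The memoized recursion is analogous to \eqref{eq:whest}, with $\min$ replaced by $\max$ on passive scales and with subsampling $2\sum_i\eta_{l,r,i}\whest[\cdots]$ on active scales.

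The back-up estimate is much simpler than for ED. If the subsampled sum looks unreliable, we return $0$, which is always sound for a maximization problem. So no $\appr$ oracle is needed. Instead we detect outliers from the subsampled sub-answers themselves, or from a cheap upper bound such as the interval width, and we also clamp every estimate to at most $r-l$.

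The analysis has three parts, mirroring the ED case.

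\textbf{Soundness} (no large overestimate): this is the overfitting argument. It repeats the proof of the ED soundness lemma verbatim, with over- and under-estimation swapped. On active scales the anchor path is forced, so Hoeffding (Lemma~\ref{lemma:hoeffding}) applies to a single fixed path per interval. The rare bad intervals are then controlled along the best path by Claim~\ref{claim:directed}, using $h=O(n/I_s)$, $N=n^{0.03}$, and $d,B=e^{\Theta(\log^{0.02}n)}$.

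\textbf{Efficiency}: this is identical to the ED efficiency lemma. Each edge survives with probability $2^{-|\tilde S|}$.

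\textbf{Completeness} (no large underestimate): this is where the real work is. For ED, Lemma~\ref{lemma:treemd-bounds} used $\sum|\Delta p|\le w(p)$, which fails for LCS, since a path may move vertically a lot while collecting little weight.

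The plan for completeness is to take the optimal path $p$ and first \emph{discard} recursion intervals where the vertical movement is far larger than the weight. Concretely, we mark a recursion interval $(l,r]$ as bad if $\sum|\Delta p(l,r]|>\alpha^{-1}\cdot w(p(l,r])$ for some $\alpha=2^{-\log^{\delta}n}$, and we zero out the weight in maximal bad intervals. The total weight discarded must be shown to be $o(w(p))$. I would argue this via a charging or density argument: intervals where the path has low LCS density relative to its vertical travel contribute little. I expect this to be the main obstacle, and I would try bounding it with Lemma~\ref{lemma:treerd}.

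On the remaining part, set $A:=|\Delta p|+W(p)$ and $B:=W(p)$, so that every recursion interval satisfies $\alpha A_\Sigma\le B_\Sigma\le A_\Sigma$. Lemma~\ref{lemma:treerd} then gives
\[
\treerd_{M,S}(A,B)=O\bigl(\log(1/\alpha)\,w(p)S^{0.51}\bigr)+o(w(p)).
\]
The rounding cost per interval from Claim~\ref{claim:regularsinglescale} is $M\md_M(\Delta p)$. Multiplying by the weight-to-deviation ratio, the error is weight-relative, i.e.\ bounded by $M\rd_M$. Summing over a random $S^{-0.98}$ fraction of scales yields a total of $O(M\log(1/\alpha)S^{0.51-0.98})\,w(p)=o(w(p)/M)$, provided $\delta$ and $M=\log^{0.109}n$ are balanced. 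This is presumably why only every third scale is eligible: it leaves slack between active scales for the discard step and for the martingale argument of Lemma~\ref{lemma:regular}, which handles the dependence of $p_s$ on $\tilde S$.

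Finally, the "no outliers" step uses Lemma~\ref{lemma:totalmeandeviation} on $W(\hat p)$, exactly as in Claim~\ref{claim:technicalityed}. This gives $\treemd_{M,\tilde S}(W(\hat p))=o(w(\hat p)/M)$, so zeroing the intervals where we give up costs $o(w(\hat p))$. Combining the three parts and boosting the success probability by taking a median gives the theorem.
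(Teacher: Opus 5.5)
There is a genuine gap in the step you fold into completeness, which is the LCS analogue of \cref{lemma:regularlcs}. The sentence ``multiplying by the weight-to-deviation ratio, the error is weight-relative, i.e.\ bounded by $M\rd_M$'' is not justified.

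\cref{claim:regularsinglescale} (through \cref{lemma:pathdiff}) bounds the number of altered columns by $O(M\md_M(\Delta p_s(l,r]))$, regardless of how much weight $p$ carries in those columns. They may be exactly the weight-$1$ columns, e.g.\ if the matches of $p$ sit next to the anchors. So no factor $B_{\Sigma(l,r]}/A_{\Sigma(l,r]}$ appears. On your non-discarded part, the bound you actually have is of order $M\alpha^{-1}S^{-0.47}w(p)\gg w(p)$.

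Your choice of $A$ also bounds the wrong quantity. With $A=\abs{\Delta p}+W(p)$, \cref{lemma:treerd} controls $\md_M(A)$, not the signed $\md_M(\Delta p)$ that drives rounding. Take sub-interval displacements $+k,-k,+k,\ldots$, each monotone inside its sub-interval, with weight spread evenly. Then $\md_M(A)=0$ but $\md_M(\Delta p)=Mk$.

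Conversely, the step you flag as the main obstacle is easy. Maximal low-density recursion intervals are disjoint, so they carry weight below $\alpha n=o(w(p))$ for $\delta\ge 0.01$. This is the paper's ``dispensable columns'' patch.

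The missing idea is a mechanism that makes rounding losses weight-relative. The paper gets it column by column, using the randomness of $\tilde S$, through four ingredients:
\begin{enumerate}
\item Since $p$ is optimal, columns where $\hat p$ gains weight are no more numerous than columns where it loses weight. So it suffices to show that a uniformly random weight-$1$ column of $p$ keeps its weight with probability $1-o(1/(M\log^{0.02}n))$.
\item \cref{claim:nohit} gives locality. If the $M^3$-way deviation divided by the diagonal slack $(r-l)-\abs{p(r)-p(l)}$ (not by the weight) is $o(M^{-3})$, then rounding changes $p$ only within $I_{s-3}$ of an anchor. This, not room for a martingale, is why active scales are multiples of $3$.
\item \cref{claim:frequenthitter}: apart from $n/2^{\log^{0.02}n}$ columns, each column is a hitter on at most $S^{0.83}$ scales. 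Hence it hits an active-scale interval with probability at most $S^{-0.15}$.
\item \cref{claim:lcstreerd} applies \cref{lemma:treerd} with $A=\Delta p+\mathcal{B}_x$ and $B=W(p)$, patched on dispensable blocks. With this $A$, $\md(A)=\md(\Delta p)$ exactly, $A_\Sigma$ is the slack, and $W(p)\le A$ pointwise because only weight-$0$ edges change rows. This shows a random weight-$1$ column rarely lies in an active interval that violates the hypothesis of item 2.
\end{enumerate}

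Two smaller points.
\begin{itemize}
\item Outlier detection must be relative to the sampled mean, as in \cref{claim:onesidedlcs}. Every sub-answer is already below its width, so width-based detection degenerates to plain doubling. Plain doubling gives factor-$2$ overestimates with constant probability, which breaks the \cref{claim:directed} soundness argument.
\item Completeness is not ``exactly as in \cref{claim:technicalityed}''. Detection now uses estimated sub-answers, whose errors cascade. The paper handles this with the clamped lower bound $\whest'$ and a split into balanced and imbalanced intervals.
\end{itemize}
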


Our algorithm solves the query-based algorithm on the sparsified grid introduced in \cref{sec:sparsified}.
\defgridgraphmax*
\lemmagridgraphmax*

\subsection{Regular Paths}
Let $\tilde{S} \subset [S]$ be a subset of active scales (other scales are \emph{passive}). We define the notion of $\tilde{S}$-regularity: 
\begin{definition} 
    Given $\tilde{S} \subset [S]$, a path $\hat{p}$ is $\tilde{S}$-regular (or is an $\tilde{S}$-regular path) if for every $s \in \tilde{S}$ and every scale-$s$ interval $(l, r]$, for every $0 \le i \le M$, we have
    $$\hat{p}({\anc}_i) = (I_{s - 1} / \phi_y) \cdot \round{\frac{\hat{p}(l) + i \cdot \frac{\hat{p}(r) - \hat{p}(l)}{M}}{I_{s - 1} / \phi_y}} =: \overline{y}_i.$$
\end{definition}
In other words, inside every active scale interval, an $\tilde{S}$-regular path is the straight line rounded to nearest multiples of $I_{s-1} / \phi_y$ (due to the way the sparsified grid is defined). 

We show that if $\tilde{S} \subset [S] \cap 3\mathbb{Z}$ is such that each scale that is a multiple of $3$ is chosen u.a.r.~with probability $1 / S ^ {0.98}$ (thus of size $\Theta(S ^ {0.02})$ with probability $1 - o(1)$), then enforcing $\tilde{S}$-regularity does not decrease the answer too much:
\begin{restatable}[Best regular path approximates best path]{lemma}{lemmaregularlcs} \label{lemma:regularlcs}
    Let $\tilde{S} \subset [S] \cap 3\mathbb{Z}$ be such that every scale that is a multiple of $3$ is chosen u.a.r.~with probability $1 / S ^ {0.98}$. Let $p$ be the longest $\vertex{0}{{n_X}}$-$\vertex{n_X + {n_Y}}{n_Y}$ \Aviad{is it important that $p$ is the longest?}\xiao{for the proof to work, yes} path with $w(p) = n / 2 ^ {o\left(\log ^ {0.01} (n)\right)}$. With probability at least $19 / 20$, we can round $p$ to an $\tilde{S}$-regular $\vertex{0}{{n_X}}$-$\vertex{n_X + {n_Y}}{n_Y}$ path $\hat{p}$ such that $$w(\hat{p}) > \left(1 - o\left(\frac{1}{M\log ^ {0.02} (n)}\right)\right)w(p).$$
    Moreover, there are at most $o\left(\frac{w(p)}{M\log ^ {0.02}{(n)}}\right)$ columns before which the weights of $p$ and $\hat{p}$ differ.
\end{restatable}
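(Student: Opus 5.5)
We follow the scale-by-scale rounding scheme of \cref{lemma:regular}: set $p_S:=p$ and go through the scales $s\in[S]$ in decreasing order, with $p_{s-1}:=p_s$ whenever $s\notin\tilde S$. The difference from ED is \emph{Challenge 1}. In the LCS grid a vertical move costs nothing, so $\sum_i\abs{(\Delta p)_i}$ is no longer bounded by $w(p)$. Hence \cref{lemma:treemd-bounds} fails, and rounding every active interval via \cref{claim:regularsinglescale} could cost far more than the weight it touches.

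We therefore treat each scale-$s$ interval $(l,r]$ on an active scale according to how its curvature compares with its weight. Fix a threshold $\alpha:=2^{-S^{0.3}}$, and attach to $p_s$ the two nonnegative sequences $W:=W(p_s)$ and $A:=\abs{\Delta p_s}+W$, taken coordinatewise. Call $(l,r]$ \emph{bad} if $W_{\Sigma(l,r]}<\alpha A_{\Sigma(l,r]}$ and \emph{good} otherwise.
\begin{itemize}
\item On a \emph{good} interval we round exactly as in \cref{claim:regularsinglescale}. This affects $O(M\md_M(\Delta p_s(l,r]))$ columns.
\item On a \emph{bad} interval we discard the sub-path altogether. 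We replace it by the rounded straight line between $\vertex{l}{p_s(l)}$ and $\vertex{r}{p_s(r)}$, which is a legal path by \cref{obs:pathdiff}. Since LCS weights are nonnegative, the loss is at most $W_{\Sigma(l,r]}<\alpha A_{\Sigma(l,r]}\le\alpha\,(2(r-l))$.
\end{itemize}
Summed over one scale, the bad intervals lose at most $O(\alpha n)$. Over the $O(S^{0.02})$ active scales this is $o\!\left(\frac{w(p)}{M\log^{0.02}n}\right)$, because $w(p)\ge n/2^{o(\log^{0.01}n)}$ while $\alpha=2^{-\Theta(\log^{0.3}n/\mathrm{polyloglog})}$.

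For the good intervals we need a bound on $\md_M(\Delta p_s)$ that is proportional to weight. Write $\Delta p_s=\Delta^+-\Delta^-$, so that $\md_M(\Delta p_s)\le\md_M(\Delta^+)+\md_M(\Delta^-)$ and also $\md_M(\Delta^{\pm})\le \md_M(A)+\md_M(W)$ up to constants. My plan is to sum $\rd_M(A,W)=\frac{W_\Sigma}{A_\Sigma}\md_M(A)$ over good intervals and apply \cref{lemma:treerd}. To meet its hypothesis on every recursion interval, I would first zero out the maximal bad intervals, exactly as the sequences $C,D$ are formed in its proof. This yields $\treerd=O(S^{0.3}\,w(p_s)S^{0.51})$ plus a negligible term. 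The ratio $W_\Sigma/A_\Sigma$ is then converted back into an absolute bound on $\md_M(A)$. The conversion relies on maximality of $p$: on an interval where the vertical excursion greatly exceeds the weight gathered, a local straightening costs little. I expect this conversion to be the main obstacle, since the scaled quantity only controls curvature weighted by density. I would try to charge $\md_M(\Delta p_s(l,r])$ on a good interval to $\alpha^{-1}$ times its scaled deviation, with $\log(1/\alpha)=S^{0.3}$ kept small enough that $M\cdot S^{0.3}\cdot S^{0.51}\cdot S^{-0.98}=o(1/(M\log^{0.02}n))$. This is where the exponents $0.109$ and $0.98$ should be used.

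Restricting active scales to multiples of $3$ serves to decouple the estimates. When scale $s$ is active, scales $s-1$ and $s-2$ are passive, so a discarded or re-rounded scale-$s$ interval never interacts with the rounding of the next active scale inside an already-modified sub-interval. This lets the per-scale bound for $p_s$ be stated in terms of a tree deviation for $p_s$ itself. Finally, as in \cref{lemma:regular}, the $p_s$ depend on $\tilde S$. We define $X_k$ as the loss on scale $S-k+1$ and $Y_k$ as its conditional expectation, which is the above bound divided by $S^{0.98}$. The martingale identity $\Ex[\sum X_k]=\Ex[\sum Y_k]$ then reduces the problem to at most $\abs{\tilde S}$ distinct paths. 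Each has weight at most $w(p)$ since $p$ is longest, which is another place maximality enters. Markov's inequality gives probability $19/20$, and the same count bounds the number of columns whose weights change.
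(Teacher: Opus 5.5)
Your plan has a genuine gap: the handling of good intervals fails, for more reasons than the one you flagged.

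\textbf{What fails on good intervals.} (i) The inequality $\md_M(\Delta^{\pm})\le O(\md_M(A)+\md_M(W))$ for $A=\abs{\Delta p_s}+W$ is false. Let $p_s$ run diagonally up on odd sub-intervals and down on even ones, with matches spread uniformly. Then the sub-interval sums of $\abs{\Delta p_s}$ and of $W$ are uniform, so $\md_M(A)=\md_M(W)=0$, while $\md_M(\Delta p_s)=\Theta(r-l)$. Taking absolute values erases exactly the sign information that curvature measures. (ii) Converting $\rd_M=\frac{W_\Sigma}{A_\Sigma}\md_M(A)$ back into $\md_M(A)$ costs the factor $A_\Sigma/W_\Sigma$. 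On a good interval this can be as large as $1/\alpha=2^{S^{0.3}}$, not $\log(1/\alpha)=S^{0.3}$. Even the optimistic product you wrote fails: $M\cdot S^{0.3}\cdot S^{0.51}\cdot S^{-0.98}=MS^{-0.17}$ is not $o(1/(M\log^{0.02}n))$, since $M^2\log^{0.02}n\approx\log^{0.238}n\gg S^{0.17}$. Raising $\alpha$ does not help. Your bad-interval loss $O(\alpha n)$ must be $o(w(p))$, and $w(p)$ may be as small as $n/2^{o(\log^{0.01}n)}$, so in the worst case $1/\alpha$ must exceed every polylogarithm. (iii) Most fundamentally, bounding the lost weight by the number $O(M\md_M(\Delta p_s))$ of columns touched by ED-style rounding cannot work for LCS. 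A longest path whose matches have small density $\beta=w(p)/n$ can still zigzag at many scales. Its active-scale curvature can then be $n/\poly(S)\gg\beta n$, even though the weight actually lost is small. The same obstruction empties your martingale step: it needs $\treemd_{M,S}(\Delta p')=O(w(p')S^{0.51})$ for arbitrary paths $p'$, which is exactly the bound Challenge 1 says is unavailable.

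\textbf{The missing idea.} Never return to absolute curvature. Instead, count the lost weight-$1$ columns of the fixed path $p$ one at a time, using localization.
\begin{itemize}
\item By \cref{claim:nohit}, suppose the curvature normalized by the distance from the diagonal, $\md_{M^3}(\Delta p_s(l,r])/\bigl((r-l)-\abs{p_s(r)-p_s(l)}\bigr)$, is $o(1/M^3)$. Then rounding the anchors of $(l,r]$ changes $p_s$ only within $I_{s-3}$ of the anchors, i.e.\ only before ``hitters''.
\item Being a hitter depends only on a column's position, not on $p$. All but $n/2^{\log^{0.02}n}$ columns are hitters on at most $S^{0.83}$ of the scales in $3\mathbb{Z}$; spacing scales by $3$ is what makes hitting independent across them. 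So a typical column hits an active interval with probability at most $S^{-0.15}$.
\item The normalized condition is checked for a uniformly random weight-$1$ column of $p$ via \cref{lemma:treerd}. Use the signed sequence $A=\Delta p+\mathcal{B}_x$ (so $\md(A)=\md(\Delta p)$ exactly) and $B=W(p)$, with dispensable blocks replaced so that $\alpha=2^{-\log^{0.01}n}$. Then $\treerd_{M^3,S/3}(A,B)/w(p)$ is precisely the expected sum of normalized curvature over the intervals containing that column. Only the $\log(1/\alpha)$ factor is ever paid, and Markov's inequality finishes.
\item These conditions are imposed on $p$ itself and transferred to $p_s$ by an induction in steps of $3$ scales, so no martingale over the changing $p_s$ is needed.
\item Maximality of $p$ enters through $w(\hat p)\le w(p)$. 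This shows that columns switching from weight $0$ to $1$ are no more numerous than those switching from $1$ to $0$. That is what lets a bound on lost weight also bound the number of columns whose weights differ — a step your last sentence asserts without justification.
\end{itemize}
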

Before formally proving \cref{lemma:regularlcs}, we first give an overview. Suppose the weight of $p$ before column $x$ is $1$. We want to know if we can round $p$ to a $\tilde{S}$-regular path such that the weight before column $x$ still remains $1$. We will give an argument via \cref{lemma:treerd} that for recursion intervals $(l, r]$ that contain $x$, the curvatures of $p$ inside $(l, r]$ are small on average, namely that $p(l, r]$ is already ``relatively straight,'' meaning that each $p({\anc}_i)$ should already be close to $\overline{y}_i$. Because of this, rounding $p({\anc}_i)$ to $\overline{y}_i$ should only affect a small fraction of the path in the vicinity of column ${\anc}_i$. Equivalently, column $x$ is only affected if it is very close to an anchor ${\anc}_i$. We call $x$ a \emph{hitter} of $(l, r]$ if there exist an anchor ${\anc}_i = l + i \cdot (r - l) / M$ such that $x \in ({\anc}_i - I_{s - 3}, {\anc}_i + I_{s - 3}]$ (which is the union of the two scale-$(s - 3)$ intervals around ${\anc}_i$). We call a column $x'$ a \emph{frequent hitter} if there are more than $S ^ {0.83}$ scales $s'$ that is a multiple of $3$ such that $x'$ is a hitter of the scale-$s'$ interval $(l', r'] \ni x'$. Since each scale that is a multiple of $3$ is in $\tilde{S}$ u.a.r~with probability $1 / S ^ {0.98}$, we can see that if $x'$ is not a frequent hitter, then there is a probability of at least $1 - 1 / S ^ {0.15}$ that no interval where $x'$ is a hitter lies on an active scale. Thus it suffices to prove two claims:
\begin{itemize}
\item There are not too many frequent hitters (\cref{claim:frequenthitter});
\item If $x'$ is not a hitter of an interval that contains it, then on average, the rounding at that interval should not affect $x'$ (\cref{claim:nohit}). 
\end{itemize}
Now we introduce the two claims in detail. We claim that there are not too many frequent hitters:
\begin{restatable}{claim}{claimfrequenthitter} \label{claim:frequenthitter}
    The number of frequent hitters does not exceed $n / 2 ^ {\log ^ {0.02} (n)}$.
\end{restatable}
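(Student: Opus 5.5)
The claim is purely combinatorial: being a hitter depends only on the position of $x$ inside the recursive partition, not on the path. So the plan is to pick a uniformly random column $x \in (0, n]$ and bound the probability that it is a frequent hitter by a Chernoff/Hoeffding-type bound over the scales that are multiples of $3$.

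\textbf{Step 1: per-scale hitting probability.} Fix a scale $s' \in [S] \cap 3\mathbb{Z}$ and the scale-$s'$ interval $(l', r'] \ni x$. The anchors of $(l', r']$ are spaced $I_{s'-1}$ apart. The hitting region around each anchor is $({\anc}_i - I_{s'-3}, {\anc}_i + I_{s'-3}]$, of width $2I_{s'-3} = 2I_{s'-1}/M^2$. The union over the $M+1$ anchors therefore covers at most a $2(M+1)/M^{2} \cdot M / M = O(1/M)$ fraction of $(l', r']$; more carefully, it covers $2I_{s'-3}(M+1)/I_{s'} = O(1/M^2)$. For $x$ uniform, define the event $H_{s'}$ = ``$x$ is a hitter of its scale-$s'$ interval''. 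Then $\Pr[H_{s'}] = O(1/M^{2}) \le O(1/M)$.

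\textbf{Step 2: independence across scales that are multiples of $3$.} Write $x-1$ in base $M$ with digits $d_1, \dots, d_S$ (together with the lower-order offset below $\mathcal{B}_x$, which is irrelevant). Whether $x$ lies within $I_{s'-3}$ of an anchor of its scale-$s'$ interval is determined by the digits at positions $s'-1$ and $s'-2$. Concretely, $x$ is near anchor ${\anc}_i$ from the right if $d_{s'-1} = i$ and $d_{s'-2} = 0$. It is near it from the left if $d_{s'-1} = i-1$ and $d_{s'-2} = M-1$. Up to the boundary handling of half-open intervals, the relevant digit is digit $s'-2$, which should be fully determined by the digit block $\{s'-1, s'-2\}$. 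Because we restrict to $s' \in 3\mathbb{Z}$, these digit blocks are disjoint for distinct $s'$. Since the digits of a uniform $x$ are independent and uniform, the events $H_{s'}$ for $s' \in [S] \cap 3\mathbb{Z}$ are mutually independent. This is exactly why the active scales are restricted to multiples of $3$.

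\textbf{Step 3: concentration.} The number of hitting scales, $\sum_{s'} \mathbbm{1}[H_{s'}]$, is a sum of $S/3$ independent Bernoulli variables, each with mean $O(1/M)$. Its expectation is therefore $O(S/M) = O(S^{0.9})$ or smaller; since $M = \Theta(\log^{0.109} n)$ and $S = \Theta(\log n / \log\log n)$, it is certainly $o(S^{0.83})$.

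To show that exceeding $S^{0.83}$ has probability at most $2^{-\log^{0.02}(n)}$, I would use the multiplicative Chernoff bound (or Hoeffding, Lemma~\ref{lemma:hoeffding}). Using the sharper $O(1/M^2)$ estimate, the mean is at most $S/M^{2} = S^{1-0.218+o(1)} \ll S^{0.83}$. This gives a tail of $\exp(-\Omega(S^{0.83})) \le 2^{-\log^{0.02}(n)}$. Multiplying by $n$ yields the claim.

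\textbf{Main obstacle.} The delicate step is the arithmetic in Step 3: checking that the mean number of hitting scales is really $\ll S^{0.83}$, which requires the $1/M^2$ (rather than $1/M$) per-scale density. A second point needing care is verifying in Step 2 that the hitting windows, including anchors at the interval boundaries, depend only on the disjoint digit blocks. If exact independence fails at the boundaries, I would instead bound the conditional probability of $H_{s'}$ given all coarser digits by $O(1/M^2)$ and apply a martingale (Azuma-type) version of the concentration bound.
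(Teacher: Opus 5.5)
Your proposal is correct and is essentially the paper's argument: a uniformly random column is a hitter at each scale $s'\in[S]\cap 3\mathbb{Z}$ with probability $O(1/M^2)$, independently across these scales, so the expected count is $O(S/M^2)=o(S^{0.83})$ and Hoeffding/Chernoff makes exceeding $S^{0.83}$ far less likely than $2^{-\log^{0.02}(n)}$ (your base-$M$ digit picture is an explicit form of the paper's observation that hitter status at scales $\ge s+3$ is constant on each scale-$s$ interval). Two small slips to clean up: in Step 3, the remark that $O(S/M)$ is $o(S^{0.83})$ is false, since $S/M=S^{0.891+o(1)}$, so only the $1/M^2$ density works; and in Step 2 the hitting condition should read $d_{s'-1}=d_{s'-2}\in\{0,M-1\}$ (with $d_{s'}$ selecting the anchor), which has probability $2/M^2$ and indeed depends only on the block $\{s'-1,s'-2\}$.
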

Recall that for any column $x'$, $\abs{p(x' + 1) - p(x')} \le 1$, namely that the slope of $p$ cannot be more than $45 ^ {\circ}$ anywhere (either in the $(+1, +1)$ or the $(+1, -1)$ direction). The closer $p(l, r]$ is to a 45-degree diagonal (i.e., the smaller $(r - l) - \abs{p_s(r) - p_s(l)}$ is), the less freedom the path $p(l, r]$ has. Therefore, it is beneficial to ``normalize'' curvature by dividing it by ${(r - l) - \abs{p_s(r) - p_s(l)}}$. Recall that for a path $p$ from $\vertex{l}{y_l}$ to $\vertex{r}{y_r}$, we let the forward difference operator $\Delta p$ be the sequence $$\Delta p := \left(p(l + {\mathcal{B}}_x) - p(l), p(l + 2{\mathcal{B}}_x) - p(l + {\mathcal{B}}_x), \ldots, p(r) - p(r - {\mathcal{B}}_x)\right).$$ We have the following claim which formally states that if such ``normalized'' total deviation is small, then rounding does not affect non-hitters: 
\begin{restatable}{claim}{claimnohit} \label{claim:nohit}
Given a scale-$s$ interval $(l, r]$, let $p_s$ be a path from column $l$ to column $r$ with $\abs{p_s(r) - p_s(l)} < r - l$. Suppose that
$$\frac{\md_{M ^ 3}(\Delta p_s(l, r])}{(r - l) - \abs{p_s(r) - p_s(l)}} = o\left(\frac{1}{M ^ 3}\right),$$
where $\md_{M ^ 3}(\Delta p_s(l, r])$ is the $M ^ 3$-way total deviation of $\Delta p_s(l, r]$:
$$\md_{M ^ 3}(\Delta p_s(l, r]) = \sum_{1 \le i' \le M ^ 3}{\abs{p_s({\anc}^{'}_{i'}) - p_s({\anc}^{'}_{i' - 1}) - \frac{p_s(r) - p_s(l)}{M ^ 3}}},$$
where for every $0 \le i' \le M ^ 3$, the $i'$-th $M ^ 3$-way anchor is ${\anc}^{'}_{i'} := l + i' \cdot \frac{r - l}{M ^ 3}$.

Then, there exists an $\{s\}$-regular $\vertex{l}{p_s(l)}$-$\vertex{r}{p_s(r)}$ path $p_{s - 1}$ such that $$p_s(x - 1, x] = p_{s - 1}(x - 1, x]$$ holds for every $x$ that is not a hitter of $(l, r]$.
\end{restatable}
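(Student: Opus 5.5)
The plan is to build $p_{s-1}$ by local surgery around each interior anchor ${\anc}_i$ ($1\le i<M$), changing $p_s$ only inside the window $({\anc}_i - I_{s-3}, {\anc}_i + I_{s-3}]$. Outside the union of these windows we keep $p_{s-1} = p_s$. Then every non-hitter $x$ automatically satisfies $p_s(x-1,x] = p_{s-1}(x-1,x]$. So it suffices to show that inside each window we can reroute the path from $\vertex{{\anc}_i - I_{s-3}}{p_s({\anc}_i - I_{s-3})}$ through $\vertex{{\anc}_i}{\overline{y}_i}$ to $\vertex{{\anc}_i + I_{s-3}}{p_s({\anc}_i + I_{s-3})}$. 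The only constraint on such a reroute is reachability in the rotated grid, namely $|\Delta y|\le \Delta x$ for each of the two halves (by \cref{obs:pathdiff} and its converse, plus parity/divisibility bookkeeping for the sparsified rows).

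First I will bound the vertical offset $|p_s({\anc}_i) - \overline{y}_i|$. The anchors ${\anc}_i$ and ${\anc}_i \pm I_{s-3}$ are all $M^3$-way anchors ${\anc}'_{i'}$, since $I_{s-3} = (r-l)/M^3$. Write $D := (r-l) - |p_s(r)-p_s(l)| > 0$. The telescoping argument in \eqref{eq:regularsinglescale3} from \cref{claim:regularsinglescale}, applied with $M^3$ pieces, gives that $p_s$ at any $M^3$-way anchor deviates from the straight line $p_s(l) + t\frac{p_s(r)-p_s(l)}{r-l}$ by at most $\md_{M^3}(\Delta p_s(l,r]) = o(D/M^3)$. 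Rounding to a multiple of $I_{s-1}/\phi_y$ adds at most $I_{s-1}/\phi_y = (r-l)/(M\phi_y)$. I need to check this is compatible with $I_{s-3}$; I return to this below. This controls the position of all three window points relative to the ideal line.

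Next comes the reachability check for the left half; the right half is symmetric. The straight line has slope $\sigma := \frac{p_s(r)-p_s(l)}{r-l}$ with $1-|\sigma| = D/(r-l)$. The line itself uses only a $(1-D/(r-l))$ fraction of the vertical budget over horizontal distance $I_{s-3}$. This leaves a slack of $I_{s-3}\cdot D/(r-l) = D/M^3$. The required vertical displacement differs from the line's by at most twice the deviation bound, which is $o(D/M^3)$, plus the rounding error. Hence the slack absorbs it whenever $D/M^3$ dominates. The resulting detour is then the path that goes at maximal slope and then flat, as in \cref{lemma:pathdiff}. It stays inside the window and rejoins $p_s$ exactly at the window endpoints.

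The main obstacle is the rounding term $I_{s-1}/\phi_y$. This term is not obviously $o(D/M^3)$, since $D$ can be tiny when the path is nearly diagonal. I would handle it in one of two ways. Either I show that when the slack is this small the straight line is essentially forced, so both $p_s({\anc}_i)$ and $\overline{y}_i$ are pinned near the diagonal and the rounding is harmless. Or, failing that, I absorb the rounding into a constant-factor widening of the hitter window, using that $\phi_y \gg M^2$. I also need $p_s({\anc}_i)$ to be a multiple of $I_{s-1}/\phi_y$, which holds by the sparsified grid definition. Parity issues can be handled by the even-coordinate convention already used in \cref{lemma:pathdiff}.
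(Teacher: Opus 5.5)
Your overall architecture matches the paper's. You confine the surgery to the windows $(m_i - I_{s-3}, m_i + I_{s-3}]$, so non-hitters are untouched. You use the telescoping bound at the $M^3$-way anchors, so $p_s$ is within $\md_{M^3}(\Delta p_s(l,r]) = o(D/M^3)$ of the line $L$ at $m_i$ and at $m_i \pm I_{s-3}$. And you use the slack identity $|\sigma| I_{s-3} = I_{s-3} - D/M^3$.

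\textbf{The gap.} The gap is exactly the step you flag as the main obstacle, and the proposal does not close it. Treating the rounding as an additive error of order $I_{s-1}/\phi_y$ really does fail once $D \lesssim (r-l)M^2/\phi_y$. Neither of your remedies repairs this:
\begin{itemize}
\item Widening the window is not available. The statement fixes hitters by the radius $I_{s-3}$, so a wider window changes the claim.
\item Widening would not help anyway. A constant factor $K$ only raises the slack to $K D/M^3$, which is still proportional to $D$ and is still swamped by $I_{s-1}/\phi_y$ as $D \to 0$.
\item The ``line is essentially forced'' alternative is only sketched. It can be made rigorous when $D$ is far below $I_{s-1}/\phi_y$, but it would need a separate case analysis and leaves the intermediate regime open.
\end{itemize}

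\textbf{The missing idea.} You already wrote it down without using it: $p_s(m_i)$ is a multiple of $c := I_{s-1}/\phi_y$. It is therefore itself one of the candidates in the rounding that defines $\overline{y}_i$. Since $\overline{y}_i$ is the multiple of $c$ nearest to $L(m_i)$, we get $|L(m_i) - \overline{y}_i| \le |L(m_i) - p_s(m_i)|$. Hence
\[
|p_s(m_i) - \overline{y}_i| \le 2\,|p_s(m_i) - L(m_i)| = o(D/M^3).
\]
The rounding therefore costs a factor of $2$, not an additive $c$.

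With this, your reachability check closes at once:
\[
|p_s(m_i \mp I_{s-3}) - \overline{y}_i| \le |\sigma| I_{s-3} + o(D/M^3) = I_{s-3} - D/M^3 + o(D/M^3) \le I_{s-3}.
\]
This is precisely the paper's Step~I (the nearest-multiple inequality) followed by its Step~II.
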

To facilitate the use of \cref{claim:nohit}, we later derive the following claim  using \cref{lemma:treerd}:
\begin{restatable}{claim}{claimlcstreerd} \label{claim:lcstreerd}
Let $p$ be the path in \cref{lemma:regularlcs}.
For a random column $x$ such that the weight of $p$ before $x$ is non-zero, with probability $1 - o\left(\frac{1}{M \log ^ {0.02} {(n)}}\right)$, every interval $(l, r] \ni x$ on an active scale satisfies the following bound: 
\begin{equation} \label{eq:nohitraw}
\frac{\md_{M ^ 3}(\Delta (p(l, r]))}{(r - l) - \abs{p(r) - p(l)}} = o\left(\frac{1}{M ^ 3}\right),
\end{equation}
which is the requirement in \cref{claim:nohit} with $p_s$ replaced by $p$\footnote{Note that $\abs{p(r) - p(l)} < r - l$ is automatically satisfied from the lemma condition.}. 
\end{restatable}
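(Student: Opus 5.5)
We apply \cref{lemma:treerd} with $A$ the ``slack'' sequence and $B$ the weight sequence of $p$, for the $M^3$-way tree on the scales $3\mathbb{Z}$, and then convert to a probability over columns by a Markov-type argument. Grouping three consecutive scales, the scales in $[S]\cap 3\mathbb{Z}$ form an $M^3$-way recursive partition with $S/3$ levels. A scale-$s$ interval with $s\in 3\mathbb{Z}$ splits into $M^3$ scale-$(s-3)$ intervals, and $\md_{M^3}$ is exactly the total deviation in this coarser tree. Define $a_x := \mathcal{B}_x - |(\Delta p)_x|$, the slack in block $x$. Then $A_{\Sigma(l,r]} = (r-l) - |p(r)-p(l)|$ up to sign cancellations, and we need the inequality $A_{\Sigma(l,r]}\ge (r-l)-|p(r)-p(l)|$. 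Define $b_x$ as the weight of $p$ in block $x$.

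Step 1 establishes the ratio condition $\alpha A_\Sigma \le B_\Sigma \le A_\Sigma$. For LCS, a weight-1 edge is a diagonal shortcut, i.e.\ a horizontal step in the rotated grid. Such a step is two columns with no row change, so $b_x \le a_x/2$ roughly, giving $B_\Sigma\le A_\Sigma$. The lower bound is where optimality of $p$ is used (this is why $p$ must be the longest path). If an interval had slack much larger than its weight, say $B_\Sigma < \alpha A_\Sigma$ with $\alpha = 2^{-\log^{0.01}(n)}$, I would not claim the condition holds. Instead I would zero out such intervals, in the same way as the construction of $C,D$ in the proof of \cref{lemma:treerd}. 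Columns with nonzero weight lying in such low-density intervals carry total weight at most $\alpha\cdot n\cdot S$, which is $o(w(p)/(M\log^{0.02}n))$ since $w(p)=n/2^{o(\log^{0.01}n)}$. These columns are simply counted in the failure probability.

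Step 2 applies \cref{lemma:treerd} with $\log(1/\alpha)=\log^{0.01}(n)$ and $\max a\le \mathcal{B}_x$. This gives
\[\sum_{s\in 3\mathbb{Z}}\sum_{(l,r]\in s}\frac{B_{\Sigma(l,r]}}{A_{\Sigma(l,r]}}\md_{M^3}(\Delta p(l,r]) = O\!\left(\log^{0.01}(n)\left(w(p)S^{0.51} + n/2^{\log_M^{0.015}n}\right)\right).\]
Here I use that $\md_{M^3}(A(l,r])=\md_{M^3}(\Delta p(l,r])$ whenever $\Delta p$ has constant sign, and is otherwise bounded by it plus terms controlled similarly. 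This sign issue is the step I expect to be the main obstacle. If it fails, I would split $\Delta p$ into positive and negative parts and bound $\md$ of $|\Delta p|$ via the triangle inequality.

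Step 3 reinterprets the scaled sum as an expectation. Sampling a column $x$ with probability proportional to $b_x$, the left side divided by $w(p)$ equals $\Ex_x\bigl[\sum_{(l,r]\ni x} R_{l,r}\bigr]$, where $R_{l,r}:=\md_{M^3}/((r-l)-|p(r)-p(l)|)$. So the average, over weight-bearing columns, of the sum of normalized curvatures across all scales is $O(S^{0.52})$. Restricting to active scales, which are chosen independently of $p$ with probability $S^{-0.98}$, the expected sum over active intervals containing $x$ is $O(S^{-0.46})$. By Markov's inequality, the probability that some active interval has $R_{l,r}\ge \delta/M^3$ is $O(M^3 S^{-0.46}/\delta)$. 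With $M^3=\Theta(\log^{0.327}n)$, $S=\Theta(\log n/\log\log n)$ and a suitable $\delta=o(1)$, this is $o(1/(M\log^{0.02}n))$, since $0.327+0.109+0.02<0.46$. Adding the zeroed-out columns from Step 1 gives the claim.
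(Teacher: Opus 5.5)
Your Steps 0 and 3 match the paper: the $M^3$-way tree on the scales in $[S]\cap 3\mathbb{Z}$, reading the scaled tree deviation divided by $w(p)$ as an expectation over a weight-proportional column, and the Markov bound over the independently sampled active scales. The gap is the choice $a_x := \mathcal{B}_x - \abs{(\Delta p)_x}$. It gives $\md_{M^3}(A(l,r]) = \md_{M^3}(\abs{\Delta p}(l,r])$, and this does not dominate the numerator $\md_{M^3}(\Delta p(l,r])$ once $\Delta p$ changes sign. For example, let $p$ rise monotonically by $c$ across each odd $M^3$-way sub-interval and fall by $c$ across each even one. Then the former is $0$, while the latter is $M^3c$, comparable to $(r-l)-\abs{p(r)-p(l)} = r-l$ when $c$ is close to $(r-l)/M^3$. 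So \cref{lemma:treerd} applied to your $A$ does not bound $\sum_{(l,r]} B_{\Sigma(l,r]}R_{l,r}$. Your fallback of bounding $\md$ of $\abs{\Delta p}$ cannot help, since that is exactly the quantity that fails to dominate. (Also, the inequality that holds, and the one you need for the denominator, is $A_{\Sigma(l,r]}\le (r-l)-\abs{p(r)-p(l)}$, not $\ge$.)

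The missing idea is to keep the sign and split intervals by the sign of $p(r)-p(l)$. For $p(r)\le p(l)$ take $a_x := \mathcal{B}_x + (\Delta p)_x$, and for $p(r)>p(l)$ take $a_x := \mathcal{B}_x - (\Delta p)_x$. These have three properties:
\begin{itemize}
\item Both are non-negative, since $\abs{(\Delta p)_x}\le\mathcal{B}_x$.
\item $\md_{M^3}$ is invariant under adding a constant to every entry and under negation, so $\md_{M^3}(A(l,r]) = \md_{M^3}(\Delta p(l,r])$ exactly.
\item The sums telescope to $A_{\Sigma(l,r]} = (r-l)\pm(p(r)-p(l))$, which is exactly the denominator in the corresponding sign class.
\end{itemize}
The bound $b_x\le a_x$ still holds: with $u,d,h$ the numbers of up, down and horizontal steps in a block, $a_x\in\{2u+h,\,2d+h\}$, and only horizontal steps carry weight.

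A second, smaller problem is your handling of the lower ratio bound. Zeroing out $A$ and $B$ on low-density intervals is not analogous to the $C,D$ construction in \cref{lemma:treerd}, where every charged interval is retained in full. A column you keep generally lies in larger intervals $(L,R]$ that contain zeroed pieces. For those, the modified $\md_{M^3}(A'(L,R])$ and $A'_{\Sigma(L,R]}$ are no longer the curvature and slack of $p$, so the lemma's bound says nothing about $R_{L,R}$.

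The paper instead leaves $A$ untouched and raises $B$ on ``dispensable'' blocks, i.e.\ those inside some recursion interval with $w(p(l,r]) < (r-l)/2^{\log^{0.01}(n)}$. Since $\rd_{M^3}(A(l,r],B(l,r])$ is non-decreasing in $B$, this can only increase $\treerd_{M^3,S/3}(A,B)$. With $\alpha = 2^{-\log^{0.01}(n)}$, the new $B'$ satisfies $\frac{\alpha}{2}A_{\Sigma(l,r]}\le B'_{\Sigma(l,r]}\le A_{\Sigma(l,r]}$ on every recursion interval. The paper sets $b'_x := a_x$ on dispensable blocks; setting $b'_x := \max(b_x,\alpha a_x)$ there instead makes $\sum_x b'_x \le w(p)+2\alpha n = (1+o(1))w(p)$ immediate. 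With this, \cref{lemma:treerd} gives $\frac{1}{w(p)}\treerd_{M^3,S/3}(A,B) = O(S^{0.51}\log^{0.01}(n))$ for each sign class, and your Step 3 finishes the argument.
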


We first present the full proof using these three claims:
\begin{proof} [Proof of \cref{lemma:regularlcs}]
To prove the lemma, we perform rounding on a scale-by-scale basis. Let $p_S := p$. For each scale $s \in [S]$ in decreasing order, if $s \in \tilde{S}$, we round $p_s$ to an $\{s\}$-regular path $p_{s - 1}$. Specifically, for every scale-$s$ interval $(l, r]$, we adjust $p_s({\anc}_i)$ to $p_{s - 1}({\anc}_i) := (I_{s - 1} / \phi_y) \cdot \round{\frac{p_s(l) + i \cdot \frac{p_s(r) - p_s(l)}{M}}{I_{s - 1} / \phi_y}}$. If $p_s(l, r]$ satisfies the condition in \cref{claim:nohit}, then we use the claim to obtain $p_{s - 1}(l, r]$ in a way that does not change any column that is not a hitter. If $p_s(l, r]$ does not satisfy the condition in \cref{claim:nohit}, we round $p_s(l, r]$ to $p_{s - 1}(l, r]$ arbitarily (e.g., using \cref{lemma:pathdiff}). If $s \notin \tilde{S}$ we simply let $p_{s - 1} := p_{s}$. Let the final path be $\hat{p} := p_0$. 

For the rest of the proof, we bound the number of columns before which the weight of $p$ and the weight of $\hat{p}$ differ. This is the sum of:
\begin{itemize}
\item $\mathcal{X} := $ the number of columns before which the weight of $p$ is $1$ and the weight of $\hat{p}$ is $0$;
\item $\mathcal{Y} := $ the number of columns before which the weight of $p$ is $0$ and the weight of $\hat{p}$ is $1$.
\end{itemize}
Since $p$ is the longest path, we have $w(p) \ge w(\hat{p}) = w(p) - \mathcal{X} + \mathcal{Y}$, and hence $\mathcal{X} \ge \mathcal{Y}$. Thus it suffices to prove that $\mathcal{X}= o\left(\frac{w(p)}{M \log ^ {0.02} {(n)}}\right)$. 
Now consider uniformly randomly sampling a column $x$ such that the weight of $p$ before column $x$ is $1$.
We will argue that the weight of $\hat{p}$ before column $x$ is $0$ with probability $o\left(\frac{1}{M \log ^ {0.02} {(n)}}\right)$, which would mean that the expected number of columns where the weights of $p$ and $\hat{p}$ differ is $o\left(\frac{w(p)}{M \log ^ {0.02} {(n)}}\right)$, and would finish the proof due to Markov's inequality. 

To present this argument, we will show that the only few exceptions to $\hat{p}(x - 1, x] = p(x - 1, x]$ are the columns $x$ such that either \eqref{eq:nohitraw} is not satisfied by some interval $(l, r] \ni x$ on an active scale, or $x$ hits an interval on an active scale. We first demonstrate that these two cases of exceptions indeed only happen with $o\left(\frac{1}{M \log ^ {0.02} {(n)}}\right)$ probability when $x$ is a uniformly random column with before which the weight of $p$ is non-zero.

\paragraph{Case I: \eqref{eq:nohitraw} is not satisfied by some interval $(l, r] \ni x$ on an active scale} \cref{claim:lcstreerd} shows that \eqref{eq:nohitraw} is not satisfied by some interval $(l, r] \ni x$ on an active scale with probability $o\left(\frac{1}{M \log ^ {0.02} {(n)}}\right)$.

\paragraph{Case II: $x$ hits an interval on an active scale} This can be further categorized into the following:
\begin{description}
\item[Frequent hitters] From \cref{claim:frequenthitter}, the number of $x$'s that are frequent hitters is at most $n / 2 ^ {\log ^ {0.02} (n)} = o\left(\frac{w(p)}{M \log ^ {0.02} {(n)}}\right)$, so $x$ is a frequent hitter with probability $o\left(\frac{1}{M \log ^ {0.02} {(n)}}\right)$;
\item[Non-frequent-hitter columns that hit intervals on active scales by chance] Given an $x$ that is not a frequent hitter, since the number of scales $s$, such that $x$ is a hitter of the interval $(l, r] \ni x$ on scale $s$, is at most $S ^ {0.83}$, and since each scale that is a multiple of $3$ is added u.a.r.~with probability $1 / S ^ {0.98}$ to $\tilde{S}$, only with probability at most $1 / S ^ {0.15} = o\left(\frac{1}{M \log ^ {0.02} {(n)}}\right)$ can there be an interval $(l, r] \ni x$ on an active scale such that $x$ is a hitter of $(l, r]$.
\end{description}

To show that these are the only exceptions and thereby finish the proof, for every scale $s$, we show the (scale-$s$) proposition that $p_s(x - 1, x] = p(x - 1, x]$ for every column $x$ (even those before which the weight of $p$ is $0$), such that both of the following conditions 
\begin{enumerate}
    \item $x$ does not hit any interval on an active scale $s' > s$, and
    \item \eqref{eq:nohitraw} is satisfied by every interval $(l, r] \ni x$ on an active scale $s' > s$.
\end{enumerate}   
This would imply that if every interval on an active scale containing $x$ satisfies \eqref{eq:nohitraw}, and $x$ does not hit any interval on an active scale, then we would have $\hat{p}(x - 1, x] = p(x - 1, x]$ (i.e., the only exceptions to the argument are these two cases above). We can prove the proposition by induction, using a step-wise decrement of 3 scales down from scale $S$ (for which the proposition is trivially true since there are no active scales above). As our induction assumption, suppose that the proposition is true for a scale $s$ that is a multiple of $3$ and we want to argue that it is also true for scale $(s - 1)$, $(s - 2)$ and $(s - 3)$. If $s$ is not an active scale then no rounding takes place and the proposition is therefore true for scale $(s - 1)$ (and thus for scales $(s - 2)$ and $(s - 3)$ since active scales are multiples of $3$). The remaining case is if $s$ is an active scale and the scale-$(s - 3)$ \textcolor{red}{proposition conditions} also hold for $x$ (i.e., $x$ also does not hit the scale-$s$ interval $(l_{s}, r_{s}] \ni x$, and \eqref{eq:nohitraw} holds for $(l_{s}, r_{s}]$). Let $(l_s, r_s]$ be the scale-$s$ interval containing $x$. For every $x' \in (l_s, r_s]$, for every scale $s' \ge s$, the scale-$s'$ interval containing $x'$ is identical to the scale-$s'$ interval containing $x$. Therefore the scale-$s$ \textcolor{red}{proposition conditions} hold for every $x' \in (l_s, r_s]$ and we have $p_s(l_s, r_s] = p(l_s, r_s]$ by the induction assumption that the scale-$s$ proposition holds. Since \eqref{eq:nohitraw} holds for $p(l_s, r_s]$, it also holds for $p_s(l_s, r_s]$. Since $x$ does not hit $(l_s, r_s)$, from \cref{claim:nohit}, we must have $p_{s - 1}(x - 1, x] = p(x - 1, x]$ (and thus $p_{s - 2}(x - 1, x] = p_{s - 3}(x - 1, x] = p(x - 1, x]$ since $(s - 1)$ and $(s - 2)$ are not active scales and no rounding takes place).
\end{proof}

Now we prove the three ingredients used in the proof of \cref{lemma:regularlcs} above:
\claimfrequenthitter*
\begin{proof}
    For a column $x$ and a scale $s$ that is a multiple of $3$, let $g_{x, s} \in [0, 1]$ be $1$ if $x$ is a hitter of the interval $(l, r] \ni x$ on scale $s$. Notice that the values of $g_{x',s'}$ for $s' \ge s + 3$ are identical for all $x' \in (l,r]$ since those values depend on anchors of larger intervals $(l',r'] \supseteq (l,r]$. Therefore $g_{x,s}$ is independent of $g_{x,s'}$ for $s' \ge s + 3$. Furthermore, $g_{x,s}$ is $+1$ for $O\left(\frac{1}{M ^ 2}\right)$ of the $x \in (l,r]$. Let $g_x := \sum_{s \in [S] \cap 3\mathbb{Z}}{g_{x, s}}$. For a uniformly random $x \in [n]$, We can see that $g_x$ is the sum of $S / 2$  i.i.d.~samples of $\text{Ber}(O(1 / M ^ 2))$, whose mean is $O(S / M ^ 2) = o(S ^ {0.83})$. 

    By applying Hoeffding's inequality (\cref{lemma:hoeffding}), the probability that $g_x > S ^ {0.83}$ is less than $1 / 2 ^ {\Omega(S ^ {0.66})} \ll 1 / 2 ^ {\log ^ {0.02} (n)}$. This means that the number of frequent hitters is no more than $n / 2 ^ {\log ^ {0.02} (n)}$.
\end{proof}

\claimnohit*
\begin{proof}
    Fix $1 \le i < M$. In this proof, we show that $\abs{p_s({\anc}_i - I_{s - 3}) - \overline{y}_i} \le I_{s - 3}$ and $\abs{p_s({\anc}_i + I_{s - 3}) - \overline{y}_i} \le I_{s - 3}$, which implies that there is a $\vertex{{\anc}_i - I_{s - 3}}{p_s({\anc}_i - I_{s - 3})}$-$\vertex{{\anc}_i}{\overline{y_i}}$-$\vertex{{\anc}_i + I_{s - 3}}{p_s({\anc}_i + I_{s - 3})}$ path. Thus to regularize $p_s({\anc}_i)$ to $\overline{y}_i$, we only need to change the sub-path of $p_s$ between column $\left({\anc}_i - I_{s - 3}\right)$ and column $\left({\anc}_i + I_{s - 3}\right)$, which only affects edges before hitters.

    We show that $\abs{p_s({\anc}_i - I_{s - 3}) - \overline{y}_i} \le I_{s - 3}$. The other direction is similar due to symmetry. 
    Without loss of generality, suppose that $p_s(r) \ge p_s(l)$. We will show that:
    \begin{description}
        \item[$p_s({\anc}_i)$ is not far from $\overline{y}_i$:] $\abs{p_s({\anc}_i) - \overline{y}_i} = o\left(\frac{(r - l) - \left(p_s(r) - p_s(l)\right)}{M ^ 3}\right)$;
        \item[$p_s({\anc}_i - I_{s - 3})$ is not far from $p_s({\anc}_i)$:] $\abs{p_s({\anc}_i - I_{s - 3}) - p_s({\anc}_i)} \le \frac{\left(p_s(r) - p_s(l)\right)}{M ^ 3} + o\left(\frac{(r - l) - \left(p_s(r) - p_s(l)\right)}{M ^ 3}\right)$.
    \end{description}
    We can see that these will imply
    $$\abs{p_s({\anc}_i - I_{s - 3}) - \overline{y}_i} \le \frac{\textcolor{blue}{\left(p_s(r) - p_s(l)\right)}}{M ^ 3} + o\left(\frac{(r - l) - \textcolor{blue}{\left(p_s(r) - p_s(l)\right)}}{M ^ 3}\right) \le \frac{r - l}{M ^ 3} = I_{s - 3}.$$

    \paragraph{Preliminary: $p_s({\anc}_i)$ is not far from the $\vertex{l}{p_s(l)}-\vertex{r}{p_s(r)}$ line} Recall that for every $0 \le i' \le M ^ 3$, the $i'$-th $M ^ 3$-way anchor is ${\anc}^{'}_{i'} = l + i' \cdot \frac{r - l}{M ^ 3}$. Recall that $\bar{y}_i = (I_{s - 1} / \phi_y) \cdot \round{\frac{p_s(l) + i \cdot \frac{p_s(r) - p_s(l)}{M}}{I_{s - 1} / \phi_y}}$ is the straight-line rounded to the closest multiple of $I_{s - 1} / \phi_y$.
    As a preliminary, we first show that for every $M ^ 3$-way anchor ${\anc}^{'}_{i'}$, $\vertex{{\anc}^{'}_{i'}}{p_s({\anc}^{'}_{i'})}$ is not far from the un-rounded straight line connecting $\vertex{l}{p_s(l)}$ and $\vertex{r}{p_s(r)}$:
    \begin{equation} \label{eq:claimnohitprelim}
        \abs{p_s({\anc}^{'}_{i'}) - \left(p_s(l) + i' \cdot \frac{p_s(r) - p_s(l)}{M ^ 3}\right)} = o\left(\frac{(r - l) - \left(p_s(r) - p_s(l)\right)}{M ^ 3}\right).
    \end{equation}
    By telescoping, we have
    \begin{equation*}
        \abs{\textcolor{red}{p_s({\anc}^{'}_{i'})} - \left(p_s(l) + i' \cdot \frac{p_s(r) - p_s(l)}{M ^ 3}\right)} = \abs{\textcolor{red}{\left(p_s(l) + \sum_{j' \le i'}{\left(p_s({\anc}^{'}_{j'}) - p_s({\anc}^{'}_{j' - 1})\right)}\right)} - \left(p_s(l) + i' \cdot \frac{p_s(r) - p_s(l)}{M ^ 3}\right)}.
    \end{equation*}
    The two $p_s(l)$ terms cancel out. Thus
    \begin{align*}
        \abs{p_s({\anc}^{'}_{i'}) - \left(p_s(l) + i' \cdot \frac{p_s(r) - p_s(l)}{M ^ 3}\right)} &= \abs{\left(\sum_{1 \le j' \le i'}{\left(p_s({\anc}^{'}_{j'}) - p_s({\anc}^{'}_{j' - 1})\right)}\right) - i' \cdot \frac{p_s(r) - p_s(l)}{M ^ 3}} \nonumber \\
        &= \abs{\sum_{1 \le j' \le i'}\left({p_s({\anc}^{'}_{j'}) - p_s({\anc}^{'}_{j' - 1}) - \frac{p_s(r) - p_s(l)}{M ^ 3}}\right)} \\
        &\le \sum_{1 \le j' \le i'}{\abs{p_s({\anc}^{'}_{j'}) - p_s({\anc}^{'}_{j' - 1}) - \frac{p_s(r) - p_s(l)}{M ^ 3}}} \\
        &\le \sum_{1 \le j' \le M ^ 3}{\abs{p_s({\anc}^{'}_{j'}) - p_s({\anc}^{'}_{j' - 1})  - \frac{p_s(r) - p_s(l)}{M ^ 3}}} \\
        &= \md_{M ^ 3}(\Delta p_s(l, r]). && \text{by definition}
    \end{align*}
    Thus \eqref{eq:claimnohitprelim} holds since $\frac{\md_{M ^ 3}(\Delta (p(l, r]))}{(r - l) - \abs{p(r) - p(l)}} = o\left(\frac{1}{M ^ 3}\right)$.

    \paragraph{Step I: $p_s({\anc}_i)$ is not far from $\overline{y}_i$} 
    Note that for $a, b \in \mathbb{R}$ and $c \in \mathbb{R} ^ +$, if $c \mid a$ and $\hat{b}$ is $b$ rounded to the closest multiple of $c$, then $\abs{a - \hat{b}} \le 2\abs{a - b}$. Since $I_{s - 1} / \phi_y \mid p_s({\anc}_i)$, we must have
    \begin{equation}  \label{eq:claimnohit0}
    \abs{p_s({\anc}_i) - (I_{s - 1} / \phi_y) \cdot \round{\frac{p_s(l) + i \cdot \frac{p_s(r) - p_s(l)}{M}}{I_{s - 1} / \phi_y}}} \le 2\abs{p_s({\anc}_i) - \left(p_s(l) + i \cdot \frac{p_s(r) - p_s(l)}{M}\right)}.
    \end{equation}
    Plugging \eqref{eq:claimnohitprelim} (with $i' = i \cdot M ^ 2$) to the right-hand side of \eqref{eq:claimnohit0}, we have that indeed,
    $$\abs{p_s({\anc}_i) - \overline{y}_i} = o\left(\frac{(r - l) - \left(p_s(r) - p_s(l)\right)}{M ^ 3}\right)$$

    \paragraph{Step II: $p_s({\anc}_i - I_{s - 3})$ is not far from $p_s({\anc}_i)$}
    From \eqref{eq:claimnohitprelim} we have that
    \begin{equation} \label{eq:claimnohita}
    \abs{p_s({\anc}_i - I_{s - 3}) - \left(p_s(l) + \left(i \cdot M ^ 2 - 1\right) \cdot \frac{p_s(r) - p_s(l)}{M ^ 3}\right)} = o\left(\frac{(r - l) - \left(p_s(r) - p_s(l)\right)}{M ^ 3}\right)
    \end{equation}
    ($\left({\anc}_i - I_{s - 3}\right)$ is the $(i \cdot M ^ 2 - 1)$-th $M ^ 3$-way anchor) and that
    \begin{equation} \label{eq:claimnohitb}
    \abs{p_s({\anc}_i) - \left(p_s(l) + i \cdot M ^ 2 \cdot \frac{p_s(r) - p_s(l)}{M ^ 3}\right)} = o\left(\frac{(r - l) - \left(p_s(r) - p_s(l)\right)}{M ^ 3}\right).
    \end{equation}
    ($\left({\anc}_i\right)$ is the $(i \cdot M ^ 2)$-th $M ^ 3$-way anchor).

    Note that for $a, b, c, d \in \mathbb{R}$, we have $\abs{a - b} \le \textcolor{red}{\abs{a - c}} + \textcolor{blue}{\abs{c - d}} + \textcolor{purple}{\abs{b - d}}$ due to triangle inequality. Particularly, this is true for
    \begin{itemize}
        \item $a = p_s({\anc}_i - I_{s - 3})$;
        \item $b = p_s({\anc}_i)$;
        \item $c = \left(p_s(l) + \left(i \cdot M ^ 2 - 1\right) \cdot \frac{p_s(r) - p_s(l)}{M ^ 3}\right)$;
        \item $d = \left(p_s(l) + i \cdot M ^ 2 \cdot \frac{p_s(r) - p_s(l)}{M ^ 3}\right)$.
    \end{itemize}
    From \eqref{eq:claimnohita} (for $\textcolor{red}{\abs{a - c}}$) and \eqref{eq:claimnohitb} (for $\textcolor{purple}{\abs{b - d}}$), we have that indeed
    \begin{align*}
        \abs{p_s({\anc}_i - I_{s - 3}) - p_s({\anc}_i)} &\le \textcolor{red}{o\left(\frac{(r - l) - \left(p_s(r) - p_s(l)\right)}{M ^ 3}\right)} + \textcolor{blue}{\frac{\left(p_s(r) - p_s(l)\right)}{M ^ 3}} + \textcolor{purple}{o\left(\frac{(r - l) - \left(p_s(r) - p_s(l)\right)}{M ^ 3}\right)} \\
        & \le \frac{\left(p_s(r) - p_s(l)\right)}{M ^ 3} + o\left(\frac{(r - l) - \left(p_s(r) - p_s(l)\right)}{M ^ 3}\right)
    \end{align*}
\end{proof}

{\renewcommand\footnote[1]{}\claimlcstreerd*}
\begin{proof}
We can see that the left-hand side of \eqref{eq:nohitraw} depends on the sign of $p(r) - p(l)$. Since the cases for the two signs are identical, we only need to show this for intervals where $p(r) \le p(l)$ --- we want to show that for a column $x$ such that the weight of $p$ before column $x$ is $1$, with probability $1 - o\left(\frac{1}{M \log ^ {0.02} {(n)}}\right)$, every interval $(l, r] \ni x$ on an active scale satisfies
\begin{equation} \label{eq:nohit}
\frac{\md_{M ^ 3}(\Delta (p(l, r]))}{p(r) - p(l) + (r - l)} = o\left(\frac{1}{M ^ 3}\right).
\end{equation}
Let $A = \{a_i\}$ be the non-negative sequence whose $(i \cdot \mathcal{B}_x)$-th element is
$${a}_{i \cdot \mathcal{B}_x} := p(i \cdot {\mathcal{B}}_x) - p((i - 1) \cdot {\mathcal{B}}_x) + {\mathcal{B}}_x = \left(\Delta p\right)_i + {\mathcal{B}}_x$$
(we will never explicitly use indices that are not multiples of $\mathcal{B}_x$ and these entries are treated as zeros)
which measures how close each block of $p$ is to a $(+1, -1)$-diagonal path, and let $B = (b_i)$ be the non-negative sequence whose $(i \cdot \mathcal{B}_x)$-th element is
$${b}_{i \cdot \mathcal{B}_x} := w(p\left((i - 1) \cdot {\mathcal{B}}_x, i \cdot {\mathcal{B}}_x\right]).$$
Since total deviation is invariant under a universal translation (i.e., by $+\mathcal{B}_x$ everywhere), for every interval $(l, r]$ on a scale that is a multiple of $3$, the $M ^ 3$-way total deviation of $\Delta (p(l, r])$ and $A(l, r]$ are equal:
\begin{equation} \label{eq:lcstreerdeq0}
    \md_{M ^ 3}(\Delta (p(l, r])) = \md_{M ^ 3}(A(l, r]).
\end{equation} 
It is straightforward to see that if we choose a random column $x$ before which the weight of $p$ is $1$, the expected value of the total sum over every interval $(l, r] \ni x$ on a scale that is a multiple of $3$ of the left-hand side of \eqref{eq:nohit} is
\begin{align*}
&\sum_{s \in [S]\cap 3\mathbb{Z}}\left(\sum_{(l, r] \in s \text{ s.t.} \abs{p(r) - p(l)} < r - l}{\underbrace{\frac{\textcolor{red}{w(l, r)}}{w(p)}}_{\text{probability of $x \in (l, r]$}} \cdot \frac{\md_{M ^ 3}(\Delta (p(l, r]))}{\textcolor{blue}{p(r) - p(l) + (r - l)}}}\right) \\ 
&= \sum_{s \in [S]\cap 3\mathbb{Z}}\left(\sum_{(l, r] \in s \text{ s.t.} \abs{p(r) - p(l)} < r - l}{\frac{\textcolor{red}{B_{\Sigma(l, r]}}}{w(p)}} \cdot \frac{\md_{M ^ 3}(\Delta (p(l, r]))}{\textcolor{blue}{A_{\Sigma(l, r]}}}\right) && \text{by definition of $A$ and $B$}
\\ 
&= \sum_{s \in [S]\cap 3\mathbb{Z}}\left(\sum_{(l, r] \in s \text{ s.t.} \abs{p(r) - p(l)} < r - l}{\frac{B_{\Sigma(l, r]}}{w(p)}} \cdot \frac{\md_{M ^ 3}(A(l, r])}{A_{\Sigma(l, r]}}\right) && \text{due to \eqref{eq:lcstreerdeq0}} \\
&= \frac{1}{w(p)}{\sum_{s \in [S]\cap 3\mathbb{Z}}\sum_{(l, r] \in s}{\rd_{M ^ 3}(A(l, r], B(l, r])}} && \text{by definition of $\rd$} \\
&= \frac{1}{w(p)}\treerd_{M ^ 3, S / 3}(A, B) && \text{by definition of $\treerd$}. \\
\end{align*}
To bound this total sum, we can apply \cref{lemma:treerd} on $A, B$, but we need to make sure that the $B$-to-$A$ ratio satisfies the bound in the lemma. We call a column $x$ ``dispensable'' if there exists an recursion interval $(l, r] \ni x$ such that $w(p(l, r]) < (r - l) / 2^{\log^{0.01}(n)}$ (i.e., at most a $1 / 2^{\log^{0.01}(n)}$ fraction of $(l, r]$ contributes to the total length of $p$). Since the total sum of sizes of recursion intervals is at most $O(Sn) = O(n \log n)$, there are at most $n / 2^{\Omega\left(\log^{0.01}(n)\right)}$ dispensable columns before which the weight of $p$ is non-zero, which can be ignored since this number is $o\left(\frac{w(p)}{M \log ^ {0.02} {(n)}}\right)$. Notice that for every interval $\left((i - 1) \cdot {\mathcal{B}}_x, i \cdot {\mathcal{B}}_x\right]$, the columns inside it are either all dispensable or all indispensable. We define a new non-negative sequence $B' = \{b^{'}_i\}$ whose $(i \cdot \mathcal{B}_x)$-th element is
\[
b^{'}_{i \cdot \mathcal{B}_x} = \begin{cases}
    b_{i \cdot \mathcal{B}_x} & \text{if every $x \in \left((i - 1) \cdot {\mathcal{B}}_x, i \cdot {\mathcal{B}}_x\right]$ is indispensable} \\
    a_{i \cdot \mathcal{B}_x} & \text{if every $x \in \left((i - 1) \cdot {\mathcal{B}}_x, i \cdot {\mathcal{B}}_x\right]$ is dispensable}
\end{cases}.
\]
Since the diagonal edges of the Rotated LCS Grid Graph have a weight of $0$, we have $b_{i \cdot \mathcal{B}_x} \le a_{i \cdot \mathcal{B}_x}$ and thus $b_{i \cdot \mathcal{B}_x} \le b^{'}_{i \cdot \mathcal{B}_x}$. Therefore 
\[
\treerd_{M ^ 3, S / 3}(A, B) \le 
\treerd_{M ^ 3, S / 3}(A, B').
\]
Moreover, it is not hard to see that for every recursion interval $(l, r]$ we have
\[
\frac{1}{2^{\log^{0.01}(n)}} A_{\Sigma(l, r]} \le B'_{\Sigma(l, r]} \le A_{\Sigma(l, r]}.
\]
Thus, we apply \cref{lemma:treerd} on (the defined entries of) $A, B'$ with $\alpha = \frac{1}{2^{\log^{0.01}(n)}}$ to obtain
$$\treerd_{M ^ 3, S / 3}(A, B) \le \treerd_{M ^ 3, S / 3}(A, B') = O\left(\sum_{x}{b^{'}_x}\left(S ^ {0.51}\log ^ {0.01}(n)\right) + n / 2 ^ {\log_M ^ {0.015}(n)}\right),$$
Where the $n / 2 ^ {\log_M ^ {0.015}(n)}$ term is from
$$\underbrace{{2\mathcal{B}}_x}_{\text{max absolute value of $A$}} \cdot \underbrace{\left(n / {\mathcal{B}}_x\right)}_{\text{number of indices that are multiples of $\mathcal{B}_x$}} = O\left(n / 2 ^ {\log_M ^ {0.015}(n)}\right).$$
Since the number of dispensable columns is at most $n / 2^{\Omega\left(\log^{0.01}(n)\right)}$ and $w(p) = n / 2 ^ {o\left(\log ^ {0.01} (n)\right)}$, hence $\sum_{x}{b^{'}_x} = (1 + o(1))w(p)$ and $n / 2 ^ {\log_M ^ {0.015}(n)} \ll w(p)$. Thus we have
$$\frac{1}{w(p)}{\sum_{s \in [S]\cap 3\mathbb{Z}}\sum_{(l, r] \in s}{\rd_{M ^ 3}(A(l, r], B(l, r])}} = O(S ^ {0.51}\log ^ {0.01}(n)).$$
Therefore, by Markov's inequality, the expected number of $(l, r] \ni x$ failing \eqref{eq:nohit}
is at most $O(M ^ 3S ^ {0.51}\log^{0.01}(n)) = o\left(\frac{S^{0.98}}{M\log ^ {0.02}{(n)}}\right)$. Since we sample active scales with probability $\Theta(1 / S^{0.98})$, with probability $o\left(\frac{1}{M\log ^ {0.02}{(n)}}\right)$, every interval $(l, r] \ni x$ on an active scale $s$ satisfies \eqref{eq:nohit} (and thereby \eqref{eq:nohitraw} due to symmetry of the signs).
\end{proof}

Now we have finished proving \cref{lemma:regularlcs} and its ingredients.
In \cref{sec:subsamplinglcs}, we will present how we can approximate the length of the longest $\tilde{S}$-regular path $\hat{p}$ with a small multiplicative error with probability at least $3 / 4$:
\begin{restatable}{lemma}{lemmasubsamplinglcs} \label{lemma:subsamplinglcs}
Let $\tilde{S} \subset [S]$ be such that each scale that is a multiple of $3$ is chosen u.a.r.~with probability $1 / S ^ {0.98}$. Let $\hat{p}$ be the longest $\tilde{S}$-regular $\vertex{0}{n_X}$-$\vertex{n_X + n_Y}{n_Y}$ path with $w(\hat{p}) = n / 2 ^ {o\left(\log ^ {0.01} (n)\right)}$. With probability at least $3 / 4$, we can $(1 \pm o(1))$-approximate $w(\hat{p})$ in $o(n ^ {0.09})$ time by querying only $\frac{n ^ 2{\phi_y} ^ 2}{{{\mathcal{B}}_x} ^ 2} / 2 ^ {\Omega\left(\log ^ {0.01} (n)\right)}$ edge weights.
\end{restatable}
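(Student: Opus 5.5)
The plan mirrors the proof of \cref{lemma:subsampling} for Edit Distance, with $\min$ replaced by $\max$ and the roles of soundness and completeness swapped. The estimate is defined recursively, analogously to \eqref{eq:whest}. At scale $0$ we query $\ew$. On passive scales we take the longest path in the Scale-$(s-1)$ Sparsified Grid Graph. On an active scale $s$ we set $\whest[l,y_l,r,y_r] := \min\bigl(2\sum_i \eta_{l,r,i}\whest[{\anc}_{i-1},\overline{y}_{i-1},{\anc}_i,\overline{y}_i],\ r-l\bigr)$ whenever the sub-estimates are balanced, and $0$ (``give up'') otherwise.

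The back-up value $0$ plays the role that $\appr$ played for ED: it is always sound for a maximization problem. This is the point flagged in the overview. Consequently no constant-factor approximation is available for outlier detection, and I would detect outliers using the computed sub-estimates themselves. Concretely, we give up if some sampled $\whest[{\anc}_{i-1},\cdot,{\anc}_i,\cdot]$ exceeds $\frac{\log\log n}{M}$ times twice the sampled sum. If this self-referential test turns out to be awkward, I would fall back on a one-sided test: cap each summand at $\frac{\log\log n}{M}\cdot I_s$-relative thresholds, which can only lower the estimate.

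The analysis then splits into three lemmas, as for ED.

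\textbf{Soundness} (no overestimate) is now the side that needs \cref{claim:directed}. Define $\mathcal{A}_s$ as the longest $\tilde{S}$-regular path in $\mathcal{G}_s$, so that $\mathcal{A}_0 = w(\hat p)$ and $\mathcal{A}_S$ is the output. Passive scales do not change $\mathcal{A}_s$. On active scales, Hoeffding (\cref{lemma:hoeffding}) applied to balanced summands bounds the per-interval overestimate by $o(w/\log^{0.02} n)$, except with probability $\exp(-\omega(\log^{0.02} n))$. In that failure case the overestimate is at most $I_s$, thanks to the clamp at $r-l$. The bad intervals are then handled by building the layered 0/$I_s$ graph and applying \cref{claim:directed} with exactly the same parameters as in the ED soundness proof. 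This gives total error $n\exp(-\Omega(\log^{0.02}n)) = o(w(\hat p))$.

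\textbf{Completeness} (no underestimate) follows the path $\hat p$ itself, as in the ED completeness proof, via $\mathcal{P}_s$. There are three error types:
\begin{itemize}
\item Hoeffding error, $o(w(\hat p))$ in total over the $|\tilde S| = O(S^{0.02})$ active scales;
\item rare failures, with total $|\tilde S|\, w(\hat p)\exp(-\omega(\log^{0.02}n))$;
\item intervals where we gave up.
\end{itemize}
For the last type, giving up loses $w(\hat p(l,r])$, and I would show this forces $\md_M(W(\hat p(l,r])) = \omega(w(\hat p(l,r])/M)$. Summing, the loss is $o(M\,\treemd_{M,\tilde S}(W(\hat p)))$, so I need an LCS analogue of \cref{claim:technicalityed}. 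Part I applies \cref{lemma:totalmeandeviation} to $W(p)$ for the optimal $p$: entries lie in $[0,\mathcal{B}_x]$ and sum to $w(p)$, so $\treemd_{M,\tilde S}(W(p)) = O(w(p)S^{0.51}/S^{0.98}) = o(w(p)/M)$ in expectation over $\tilde S$. Part II uses \cref{lemma:regularlcs}: the rounded path differs in only $o(w(p)/(M\log^{0.02}n))$ columns, each of which changes $\treemd_{M,\tilde S}$ by $O(|\tilde S|)$. Finally, $w(\hat p) \ge (1-o(1))w(p)$ handles the normalization.

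\textbf{Efficiency} is identical to \cref{lemma:efficiencyed}. Each edge survives with probability $2^{-|\tilde S|}$, and every longest-path computation takes time linear in the size of a graph with $o(n^{0.045})$ edges.

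I expect the main obstacle to be the give-up rule. With only the self-computed sub-estimates available, those estimates are themselves noisy, so I must check two things. First, the completeness argument still only loses on genuinely imbalanced intervals of $\hat p$. Second, the relation between the sub-estimates and $w(\hat p(\cdot))$ holds up to $1\pm o(1)$, which I would establish inductively on scales together with completeness. If this coupling fails, I would instead compare against the unsampled passive-scale structure, using $r-l$ together with the balance of the true $W(\hat p)$ on the sampled half, again using the fact that dropping to $0$ is always sound.
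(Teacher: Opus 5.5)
\textbf{Soundness and efficiency are fine; completeness has a gap.} For soundness, fix any vector of sub-estimates. On the event that you do not give up, every sampled value is at most $2\log\log n$ times the true mean. So your output coincides with a clipped estimator to which Hoeffding applies unconditionally, and \cref{claim:directed} then carries over as in the ED proof.

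The problem is completeness. In the ED proof the give-up test uses $\appr$, which is fixed independently of the sampling and tracks $w(\hat p(\cdot))$ up to a constant. There, giving up genuinely certifies $\md_M(W(\hat p(l,r])) = \omega(w(\hat p(l,r])/M)$. Your test instead uses the computed sub-estimates, and your rule is not monotone in them. Two things can trigger a give-up at an interval where $\hat p$ is perfectly balanced, and you then lose all of $w(\hat p(l,r])$:
\begin{itemize}
\item A single inflated child estimate, caused by overfitting or a rare bad event inside the sub-interval. Soundness controls such inflation only in aggregate along paths, not per interval.
\item Uneven decay of several child estimates, e.g.\ children that themselves gave up.
\end{itemize}
The induction you propose to exclude this, namely that every sub-estimate along $\hat p$ is within $1\pm o(1)$ of its true weight, is false. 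Accuracy holds only in aggregate: imbalanced intervals may output $0$, and every interval has a small but nonzero Hoeffding failure probability. Those per-interval failures are exactly what propagates upward.

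\textbf{What is missing is a monotone way of handling outliers.} The paper replaces giving up by $\ESTSUM$, which clips each sampled value at $2\log^{0.02}(n)$ times the sampled mean. Because $\ESTSUM$ is monotone, $\whest$ is bounded below by an analysis-only recursion $\whest'$ that also caps each active-scale estimate at $w(\hat p(l,r])$. This makes errors one-sided, so inflated children can never hurt. Now take an interval where $\hat p$ is balanced, i.e.\ every $w(\hat p({\anc}_{i-1},{\anc}_i]) \le 2w(\hat p(l,r])/M$. An outlier among the capped sub-estimates can only appear once they already sum to $O(w(\hat p(l,r])/\log^{0.02} n)$. The cascaded loss there is therefore $O(w(\hat p)/\log^{0.02} n)$ per scale, hence $o(w(\hat p))$ over the $|\tilde S| = o(\log^{0.02} n)$ active scales. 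Genuinely imbalanced intervals are charged to $O(M\,\treemd_{M,\tilde S}(W(\hat p)))$ via \cref{claim:technicalitylcs}.

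\textbf{The threshold also matters.} With your $\log\log n$, the same accounting gives only $O(|\tilde S|\,w(\hat p)/\log\log n)$. This is not $o(w(\hat p))$, since $|\tilde S| = \Theta(S^{0.02}) \gg \log\log n$.

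\textbf{Your fallback does not help.} A cap of order $\frac{\log\log n}{M} I_s$ never binds, because every sub-estimate is at most $I_{s-1} = I_s/M$. You are then back to the naive estimator \eqref{eq:estasterisklcs}. With an outlier, it overestimates with constant probability, so \cref{claim:directed} cannot be applied.
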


Assuming~\cref{lemma:subsamplinglcs}, we now show how to finish the rest of the proof for \cref{lemma:gridgraphmax}. 
\lemmagridgraphmax*
\begin{proof}
    Let $\tilde{S} \subset [S]$ be such that each scale that is a multiple of $3$ is chosen u.a.r.~with probability $1 / S ^ {0.98}$. Let $p$ be the longest $\vertex{0}{{n_X}}$-$\vertex{n_X + {n_Y}}{n_Y}$ path. Let $\hat{p}$ be the longest $\tilde{S}$-regular $\vertex{0}{{n_X}}$-$\vertex{n_X + {n_Y}}{n_Y}$ path with $w(\hat{p}) = n / 2 ^ {o\left(\log ^ {0.01} (n)\right)}$. 
 
    From \cref{lemma:regularlcs}, with probability at least $19 / 20$, we have 
    \begin{equation*}
        w(\hat{p}) = \left(1 - o(1)\right)w(p).
    \end{equation*}
    
    From \cref{lemma:subsamplinglcs}, with probability at least $3 / 4$, we can obtain $(1 \pm o(1))$-approximate $w(\hat{p})$ by querying only $\frac{n ^ 2\phi_y}{{{\mathcal{B}}_x}^2} / 2 ^ {\Omega\left(\log ^ {0.01} (n)\right)}$ edge weights. 

    By the union bound, with probability at least $7 / 10$, we can $\left(1 \pm o(1)\right)$-approximate $w(p)$ in $o(n ^ {0.09})$ time by querying only $\frac{n ^ 2{\phi_y} ^ 2}{{{\mathcal{B}}_x} ^ 2} / 2 ^ {\Omega\left(\log ^ {0.01} (n)\right)}$ edge weights. By scaling down the answer by $(1 - o(1))$ this becomes a $(1 - o(1))$-approximation. By repeating the procedure $\Theta(\log (n))$ times and taking the median, we can approximate $w(p)$ with a multiplicative error of $\left(1 - o(1)\right)$ with high probability in $o\left(n ^ {0.09} \log (n)\right) = o(n ^ {0.1})$ time by querying only $\frac{n ^ 2{\phi_y} ^ 2\log (n)}{{{\mathcal{B}}_x} ^ 2} / 2 ^ {\Omega\left(\log ^ {0.01} (n)\right)} = \frac{n ^ 2{\phi_y} ^ 2}{{{\mathcal{B}}_x} ^ 2} / 2 ^ {\Omega\left(\log ^ {0.01} (n)\right)}$ edge weights. 
\end{proof}
\subsection{Algorithm Description} \label{sec:subsamplinglcs}
Let $\tilde{S} \subset [S]$ be such that each scale that is a multiple of $3$ is chosen u.a.r.~with probability $1 / S ^ {0.98}$. Our goal now is to approximate the length of the longest $\tilde{S}$-regular $\vertex{0}{n_X}$-$\vertex{n_X + n_Y}{n_Y}$ path $\hat{p}$:
\lemmasubsamplinglcs*
We can design an $M$-way $S$-scale divide-and-conquer scheme to compute $\est[l, y_l, r, y_r]$ which is the length of the longest $\tilde{S}$-regular $\vertex{l}{y_l}$-$\vertex{r}{y_r}$ path where $(l, r]$ is a scale-$s$ interval. 
\begin{itemize}
    \item If $s = 0$, we obtain $\est[l, y_l, r, y_r]$ using an edge weight query;
    \item Suppose that $s > 0$ and $s \notin \tilde{S}$. We find $\est[l, y_l, r, y_r]$ by maximizing over every anchor path defined by the sequence $\hat{Y} = (y_l  = \hat{y}_0, \hat{y}_1, \ldots, \hat{y}_M = y_r)$ where for $1 \le i \le M$, $\abs{\hat{y}_i - \hat{y}_{i - 1}} \le \frac{r - l}{M}$, visiting valid vertices $\left(\vertex{{\anc}_i}{\hat{y}_i}\right)_i$ in the sparsified grid graph. We use $\anchor$ to denote the set of possible anchor paths $\hat{Y}$.
    \item Suppose $s \in \tilde{S}$. Recall that we regularize the row at column ${\anc}_i$ to $\overline{y}_i := (I_{s - 1} / \phi_y) \cdot \round{\frac{y_l + i \cdot \frac{y_r - y_l}{M}}{I_{s - 1} / \phi_y}}$.
    $\est[l, y_l, r, y_r]$ is simply the sum of $\est[{\anc}_{i - 1}, \overline{y}_i, {\anc}_i, \overline{y}_i]$ over $1 \le i \le M$.  
\end{itemize} 
The full recursion is as follows:
\begin{equation*}
\est[l, y_l, r, y_r] := 
\begin{cases}
	\ew[l, y_l, r, y_r], & \text{if $s = 0$} \\
	\max\limits_{\hat{Y} \in \anchor}\sum\limits_{i = 1} ^ M{\est[{\anc}_{i - 1}, \hat{y}_{i - 1}, {\anc}_i, \hat{y}_i]}, & \text{if $s \notin \tilde{S}$} \\
	\sum\limits_{i = 1} ^ M{\est[{\anc}_{i - 1}, \overline{y}_i, {\anc}_i, \overline{y}_i]}, & \text{if $s \in \tilde{S}$}
\end{cases}.
\end{equation*}

Unfortunately, this scheme still queries too many edge weights. The key to efficiency is to use a \emph{sub-sampling} technique for active scales. We first show a naitve attempt that can guarantee a small additive error (we later improve it to also obtain a bound on the multiplicative error). Notice that if $s \in \tilde{S}$, since no overfitting can happen when the path is fixed, we can just sample half of these values and double our sum from the sample. Thus, for every scale-$s$ interval $(l, r]$, we first uniformly randomly sample $M$ 0-1 variables $\eta_{l, r, 1 \ldots M}$ containing exactly $M / 2$ 1's. We use the following scheme:
\begin{equation} \label{eq:estasterisklcs}
\whest ^ *[l, y_l, r, y_r] := 
\begin{cases}
	\ew[l, y_l, r, y_r], & \text{if $s = 0$} \\
	\max\limits_{\hat{Y} \in \anchor}\sum\limits_{i = 1} ^ M{\whest ^ *[{\anc}_{i - 1}, \hat{y}_{i - 1}, {\anc}_i, \hat{y}_i]}, & \text{if $s \notin \tilde{S}$} \\
	2\sum\limits_{i = 1} ^ M{\eta_{l, r, i}\whest ^ *[{\anc}_{i - 1}, \overline{y}_i, {\anc}_i, \overline{y}_i]}, & \text{if $s \in \tilde{S}$}
\end{cases},
\end{equation}
Note that for every active scale, half of the sub-intervals $({\anc}_{i - 1}, {\anc}_i]$ for which $\eta_{l, r, i} = 0$ get discarded. Over $\abs{\tilde S}$ active scales, only $2 ^ {-\abs{\tilde S}}$ fraction of the intervals remain, which is a significant speed-up.

Unfortunately, the scheme with \eqref{eq:estasterisklcs} is not enough for a small multiplicative error, as $\whest ^ *[{\anc}_{i - 1}, \overline{y}_{i - 1}, {\anc}_i, \overline{y}_i]$ may be imbalanced over $1 \le i \le M$. In the most extreme example, if $\whest ^ *[{\anc}_{i - 1}, \overline{y}_{i - 1}, {\anc}_i, \overline{y}_i] \ne 0$ for only a single \emph{outlier} $i$, then \eqref{eq:estasterisklcs} either gets us an estimate that is $0$ or an estimate that is twice the actual value, depending on the value of $\eta_{l, r, i}$. 

Due to the bound on tree total deviation, the weight distribution on the longest path should intuitively not be imbalanced everywhere and there are not too many outliers in total. Thus we can afford to have some large error when there are outliers. 

\xiao{Motivation} However, if our estimator consistently overestimates the answer (by a large factor) when there is an outlier, a path can overfit our estimator by specifically aiming for the regions that contain a large number of outliers. This means that our estimator has to remain sound (i.e., does not overestimate) even at the presence of outliers. To do this, we clip down the values of the outliers before obtainin our estimate. Given a non-negative sequence $A = (a_1, a_2, \ldots, a_M)$, we design a way to approximate $\sum_i{a_i}$ when no $a_i$ is too large compared to the mean of $A$ using the clipping idea above:
\begin{restatable}{claim}{claimonesidedlcs} \label{claim:onesidedlcs}
    Let $A = (a_1, a_2, \ldots, a_M)$ be a sequence of non-negative numbers. Suppose that we only know $A_I$, that is, the sub-sequence $(a_i)_{i \in I}$ for a uniformly randomly chosen subset $I \subset [M]$ of $M/2$ indices.
    
    We obtain a non-negative estimate $\ESTSUM(A_I)$ for the sum $\sum\limits_{i = 1} ^ M{a_i}$ in the following way:
    \begin{enumerate}
        \item Estimate the mean of $A$ using $\tilde{A} := \sum_{i \in I}{a_i} / \left(\frac{M}{2}\right)$;
        \item Estimate the sum of $A$ by clipping outliers down: $\ESTSUM(A_I) := 2\sum_{i \in I}{\min\left(a_i, 2\log ^ {0.02}{(n)}\tilde{A}\right)}$.
    \end{enumerate}
    The pseudocode for computing $\ESTSUM(A_I)$ is shown in \cref{alg:eta}.

    Then with probability $1 - \exp\left(-\omega\left(\log ^ {0.02} (n)\right)\right)$, the estimate we obtain satisfies
    \begin{description}
        \item[Upper bound] $\ESTSUM(A_I) \le \left(1 + O\left(1 / \log ^ {0.02} (n)\right)\right)\sum\limits_{i = 1} ^ M{a_i}$;
        \item[Lower bound if there are no outliers] Let the true mean of $A$ be $\overline{A} := \sum_i{a_i} / M$. If we have $\max_i{a_i} \le \log ^ {0.02}{(n)}\overline{A}$, then 
    $$\ESTSUM(A_I) \ge \left(1 - O\left(1 / \log ^ {0.02} (n)\right)\right) \cdot \left(\sum\limits_{i = 1} ^ M{a_i}\right).$$
    \end{description}
\end{restatable}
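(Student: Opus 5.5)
Let $L := \log^{0.02}(n)$, $\Sigma := \sum_i a_i$ and $\overline{A} = \Sigma/M$. The plan is to first show that the sample mean $\tilde{A}$ is close to $\overline{A}$ with the required probability, and then derive the two bounds deterministically on that good event, with one further concentration step for the upper bound. One complication is that the $a_i$ can be arbitrarily large, so Hoeffding (\cref{lemma:hoeffding}) cannot be applied to the raw values. Clipping is exactly what makes the estimate controllable, so the concentration arguments should be applied to \emph{clipped} values at a deterministic threshold tied to $\overline{A}$. Sampling $M/2$ indices without replacement is handled as usual: Hoeffding's bound holds for sampling without replacement (Hoeffding's original reduction), or one can couple with i.i.d.\ Bernoulli$(1/2)$ inclusion and condition on the count.

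\textbf{Lower bound (no outliers).} Assume $\max_i a_i \le L\overline{A}$. Then each $a_i \in [0, L\overline{A}]$, and $2\sum_{i\in I} a_i$ is a sum of bounded terms with mean $\Sigma$. Hoeffding with $c = 2L\overline{A}$, $N = M/2$ and deviation $\delta = \Sigma/L = M\overline{A}/L$ gives failure probability $\exp(-\Omega(M/L^4))$. Since $M = \Theta(\log^{0.109}(n))$ and $L^4 = \log^{0.08}(n)$, this is $\exp(-\Omega(\log^{0.029}(n))) = \exp(-\omega(L))$. On the good event $\tilde{A} = (2\sum_{I} a_i)/M \ge (1 - 1/L)\overline{A} \ge \overline{A}/2$, so the threshold satisfies $2L\tilde{A} \ge L\overline{A} \ge \max_i a_i$. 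Hence nothing is clipped, and $\ESTSUM(A_I) = 2\sum_I a_i \ge (1 - 1/L)\Sigma$.

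\textbf{Upper bound (general).} Let $b_i := \min(a_i, 4L\overline{A})$. First I would show that with high probability $\tilde{A} \le 2\overline{A}$. If this fails, then either the clipped sample sum is large, which has small probability by Hoeffding on the $b_i$ (whose mean is at most $\Sigma/M$ per element), or a sampled index with $a_i > 4L\overline{A}$ contributes. The second case is the delicate point. Rather than fight it, I would argue directly: on the event $\tilde{A} \le 2\overline{A}$, the clipping threshold is at most $4L\overline{A}$, so $\ESTSUM \le 2\sum_I b_i$. On the complementary event, the threshold is $2L\tilde{A}$ while $\min(a_i, 2L\tilde{A}) \le a_i$. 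To avoid the case split, note that $\ESTSUM \le 2\sum_I \min(a_i, 2L\tilde{A})$ is a monotone function. Bounding $\ESTSUM$ via $\min(a_i, T) \le b_i + \mathbbm{1}[T > 4L\overline{A}]\cdot T$ looks messy, so instead I would use the simpler fact that clipping alone keeps $\ESTSUM$ bounded by the clipped sum. Concretely, each clipped term is at most $2L\tilde{A}$, and the number of indices with $a_i > 4L\overline{A}$ is below $M/(4L)$ by Markov. Combining these, the sampled outlier terms contribute at most $(M/(4L))\cdot 2\cdot 2L\tilde{A} = M\tilde{A}$ in the worst case. That is too crude, so the final argument should compare against $\sum_I a_i$ itself via a fixed-point inequality. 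Writing $\tilde{A}M/2 = \sum_I a_i$, one checks that $\ESTSUM \le 2\sum_I b_i + (\#\text{sampled outliers})\cdot 4L\tilde{A}$, and these quantities should then be bounded.

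\textbf{Main obstacle.} The hardest step is the upper bound in the presence of large outliers, where $\tilde{A}$ itself is inflated. My expectation is that $\sum_I b_i \le (1 + 1/L)\Sigma/2$ holds with high probability by Hoeffding, since the $b_i$ are bounded by $4L\overline{A}$, which gives the same $\exp(-\omega(L))$ tail as before. It then remains to control outliers whose total mass is only a $1/L$ share. The key question is whether sampled outliers can increase the clipped total by more than an $O(1/L)$ fraction. This is the step I would scrutinize carefully, possibly by bounding the number of sampled outliers by $O(M/L)$ and checking the resulting $O(1/L)$ relative error.
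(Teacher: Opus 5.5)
Your lower bound is correct and matches the paper's argument. You merge its two Hoeffding steps into one: on the event $2\sum_{i\in I}a_i \ge (1-1/L)\Sigma$ you get $\tilde{A}\ge \overline{A}/2$, so nothing is clipped. The upper bound, however, is left unproved, and the obstacle you identify is not real. You try to show $\tilde{A}\le 2\overline{A}$ ``with high probability'' and worry that sampled outliers could inflate $\tilde{A}$. In fact the inequality holds for every realization of $I$. Since all $a_i\ge 0$ and $I\subseteq[M]$, we have $\sum_{i\in I}a_i\le\sum_{i=1}^M a_i = M\overline{A}$, so $\tilde{A}=\sum_{i\in I}a_i/(M/2)\le 2\overline{A}$. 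The complementary case you call delicate is empty.

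With this observation your own sketch closes at once, and this is exactly the paper's proof. The threshold always satisfies $2L\tilde{A}\le 4L\overline{A}$. Hence $\ESTSUM(A_I)\le 2\sum_{i\in I}b_i$ holds deterministically, with $b_i=\min(a_i,4L\overline{A})\in[0,4L\overline{A}]$. Hoeffding on these bounded terms with deviation $\Theta(\Sigma/L)$ gives $2\sum_{i\in I}b_i\le\sum_{i=1}^M b_i+O(\Sigma/L)\le(1+O(1/L))\Sigma$, except with probability $\exp(-\Omega(M/L^4))=\exp(-\omega(L))$. Large outliers need no separate accounting: they are already absorbed into the $b_i$, whose total is at most $\Sigma$.

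The fallback you propose, counting the sampled outliers and charging each the threshold, could not work. Up to $M/(4L)$ outliers times a threshold of order $L\overline{A}$ is of order $M\overline{A}=\Sigma$. That is a constant-factor error rather than $O(1/L)$, as your own calculation indicated. As written, the proposal therefore does not establish the upper bound; the deterministic bound on $\tilde{A}$ is the missing idea.
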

\begin{algorithm} 
    \caption{Computation of $\ESTSUM(A_I)$ via Clipping} \label{alg:eta}
    \begin{algorithmic}[1]
        \Function{$\ESTSUM$}{$A_I$}
            \Statex \textbullet~\textbf{requirement 1:} $I \subset [M]$ is a uniformly randomly chosen subset of $M/2$ indices
            \Statex \textbullet~\textbf{requirement 2:} $A_I$ is the sub-sequence $(a_i)_{i \in I}$ of length-$M$ sequence $A$ of non-negative numbers
            \Statex \textbullet~\textbf{returns:} an estimate of $\sum\limits_{i = 1} ^ M{a_i}$ satisfying the conditions in \cref{claim:onesidedlcs}
            \State $\tilde{A} \gets \sum_{i \in I}{a_i} / \left(\frac{M}{2}\right)$
            \State \Return $2\sum_{i \in I}{\min\left(a_i, 2\log ^ {0.02}{(n)}\tilde{A}\right)}$
        \EndFunction
    \end{algorithmic}
\end{algorithm}
Using the sub-routine in \cref{claim:onesidedlcs}, we compute the estimate in the following way:
\begin{equation*}
\whest[l, y_l, r, y_r] := 
\begin{cases}
	\ew[l, y_l, r, y_r], & \text{if $s = 0$} \\
	\max\limits_{\hat{Y} \in \anchor}\sum\limits_{i = 1} ^ M{\whest[{\anc}_{i - 1}, \hat{y}_{i - 1}, {\anc}_i, \hat{y}_i]}, & \text{if $s \notin \tilde{S}$} \\
	\ESTSUM\left(\left(\whest[{\anc}_{i - 1}, \overline{y}_i, {\anc}_i, \overline{y}_i]\right)_{i \mid \eta_{l, r, i} = 1}\right), & \text{if $s \in \tilde{S}$}
\end{cases},
\end{equation*}

To efficiently maximize over anchor path $\hat{Y}$ on passive scale $s$, consider the following ``Scale-$(s - 1)$ Sparsified Grid Graph'':
\begin{definition} [Scale-$(s - 1)$ Sparsified Grid Graph] \label{def:scalesparselcs}
    The Scale-$(s - 1)$-Sparsified Grid Graph defined on $\whest$ is as follows:
    \begin{description}
        \item[Vertices] We only consider vertices $\vertex{x}{y}$ in the Sparsified LCS Grid Graph with the additional constraint that $I_{s - 1} \mid x$;
        \item[Edges] There is an edge from $\vertex{x}{y}$ to $\vertex{x + I_{s - 1}}{y'}$ (if both exist) such that $\abs{y' - y} \le I_{s - 1}$ with a weight of $\whest[x, y, x + I_{s - 1}, y']$.
    \end{description}
\end{definition} 
To obtain $\max\limits_{\hat{Y} \in \anchor}\sum\limits_{i = 1} ^ M{\whest[{\anc}_{i - 1}, \hat{y}_{i - 1}, {\anc}_i, \hat{y}_i]}$, we can simply find the longest $\vertex{l}{y_l}$-$\vertex{r}{y_r}$ path. 

We give the full pseudocode for computing $\whest$ in \cref{alg:whestlcs}. We use the memoization technique to compute $\whest$ values on demand (i.e., Line \ref{line:memoizationstartlcs}-\ref{line:memoizationendlcs} and Line \ref{line:memoizationstart2lcs}-\ref{line:memoizationend2lcs}).
\begin{algorithm} 
    \caption{Computation of $\whest$ for LCS} \label{alg:whestlcs}
    \begin{algorithmic}[1]
        \Function{$\COMESTLCS$}{$l, y_l, r, y_r, s$}
            \Statex \textbullet~\textbf{requirement 1:} $\vertex{l}{y_l}$ and $\vertex{r}{y_r}$ are both vertices in the sparsified grid graph
            \Statex \textbullet~\textbf{requirement 2:} $\vertex{r}{y_r}$ is reachable from $\vertex{l}{y_l}$
            \Statex \textbullet~\textbf{requirement 3:} $s \in [S] \cup \{0\}$
            \Statex \textbullet~\textbf{requirement 4:} $(l, r]$ is a scale-$s$ interval
            \Statex \textbullet~\textbf{returns:} an estimate $\whest[l, y_l, r, y_r]$ for the longest $\vertex{l}{y_l}$-$\vertex{r}{y_r}$ path
            \If {$s = 0$}
                \Return $\ew[l, y_l, r, y_r]$
            \ElsIf {$s \notin \tilde{S}$}
                \State $\ver \gets \emptyset $\Comment{Initialize vertices for Scale-$(s - 1)$ Sparsified Grid Graph}
                \State $\e \gets \emptyset $\Comment{Initialize edges for Scale-$(s - 1)$ Sparsified Grid Graph}
                \For {$x \in $ multiples of $I_{s - 1}$ in $(l, r]$}
                    \For {$y \in $ multiples of $I_{s - 1} / \phi_y$ in $[0, n]$}
                        \State add $\vertex{x}{y}$ to $\ver$
                        \If {$x + I_{s - 1} \le r$}
                            \For {$y' \in $ multiples of $I_{s - 1} / \phi_y$ in $[0, n]$}
                                \If {$\abs{y' - y} \le I_{s - 1}$}
                                    \If {$\whest[x, y, x + I_{s - 1}, y'] = \perp$} \Comment{Estimate not yet computed} \label{line:memoizationstartlcs}
                                        \State $\whest[x, y, x + I_{s - 1}, y'] \gets \COMESTLCS\left(x, y, x + I_{s - 1}, y', s - 1\right)$ \Comment{Compute estimate for sub-interval}
                                    \EndIf \label{line:memoizationendlcs}
                                    \State add an edge from $\vertex{x}{y}$ to $\vertex{x + I_{s - 1}}{y'}$ with weight $\whest[x, y, x + I_{s - 1}, y']$ to $\e$
                                \EndIf
                            \EndFor
                        \EndIf
                    \EndFor
                \EndFor
                \State $G \gets \langle \ver, \e \rangle$ \Comment{Create the Scale-$(s - 1)$ Sparsified Grid Graph}
                \State \Return length of longest $\vertex{l}{y_l}$-$\vertex{r}{y_r}$ path in $G$ via standard techniques for acyclic graphs
            \Else
                \State $\overline{y}_i \gets (I_{s - 1} / \phi_y) \cdot \round{\frac{y_l + i \cdot \frac{y_r - y_l}{M}}{I_{s - 1} / \phi_y}}$ for $1 \le i \le M$
                \State ${\anc}_i \gets l + i \cdot \frac{r - l}{M}$ for $1 \le i \le M$
                \For {$i \in [1, M]$}  
                    \If {$\eta_{l, r, i} = 1$} \Comment{$i$-th sub-interval is in sample}
                        \If {$\whest[{\anc}_{i - 1}, \overline{y}_{i - 1}, {\anc}_i, \overline{y}_{i}] = \perp$} \Comment{Estimate not yet computed} \label{line:memoizationstart2lcs}
                            \State $\whest[{\anc}_{i - 1}, \overline{y}_{i - 1}, {\anc}_i, \overline{y}_{i}] \gets \COMESTLCS\left({\anc}_{i - 1}, \overline{y}_{i - 1}, {\anc}_i, \overline{y}_i, s - 1\right)$ \Comment{Compute estimate for sub-interval}
                        \EndIf \label{line:memoizationend2lcs}
                    \EndIf
                \EndFor
                \State \Return $\ESTSUM\left(\left(\whest[{\anc}_{i - 1}, \overline{y}_i, {\anc}_i, \overline{y}_i]\right)_{\eta_{l, r, i} = i \mid i \in [M]}\right)$
            \EndIf
        \EndFunction
    \end{algorithmic}
\end{algorithm}

\subsection{Algorithm Analysis} \label{sec:analysislcs}
\subsubsection{Overview}
Our goal is to use our algorithm to prove \cref{lemma:subsamplinglcs}:
\lemmasubsamplinglcs*
The analysis of our algorithm proof consists of three parts:
\begin{restatable}[Our estimate is complete]{lemma}{lemmacompletelcs}
    Let $\hat{p}$ be the longest $\tilde{S}$-regular $\vertex{0}{n_X}$-$\vertex{n_X + n_Y}{n_Y}$ path.
     With probability at least $19 / 25$, $\whest[0, n_X, n_X + n_Y, n_Y] > \left(1 - o(1)\right)w(\hat{p})$.
\end{restatable}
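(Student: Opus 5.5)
We follow the structure of the completeness proof for Edit Distance, with inequalities reversed and the clipping estimator $\ESTSUM$ in place of the give-up rule. For each scale $s$, define
$$\mathcal{P}_s := \sum_{i=1}^{n/I_s}\whest[(i-1)I_s, \hat{p}((i-1)I_s), iI_s, \hat{p}(iI_s)].$$
This is the value the algorithm assigns to $\hat{p}$ when it is cut at scale-$s$ boundaries. Then $\mathcal{P}_0 = w(\hat{p})$, and $\mathcal{P}_S = \whest[0,n_X,n_X+n_Y,n_Y]$ is the output. Because $\whest$ at passive scales is a maximum over anchor paths and $\hat{p}$ is one feasible anchor path, we get $\mathcal{P}_s \ge \mathcal{P}_{s-1}$ whenever $s \notin \tilde{S}$. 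So only active scales can lose value, and it suffices to show that the total loss over active scales is $o(w(\hat{p}))$.

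Fix an active scale-$s$ interval $(l,r]$. Since $\hat{p}$ is $\tilde{S}$-regular, its anchors are exactly the $\overline{y}_i$. We apply \cref{claim:onesidedlcs} to $a_i := \whest[{\anc}_{i-1},\overline{y}_{i-1},{\anc}_i,\overline{y}_i]$. Inductively these $a_i$ are close to $w(\hat{p}({\anc}_{i-1},{\anc}_i])$, or at least the losses are accounted for at lower scales. The loss at $(l,r]$ splits into three types:
\begin{itemize}
\item Type I (outliers): if $\max_i a_i > \log^{0.02}(n)\overline{A}$, the loss is at most $\sum_i a_i$.
\item Type II (no outliers, claim succeeds): the loss is $O(\sum_i a_i/\log^{0.02}(n))$.
\item Type III (claim fails, probability $\exp(-\omega(\log^{0.02} n))$): the loss is at most $\sum_i a_i$.
\end{itemize}
Types II and III are handled as in the ED proof. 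Each active scale's intervals sum to about $w(\hat{p})$, and with probability $1-o(1)$ we have $|\tilde{S}| = o(\log^{0.02}n)$. Hence the total type II loss is $o(w(\hat{p}))$, and the expected type III loss is $|\tilde{S}|w(\hat{p})e^{-\omega(\log^{0.02}n)}$, which is $o(w(\hat{p}))$ by Markov's inequality.

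Type I needs an LCS analogue of \cref{claim:technicalityed}. When an outlier exists, $\md_M(W(\hat{p}(l,r]))$ is at least $\max_i a_i - \overline{A} = \Omega(\log^{0.02}(n)\,\overline{A})$, which equals $\Omega(\log^{0.02}(n)\,w(\hat{p}(l,r])/M)$. Therefore the type I loss at $(l,r]$ is $O(M\,\md_M(W(\hat{p}(l,r]))/\log^{0.02} n)$, and the total type I loss is $O(M\,\treemd_{M,\tilde{S}}(W(\hat{p}))/\log^{0.02}n)$. It remains to show $\treemd_{M,\tilde{S}}(W(\hat{p})) = o(w(\hat{p})/M)$ with probability at least $4/5$. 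This has two parts:
\begin{itemize}
\item Part I: apply \cref{lemma:totalmeandeviation} to $W(p)$ for the unregularized longest path $p$. This gives $\treemd_{M,S}(W(p)) = O(w(p)S^{0.51}) + n/2^{\log_M^{0.015}n}$. Since $p$ is independent of $\tilde{S}$ and $\tilde{S}$ keeps each multiple of $3$ with probability $S^{-0.98}$, Markov's inequality gives $O(w(p)S^{-0.47}) = o(w(p)/M)$.
\item Part II: by \cref{lemma:regularlcs}, with probability $19/20$ the rounded path differs from $p$ before only $o(w(p)/(M\log^{0.02}n))$ columns. Each such column changes $\treemd_{M,\tilde{S}}$ by $O(|\tilde{S}|)$.
\end{itemize}
We also need $w(\hat{p}) \ge (1-o(1))w(p)$, which holds by \cref{lemma:regularlcs}.

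The main obstacle is that \cref{lemma:regularlcs} bounds the rounded version of $p$, whereas the statement is about $\hat{p}$, the \emph{longest} regular path, which may be entirely different from that rounding. I would first try to avoid $\hat{p}$ altogether. Completeness only requires $\whest \ge (1-o(1))w(\hat{p})$. Since $\whest$ maximizes at passive scales, the telescoping argument above works for any $\tilde{S}$-regular path $q$. So I would run it on $q :=$ the rounded version of $p$, whose weight distribution Parts I and II control, to get $\whest \ge (1-o(1))w(q)$. Then $w(q) \ge (1-o(1))w(p) \ge (1-o(1))w(\hat{p})$ finishes the proof. A second, smaller issue is that the $a_i$ are estimates rather than true weights. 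I would handle this by defining the losses relative to $\mathcal{P}_{s-1}$ and using the outlier test on the $a_i$ directly, bounding $\md_M$ of $(a_i)$ by $\md_M(W(q))$ plus the already-accumulated lower-scale errors. Combining the probabilities $4/5$ and $1-o(1)$ via a union bound leaves at least $19/25$.
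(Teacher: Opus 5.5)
Your telescoping over $\mathcal{P}_s$, the three-way split of the active-scale loss, and Parts I--II match the paper. Running everything on the rounded path $q$ instead of $\hat p$ is sound: the passive-scale maximum and the forced anchors at active scales only need $q$ to be $\tilde S$-regular.

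\textbf{The gap.} The problem is the step you call the ``smaller issue''. The outlier test inside $\ESTSUM$ is applied to the estimates $a_i$, not to the true weights $W(q)$. Your repair via $\md_M(a)\le \md_M(W(q(l,r]))+2\sum_i\abs{a_i-w_i}$ does not close. An outlier among the $a_i$ only gives $\sum_i a_i < \frac{M}{\log^{0.02}(n)-1}\md_M(a)$, so accumulated lower-scale errors are multiplied by $\Theta(M/\log^{0.02}(n))=\Theta(\log^{0.089}(n))\to\infty$. After one lower active scale, your guarantees bound $\sum\abs{a_i-w_i}$ over a scale only by $O(w(q)/\log^{0.02}(n))$, which comes purely from the benign $(1\pm O(1/\log^{0.02}(n)))$ fluctuations of $\ESTSUM$. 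Your bound therefore gives $O(w(q)\log^{0.069}(n))$, which is vacuous, and iterating over active scales only compounds it. Also, because errors can have either sign, an \emph{over}estimated $a_j$ can itself be the outlier that gets clipped. Outliers are then explained neither by true imbalance nor by lower-scale losses.

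\textbf{The missing idea.} The paper uses an analysis-only clamp. Define $\whest'$ like $\whest$, except that at active scales $\whest'[l,q(l),r,q(r)] := \min\bigl(w(q(l,r]),\ESTSUM(\cdot)\bigr)$. Both $\ESTSUM$ and the passive-scale maximum are monotone in their inputs, so $\whest'\le\whest$. Every sub-estimate now satisfies $a'_i\le w_i := w(q({\anc}_{i-1},{\anc}_i])$, so errors are one-sided; this also makes your claim that each scale's sub-estimates sum to at most $w(q)$ literally true. Then split active intervals by the \emph{true} weights:
\begin{itemize}
\item If $\max_i w_i > 2w(q(l,r])/M$ (imbalanced), charge the loss $\le w(q(l,r]) = O(M\,\md_M(W(q(l,r])))$ to $M\treemd_{M,\tilde S}(W(q))$. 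Your Parts I--II show this is $o(w(q))$.
\item If the interval is balanced, an outlier among the $a'_i$ forces $\sum_i a'_i < \frac{M}{\log^{0.02}(n)}\max_i a'_i \le \frac{2w(q(l,r])}{\log^{0.02}(n)}$. The loss there is $O(w(q(l,r])/\log^{0.02}(n))$, which sums to $o(w(q))$ over the $o(\log^{0.02}(n))$ active scales.
\end{itemize}
What avoids the amplification is this non-linear use of one-sidedness: an outlier in a balanced interval means the estimated total has collapsed, rather than bounding $\md_M$ by a perturbation.
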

\begin{restatable}[Our estimate is sound]{lemma}{lemmasoundlcs}
    Let $\hat{p}$ be the longest $\tilde{S}$-regular $\vertex{0}{n_X}$-$\vertex{n_X + n_Y}{n_Y}$ path. Suppose $w(\hat{p}) = n / 2 ^ {o\left(\log ^ {0.01} (n)\right)}$.
    With probability $1 - o(1)$, $\whest[0, n_X, n_X + n_Y, n_Y] < \left(1 + o(1)\right)w(\hat{p})$;
\end{restatable}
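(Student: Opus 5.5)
The plan is to mirror the soundness proof for Edit Distance. The roles of over- and under-estimation swap: for LCS, soundness means the estimate does not overshoot $w(\hat{p})$.

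Define $\mathcal{G}_s$ as the Scale-$s$ Sparsified Grid Graph built on $\whest$ (\cref{def:scalesparselcs}, with $I_s$ in place of $I_{s-1}$). Let $\mathcal{A}_s$ be the length of the longest $\tilde{S}$-regular $\vertex{0}{n_X}$-$\vertex{n_X+n_Y}{n_Y}$ path in $\mathcal{G}_s$. Then $\mathcal{A}_0 = w(\hat{p})$ and $\mathcal{A}_S = \whest[0,n_X,n_X+n_Y,n_Y]$. We bound each increment $\mathcal{A}_s - \mathcal{A}_{s-1}$.

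On a passive scale, $\whest[l,y_l,r,y_r]$ is exactly the longest path in $\mathcal{G}_{s-1}$ between the endpoints, so $\mathcal{A}_s = \mathcal{A}_{s-1}$.

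On an active scale, fix any scale-$s$ interval and endpoints $(l,y_l,r,y_r)$, and set $X := \sum_i \whest[{\anc}_{i-1},\overline{y}_{i-1},{\anc}_i,\overline{y}_i]$. This $X$ is the value the unique regular route through $(l,r]$ attains in $\mathcal{G}_{s-1}$. By the upper-bound part of \cref{claim:onesidedlcs}, with probability $1-\exp(-\omega(\log^{0.02}(n)))$ we get $\whest[l,y_l,r,y_r] \le (1+O(1/\log^{0.02}(n)))X$. This guarantee needs no balance assumption, which is exactly why clipping was introduced. Call the estimate \emph{bad} when this fails. Even then, $\whest[l,y_l,r,y_r] \le 2\sum_i \whest[\cdots] \le O(I_s)$, since values are non-negative and no path gains more than its width.

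The longest regular path in $\mathcal{G}_s$, call it $q$, uses the scale-$s$ edges it traverses. Decompose these into good and bad edges. Replacing each good edge by its $M$ sub-edges in $\mathcal{G}_{s-1}$ gives a regular path there. Hence the total contribution of good edges is at most $(1+O(1/\log^{0.02}(n)))$ times a quantity bounded by $\mathcal{A}_{s-1}$, plus the contribution of bad edges. Thus
\[
\mathcal{A}_s \le \bigl(1+O(1/\log^{0.02}(n))\bigr)\mathcal{A}_{s-1} + (\text{max over paths of total bad-edge weight}).
\]
Iterating over the $|\tilde{S}| = O(S^{0.02}) = o(\log^{0.02}(n))$ active scales (which holds with probability $1-o(1)$) makes the product of multiplicative factors $1+o(1)$. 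The additive terms are therefore only amplified by a constant.

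The main obstacle is bounding the additive bad-edge term uniformly over all paths, since the maximizing path may chase the bad edges. For this, apply \cref{claim:directed} exactly as in the ED soundness proof. Take the layered graph on columns that are multiples of $I_s$, with weight $I_s$ on bad edges and $0$ elsewhere. The parameters are:
\begin{itemize}
\item $N = o(n^{0.03})$ and $h = O(n/I_s)$;
\item $p = \exp(-\omega(\log^{0.02}(n)))$;
\item $B, d = \exp(\Theta(\log^{0.02}(n)))$, from the same reachable-edge count.
\end{itemize}
Independence across layers holds because the $\eta_{l,r,\cdot}$ are drawn independently per interval. The claim then shows that, with probability $1-n^{-\omega(1)}$, every path collects at most $n\exp(-\Omega(\log^{0.02}(n)))$ bad weight per scale. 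A union bound over scales gives total additive error $n\exp(-\Omega(\log^{0.02}(n))) = o(w(\hat{p}))$, using the hypothesis $w(\hat{p}) = n/2^{o(\log^{0.01}(n))}$.

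One subtlety to check is that the "good" guarantee conditions only on sub-estimates determined at lower scales. Those are independent of $\eta_{l,r,\cdot}$, so \cref{claim:onesidedlcs} applies conditionally. Combining the two bounds yields $\whest[0,n_X,n_X+n_Y,n_Y] < (1+o(1))w(\hat{p})$ with probability $1-o(1)$.
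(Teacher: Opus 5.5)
Your proposal is correct and follows the paper's proof essentially step for step. It uses the same telescoping over $\mathcal{A}_s$ on the Scale-$s$ Sparsified Grid Graphs, with equality on passive scales, the upper-bound half of \cref{claim:onesidedlcs} for good estimates on active scales, and \cref{claim:directed} with identical parameters to bound the bad-estimate weight over all paths. Your explicit recurrence $\mathcal{A}_s \le (1+O(1/\log^{0.02}(n)))\mathcal{A}_{s-1} + (\text{bad weight})$ and your remark on conditional independence only spell out what the paper states more briefly.
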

\begin{restatable}[Our estimates can be computed efficiently]{lemma}{lemmaefficiencylcs}
    With probability $1 - o(1)$, to compute $\whest[0, n_X, n_X + n_Y, n_Y]$, the running time is at most $o(n ^ {0.09})$, and the total number of edge weight queries is at most $\frac{n ^ 2{\phi_y} ^ 2}{{{\mathcal{B}}_x} ^ 2} / 2 ^ {\Omega\left(\log ^ {0.01} (n)\right)}$.
\end{restatable}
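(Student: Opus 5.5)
The statement to prove is the last one: the efficiency lemma for LCS. With probability $1-o(1)$, computing $\whest[0,n_X,n_X+n_Y,n_Y]$ takes $o(n^{0.09})$ time and makes at most $\frac{n^2\phi_y^2}{\mathcal{B}_x^2}/2^{\Omega(\log^{0.01}(n))}$ edge-weight queries. I would follow the efficiency argument for Edit Distance almost verbatim. The only changes are that \cref{alg:whestlcs} uses $\ESTSUM$ instead of the clamp by $\appr$, and that active scales are drawn only from $[S]\cap 3\mathbb{Z}$. The proof splits into a running-time bound and a query bound.

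\textbf{Running time.} Because of memoization, each tuple $(l,y_l,r,y_r)$ is computed at most once. The number of such tuples is at most
\[
(n/\mathcal{B}_x)^2 \cdot (n\phi_y/\mathcal{B}_x)^2 = n^4\phi_y^2/\mathcal{B}_x^4 = o(n^{0.045}),
\]
using $\mathcal{B}_x=\tilde\Theta(n^{0.99})$ and $\phi_y=2^{O(\log^{0.009}n)}$. For a tuple on a passive scale, the cost is building the Scale-$(s-1)$ Sparsified Grid Graph (\cref{def:scalesparselcs}) and computing a longest path in this DAG. This takes time linear in its number of edges, and each edge corresponds to a tuple, so the cost is $o(n^{0.045})$. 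For a tuple on an active scale, we only compute the $\overline{y}_i$ and call $\ESTSUM$ on $M/2$ values, which costs $O(M)$. Summing over all tuples gives $o(n^{0.045})\cdot o(n^{0.045})=o(n^{0.09})$.

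\textbf{Queries.} A query happens only at scale $0$, for an edge from $\vertex{x}{y}$ to $\vertex{x+\mathcal{B}_x}{y'}$. I would argue that such an edge can be reached by the recursion only if, for every active scale $s$, the scale-$s$ interval $(l,r]\supseteq(x,x+\mathcal{B}_x]$ has $\eta_{l,r,i}=1$ for the sub-interval containing $x$. Indeed, passive scales recurse into every sub-interval, while active scales recurse only into sampled ones. For a fixed column block, each such event has probability exactly $1/2$, and the events are independent across scales because the $\eta_{l,r,\cdot}$ are sampled independently for each interval. Hence the expected number of distinct queried edges is $2^{-|\tilde S|}$ times the total edge count
\[
(n/\mathcal{B}_x)\cdot(n\phi_y/\mathcal{B}_x)\cdot\phi_y = \frac{n^2\phi_y^2}{\mathcal{B}_x^2}.
\]
Conditioning on $\tilde S$ is safe because $\tilde S$ is independent of the $\eta$'s.

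\textbf{Size of $\tilde S$.} The set $[S]\cap 3\mathbb{Z}$ has $S/3$ elements, each included with probability $S^{-0.98}$. By a Chernoff bound, with probability $1-o(1)$ we have $|\tilde S|=\Omega(S^{0.02})$. Since $S=\Theta(\log n/\log\log n)$, this gives $2^{-|\tilde S|}\le 2^{-\Omega(\log^{0.01}n)}$. Combining this with Markov's inequality on the query count, at a constant-factor slack absorbed into the $\Omega$, yields the claimed query bound with probability $1-o(1)$.

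\textbf{Main obstacle.} The delicate step is justifying that an edge is queried only when all of its enclosing active-scale sub-intervals are sampled. The subtlety is that passive-scale recursion enumerates all $(y,y')$ pairs, so many different row pairs share the same column block. The resolution is that survival depends only on the column position. Therefore the survival probability $2^{-|\tilde S|}$ applies uniformly to every edge in a block, whichever endpoints $\overline{y}_i$ the regularization happens to produce.
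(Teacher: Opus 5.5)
Your proposal is correct and follows the paper's proof essentially step for step. For the running time, it uses the same crude count of $(l,y_l,r,y_r)$ tuples multiplied by the linear-time longest-path cost per passive tuple. For the queries, it uses the same per-edge survival probability $2^{-|\tilde{S}|}$ times the total edge count $n^2\phi_y^2/\mathcal{B}_x^2$, with $|\tilde{S}|=\Omega(S^{0.02})$ holding with probability $1-o(1)$.

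One small fix is needed. Markov's inequality with only constant slack gives constant, not $1-o(1)$, success probability. You can either use an $\omega(1)$ slack such as $\log n$, which is still absorbed into $2^{-\Omega(\log^{0.01}(n))}$. Or you can note that each $\eta_{l,r,\cdot}$ has exactly $M/2$ ones, so once $\tilde{S}$ is fixed the number of surviving column blocks is deterministically $2^{-|\tilde{S}|}\,n/\mathcal{B}_x$.
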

By union bound, we can see that our algorithm is both correct and efficient with probability at least $3 / 4$, thereby proving \cref{lemma:subsamplinglcs}.
\subsubsection{Our Estimate Is Complete}
We first show that our algorithm is complete. Recall that our sub-routine for obtaining the estimate has an error that is contingent on there being no outliers (\cref{claim:onesidedlcs}). To show that outliers are rare on average, we will use the following ingredient, which shows that the deviation of the weight distribution of $\hat{p}$ is small on active layers:
\begin{restatable}[The deviation of the weight distribution of $\hat{p}$ is small on active layers]{claim}{claimtechnicalitylcs} \label{claim:technicalitylcs}    
    Assume that the length of the longest $\vertex{0}{n_X}$-$\vertex{n_X + n_Y}{n_Y}$ path is at least $n / 2 ^ {o\left(\log ^ {0.02} (n)\right)}$.
    Let $\tilde{S} \subset [S]$ be such that each scale that is a multiple of $3$ is chosen u.a.r.~with probability $1 / S ^ {0.98}$. Let the longest $\tilde{S}$-regular $\vertex{0}{n_X}$-$\vertex{n_X + n_Y}{n_Y}$ path be $\hat{p}$ with $w(\hat{p}) = \Omega(w(p))$. Recall that $W(\hat{p})$ is the sequence of length $n / {\mathcal{B}}_x$ whose $i$-th entry is:
    \begin{equation} \label{eq:defimprovement}
    W(\hat{p})_i := w(\hat{p}(i{\mathcal{B}}_x, (i + 1){\mathcal{B}}_x]),
    \end{equation}
    which captures the ``weight distribution'' of $\hat{p}$.

    With probability at least $4 / 5$, we have $\treemd_{M, \tilde{S}}(W(\hat{p})) = o\left(\frac{w(\hat{p})}{M}\right)$. 
\end{restatable}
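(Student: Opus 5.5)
I will follow the same two-part template as the proof of \cref{claim:technicalityed}. First I bound $\treemd_{M, \tilde{S}}(W(p))$ for the (unregularized) longest path $p$. Then I transfer the bound to $\hat{p}$ using the fact from \cref{lemma:regularlcs} that $p$ can be rounded to an $\tilde{S}$-regular path differing in only few columns. The assumption $w(\hat{p}) = \Omega(w(p))$ lets me state all bounds in terms of $w(p)$ and convert to $w(\hat{p})$ at the end.

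\textbf{Part I.} Apply \cref{lemma:totalmeandeviation} to the nonnegative sequence $W(p)$ of length $n/{\mathcal{B}}_x$. Its entries are at most ${\mathcal{B}}_x$ and its sum is $w(p)$. This gives
\[
\treemd_{M,S}(W(p)) = O\left(w(p) S^{0.51} + n/2^{\log_M^{0.015}(n)}\right).
\]
The second term is $o(w(p)/M)$ because $w(p) \ge n/2^{o(\log^{0.02}(n))}$; one should double-check that the exponent comparison $\log_M^{0.015}(n)$ versus $o(\log^{0.02} n)$ still works, and if not, use the $w(p) = n/2^{o(\log^{0.01} n)}$ regime assumed elsewhere. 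Since $p$ is independent of $\tilde{S}$, and each scale (here each multiple of $3$) is active with probability $1/S^{0.98}$, we get
\[
\Ex_{\tilde{S}}\left[\treemd_{M,\tilde{S}}(W(p))\right] \le \treemd_{M,S}(W(p))/S^{0.98}.
\]
Markov's inequality then gives, with probability $9/10$, a bound of $O(w(p) S^{0.51-0.98}) + o(w(p)/M) = o(w(p)/M)$, using $M = \Theta(\log^{0.109} n)$ and $S^{0.47} = \log^{0.47 - o(1)} n \gg M$.

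\textbf{Part II.} By \cref{lemma:regularlcs}, with probability $19/20$ there is an $\tilde{S}$-regular path $\hat{p}'$ that differs from $p$ in weight before only $o(w(p)/(M\log^{0.02}n))$ columns. Each such column changes one entry of $W$ by at most $1$. Each entry lies in one interval per active scale and affects $\md_M$ of that interval by $O(1)$, so it perturbs $\treemd_{M,\tilde{S}}$ by $O(|\tilde{S}|) = O(\log^{0.02} n)$ with probability $1-o(1)$. Hence $\treemd_{M,\tilde{S}}(W(\hat{p}')) = o(w(p)/M)$.

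The main obstacle is that $\hat{p}$ is the \emph{longest} $\tilde{S}$-regular path, not the rounding $\hat{p}'$ of $p$. Unlike the ED case, I cannot simply identify the two; in the ED claim this step was also glossed over. I would handle it the way the proof of \cref{lemma:regularlcs} handles $\mathcal{X}\ge\mathcal{Y}$: since $w(\hat{p}') \ge (1-o(1/M))w(p) \ge (1-o(1/M))w(\hat{p})$, the path $\hat{p}'$ is itself a near-optimal regular path. So I would either show that the downstream completeness argument only needs the bound for some near-optimal regular path (and take $\hat{p}$ to be $\hat{p}'$, with the tie-breaking chosen accordingly), or argue directly for $\hat{p}$. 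For the direct argument, apply Part I to $\hat{p}$ itself using Part II's bound on $w(\hat{p}) \le w(p)$ together with the sum bound. The difficulty is that $\hat{p}$ depends on $\tilde{S}$. I would attempt a union over the at most $|\tilde{S}|$ regularization stages, mirroring the martingale argument in \cref{lemma:regular}. A union bound over the failure events ($1/10 + 1/20 + o(1)$) gives success probability at least $4/5$.
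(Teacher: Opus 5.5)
Your Parts I and II are exactly the paper's proof. Part I applies \cref{lemma:totalmeandeviation} to $W(p)$ and then Markov over the random choice of $\tilde{S}$; the paper, too, silently uses the $w(p) = n/2^{o(\log^{0.01}(n))}$ regime rather than the stated $\log^{0.02}$ hypothesis to kill the tail term. Part II transfers the bound through \cref{lemma:regularlcs} at a cost of $O(|\tilde{S}|)$ per changed column.

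On your ``main obstacle'': the paper simply lets $\hat{p}$ denote the rounding of $p$ produced by \cref{lemma:regularlcs}, which is your option (a). That is sound, because the completeness argument goes through for any regular path of weight $(1-o(1))w(p)$. Your option (b) would not work. The longest $\tilde{S}$-regular path depends on all of $\tilde{S}$ at once, so neither the independence used in Part I nor the top-down martingale filtration of \cref{lemma:regular} is available for it.
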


Before proving the ingredient above, we first give our main analysis that our estimate is complete:
\lemmacompletelcs*
\begin{proof}
For each active scale $s$, we consider the following question: how much would we underestimate if there were no subsampling on active scales larger than $s$? We can see that the final estimate is at least:
$$\mathcal{P}_s := \sum\limits_{i = 1} ^ {n / I_s}{\whest[(i - 1) \cdot I_s, \hat{p}((i - 1) \cdot I_s), i \cdot I_s, \hat{p}(i \cdot I_s)]}.$$
We can see that $\mathcal{P}_0 = w(\hat{p})$, and that $\mathcal{P}_S = \whest[0, n_X, n_X + n_Y, n_Y]$, namely our final estimate for $w(\hat{p})$. We want to know how much smaller $\mathcal{P}_S$ can be compared to $\mathcal{P}_0$.

The important part of the upcoming analysis will be to analyze the error due to outliers. Namely, we want to argue on average, for an interval $(l, r]$ on an active scale, the estimates from sub-intervals, or $\whest[{\anc}_{i - 1}, \hat{p}({\anc}_{i - 1}), {\anc}_i, \hat{p}({\anc}_i)]$'s, are ``balanced,'' as is required by \cref{claim:onesidedlcs}. The main technicality is that there is already error inside each $\whest[{\anc}_{i - 1}, \hat{p}({\anc}_{i - 1}), {\anc}_i, \hat{p}({\anc}_i)]$, and the errors can have a cascading effect: if estimates inside a subset of sub-intervals have a large error, it can make the estimates across sub-intervals imbalanced, thereby also giving the parent interval a large error.  

To better analyze this cascading effect, we first obtain a lower bound $\whest'$ for $\whest$ in the following way:
\begin{equation*}
\whest'[l, y_l, r, y_r] := 
\begin{cases}
	\ew[l, y_l, r, y_r], & \text{if $s = 0$} \\
	\max\limits_{\hat{Y} \in \anchor}\sum\limits_{i = 1} ^ M{\whest'[{\anc}_{i - 1}, \hat{y}_{i - 1}, {\anc}_i, \hat{y}_i]}, & \text{if $s \notin \tilde{S}$} \\
	\min\left(w(\hat{p}(l, r]), \ESTSUM\left(\left(\whest'[{\anc}_{i - 1}, \overline{y}_i, {\anc}_i, \overline{y}_i]\right)_{\eta_{l, r, i} = i \mid i \in [M]}\right)\right) & \text{if $s \in \tilde{S}$}
\end{cases},
\end{equation*}
Namely, we do the same recursion scheme as in $\whest$, but on active scales, we clamp our estimates on interval $(l, r]$ down to at most $w(\hat{p}(l, r])$. Note that this clamping is not algorithmic and we use it for analysis purposes only. The purpose of this lower bound is that it makes errors one-directional: for every recursion interval $(l, r]$ we have $\whest'[l, \hat{p}(l), r, \hat{p}(r)] \le w(\hat{p}(l, r])$. Moreover, $\whest'$ is indeed a lower bound for $\whest$: observe that $\ESTSUM$ is monotone. Namely if we have two sets of inputs $A_I, A^{'}_I$ such that $a_i \le a^{'}_i$ for every $i \in I$, then $\ESTSUM(A_I) \le \ESTSUM(A^{'}_I)$. This means that we always have $\whest'[l, y_l, r, y_r] \le \whest[l, y_l, r, y_r]$. 

For each active scale $s$, if there were no subsampling on active scales larger than $s$, we can see that the final estimate would be at least:
$${\mathcal{P}}^{'}_s := \sum\limits_{i = 1} ^ {n / I_s}{\whest'[(i - 1) \cdot I_s, \hat{p}((i - 1) \cdot I_s), i \cdot I_s, \hat{p}(i \cdot I_s)]}.$$
We can see that ${\mathcal{P}}^{'}_0 = w(\hat{p})$, and that ${\mathcal{P}}^{'}_S = \whest'[0, n_X, n_X + n_Y, n_Y] \le \whest[0, n_X, n_X + n_Y, n_Y]$, namely ${\mathcal{P}}^{'}_s$ is a lower bound for our final estimate for $w(\hat{p})$. To see how much smaller ${\mathcal{P}}^{'}_S$ can be compared to ${\mathcal{P}}^{'}_0$, we go through each scale $s$ and examine the difference between ${\mathcal{P}}^{'}_{s - 1}$ and ${\mathcal{P}}^{'}_s$. 

If scale-$s$ is passive, then since we compute every $\whest'[l, \hat{p}(l), r, \hat{p}(r)]$ to be the length of the longest $\tilde{S}$-regular $\vertex{l}{\hat{p}(l)}$-$\vertex{r}{\hat{p}(r)}$ path in the Scale-$(s - 1)$ Sparsified Grid Graph (defined on $\whest'$), ${\mathcal{P}}^{'}_s \ge {\mathcal{P}}^{'}_{s - 1}$.  \Aviad{should they be $=$?}\xiao{no because with the roundings and stuff the optimal path might no longer be $\hat{p}$}

If scale-$s$ is active, fix a scale-$s$ interval $(l, r]$. From \cref{claim:onesidedlcs}, we can have at most the following types of errors:
\begin{itemize}
    \item Type I: A multiplicative error of $1 - O\left(\frac{1}{\log ^ {0.02} (n)}\right)$;
    \item Type II: If $\max_i{\whest'[{\anc}_{i - 1}, \hat{p}({\anc}_{i - 1}), {\anc}_i, \hat{p}({\anc}_i)]} = \Omega\left(\frac{\log ^ {0.02} (n)}{M}\sum_i{\whest'[{\anc}_{i - 1}, \hat{p}({\anc}_{i - 1}), {\anc}_i, \hat{p}({\anc}_i)]}\right)$, an additive underestimation of at most $w(\hat{p}(l, r])$ (since our estimate is non-negative);
    \item Type III: With probability $\exp\left(-\omega\left(\log ^ {0.02}(n)\right)\right)$, an additive underestimation of at most $w(\hat{p}(l, r])$.
\end{itemize}

We can first see that with probability $1 - o(1)$ the multiplicative error accumulated over $\tilde{S}$ (whose size is $O(S ^ {0.02}) = o\left(\log ^ {0.02} (n)\right)$ with probability $1 - o(1)$) active scales is $1 - o(1)$. 

The total sum of type III error is at most $\exp\left(-\omega\left(\log ^ {0.02}(n)\right)\right)$ times the total sum of $w(\hat{p}(l, r])$ on every interval $(l, r]$ on an active scale. The total sum of $w(\hat{p}(l, r])$ is at most $O(w(\hat{p})\abs{\tilde{S}})$ and the total total sum of type III error is at most $w(\hat{p}) \cdot \abs{\tilde{S}} \cdot \exp\left(-\omega\left(\log ^ {0.02}(n)\right)\right) = o(w(\hat{p}))$. 

The most technically involved part of this proof is the analysis of the type II error. Since our estimate $\whest'[{\anc}_{i - 1}, \hat{p}({\anc}_{i - 1}), {\anc}_i, \hat{p}({\anc}_i)]$ can be different from $w(\hat{p}({\anc}_{i - 1}, {\anc}_i])$, the errors can have a cascading effect. 
We say an interval is ``imbalanced'' if $\max_i{w(\hat{p}({\anc}_{i - 1}, {\anc}_i])} > \frac{2w(\hat{p}(l, r])}{M}$ and ``balanced'' otherwise. Suppose $(l, r]$ is ``balanced.'' Namely the weight distribution of $\hat{p}$ inside $(l, r]$ is balanced. We now argue that the cascading effect is insignificant for ``balanced'' interval.
 We now show that in order for there to be a type II error on ``balanced'' $(l, r]$, namely for the distribution of $\whest'$ to be imbalanced despite the actual answer distribution being balanced, the $\whest'$ values must have decayed so much that they are small enough to be ignored. Formally, notice that every $\whest'[{\anc}_{i - 1}, \hat{p}({\anc}_{i - 1}), {\anc}_i, \hat{p}({\anc}_i)] \le w(\hat{p}({\anc}_{i - 1}, {\anc}_i])$. Thus type II error only exists if 
$$\max_i{w(\hat{p}({\anc}_{i - 1}, {\anc}_i])} =  \Omega\left(\frac{\log ^ {0.02} (n)}{M}\sum_i{\whest'[{\anc}_{i - 1}, \hat{p}({\anc}_{i - 1}), {\anc}_i, \hat{p}({\anc}_i)]}\right).$$ 
Since we assume $(l, r]$ is ``balanced,'' we have $\max_i{w(\hat{p}({\anc}_{i - 1}, {\anc}_i])} \le \frac{2w(\hat{p}(l, r])}{M}$. This means that 
$$\sum_i{\whest'[{\anc}_{i - 1}, \hat{p}({\anc}_{i - 1}), {\anc}_i, \hat{p}({\anc}_i)]} = O\left(\frac{w(\hat{p}(l, r])}{\log ^ {0.02} {(n)}}\right).$$
The worst total extra additive error incurred this way inside a ``balanced'' $(l, r]$ on scale $s$ is not worse than if the estimation simply decays to $0$ completely, but even then the error is only $O\left(\frac{w(\hat{p}(l, r])}{\log ^ {0.02} {(n)}}\right)$.

Therefore, the total magnitude of type II error on scale $s$ is at most the sum of:
\begin{description}
\item [Type II(a) -- ``imbalanced'' intervals] the sum of $w(\hat{p}(l, r])$ over ``imbalanced'' interval $(l, r]$ on scale $s$;
\item [Type II(b) -- ``balanced'' intervals] the sum of $w(\hat{p}(l, r])$ over ``balanced'' interval on scale $s$, which we now know is at most $O\left(\frac{w(\hat{p})}{\log ^ {0.02} {(n)}}\right)$.
\end{description}
Summing over active scales, we can see that with probability $1 - o(1)$, the total type II(b) error is at most $O\left(\frac{w(\hat{p})}{\log ^ {0.02} {(n)}} \cdot \abs{\tilde{S}}\right) = o(w(\hat{p}))$.

We now bound the sum of type II(a) error on active scales, namely the sum of $w(\hat{p}(l, r])$ over ``imbalanced'' interval $(l, r]$ on active scales. If an interval $(l, r]$ is ``imbalanced,'' namely $\max_i{w(\hat{p}({\anc}_{i - 1}, {\anc}_i])} > \frac{2w(\hat{p}(l, r])}{M}$, then $$\md_M(W(\hat{p}(l, r])) = \sum_i{\abs{w(\hat{p}({\anc}_{i - 1}, {\anc}_i]) - \frac{w(\hat{p}(l, r])}{M}}} = \Omega\left(\frac{w(\hat{p}(l, r])}{M}\right).$$ Thus $w(\hat{p}(l, r]) = O(M\md_M(W(\hat{p}(l, r])))$, which means that the sum of $w(\hat{p}(l, r])$ over ``imbalanced'' interval $(l, r]$ on active scales is $O(M\treemd_{M, \tilde{S}}(W(\hat{p})))$, which is $o(w(\hat{p}))$ from \cref{claim:technicalitylcs} with probability $4 / 5$.

Summing up all errors and taking the union bound over the probability that $\abs{\tilde{S}} = O(S ^ {0.02})$, we can see that the lemma indeed holds with probability $4 / 5 - o(1) > 19 / 25$.
\end{proof}

We now prove the auxiliary claim from before:
\claimtechnicalitylcs*
\begin{proof}
    We first show that $\treemd_{M, \tilde{S}}(W(\textcolor{red}{p})) = o(w(p) / M)$. Then we argue that the difference between $\treemd_{M, \tilde{S}}(W(\textcolor{red}{\hat{p}}))$ and $\treemd_{M, \tilde{S}}(W(\textcolor{red}{p}))$ is also $o(w(p) / M)$.
    \paragraph{Part I: $\treemd_{M, \tilde{S}}(W(p))$ is small} Using the bound on tree total deviation in \cref{lemma:totalmeandeviation} we know that, 
    \begin{align*} 
    \treemd_{M, S}(W(p)) &= O\left(w(p) \cdot S ^ {0.51}  + \underbrace{{\mathcal{B}}_x}_{\text{max length of $p$ over scale $0$ interval}} \cdot \underbrace{\left(n / {\mathcal{B}}_x\right)}_{\text{number of scale $0$ intervals}} /  2 ^ {\log_M ^ {0.015}(n)}\right) \\
    &= O\left(w(p) \cdot S ^ {0.51}  + n /  2 ^ {\log_M ^ {0.015}(n)}\right)
    \end{align*}
    Since each scale that is a multiple of $3$ is in $\tilde{S}$ u.a.r~with probability $1 / S ^ {0.98}$, and $p$ does not depend on $\tilde{S}$, we have that with probability at least $9 / 10$,
    \begin{align*}
        \treemd_{M, \tilde{S}}(W(p)) &= O\left(\left(w(p) \cdot S ^ {0.51} + \textcolor{red}{n /  2 ^ {\log_M ^ {0.015}(n)}}\right) / S ^ {0.98}\right) \\
        &= \left(O\left(\left(w(p) \cdot S ^ {0.51}\right) + \textcolor{red}{o\left(\frac{w(p)}{M}\right)}\right)\right) / S ^ {0.98} && \text{since $w(p) = n / 2 ^ {o\left(\log ^ {0.01} (n)\right)}$} \\
        &= w(p) / S ^ {0.47} + o\left(\frac{w(p)}{M}\right) \\
        &= o\left(\frac{w(p)}{M}\right). && \text{since $M = \Theta\left(\log ^ {\mval} (n)\right)$}
    \end{align*}

    \paragraph{Part II: $\treemd_{M, \tilde{S}}(W(\hat{p}))$ is not too different from $\treemd_{M, \tilde{S}}(W(p))$} From \cref{lemma:regularlcs}, with probability at least $19 / 20$. We can round $p$ to an $\tilde{S}$-regular path $\hat{p}$ such that there are at most $o\left(\frac{w(p)}{M\log ^ {0.02}(n)}\right)$ columns $x$ before which their weights differ. Each of these columns contributes at most $O(\abs{\tilde{S}})$ to the difference between $\treemd_{M, \tilde{S}}(W(p))$ and $\treemd_{M, \tilde{S}}(W(\hat{p}))$. Since the size of $\tilde{S}$ is $O(S ^ {0.02}) = o\left(\log ^ {0.02} (n)\right)$ with probability $1 - o(1) > 19 / 20$, by the union bound, with probability at least $9 / 10$, the difference between $\treemd_{M, \tilde{S}}(W(p))$ and $\treemd_{M, \tilde{S}}(W(\hat{p}))$ is at most $o\left(\frac{w(p)}{M}\right)$.
    
    Combining the two parts, with probability at least $4 / 5$, we have $\treemd_{M, \tilde{S}}(W(\hat{p})) = o\left(\frac{w(\hat{p})}{M}\right)$. 
\end{proof}

\subsubsection{Our Estimate Is Sound}
Now we show that our estimate is sound:
\lemmasoundlcs*
\begin{proof}   
For each active scale $s$, we consider the following question: how much would we overestimate if there were no subsampling on active scales larger than $s$? To capture this, consider the Scale-$s$ Sparsified Grid Graph $\mathcal{G}_s$ defined on $\whest$ (\cref{def:scalesparselcs}), namely the following graph:
\begin{description}
    \item[Vertices] We only consider vertices $\vertex{x}{y}$ in the Sparsified LCS Grid Graph with the additional constraint that $I_{s} \mid x$;
    \item[Edges] There is an edge from $\vertex{x}{y}$ to $\vertex{x + I_{s}}{y'}$ (if both exist) such that $\abs{y' - y} \le I_{s}$ with a weight of $\whest[x, y, x + I_{s}, y']$.
\end{description}
Let $\mathcal{A}_s$ be the length of the longest $\tilde{S}$-regular $\vertex{0}{{n_X}}$-$\vertex{n_X + {n_Y}}{n_Y}$ path in $\mathcal{G}_s$. We can see that $\mathcal{A}_0 = w(\hat{p})$, and that $\mathcal{A}_S = \whest[0, n_X, n_X + n_Y, n_Y]$, namely our final estimate for $w(\hat{p})$. To see how much larger $\mathcal{A}_S$ \Aviad{$\mathcal{A}_{|S|}$? (missing $|\cdot|$)}\xiao{$S$ is a number $\hat{S}$ is a set} can be compared to $\mathcal{A}_0$, we go through each scale $s$ and examine the difference between $\mathcal{A}_{s - 1}$ and $\mathcal{A}_s$.

If scale-$s$ is passive, then since we compute $\whest[l, y_l, r, y_r]$ to be the length of the longest $\tilde{S}$-regular $\vertex{l}{y_l}$-$\vertex{r}{y_r}$ path in the Scale-$(s - 1)$ Sparsified Grid Graph, $\mathcal{A}_s = \mathcal{A}_{s - 1}$.

If scale-$s$ is active, then from \cref{claim:onesidedlcs}, for $\whest[l, y_l, r, y_r]$ such that $(l, r]$ is a scale-$s$ interval:
\begin{itemize}
    \item The length of the longest $\tilde{S}$-regular $\vertex{l}{y_l}$-$\vertex{r}{y_r}$ path in $\mathcal{G}_{s - 1}$ is $X := \sum\limits_{i = 1} ^ M{\whest[{\anc}_{i - 1}, \overline{y}_i, {\anc}_i, \overline{y}_i]}$;
    \item The length of the longest $\tilde{S}$-regular $\vertex{l}{y_l}$-$\vertex{r}{y_r}$ path in $\mathcal{G}_{s}$, namely $\whest[l, y_l, r, y_r]$, is, with probability $1 - \exp\left(-\omega\left(\log ^ {0.02} (n)\right)\right)$, at most $\left(1 + O(1 / \log ^ {0.02} {(n)})\right)X$. Notice that the our estimate for the length of a path from column $l$ to column $r$ cannot be more than the ``width'' $r - l = I_s$. Even if the error guarantee fails, the overestimation is at most $I_s$.
\end{itemize}
Thus, going from $\mathcal{A}_{s - 1}$ to $\mathcal{A}_s$, we have:
\begin{itemize}
    \item Type I: per scale-$s$ interval $(l, r]$, multiplicative overestimation of at most $\left(1 + O(1 / \log ^ {0.02} {(n)})\right)$;
    \item Type II: for $\whest[l, y_l, r, y_r]$ such that $(l, r]$ is a scale-$s$ interval, there is a probability of $\exp\left(-\omega\left(\log ^ {0.02} (n)\right)\right)$ that $\whest[l, y_l, r, y_r]$ has an overestimation of at most $I_s$.
\end{itemize}
We now prove that with probability $1 - o(1)$, either error across all scales and intervals is $o(w(\hat{p}))$. 

Recall that $|\tilde{S}| = O(S ^ {0.02}) = o\left(\log ^ {0.02} n\right)$ w.p.~$1 - o(1)$.
Thus the total type I multiplicative overestimation across all $\abs{\tilde{S}}$  active scales is $(1 + o(1))$ w.p.~$1 - o(1)$.

For type II overestimation, we say an estimate $\whest[l, y_l, r, y_r]$ is ``bad'' if it exhibits an overestimation of $(1 + \Omega(1 / \log ^ {0.02} (n)))$. the total overestimation from estimates being bad is upper bounded by the following model:
\begin{definition} [Upper Bound for Overestimation due to ``Bad'' Estimates]
    Consider the following graph:
    \begin{description}
        \item[Vertices] We only consider vertices $\vertex{x}{y}$ in the Sparsified LCS Grid Graph with the additional constraint that $I_s \mid x$;
        \item[Edges] There is an edge from $\vertex{x}{y}$ to $\vertex{x + I_s}{y'}$ such that $\abs{y' - y} \le I_s$ with a weight of either $I_s$, if $\whest[x, y, x + I_s, y']$ is ``bad,'' or $0$ otherwise.
        \item[Objective] We obtain the upper bound by taking the longest $\vertex{0}{{n_X}}$-$\vertex{n_X + {n_Y}}{n_Y}$ path in this graph.
    \end{description}
\end{definition}
We consider applying \cref{claim:directed} to this graph. Note that:
\begin{itemize}
    \item The graph has $o(n ^ {0.03})$ vertices and has $\left(n / I_s + 1\right)$ layers. Edges only connect vertices from adjacent layers with weight either $0$ or $I_s$;
    \item Each edge weight is $I_s$ with probability $\exp\left(-\omega\left(\log ^ {0.02} (n)\right)\right)$;
    \item Since we sample $\eta_{l, r, 1 \ldots M}$ independently for each $(l, r]$, edges from different layers are independent in weights;
    \item Fix a vertex $\vertex{x}{y}$. If an edge from $\vertex{x'}{y'}$ to $\vertex{x' + I_s}{y''}$ is reachable from $\vertex{x}{y}$ within $\exp\left(\Theta\left(\log ^ {0.02} (n)\right)\right)$ hops (i.e., number of edges visited), then $\max(\abs{x' - x}, \abs{y' - y}) = I_s \cdot \exp\left(\Theta\left(\log ^ {0.02} (n)\right)\right)$ and $\abs{y'' - y'} \le I_s$. Since $I_s$ divides each of $x, x'$ and $(I_s / \phi_y)$ divides each of $y, y', y''$, the number of reachable edges is less than $$\underbrace{\exp\left(\Theta\left(\log ^ {0.02} (n)\right)\right)}_{x'} \cdot \underbrace{\exp\left(\Theta\left(\log ^ {0.02} (n)\right)\right) \cdot \phi_y}_{y'} \cdot \underbrace{\phi_y}_{y''} = \exp\left(\Theta\left(\log ^ {0.02} (n)\right)\right).$$
\end{itemize} 
From \cref{claim:directed} (with weights scaled up by $I_s$) with 
\begin{itemize}
    \item $h := O(n / I_s)$;
    \item $N := O(n ^ {0.03})$;
    \item $B := \exp\left(\Theta\left(\log ^ {0.02} (n)\right)\right)$;
    \item $d := \exp\left(\Theta\left(\log ^ {0.02} (n)\right)\right)$;
    \item $p := \exp\left(-\omega\left(\log ^ {0.02} (n)\right)\right)$;
    \item $u := \vertex{0}{n_X}$.
\end{itemize}
we obtain that with probability $1 - n ^ {-\omega(1)}$, an upper bound for the overestimation on each scale due to ``bad'' estimates (type II error) is 
$$n \cdot \sqrt{\log (n ^ {0.03})} / \exp\left(\Theta\left(\log ^ {0.02} (n)\right)\right) = n \cdot \exp\left(-\Omega\left(\log ^ {0.02} (n)\right)\right).$$
By union bound across all scales, we obtain that with probability $1 - o(1)$, the total type II error is
$$n \cdot S \cdot \exp\left(-\Omega\left(\log ^ {0.02} (n)\right)\right) = n \cdot \exp\left(-\Omega\left(\log ^ {0.02} (n)\right)\right).$$
Since $w(\hat{p}) = n / 2 ^ {o\left(\log ^ {0.01} (n)\right)}$, this is $o(w(\hat{p}))$.
\end{proof}

\subsubsection{Our Can be computed Efficiently}
Finally we show that our algorithm is efficient:
\lemmaefficiencylcs*
\begin{proof}
    For the running time, the total number of $(l, y_l, r, y_r)$ is 
    $$\underbrace{(n / {\mathcal{B}}_x)}_{l} \cdot \underbrace{(n / {\mathcal{B}}_x)}_{r} \cdot \underbrace{(n\phi_y / {\mathcal{B}}_x)}_{y_l} \cdot \underbrace{(n\phi_y / {\mathcal{B}}_x)}_{y_r} = n ^ 4 \cdot {\phi_y} ^ 2 / {{\mathcal{B}}_x} ^ 4 = o(n ^ {0.045}).$$
    For each $(l, y_l, r, y_r)$ with $(l, r]$ on scale $s$, the bottleneck for computing $\whest[l, y_l, r, y_r]$ is to compute the longest path in the Scale-$(s - 1)$ Sparsified Grid Graph, which takes linear time in the number of edges. This number of edges  is no more than the total number of $(l, y_l, r, y_r)$'s which is $o(n ^ {0.045})$. The overall running time is therefore at most $o(n ^ {0.09})$. 
    
    For query complexity, for each edge from $\vertex{x}{y}$ to $\vertex{x + {\mathcal{B}}_x}{y'}$, the edge is queried if every interval $(l, r] \supset (x, x + {\mathcal{B}}_x]$ on an active scale is sub-sampled, which happens with probability $1 / 2$ per active scale independently. Thus the probability that an edge ``survives'' sub-sampling is $2 ^ {-\abs{\tilde{S}}}$.
    With probability $1 - o(1)$, $\abs{\tilde{S}} = \Omega(S ^ {0.02})$, and the number of queries is at most $2 ^ {-\Omega(S ^ {0.02})} < 2 ^ {-\Omega\left(\log ^ {0.01} (n)\right)}$ times the total number of edges. The total number of edges is (notice that $\abs{y - y'} \le \mathcal{B}_x$) 
    $$\underbrace{(n / {\mathcal{B}}_x)}_{x} \cdot \underbrace{(n\phi_y / {\mathcal{B}}_x)}_{y} \cdot \underbrace{\phi_y}_{y'} = \frac{n ^ 2{\phi_y} ^ 2}{{{\mathcal{B}}_x} ^ 2}$$
    And the total number of queries is $\frac{n ^ 2{\phi_y} ^ 2}{{{\mathcal{B}}_x} ^ 2} / 2 ^ {\Omega\left(\log ^ {0.01} (n)\right)}$.
\end{proof}

\subsubsection{Proof of \cref{claim:onesidedlcs}}
Finally, we give the missing proof of \cref{claim:onesidedlcs}.
\claimonesidedlcs*
\begin{proof}
    As a warm-up, suppose that $\tilde{A}$ is exactly equal to the actual mean of $A$ which is $\overline{A} := \sum\limits_{i = 1} ^ M{a_i} / M$. Then $\ESTSUM(A_I) = 2\sum_{i \in I}{\min\left(a_i, 2\log ^ {0.02}{(n)}\overline{A}\right)}$ is an estimate for the sum $\sum\limits_{i = 1} ^ M{\min\left(a_i, 2\log ^ {0.02}{(n)}\overline{A}\right)}$. Since every term $\min\left(a_i, 2\log ^ {0.02}{(n)}\overline{A}\right)$ is clipped to be at most $2\log ^ {0.02}{(n)}\overline{A}$, we can apply Hoeffding's Inequality to obtain that $\ESTSUM(A_I)$ $\left(1 \pm O\left(1 / \log ^ {0.02} (n)\right)\right)$-approximates $\sum\limits_{i = 1} ^ M{\min\left(a_i, 2\log ^ {0.02}{(n)}\overline{A}\right)}$ with the desired probability. We can see that this satisfies both conditions in the lemma.

    Now we begin the analysis using the actual $\tilde{A}$.
    For the upper bound, notice that:
    \begin{align*}
        \tilde{A}   &= \sum_{i \in I}{a_i} / \left(\frac{M}{2}\right) \\
                    &\le \sum\limits_{i = 1} ^ M{a_i} / \left(\frac{M}{2}\right) \\
                    &= 2\sum\limits_{i = 1} ^ M{a_i} / M \\
                    &= 2\overline{A}.
    \end{align*}
    Thus $\ESTSUM(A_I) \le 2\sum_{i \in I}{\min\left(a_i, 4\log ^ {0.02}{(n)}\overline{A}\right)}$, which is an estimate for the sum $\sum\limits_{i = 1} ^ M{\min\left(a_i, 4\log ^ {0.02}{(n)}\overline{A}\right)}$. Since $\overline{A}$ is the mean of $A$, we have $\sum\limits_{i = 1} ^ M{\min\left(a_i, \omega(\overline{A})\right)} = \Omega(M\overline{A})$. Thus we can apply Hoeffding's Inequality (\cref{lemma:hoeffding}, with $c := 4\log ^ {0.02}{(n)}\overline{A}, \delta := \Theta\left(\frac{1}{\log ^ {0.02} (n)}\sum\limits_{i = 1} ^ M{\left(a_i, 4\log ^ {0.02}{(n)}\overline{A}\right)}\right)$) to obtain that
    \begin{equation} \label{eq:onesidedlcsupper}
    \ESTSUM(A_I) \le \left(1 + O\left(1 / \log ^ {0.02} (n)\right)\right)\sum\limits_{i = 1} ^ M{\min\left(a_i, 4\log ^ {0.02}{(n)}\overline{A}\right)}
    \end{equation}
    holds except with probability at most $\exp\left(-\Omega\left(M / \log ^ {0.08} (n)\right)\right) = \exp\left(-\omega\left(\log ^ {0.02}{(n)}\right)\right)$. Since we have $\min\left(a_i, 4\log ^ {0.02}{(n)}\overline{A}\right) \le a_i$, when \eqref{eq:onesidedlcsupper} holds, we indeed have
    $$\ESTSUM(A_I) \le \left(1 + O\left(1 / \log ^ {0.02} (n)\right)\right)\sum\limits_{i = 1} ^ M{a_i}.$$

    For the lower bound, suppose that every $a_i \le \log ^ {0.02}{(n)}\overline{A}$. 
    We can apply Hoeffding's Inequality (\cref{lemma:hoeffding}, with $c := \log ^ {0.02}{(n)}\overline{A}, \delta := \frac{1}{2}\sum\limits_{i = 1} ^ M{a_i}$) to obtain that
    \begin{equation} \label{eq:onesidedlcslower}
    \tilde{A} \ge \frac{1}{2}\overline{A}
    \end{equation}
    holds except with probability at most $\exp\left(-\Omega\left(M / \log ^ {0.04} (n)\right)\right) = \exp\left(-\omega\left(\log ^ {0.02}{(n)}\right)\right)$. When \eqref{eq:onesidedlcslower} holds, we have $\min\left(a_i, 2\log ^ {0.02}{(n)} \tilde{A}\right) = a_i$.
    Thus we can apply Hoeffding's Inequality (\cref{lemma:hoeffding}, with $c := \log ^ {0.02}{(n)}\overline{A}, \delta := \Theta\left(\frac{1}{\log ^ {0.02} (n)}\sum\limits_{i = 1} ^ M{a_i}\right)$) on $\sum_{i \in I}{a_i}$ to obtain that
    \begin{equation*}
    \ESTSUM(A_I) \ge \left(1 - O\left(1 / \log ^ {0.02} (n)\right)\right)\sum\limits_{i = 1} ^ M{a_i}
    \end{equation*}
    holds except with probability at most $\exp\left(-\Omega\left(M / \log ^ {0.08} (n)\right)\right) = \exp\left(-\omega\left(\log ^ {0.02}{(n)}\right)\right)$.
\end{proof}

\bibliographystyle{alphaurl} 
\bibliography{CAREER}

@TechReport{Knuth72,
  author      = {Václav Chvátal and David A. Klarner and Donald E. Knuth},
  title       = {Selected combinatorial research problems.},
  institution = {Computer Science Department, Stanford University},
  year        = {1972},
  url         = {https://pdfs.semanticscholar.org/e832/afd53b720bbb2c3dfff89f5005d5c395719f.pdf},
}

@inproceedings{AB17-deterministic,
  author    = {Amir Abboud and
               Arturs Backurs},
  title     = {Towards Hardness of Approximation for Polynomial Time Problems},
  booktitle = {8th Innovations in Theoretical Computer Science Conference, {ITCS}
               2017, January 9-11, 2017, Berkeley, CA, {USA}},
  pages     = {11:1--11:26},
  year      = {2017},
  crossrefignore  = {DBLP:conf/innovations/2017},
  url       = {https://doi.org/10.4230/LIPIcs.ITCS.2017.11},
  doi       = {10.4230/LIPIcs.ITCS.2017.11},
  bibsource = {dblp computer science bibliography, https://dblp.org}
}

@inproceedings{AHWW16-polylog_shaved,
  author    = {Amir Abboud and
               Thomas Dueholm Hansen and
               Virginia Vassilevska Williams and
               Ryan Williams},
  title     = {Simulating branching programs with edit distance and friends: or:
               a polylog shaved is a lower bound made},
  booktitle = {Proceedings of the 48th Annual {ACM} {SIGACT} Symposium on Theory
               of Computing, {STOC} 2016, Cambridge, MA, USA, June 18-21, 2016},
  pages     = {375--388},
  year      = {2016},
  crossrefignore  = {DBLP:conf/stoc/2016},
  url       = {http://doi.acm.org/10.1145/2897518.2897653},
  doi       = {10.1145/2897518.2897653},
  bibsource = {dblp computer science bibliography, https://dblp.org}
}

@inproceedings{AbboudB18,
  author    = {Amir Abboud and
               Karl Bringmann},
  title     = {Tighter Connections Between Formula-SAT and Shaving Logs},
  booktitle = {45th International Colloquium on Automata, Languages, and Programming,
               {ICALP} 2018, July 9-13, 2018, Prague, Czech Republic},
  pages     = {8:1--8:18},
  year      = {2018},
  url       = {https://doi.org/10.4230/LIPIcs.ICALP.2018.8},
  doi       = {10.4230/LIPIcs.ICALP.2018.8},
  bibsource = {dblp computer science bibliography, https://dblp.org}
}

@article{LMS98,
 author = {Landau, Gad M. and Myers, Eugene W. and Schmidt, Jeanette P.},
 title = {Incremental String Comparison},
 journal = {SIAM J. Comput.},
 issue_date = {April 1998},
 volume = {27},
 number = {2},
 year = {1998},
 issn = {0097-5397},
 pages = {557--582},
 numpages = {26},
 doi = {10.1137/S0097539794264810},
 publisher = {SIAM},
}

@inproceedings{ABW15-LCS,
  author    = {Amir Abboud and
               Arturs Backurs and
               Virginia Vassilevska Williams},
  title     = {Tight Hardness Results for {LCS} and Other Sequence Similarity Measures},
  booktitle = {{IEEE} 56th Annual Symposium on Foundations of Computer Science, {FOCS}
               2015, Berkeley, CA, USA, 17-20 October, 2015},
  pages     = {59--78},
  year      = {2015},
  crossrefignore  = {DBLP:conf/focs/2015},
  url       = {https://doi.org/10.1109/FOCS.2015.14},
  doi       = {10.1109/FOCS.2015.14},
  bibsource = {dblp computer science bibliography, https://dblp.org}
}

@inproceedings{AWW14-Local_Alignment,
  author    = {Amir Abboud and
               Virginia Vassilevska Williams and
               Oren Weimann},
  title     = {Consequences of Faster Alignment of Sequences},
  booktitle = {Automata, Languages, and Programming - 41st International Colloquium,
               {ICALP} 2014, Copenhagen, Denmark, July 8-11, 2014, Proceedings, Part
               {I}},
  pages     = {39--51},
  year      = {2014},
  crossrefignore  = {DBLP:conf/icalp/2014-1},
  url       = {https://doi.org/10.1007/978-3-662-43948-7_4},
  doi       = {10.1007/978-3-662-43948-7_4},
  bibsource = {dblp computer science bibliography, https://dblp.org}
}

@inproceedings{BK15-LCS,
  author    = {Karl Bringmann and
               Marvin K{\"{u}}nnemann},
  title     = {Quadratic Conditional Lower Bounds for String Problems and Dynamic
               Time Warping},
  booktitle = {{IEEE} 56th Annual Symposium on Foundations of Computer Science, {FOCS}
               2015, Berkeley, CA, USA, 17-20 October, 2015},
  pages     = {79--97},
  year      = {2015},
  crossrefignore  = {DBLP:conf/focs/2015},
  url       = {https://doi.org/10.1109/FOCS.2015.15},
  doi       = {10.1109/FOCS.2015.15},
  bibsource = {dblp computer science bibliography, https://dblp.org}
}

@article{BI18,
  author       = {Arturs Backurs and
                  Piotr Indyk},
  title        = {Edit Distance Cannot Be Computed in Strongly Subquadratic Time (Unless
                  {SETH} is False)},
  journal      = {{SIAM} J. Comput.},
  volume       = {47},
  number       = {3},
  pages        = {1087--1097},
  year         = {2018},
  url          = {https://doi.org/10.1137/15M1053128},
  doi          = {10.1137/15M1053128},
  bibsource    = {dblp computer science bibliography, https://dblp.org}
}

@article{AO12-edit,
  author    = {Alexandr Andoni and
               Krzysztof Onak},
  title     = {Approximating Edit Distance in Near-Linear Time},
  journal   = {{SIAM} J. Comput.},
  volume    = {41},
  number    = {6},
  pages     = {1635--1648},
  year      = {2012},
  url       = {https://doi.org/10.1137/090767182},
  doi       = {10.1137/090767182},
  bibsource = {dblp computer science bibliography, https://dblp.org}
}

@article{Ukkonen85,
  author    = {Esko Ukkonen},
  title     = {Algorithms for Approximate String Matching},
  journal   = {Information and Control},
  volume    = {64},
  number    = {1-3},
  pages     = {100--118},
  year      = {1985},
  url       = {https://doi.org/10.1016/S0019-9958(85)80046-2},
  doi       = {10.1016/S0019-9958(85)80046-2},
  bibsource = {dblp computer science bibliography, https://dblp.org}
}

@article{AK12,
  author    = {Alexandr Andoni and
               Robert Krauthgamer},
  title     = {The smoothed complexity of edit distance},
  journal   = {{ACM} Trans. Algorithms},
  volume    = {8},
  number    = {4},
  pages     = {44:1--44:25},
  year      = {2012},
  url       = {http://doi.acm.org/10.1145/2344422.2344434},
  doi       = {10.1145/2344422.2344434},
  bibsource = {dblp computer science bibliography, https://dblp.org}
}

@inproceedings{BSS20,
  author       = {Mahdi Boroujeni and
                  Masoud Seddighin and
                  Saeed Seddighin},
  editor       = {Shuchi Chawla},
  title        = {Improved Algorithms for Edit Distance and {LCS:} Beyond Worst Case},
  booktitle    = {Proceedings of the 2020 {ACM-SIAM} Symposium on Discrete Algorithms,
                  {SODA} 2020, Salt Lake City, UT, USA, January 5-8, 2020},
  pages        = {1601--1620},
  publisher    = {{SIAM}},
  year         = {2020},
  url          = {https://doi.org/10.1137/1.9781611975994.99},
  doi          = {10.1137/1.9781611975994.99},
  bibsource    = {dblp computer science bibliography, https://dblp.org}
}

@inproceedings{Kuszmaul19,
  author       = {William Kuszmaul},
  editor       = {Timothy M. Chan},
  title        = {Efficiently Approximating Edit Distance Between Pseudorandom Strings},
  booktitle    = {Proceedings of the Thirtieth Annual {ACM-SIAM} Symposium on Discrete
                  Algorithms, {SODA} 2019, San Diego, California, USA, January 6-9,
                  2019},
  pages        = {1165--1180},
  publisher    = {{SIAM}},
  year         = {2019},
  url          = {https://doi.org/10.1137/1.9781611975482.71},
  doi          = {10.1137/1.9781611975482.71},
  bibsource    = {dblp computer science bibliography, https://dblp.org}
}

@article{BGS21,
  author       = {Mahdi Boroujeni and
                  Mohammad Ghodsi and
                  Saeed Seddighin},
  title        = {Improved {MPC} Algorithms for Edit Distance and Ulam Distance},
  journal      = {{IEEE} Trans. Parallel Distributed Syst.},
  volume       = {32},
  number       = {11},
  pages        = {2764--2776},
  year         = {2021},
  url          = {https://doi.org/10.1109/TPDS.2021.3076534},
  doi          = {10.1109/TPDS.2021.3076534},
  bibsource    = {dblp computer science bibliography, https://dblp.org}
}

@inproceedings{JNW21,
  author       = {Ce Jin and
                  Jelani Nelson and
                  Kewen Wu},
  editor       = {Markus Bl{\"{a}}ser and
                  Benjamin Monmege},
  title        = {An Improved Sketching Algorithm for Edit Distance},
  booktitle    = {38th International Symposium on Theoretical Aspects of Computer Science,
                  {STACS} 2021, March 16-19, 2021, Saarbr{\"{u}}cken, Germany (Virtual
                  Conference)},
  series       = {LIPIcs},
  volume       = {187},
  pages        = {45:1--45:16},
  publisher    = {Schloss Dagstuhl - Leibniz-Zentrum f{\"{u}}r Informatik},
  year         = {2021},
  url          = {https://doi.org/10.4230/LIPIcs.STACS.2021.45},
  doi          = {10.4230/LIPICS.STACS.2021.45},
  bibsource    = {dblp computer science bibliography, https://dblp.org}
}

@article{AK10-edit-CC,
  author    = {Alexandr Andoni and
               Robert Krauthgamer},
  title     = {The Computational Hardness of Estimating Edit Distance},
  journal   = {{SIAM} J. Comput.},
  volume    = {39},
  number    = {6},
  pages     = {2398--2429},
  year      = {2010},
  url       = {https://doi.org/10.1137/080716530},
  doi       = {10.1137/080716530},
  bibsource = {dblp computer science bibliography, https://dblp.org}
}

@inproceedings{AKO10-edit,
  author    = {Alexandr Andoni and
               Robert Krauthgamer and
               Krzysztof Onak},
  title     = {Polylogarithmic Approximation for Edit Distance and the Asymmetric
               Query Complexity},
  booktitle = {51th Annual {IEEE} Symposium on Foundations of Computer Science, {FOCS}
               2010, October 23-26, 2010, Las Vegas, Nevada, {USA}},
  pages     = {377--386},
  year      = {2010},
  crossrefignore  = {DBLP:conf/focs/2010},
  url       = {https://doi.org/10.1109/FOCS.2010.43},
  doi       = {10.1109/FOCS.2010.43},
  bibsource = {dblp computer science bibliography, https://dblp.org}
}

@inproceedings{ADGIR03-edit,
  author    = {Alexandr Andoni and
               Michel Deza and
               Anupam Gupta and
               Piotr Indyk and
               Sofya Raskhodnikova},
  title     = {Lower bounds for embedding edit distance into normed spaces},
  booktitle = {Proceedings of the Fourteenth Annual {ACM-SIAM} Symposium on Discrete
               Algorithms, January 12-14, 2003, Baltimore, Maryland, {USA.}},
  pages     = {523--526},
  year      = {2003},
  crossrefignore  = {DBLP:conf/soda/2003},
  url       = {http://dl.acm.org/citation.cfm?id=644108.644196},
  bibsource = {dblp computer science bibliography, https://dblp.org}
}

@article{OR07-edit,
  author    = {Rafail Ostrovsky and
               Yuval Rabani},
  title     = {Low distortion embeddings for edit distance},
  journal   = {J. {ACM}},
  volume    = {54},
  number    = {5},
  pages     = {23},
  year      = {2007},
  url       = {http://doi.acm.org/10.1145/1284320.1284322},
  doi       = {10.1145/1284320.1284322},
  bibsource = {dblp computer science bibliography, https://dblp.org}
}

@inproceedings{BJKK04-edit,
  author    = {Ziv Bar{-}Yossef and
               T. S. Jayram and
               Robert Krauthgamer and
               Ravi Kumar},
  title     = {Approximating Edit Distance Efficiently},
  booktitle = {45th Symposium on Foundations of Computer Science {(FOCS} 2004), 17-19
               October 2004, Rome, Italy, Proceedings},
  pages     = {550--559},
  year      = {2004},
  crossrefignore  = {DBLP:conf/focs/2004},
  url       = {https://doi.org/10.1109/FOCS.2004.14},
  doi       = {10.1109/FOCS.2004.14},
  bibsource = {dblp computer science bibliography, https://dblp.org}
}

@article{KR09-edit,
  author    = {Robert Krauthgamer and
               Yuval Rabani},
  title     = {Improved Lower Bounds for Embeddings into $L_1$},
  journal   = {{SIAM} J. Comput.},
  volume    = {38},
  number    = {6},
  pages     = {2487--2498},
  year      = {2009},
  url       = {https://doi.org/10.1137/060660126},
  doi       = {10.1137/060660126},
  bibsource = {dblp computer science bibliography, https://dblp.org}
}

@inproceedings{CGKK18,
  author       = {Moses Charikar and
                  Ofir Geri and
                  Michael P. Kim and
                  William Kuszmaul},
  editor       = {Ioannis Chatzigiannakis and
                  Christos Kaklamanis and
                  D{\'{a}}niel Marx and
                  Donald Sannella},
  title        = {On Estimating Edit Distance: Alignment, Dimension Reduction, and Embeddings},
  booktitle    = {45th International Colloquium on Automata, Languages, and Programming,
                  {ICALP} 2018, July 9-13, 2018, Prague, Czech Republic},
  series       = {LIPIcs},
  volume       = {107},
  pages        = {34:1--34:14},
  publisher    = {Schloss Dagstuhl - Leibniz-Zentrum f{\"{u}}r Informatik},
  year         = {2018},
  url          = {https://doi.org/10.4230/LIPIcs.ICALP.2018.34},
  doi          = {10.4230/LIPIcs.ICALP.2018.34},
  bibsource    = {dblp computer science bibliography, https://dblp.org}
}

@inproceedings{BES06-edit,
  author    = {Tugkan Batu and
               Funda Erg{\"{u}}n and
               S{\"{u}}leyman Cenk Sahinalp},
  title     = {Oblivious string embeddings and edit distance approximations},
  booktitle = {Proceedings of the Seventeenth Annual {ACM-SIAM} Symposium on Discrete
               Algorithms, {SODA} 2006, Miami, Florida, USA, January 22-26, 2006},
  pages     = {792--801},
  year      = {2006},
  crossrefignore  = {DBLP:conf/soda/2006},
  url       = {http://dl.acm.org/citation.cfm?id=1109557.1109644},
  bibsource = {dblp computer science bibliography, https://dblp.org}
}

@inproceedings{BEKMRRS03-edit-testing,
  author    = {Tugkan Batu and
               Funda Erg{\"{u}}n and
               Joe Kilian and
               Avner Magen and
               Sofya Raskhodnikova and
               Ronitt Rubinfeld and
               Rahul Sami},
  title     = {A sublinear algorithm for weakly approximating edit distance},
  booktitle = {Proceedings of the 35th Annual {ACM} Symposium on Theory of Computing,
               June 9-11, 2003, San Diego, CA, {USA}},
  pages     = {316--324},
  year      = {2003},
  crossrefignore  = {DBLP:conf/stoc/2003},
  url       = {http://doi.acm.org/10.1145/780542.780590},
  doi       = {10.1145/780542.780590},
  bibsource = {dblp computer science bibliography, https://dblp.org}
}

@article{BEGHS21,
  author       = {Mahdi Boroujeni and
                  Soheil Ehsani and
                  Mohammad Ghodsi and
                  MohammadTaghi Hajiaghayi and
                  Saeed Seddighin},
  title        = {Approximating Edit Distance in Truly Subquadratic Time: Quantum and
                  MapReduce},
  journal      = {J. {ACM}},
  volume       = {68},
  number       = {3},
  pages        = {19:1--19:41},
  year         = {2021},
  url          = {https://doi.org/10.1145/3456807},
  doi          = {10.1145/3456807},
  bibsource    = {dblp computer science bibliography, https://dblp.org}
}

@inproceedings{HRS19,
  author       = {Bernhard Haeupler and
                  Aviad Rubinstein and
                  Amirbehshad Shahrasbi},
  editor       = {Moses Charikar and
                  Edith Cohen},
  title        = {Near-linear time insertion-deletion codes and (1+\emph{{\(\epsilon\)}})-approximating
                  edit distance via indexing},
  booktitle    = {Proceedings of the 51st Annual {ACM} {SIGACT} Symposium on Theory
                  of Computing, {STOC} 2019, Phoenix, AZ, USA, June 23-26, 2019},
  pages        = {697--708},
  publisher    = {{ACM}},
  year         = {2019},
  url          = {https://doi.org/10.1145/3313276.3316371},
  doi          = {10.1145/3313276.3316371},
  bibsource    = {dblp computer science bibliography, https://dblp.org}
}

@article{CDGKS20,
  author       = {Diptarka Chakraborty and
                  Debarati Das and
                  Elazar Goldenberg and
                  Michal Kouck{\'{y}} and
                  Michael E. Saks},
  title        = {Approximating Edit Distance Within Constant Factor in Truly Sub-quadratic
                  Time},
  journal      = {J. {ACM}},
  volume       = {67},
  number       = {6},
  pages        = {36:1--36:22},
  year         = {2020},
  url          = {https://doi.org/10.1145/3422823},
  doi          = {10.1145/3422823},
  bibsource    = {dblp computer science bibliography, https://dblp.org}
}

@inproceedings{AN10-edit-query,
  author    = {Alexandr Andoni and
               Huy L. Nguyen},
  title     = {Near-Optimal Sublinear Time Algorithms for Ulam Distance},
  booktitle = {Proceedings of the Twenty-First Annual {ACM-SIAM} Symposium on Discrete
               Algorithms, {SODA} 2010, Austin, Texas, USA, January 17-19, 2010},
  pages     = {76--86},
  year      = {2010},
  crossrefignore  = {DBLP:conf/soda/2010},
  url       = {https://doi.org/10.1137/1.9781611973075.8},
  doi       = {10.1137/1.9781611973075.8},
  bibsource = {dblp computer science bibliography, https://dblp.org}
}

@inproceedings{BR20,
  author       = {Joshua Brakensiek and
                  Aviad Rubinstein},
  editor       = {Konstantin Makarychev and
                  Yury Makarychev and
                  Madhur Tulsiani and
                  Gautam Kamath and
                  Julia Chuzhoy},
  title        = {Constant-factor approximation of near-linear edit distance in near-linear
                  time},
  booktitle    = {Proccedings of the 52nd Annual {ACM} {SIGACT} Symposium on Theory
                  of Computing, {STOC} 2020, Chicago, IL, USA, June 22-26, 2020},
  pages        = {685--698},
  publisher    = {{ACM}},
  year         = {2020},
  url          = {https://doi.org/10.1145/3357713.3384282},
  doi          = {10.1145/3357713.3384282},
  bibsource    = {dblp computer science bibliography, https://dblp.org}
}

@article{BCR20,
  author       = {Joshua Brakensiek and
                  Moses Charikar and
                  Aviad Rubinstein},
  title        = {A Simple Sublinear Algorithm for Gap Edit Distance},
  journal      = {CoRR},
  volume       = {abs/2007.14368},
  year         = {2020},
  url          = {https://arxiv.org/abs/2007.14368},
  eprinttype    = {arXiv},
  eprint       = {2007.14368},
  bibsource    = {dblp computer science bibliography, https://dblp.org}
}

@inproceedings{RS20,
  author       = {Aviad Rubinstein and
                  Zhao Song},
  editor       = {Shuchi Chawla},
  title        = {Reducing approximate Longest Common Subsequence to approximate Edit
                  Distance},
  booktitle    = {Proceedings of the 2020 {ACM-SIAM} Symposium on Discrete Algorithms,
                  {SODA} 2020, Salt Lake City, UT, USA, January 5-8, 2020},
  pages        = {1591--1600},
  publisher    = {{SIAM}},
  year         = {2020},
  url          = {https://doi.org/10.1137/1.9781611975994.98},
  doi          = {10.1137/1.9781611975994.98},
  bibsource    = {dblp computer science bibliography, https://dblp.org}
}

@inproceedings{HeL23,
  author       = {Xiaoyu He and
                  Ray Li},
  editor       = {Barna Saha and
                  Rocco A. Servedio},
  title        = {Approximating Binary Longest Common Subsequence in Almost-Linear Time},
  booktitle    = {Proceedings of the 55th Annual {ACM} Symposium on Theory of Computing,
                  {STOC} 2023, Orlando, FL, USA, June 20-23, 2023},
  pages        = {1145--1158},
  publisher    = {{ACM}},
  year         = {2023},
  url          = {https://doi.org/10.1145/3564246.3585104},
  doi          = {10.1145/3564246.3585104},
  bibsource    = {dblp computer science bibliography, https://dblp.org}
}

@misc{Rub18-blog,
author = {Aviad Rubinstein},
title = {Approximating Edit Distance},
journal = {Theory Dish blog},
type = {Blog},
date = {July 20},
year = {2018},
howpublished = {\url{https://theorydish.blog/2018/07/20/approximating-edit-distance/}}
}

@inproceedings{AR18,
  author    = {Amir Abboud and
               Aviad Rubinstein},
  title     = {Fast and Deterministic Constant Factor Approximation Algorithms for
               {LCS} Imply New Circuit Lower Bounds},
  booktitle = {9th Innovations in Theoretical Computer Science Conference, {ITCS}
               2018, January 11-14, 2018, Cambridge, MA, {USA}},
  pages     = {35:1--35:14},
  year      = {2018},
  crossrefignore  = {DBLP:conf/innovations/2018},
  url       = {https://doi.org/10.4230/LIPIcs.ITCS.2018.35},
  doi       = {10.4230/LIPIcs.ITCS.2018.35},
  bibsource = {dblp computer science bibliography, https://dblp.org}
}

@InProceedings{CGLRR19,
  author         = {Lijie Chen and Shafi Goldwasser and Kaifeng Lyu and Guy N. Rothblum and Aviad Rubinstein},
  title          = {Fine-grained Complexity Meets {IP} = {PSPACE}},
  booktitle      = {Proceedings of the Thirtieth Annual {ACM-SIAM} Symposium on Discrete Algorithms, {SODA} 2019, San Diego, California, USA, January 6-9, 2019},
  year           = {2019},
  pages          = {1--20},
  bibsource      = {dblp computer science bibliography, https://dblp.org},
  crossrefignore = {DBLP:conf/soda/2019},
  doi            = {10.1137/1.9781611975482.1},
  url            = {https://doi.org/10.1137/1.9781611975482.1},
}

@article{Williams18,
  author    = {R. Ryan Williams},
  title     = {Faster All-Pairs Shortest Paths via Circuit Complexity},
  journal   = {{SIAM} J. Comput.},
  volume    = {47},
  number    = {5},
  pages     = {1965--1985},
  year      = {2018},
  url       = {https://doi.org/10.1137/15M1024524},
  doi       = {10.1137/15M1024524},
  bibsource = {dblp computer science bibliography, https://dblp.org}
}

@inproceedings{HSSS19,
  author    = {MohammadTaghi Hajiaghayi and
               Masoud Seddighin and
               Saeed Seddighin and
               Xiaorui Sun},
  title     = {Approximating {LCS} in Linear Time: Beating the {\(\surd\)}n Barrier},
  booktitle = {Proceedings of the Thirtieth Annual {ACM-SIAM} Symposium on Discrete
               Algorithms, {SODA} 2019, San Diego, California, USA, January 6-9,
               2019},
  pages     = {1181--1200},
  year      = {2019},
  crossrefignore  = {DBLP:conf/soda/2019},
  url       = {https://doi.org/10.1137/1.9781611975482.72},
  doi       = {10.1137/1.9781611975482.72},
  bibsource = {dblp computer science bibliography, https://dblp.org}
}

@inproceedings{GKS19,
  author       = {Elazar Goldenberg and
                  Robert Krauthgamer and
                  Barna Saha},
  editor       = {David Zuckerman},
  title        = {Sublinear Algorithms for Gap Edit Distance},
  booktitle    = {60th {IEEE} Annual Symposium on Foundations of Computer Science, {FOCS}
                  2019, Baltimore, Maryland, USA, November 9-12, 2019},
  pages        = {1101--1120},
  publisher    = {{IEEE} Computer Society},
  year         = {2019},
  url          = {https://doi.org/10.1109/FOCS.2019.00070},
  doi          = {10.1109/FOCS.2019.00070},
  bibsource    = {dblp computer science bibliography, https://dblp.org}
}

@inproceedings{AkmalW21,
  author       = {Shyan Akmal and
                  Virginia {Vassilevska Williams}},
  editor       = {Nikhil Bansal and
                  Emanuela Merelli and
                  James Worrell},
  title        = {Improved Approximation for Longest Common Subsequence over Small Alphabets},
  booktitle    = {48th International Colloquium on Automata, Languages, and Programming,
                  {ICALP} 2021, July 12-16, 2021, Glasgow, Scotland (Virtual Conference)},
  series       = {LIPIcs},
  volume       = {198},
  pages        = {13:1--13:18},
  publisher    = {Schloss Dagstuhl - Leibniz-Zentrum f{\"{u}}r Informatik},
  year         = {2021},
  url          = {https://doi.org/10.4230/LIPIcs.ICALP.2021.13},
  doi          = {10.4230/LIPICS.ICALP.2021.13},
  bibsource    = {dblp computer science bibliography, https://dblp.org}
}

@article{RSSS23,
author = {Rubinstein, Aviad and Seddighin, Saeed and Song, Zhao and Sun, Xiaorui},
title = {Approximation Algorithms for LCS and LIS with Truly Improved Running Times},
journal = {SIAM Journal on Computing},
volume = {54},
number = {4},
pages = {FOCS19-276-FOCS19-331},
year = {2025},
doi = {10.1137/20M1316500},

URL = { 
    
        https://doi.org/10.1137/20M1316500
    
    

},
eprint = { 
    
        https://doi.org/10.1137/20M1316500
    
    

}
}

@inproceedings{HSS19,
  author    = {MohammadTaghi Hajiaghayi and
               Saeed Seddighin and
               Xiaorui Sun},
  title     = {Massively Parallel Approximation Algorithms for Edit Distance and
               Longest Common Subsequence},
  booktitle = {Proceedings of the Thirtieth Annual {ACM-SIAM} Symposium on Discrete
               Algorithms, {SODA} 2019, San Diego, California, USA, January 6-9,
               2019},
  pages     = {1654--1672},
  year      = {2019},
  crossrefignore  = {DBLP:conf/soda/2019},
  url       = {https://doi.org/10.1137/1.9781611975482.100},
  doi       = {10.1137/1.9781611975482.100},
  bibsource = {dblp computer science bibliography, https://dblp.org}
}

@article{MP80,
  author       = {William J. Masek and
                  Mike Paterson},
  title        = {A Faster Algorithm Computing String Edit Distances},
  journal      = {J. Comput. Syst. Sci.},
  volume       = {20},
  number       = {1},
  pages        = {18--31},
  year         = {1980},
  url          = {https://doi.org/10.1016/0022-0000(80)90002-1},
  doi          = {10.1016/0022-0000(80)90002-1},
  bibsource    = {dblp computer science bibliography, https://dblp.org}
}

@inproceedings{SS22,
  author       = {Masoud Seddighin and
                  Saeed Seddighin},
  editor       = {Mark Braverman},
  title        = {3+{\(\epsilon\)} Approximation of Tree Edit Distance in Truly Subquadratic
                  Time},
  booktitle    = {13th Innovations in Theoretical Computer Science Conference, {ITCS}
                  2022, January 31 - February 3, 2022, Berkeley, CA, {USA}},
  series       = {LIPIcs},
  volume       = {215},
  pages        = {115:1--115:22},
  publisher    = {Schloss Dagstuhl - Leibniz-Zentrum f{\"{u}}r Informatik},
  year         = {2022},
  url          = {https://doi.org/10.4230/LIPIcs.ITCS.2022.115},
  doi          = {10.4230/LIPIcs.ITCS.2022.115},
  bibsource    = {dblp computer science bibliography, https://dblp.org}
}

@inproceedings{CFHJLRSZ21,
  author       = {Kuan Cheng and
                  Alireza Farhadi and
                  MohammadTaghi Hajiaghayi and
                  Zhengzhong Jin and
                  Xin Li and
                  Aviad Rubinstein and
                  Saeed Seddighin and
                  Yu Zheng},
  editor       = {Nikhil Bansal and
                  Emanuela Merelli and
                  James Worrell},
  title        = {Streaming and Small Space Approximation Algorithms for Edit Distance
                  and Longest Common Subsequence},
  booktitle    = {48th International Colloquium on Automata, Languages, and Programming,
                  {ICALP} 2021, July 12-16, 2021, Glasgow, Scotland (Virtual Conference)},
  series       = {LIPIcs},
  volume       = {198},
  pages        = {54:1--54:20},
  publisher    = {Schloss Dagstuhl - Leibniz-Zentrum f{\"{u}}r Informatik},
  year         = {2021},
  url          = {https://doi.org/10.4230/LIPIcs.ICALP.2021.54},
  doi          = {10.4230/LIPIcs.ICALP.2021.54},
  bibsource    = {dblp computer science bibliography, https://dblp.org}
}

@inproceedings{GRS20,
  author       = {Elazar Goldenberg and
                  Aviad Rubinstein and
                  Barna Saha},
  editor       = {Konstantin Makarychev and
                  Yury Makarychev and
                  Madhur Tulsiani and
                  Gautam Kamath and
                  Julia Chuzhoy},
  title        = {Does preprocessing help in fast sequence comparisons?},
  booktitle    = {Proccedings of the 52nd Annual {ACM} {SIGACT} Symposium on Theory
                  of Computing, {STOC} 2020, Chicago, IL, USA, June 22-26, 2020},
  pages        = {657--670},
  publisher    = {{ACM}},
  year         = {2020},
  url          = {https://doi.org/10.1145/3357713.3384300},
  doi          = {10.1145/3357713.3384300},
  bibsource    = {dblp computer science bibliography, https://dblp.org}
}

@article{BCD23,
  author       = {Karl Bringmann and
                  Vincent Cohen{-}Addad and
                  Debarati Das},
  title        = {A Linear-Time \emph{n}\({}^{\mbox{0.4}}\)-Approximation for Longest
                  Common Subsequence},
  journal      = {{ACM} Trans. Algorithms},
  volume       = {19},
  number       = {1},
  pages        = {9:1--9:24},
  year         = {2023},
  url          = {https://doi.org/10.1145/3568398},
  doi          = {10.1145/3568398},
  bibsource    = {dblp computer science bibliography, https://dblp.org}
}

@inproceedings{BCFN22a,
  author       = {Karl Bringmann and
                  Alejandro Cassis and
                  Nick Fischer and
                  Vasileios Nakos},
  editor       = {Mikolaj Bojanczyk and
                  Emanuela Merelli and
                  David P. Woodruff},
  title        = {Improved Sublinear-Time Edit Distance for Preprocessed Strings},
  booktitle    = {49th International Colloquium on Automata, Languages, and Programming,
                  {ICALP} 2022, July 4-8, 2022, Paris, France},
  series       = {LIPIcs},
  volume       = {229},
  pages        = {32:1--32:20},
  publisher    = {Schloss Dagstuhl - Leibniz-Zentrum f{\"{u}}r Informatik},
  year         = {2022},
  url          = {https://doi.org/10.4230/LIPIcs.ICALP.2022.32},
  doi          = {10.4230/LIPIcs.ICALP.2022.32},
  bibsource    = {dblp computer science bibliography, https://dblp.org}
}

@inproceedings{BCFN22b,
  author       = {Karl Bringmann and
                  Alejandro Cassis and
                  Nick Fischer and
                  Vasileios Nakos},
  editor       = {Stefano Leonardi and
                  Anupam Gupta},
  title        = {Almost-optimal sublinear-time edit distance in the low distance regime},
  booktitle    = {{STOC} '22: 54th Annual {ACM} {SIGACT} Symposium on Theory of Computing,
                  Rome, Italy, June 20 - 24, 2022},
  pages        = {1102--1115},
  publisher    = {{ACM}},
  year         = {2022},
  url          = {https://doi.org/10.1145/3519935.3519990},
  doi          = {10.1145/3519935.3519990},
  bibsource    = {dblp computer science bibliography, https://dblp.org}
}

@inproceedings{AN20,
  author       = {Alexandr Andoni and
                  Negev Shekel Nosatzki},
  editor       = {Sandy Irani},
  title        = {Edit Distance in Near-Linear Time: it's a Constant Factor},
  booktitle    = {61st {IEEE} Annual Symposium on Foundations of Computer Science, {FOCS}
                  2020, Durham, NC, USA, November 16-19, 2020},
  pages        = {990--1001},
  publisher    = {{IEEE}},
  year         = {2020},
  url          = {https://doi.org/10.1109/FOCS46700.2020.00096},
  doi          = {10.1109/FOCS46700.2020.00096},
  bibsource    = {dblp computer science bibliography, https://dblp.org}
}

@article{Nosatzki21,
  author       = {Negev Shekel Nosatzki},
  title        = {Approximating the Longest Common Subsequence problem within a sub-polynomial
                  factor in linear time},
  journal      = {CoRR},
  volume       = {abs/2112.08454},
  year         = {2021},
  url          = {https://arxiv.org/abs/2112.08454},
  eprinttype    = {arXiv},
  eprint       = {2112.08454},
  bibsource    = {dblp computer science bibliography, https://dblp.org}
}

@inproceedings{KS20,
  author       = {Michal Kouck{\'{y}} and
                  Michael E. Saks},
  editor       = {Konstantin Makarychev and
                  Yury Makarychev and
                  Madhur Tulsiani and
                  Gautam Kamath and
                  Julia Chuzhoy},
  title        = {Constant factor approximations to edit distance on far input pairs
                  in nearly linear time},
  booktitle    = {Proccedings of the 52nd Annual {ACM} {SIGACT} Symposium on Theory
                  of Computing, {STOC} 2020, Chicago, IL, USA, June 22-26, 2020},
  pages        = {699--712},
  publisher    = {{ACM}},
  year         = {2020},
  url          = {https://doi.org/10.1145/3357713.3384307},
  doi          = {10.1145/3357713.3384307},
  bibsource    = {dblp computer science bibliography, https://dblp.org}
}

@inproceedings{KS23,
  author       = {Michal Kouck{\'{y}} and
                  Michael E. Saks},
  editor       = {Nikhil Bansal and
                  Viswanath Nagarajan},
  title        = {Simple, deterministic, fast (but weak) approximations to edit distance
                  and Dyck edit distance},
  booktitle    = {Proceedings of the 2023 {ACM-SIAM} Symposium on Discrete Algorithms,
                  {SODA} 2023, Florence, Italy, January 22-25, 2023},
  pages        = {5203--5219},
  publisher    = {{SIAM}},
  year         = {2023},
  url          = {https://doi.org/10.1137/1.9781611977554.ch188},
  doi          = {10.1137/1.9781611977554.ch188},
  bibsource    = {dblp computer science bibliography, https://dblp.org}
}

@inproceedings{CGK16,
  author       = {Diptarka Chakraborty and
                  Elazar Goldenberg and
                  Michal Kouck{\'{y}}},
  editor       = {Daniel Wichs and
                  Yishay Mansour},
  title        = {Streaming algorithms for embedding and computing edit distance in
                  the low distance regime},
  booktitle    = {Proceedings of the 48th Annual {ACM} {SIGACT} Symposium on Theory
                  of Computing, {STOC} 2016, Cambridge, MA, USA, June 18-21, 2016},
  pages        = {712--725},
  publisher    = {{ACM}},
  year         = {2016},
  url          = {https://doi.org/10.1145/2897518.2897577},
  doi          = {10.1145/2897518.2897577},
  bibsource    = {dblp computer science bibliography, https://dblp.org}
}

@inproceedings{GKKS22,
  author       = {Elazar Goldenberg and
                  Tomasz Kociumaka and
                  Robert Krauthgamer and
                  Barna Saha},
  title        = {Gap Edit Distance via Non-Adaptive Queries: Simple and Optimal},
  booktitle    = {63rd {IEEE} Annual Symposium on Foundations of Computer Science, {FOCS}
                  2022, Denver, CO, USA, October 31 - November 3, 2022},
  pages        = {674--685},
  publisher    = {{IEEE}},
  year         = {2022},
  url          = {https://doi.org/10.1109/FOCS54457.2022.00070},
  doi          = {10.1109/FOCS54457.2022.00070},
  bibsource    = {dblp computer science bibliography, https://dblp.org}
}

@inproceedings{KociumakaS20,
  author       = {Tomasz Kociumaka and
                  Barna Saha},
  editor       = {Sandy Irani},
  title        = {Sublinear-Time Algorithms for Computing {\&} Embedding Gap Edit
                  Distance},
  booktitle    = {61st {IEEE} Annual Symposium on Foundations of Computer Science, {FOCS}
                  2020, Durham, NC, USA, November 16-19, 2020},
  pages        = {1168--1179},
  publisher    = {{IEEE}},
  year         = {2020},
  url          = {https://doi.org/10.1109/FOCS46700.2020.00112},
  doi          = {10.1109/FOCS46700.2020.00112},
  bibsource    = {dblp computer science bibliography, https://dblp.org}
}

@misc{Andoni20,
author = {Alexandr Andoni},
title = {Simple Constant-Factor Approximation to Edit Distance},
year = {2020},
howpublished = {\url{https://www.cs.columbia.edu/~andoni/papers/edit/simple.pdf}}
}

@article{hoeffding,
 ISSN = {01621459},
 URL = {http://www.jstor.org/stable/2282952},
 author = {Wassily Hoeffding},
 journal = {Journal of the American Statistical Association},
 number = {301},
 pages = {13--30},
 publisher = {[American Statistical Association, Taylor & Francis, Ltd.]},
 title = {Probability Inequalities for Sums of Bounded Random Variables},
 urldate = {2023-09-30},
 volume = {58},
 year = {1963}
}

@article{10.1145/322033.322044, author = {Hirschberg, Daniel S.}, title = {Algorithms for the Longest Common Subsequence Problem}, year = {1977}, issue_date = {Oct. 1977}, publisher = {Association for Computing Machinery}, address = {New York, NY, USA}, volume = {24}, number = {4}, issn = {0004-5411}, url = {https://doi.org/10.1145/322033.322044}, doi = {10.1145/322033.322044}, journal = {J. ACM}, month = {oct}, pages = {664–675}, numpages = {12} }

@INPROCEEDINGS{AKW23,
  author={Cassis, Alejandro and Kociumaka, Tomasz and Wellnitz, Philip},
  booktitle={2023 IEEE 64th Annual Symposium on Foundations of Computer Science (FOCS)}, 
  title={Optimal Algorithms for Bounded Weighted Edit Distance}, 
  year={2023},
  volume={},
  number={},
  pages={2177-2187},
  doi={10.1109/FOCS57990.2023.00135}}

@inproceedings{AFK+24,
author = {Abboud, Amir and Fischer, Nick and Kelley, Zander and Lovett, Shachar and Meka, Raghu},
title = {New Graph Decompositions and Combinatorial Boolean Matrix Multiplication Algorithms},
year = {2024},
isbn = {9798400703836},
publisher = {Association for Computing Machinery},
address = {New York, NY, USA},
url = {https://doi.org/10.1145/3618260.3649696},
doi = {10.1145/3618260.3649696},
booktitle = {Proceedings of the 56th Annual ACM Symposium on Theory of Computing},
pages = {935–943},
numpages = {9},
location = {Vancouver, BC, Canada},
series = {STOC 2024}
}

@inbook{LW17,
author = {Kasper Green Larsen and Ryan Williams},
title = {Faster Online Matrix-Vector Multiplication},
booktitle = {Proceedings of the 2017 Annual ACM-SIAM Symposium on Discrete Algorithms (SODA)},
year={2017},
chapter = {},
pages = {2182-2189},
doi = {10.1137/1.9781611974782.142},
URL = {https://epubs.siam.org/doi/abs/10.1137/1.9781611974782.142},
eprint = {https://epubs.siam.org/doi/pdf/10.1137/1.9781611974782.142}
}

@inproceedings{gorbachev2024bounded,
  author    = {Gorbachev, Egor and Kociumaka, Tomasz},
  title     = {Bounded Edit Distance: Optimal Static and Dynamic Algorithms for Small Integer Weights},
  booktitle = {Proceedings of the 57th Annual {ACM} {SIGACT} Symposium on Theory of Computing, {STOC} 2025, San Francisco, CA, USA, June 22-25, 2025},
  year      = {2025}
}

@inproceedings{cheng2025constant,
  author    = {Cheng, Siu-Wing and Huang, Haoqiang and Zhang, Shuo},
  title     = {Constant Approximation of Fr{\'{e}}chet Distance in Strongly Subquadratic Time},
  booktitle = {Proceedings of the 57th Annual {ACM} {SIGACT} Symposium on Theory of Computing, {STOC} 2025, Prague, Czech Republic, June 23-27, 2025},
  year      = {2025}
}

@inproceedings{Bringmann14,
author = {Bringmann, Karl},
title = {Why Walking the Dog Takes Time: Frechet Distance Has No Strongly Subquadratic Algorithms Unless SETH Fails},
year = {2014},
isbn = {9781479965175},
publisher = {IEEE Computer Society},
address = {USA},
url = {https://doi.org/10.1109/FOCS.2014.76},
doi = {10.1109/FOCS.2014.76},
booktitle = {Proceedings of the 2014 IEEE 55th Annual Symposium on Foundations of Computer Science},
pages = {661–670},
numpages = {10},
series = {FOCS '14}
}

@inproceedings{NewSSSP,
  author = {Duan, Ran and Mao, Jiayi and Mao, Xiao and Shu, Xinkai and Yin, Longhui},
title = {Breaking the Sorting Barrier for Directed Single-Source Shortest Paths},
year = {2025},
isbn = {9798400715105},
publisher = {Association for Computing Machinery},
address = {New York, NY, USA},
url = {https://doi.org/10.1145/3717823.3718179},
doi = {10.1145/3717823.3718179},
booktitle = {Proceedings of the 57th Annual ACM Symposium on Theory of Computing},
pages = {36–44},
numpages = {9},
location = {Prague, Czechia},
series = {STOC '25}
}

@inproceedings{JX24,
author = {Jin, Ce and Xu, Yinzhan},
title = {Shaving Logs via Large Sieve Inequality: Faster Algorithms for Sparse Convolution and More},
year = {2024},
isbn = {9798400703836},
publisher = {Association for Computing Machinery},
address = {New York, NY, USA},
url = {https://doi.org/10.1145/3618260.3649605},
doi = {10.1145/3618260.3649605},
booktitle = {Proceedings of the 56th Annual ACM Symposium on Theory of Computing},
pages = {1573–1584},
numpages = {12},
location = {Vancouver, BC, Canada},
series = {STOC 2024}
}

@InProceedings{CGMW21,
  author =	{Charalampopoulos, Panagiotis and Gawrychowski, Pawe{\l} and Mozes, Shay and Weimann, Oren},
  title =	{{An Almost Optimal Edit Distance Oracle}},
  booktitle =	{48th International Colloquium on Automata, Languages, and Programming (ICALP 2021)},
  pages =	{48:1--48:20},
  series =	{Leibniz International Proceedings in Informatics (LIPIcs)},
  ISBN =	{978-3-95977-195-5},
  ISSN =	{1868-8969},
  year =	{2021},
  volume =	{198},
  editor =	{Bansal, Nikhil and Merelli, Emanuela and Worrell, James},
  publisher =	{Schloss Dagstuhl -- Leibniz-Zentrum f{\"u}r Informatik},
  address =	{Dagstuhl, Germany},
  URL =		{https://drops.dagstuhl.de/entities/document/10.4230/LIPIcs.ICALP.2021.48},
  URN =		{urn:nbn:de:0030-drops-141175},
  doi =		{10.4230/LIPIcs.ICALP.2021.48}
}

@article{gawrychowski2019minimum,
  title={Minimum Cut in $ {O}(m \log ^ 2 n) $ Time},
  author={Gawrychowski, Pawe{\l} and Mozes, Shay and Weimann, Oren},
  journal={arXiv preprint arXiv:1911.01145},
  year={2019},
  note={Best ICALP Paper}
}

@inproceedings{KS24,
  author       = {Michal Kouck{\'{y}} and
                  Michael E. Saks},
  editor       = {Bojan Mohar and
                  Igor Shinkar and
                  Ryan O'Donnell},
  title        = {Almost Linear Size Edit Distance Sketch},
  booktitle    = {Proceedings of the 56th Annual {ACM} Symposium on Theory of Computing,
                  {STOC} 2024, Vancouver, BC, Canada, June 24-28, 2024},
  pages        = {956--967},
  publisher    = {{ACM}},
  year         = {2024},
  url          = {https://doi.org/10.1145/3618260.3649783},
  doi          = {10.1145/3618260.3649783},
  bibsource    = {dblp computer science bibliography, https://dblp.org}
}

@inproceedings{BCFK24,
  author       = {Karl Bringmann and
                  Alejandro Cassis and
                  Nick Fischer and
                  Tomasz Kociumaka},
  editor       = {David P. Woodruff},
  title        = {Faster Sublinear-Time Edit Distance},
  booktitle    = {Proceedings of the 2024 {ACM-SIAM} Symposium on Discrete Algorithms,
                  {SODA} 2024, Alexandria, VA, USA, January 7-10, 2024},
  pages        = {3274--3301},
  publisher    = {{SIAM}},
  year         = {2024},
  url          = {https://doi.org/10.1137/1.9781611977912.117},
  doi          = {10.1137/1.9781611977912.117},
  bibsource    = {dblp computer science bibliography, https://dblp.org}
}

@inproceedings{BCFN22,
  author       = {Karl Bringmann and
                  Alejandro Cassis and
                  Nick Fischer and
                  Vasileios Nakos},
  editor       = {Stefano Leonardi and
                  Anupam Gupta},
  title        = {Almost-optimal sublinear-time edit distance in the low distance regime},
  booktitle    = {{STOC} '22: 54th Annual {ACM} {SIGACT} Symposium on Theory of Computing,
                  Rome, Italy, June 20 - 24, 2022},
  pages        = {1102--1115},
  publisher    = {{ACM}},
  year         = {2022},
  url          = {https://doi.org/10.1145/3519935.3519990},
  doi          = {10.1145/3519935.3519990},
  bibsource    = {dblp computer science bibliography, https://dblp.org}
}

@inproceedings{KPS21,
  author       = {Tomasz Kociumaka and
                  Ely Porat and
                  Tatiana Starikovskaya},
  title        = {Small-space and streaming pattern matching with $k$
                  edits},
  booktitle    = {62nd {IEEE} Annual Symposium on Foundations of Computer Science, {FOCS}
                  2021, Denver, CO, USA, February 7-10, 2022},
  pages        = {885--896},
  publisher    = {{IEEE}},
  year         = {2021},
  url          = {https://doi.org/10.1109/FOCS52979.2021.00090},
  doi          = {10.1109/FOCS52979.2021.00090},
  bibsource    = {dblp computer science bibliography, https://dblp.org}
}

@article{Grabowski16,
title = {New tabulation and sparse dynamic programming based techniques for sequence similarity problems},
journal = {Discrete Applied Mathematics},
volume = {212},
pages = {96-103},
year = {2016},
note = {Stringology Algorithms},
issn = {0166-218X},
doi = {https://doi.org/10.1016/j.dam.2015.10.040},
url = {https://www.sciencedirect.com/science/article/pii/S0166218X15005284},
author = {Szymon Grabowski}
}

@inproceedings{BoroujeniGHS19,
  author       = {Mahdi Boroujeni and
                  Mohammad Ghodsi and
                  MohammadTaghi Hajiaghayi and
                  Saeed Seddighin},
  editor       = {Moses Charikar and
                  Edith Cohen},
  title        = {1+\emph{{\(\epsilon\)}} approximation of tree edit distance in quadratic
                  time},
  booktitle    = {Proceedings of the 51st Annual {ACM} {SIGACT} Symposium on Theory
                  of Computing, {STOC} 2019, Phoenix, AZ, USA, June 23-26, 2019},
  pages        = {709--720},
  publisher    = {{ACM}},
  year         = {2019},
  url          = {https://doi.org/10.1145/3313276.3316388},
  doi          = {10.1145/3313276.3316388},
  bibsource    = {dblp computer science bibliography, https://dblp.org}
}

@article{APred13,
author = {Drucker, Andrew},
title = {High-confidence predictions under adversarial uncertainty},
year = {2013},
issue_date = {August 2013},
publisher = {Association for Computing Machinery},
address = {New York, NY, USA},
volume = {5},
number = {3},
issn = {1942-3454},
url = {https://doi.org/10.1145/2493252.2493257},
doi = {10.1145/2493252.2493257},
journal = {ACM Trans. Comput. Theory},
month = aug,
articleno = {12},
numpages = {18}
}

@misc{peng2025,
      title={High dimensional online calibration in polynomial time}, 
      author={Binghui Peng},
      year={2025},
      eprint={2504.09096},
      archivePrefix={arXiv},
      primaryClass={cs.LG},
      url={https://arxiv.org/abs/2504.09096}, 
}

\appendix
\section{An Estimate-to-Search Reduction}
In this appendix we give a reduction from the problem of searching an approximately optimal path on an ED/LCS grid with diagonal shortcuts (equivalently, ED transformation of insertions, deletions, and substitutions, or the actual longest common subsequence) to the problem of estimating the length of the optimal path (respectively, estimating the ED or the length of the LCS). 

An $(\eps, \delta)$-approximation of ED/LCS is an estimate within $[A, (1 + \eps)A + \delta]$ where $A$ is the true ED/LCS.
\begin{lemma}[Estimate-to-Search]
Given an algorithm $\mA$ that runs in time $n^2 / 2^{\log^{\Omega(1)}(n)}$ and estimates the ED/LCS between two input strings $X,Y$ of lengths $n_X,n_Y = O(n)$ to within $(\varepsilon(n), 2^{\log^{\Omega(1)}(n)})$-approximation (i.e., the multiplicative approximation factor is a function on $n$), we obtain an algorithm $\mA'$ that, runs in time $n^2 / 2^{\log^{\Omega(1)}(n)}$ and finds a transformation whose answer is an $(\varepsilon(\Theta\left(\sqrt{n}\right)), 2^{\log^{\Omega(1)}(n)})$-approximation of the optimal answer.
\end{lemma}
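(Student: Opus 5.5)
The plan is a two-level divide-and-conquer that uses $\mA$ only as a black-box estimator on pieces of length about $\sqrt{n}$. This is where the $\varepsilon(\Theta(\sqrt n))$ in the conclusion comes from.

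Cut $X$ into $K = \Theta(\sqrt{n})$ consecutive blocks $X_1,\dots,X_K$ of length $b = \Theta(\sqrt n)$. In the rotated grid of \cref{def:edgrid} (or \cref{def:lcsgrid}) this amounts to fixing the columns where an optimal alignment crosses the block boundaries. The optimal path is determined, up to its within-block pieces, by the breakpoints $0=j_0\le j_1\le\dots\le j_K=n_Y$ such that $X_k$ is aligned to $Y(j_{k-1},j_k]$. To keep the number of candidate substrings small, we round each $j_k$ to a multiple of a granularity $g$. By \cref{lemma:pathdiff}, moving the endpoint of a sub-path by at most $g$ rows changes its weight by at most $2g$. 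The total rounding loss is therefore $O(Kg)$, and we choose $g = n/(K\cdot 2^{\log^{c}n})$ so that this is an additive $n/2^{\log^{\Omega(1)}n}$, matching the allowed additive term.

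The main computational step is the edge-weight estimates. For each block $k$ and each pair of rounded breakpoints $(j,j')$ with $j'-j = O(b)$, run $\mA$ on $(X_k, Y(j,j'])$ to get $\tilde w_k(j,j')$. We can restrict to $|j'-j-b| \le O(b)$: any longer stretch of $Y$ costs at least its excess for ED and is unhelpful for LCS beyond $b$. (If a window bound is needed to keep the count low, one can first use the coarse constant-factor estimate of \cref{lemma:estimate} to localize a band.) For a fixed $k$ and $j$, the number of admissible $j'$ is $O(b/g+1)$, so the total number of calls is $K\cdot(n/g)\cdot(b/g+1)$. Each call costs $b^2/2^{\log^{\Omega(1)} b} = n/2^{\log^{\Omega(1)}n}$. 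I will pick $c$ small enough (together with $b$ possibly slightly larger than $\sqrt n$, still $\Theta(\sqrt n)$ up to the constant) that $K(n/g)(b/g)\cdot b^2 \le n^2\cdot 2^{O(\log^{c}n)}$, which is dominated by the $2^{-\log^{\Omega(1)}n}$ saving of $\mA$.

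Given the estimates, solve the layered shortest (longest) path DP over the breakpoints to obtain estimated-optimal $(j_k)$. This costs $O(K(n/g)(b/g))$ time, which is subquadratic. By soundness of each estimate (each one is at least the true value for ED), the chosen breakpoints have true cost at most $(1+\varepsilon(b))$ times the optimum, plus $K\cdot 2^{\log^{\Omega(1)}b}$ plus the rounding loss. Finally, for each block, compute an actual transformation of $X_k$ into $Y(j_{k-1},j_k]$ by exact DP in $O(b\cdot(j_k-j_{k-1}))$ time, for $O(nb)=O(n^{1.5})$ in total. Concatenate the blocks.

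The main obstacle is the bookkeeping of the additive errors. The per-call additive slack $2^{\log^{\Omega(1)} b}$ summed over $K$ blocks, and the rounding loss, must both stay within the advertised $2^{\log^{\Omega(1)}n}$-type additive term. I would also have to check that the interpretation of $\delta$ in the statement is compatible with this: read as $n/2^{\log^{\Omega(1)}n}$, it works with the balance of $g$ chosen above. The LCS case is symmetric, with min replaced by max and the soundness inequalities reversed.
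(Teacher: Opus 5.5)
Your pipeline matches the paper's. You round the crossing positions at $\Theta(\sqrt n)$ block boundaries, estimate every piece with $\mA$, run a min/max DP over the estimates, and recover each piece exactly in $O(n^{1.5})$. The gap is the decomposition: it does not support your counting. Cutting $X$ into blocks does not fix columns of the rotated grid, because a fixed $u$ is the line $x-y=2u-n_X$ of \cref{def:edgrid}, not a column $x=u+v$. So nothing bounds the length $j_k-j_{k-1}$ of the $Y$-substring aligned to $X_k$. Your bound of $K(n/g)(b/g+1)$ calls, each on inputs of length $O(b)$, therefore rests entirely on restricting to $|j'-j-b|=O(b)$.

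Your justification for that restriction (a longer stretch costs at least its excess for ED, and gains at most $b$ for LCS) does not show such stretches can be discarded. The excess must be paid somewhere, and the optimum may need to pay it inside a single block. Concretely, let $A,B,Z$ be independent random strings of length $n/2$ over a large alphabet, with $X=AB$ and $Y=AZB$. Then $\ed(X,Y)=n/2$ and $\lcs(X,Y)=n$, and optimal alignments let one block of $X$ absorb essentially all of $Z$. If every stretch is at most $Cb$, the offset $j_k-kb$ grows by at most $(C-1)b$ per block. Hence $\Omega(n/C)$ characters of $X$ around the boundary between $A$ and $B$ cannot be matched at their correct offset. Spurious matches between independent random strings are rare, so both ED and LCS move by $\Omega(n/C)$. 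That is a constant-factor error, not an $(\varepsilon(\Theta(\sqrt n)), n/2^{\log^{\Omega(1)}n})$ one.

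Dropping the restriction does not help either: a fixed $(k,j)$ then has up to $n/g$ choices of $j'$, with pieces of length up to $\Theta(n)$, which is far above quadratic time. Localizing a band with \cref{lemma:estimate} does not repair this. That lemma gives a global constant-factor ED estimate, is unavailable for LCS, and says nothing about where an optimal alignment crosses each block boundary.

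The missing idea is to cut along actual columns of the rotated grid, as the paper does. Take $\mu=2^{\lfloor \log(n)/2\rfloor}=\Theta(\sqrt n)$ and cut at the columns $x\in\mu\mathbb{Z}$, i.e., the anti-diagonals $u+v=k\mu$ of the original grid. Every path visits each of these columns exactly once. By \cref{obs:pathdiff} it changes row by at most $\mu$ between consecutive cut columns. So every piece is an ED/LCS instance on two substrings of total length exactly $\mu$, and no alignment is excluded.

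Round the rows at the cut columns to multiples of $\lfloor\mu/\gamma\rfloor$ with $\gamma=2^{\log^{c}n}$. By \cref{lemma:pathdiff} this costs $O(\sqrt n\cdot \sqrt n/\gamma)=O(n/\gamma)$. There are then only $O(n\gamma^2)$ pieces, each handed to $\mA$ on inputs of total length $\mu$. Choosing $c$ small relative to the savings of $\mA$ gives $n^2/2^{\log^{\Omega(1)}n}$ total time.

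With that decomposition the rest of your argument goes through: the DP over the estimates, the exact recovery, and your reading of the additive term as $n/2^{\log^{\Omega(1)}n}$, which is what the paper's proof actually delivers.
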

\begin{proof}
Let $\gamma = 2^{\log^{c}(n)}$ for some constant $c > 0$ be such that the running time of $\mA$ is $T(\mA, n) = O\left(n^2 / 2^{\log^{2c}(n)}\right)$. Let $\mu = 2 ^ {\floor{\log(n) / 2}} = \Theta\left(\sqrt{n}\right)$. We say a path $p$ on the Rotated Edit Distance/LCS Grid Graph is $\gamma$-rounded if for every $\mu \mid x$, we have $\floor{\mu / \gamma} \mid p(x)$.

We can round the optimal path $p$ to a $\gamma$-rounded path using \cref{lemma:pathdiff}, incurring an additive error of only $O\left(\sqrt{n} \cdot (\sqrt{n} / \gamma)\right) = O(n / \gamma) = n / 2^{\log^{\Omega(1)}(n)}$. Thus it suffices to find the optimal $\gamma$-rounded path.

Using $\mA$, we can approximate the length of the optimal $\vertex{x}{y}$-$\vertex{x + \mu}{y'}$ path for every $\mu \mid x$, $\floor{\mu / \gamma} \mid y, y'$ with $\abs{y - y'} \le \mu$. Since for each path we only need a $\Theta(\mu) \cdot \Theta(\mu)$ sub-grid, the total running time is $$O\left(\underbrace{\sqrt{n} \cdot 2^{\log^{c}(n)}\sqrt{n}}_{\text{total number of sub-grids}} \cdot \underbrace{\left(\sqrt{n}\right) ^ 2 / 2^{\log^{2c}\left(\sqrt{n}\right)}}_{\text{time for each sub-grid}}\right) = n ^ 2 / 2^{\log^{\Omega(1)}(n)}.$$

We can use standard dynamic programming on a $\gamma$-rounded path $\hat{p}$ that is optimal with respect to the estimates computed by $\mA$. This gives us an $(\varepsilon(\Theta\left(\sqrt{n}\right)), 2^{\log^{\Omega(1)}(n)})$-approximation of the optimal answer. This gives us $\hat{p}(x)$ for every column $\mu \mid x$. 

Finally we can fully recover $\hat{p}$ by using an exact algorithm for each of the sub-grids that contain a sub-path of $\hat{p}$. The exact algorithm runs in time quadratic in the side-length of the underlying sub-grids. 
Hence the total time of this part is 
$$O\left(\underbrace{\sqrt{n}}_{\text{number of sub-paths}} \cdot \underbrace{\sqrt{n} \cdot \sqrt{n}}_{\text{time for each sub-path}}\right) = O(n ^ {1.5}).$$
\end{proof}

\section{An Information-Theory-Based Tree Total Deviation Upper Bound} \label{app:infotheory}

In this section, we present an alternative proof for a stronger version of \cref{lemma:totalmeandeviation} using information-theoretic tools. Specifically, we relate the notion of tree total deviation to the relative entropy (Kullback-Leibler divergence) between the normalized sequence $A$ and the uniform distribution.

\begin{lemma} \label{lemma:strong_totalmeandeviation}
    Let $A = (a_1, \dots, a_n)$ be a sequence of real numbers with $n = M^S$. Let $\|A\|_1 := \sum_x |a_x|$ and $\|A\|_\infty := \max_x |a_x|$. Then,
    $$ \treemd_{M, S}(A) \le \|A\|_1 \sqrt{2S \ln \left( \frac{n \|A\|_\infty}{\|A\|_1} \right)}. $$
\end{lemma}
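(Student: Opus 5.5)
Without loss of generality take all $a_x \ge 0$. Signs only matter through $|a_x|$ in the final bound, and the sign-vector argument of \cref{lemma:totalmeandeviation} already reduces everything to $\sum_x |g_x||a_x|$. To see this, write $a_x = a_x^+ - a_x^-$. Then $\md_M$ is subadditive, so $\treemd_{M,S}(A) \le \treemd_{M,S}(A^+) + \treemd_{M,S}(A^-)$. A cleaner route is to work directly with the decomposition \eqref{eq:tcurv-bound}, which gives
$$\treemd_{M,S}(A) \le 2\sum_x g_x a_x,$$
where $g_x=\sum_{s} g_{x,s}$ and each $g_{x,s}\in\{\pm1\}$ comes from \cref{claim:totalmeandeviationtool}. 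Normalize $\mu(x) := |a_x|/\|A\|_1$, a probability distribution on $[n]$, and let $\nu$ be the uniform distribution. Then $\treemd_{M,S}(A) \le 2\|A\|_1\,\Ex_{x\sim\mu}[\sigma_x g_x]$ with $\sigma_x = \sign(a_x)$. (A constant-factor loss is fine, but I will aim to avoid the factor $2$; see below.)

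The key step is the Donsker--Varadhan change of measure. For any function $f$ and $\lambda>0$,
$$\Ex_\mu[f] \le \frac{1}{\lambda}\Big(D(\mu\|\nu) + \ln \Ex_\nu[e^{\lambda f}]\Big).$$
Apply it with $f(x) = \sigma_x g_x$. Under $\nu$, $g_x$ is distributed as a sum of $S$ i.i.d.\ Rademacher signs; this is exactly the observation made in the proof of \cref{lemma:totalmeandeviation}, since $g_{x,s}$ is balanced within each scale-$s$ interval and constant on the finer structure determined by larger scales. The extra sign $\sigma_x$ breaks this, so I bound $e^{\lambda \sigma_x g_x} \le e^{\lambda g_x} + e^{-\lambda g_x}$. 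This gives $\Ex_\nu[e^{\lambda f}] \le 2\cosh(\lambda)^S \le 2e^{\lambda^2 S/2}$. Alternatively, handle $A^+$ and $A^-$ separately, with the factor-$2$ bookkeeping absorbed. Next,
$$D(\mu\|\nu) = \sum_x \mu(x)\ln(n\mu(x)) \le \ln\big(n\max_x\mu(x)\big) = \ln\frac{n\|A\|_\infty}{\|A\|_1} =: L.$$
Optimizing $\lambda = \sqrt{2L/S}$ yields $\Ex_\mu[f] \le \sqrt{2SL}$ up to the additive $\ln 2/\lambda$.

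The main obstacle is the constants. The target bound is $\|A\|_1\sqrt{2SL}$ with no factor $2$ and no $\ln 2$ slack, so I must tighten two places. First, I will replace the factor $2$ from \cref{claim:totalmeandeviationtool} by a sharper claim: for a mean-centered block, $\md_M(B) = 2s^+$, and choosing $\gamma_i = \sign(b_i - \overline B)$ gives $\md_M(B) = \sum_i \gamma_i (b_i-\overline B)$ exactly. However, these $\gamma$ are unbalanced, which destroys the Rademacher structure. Instead I keep the balanced choice but use $\md_M(B) \le \sum_i \gamma_i b_i \cdot 2$ only against a per-interval normalization. If this fails, I will use a martingale formulation instead. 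Let $\mu_s$ be the conditional distribution of $\mu$ on the $M$ children of each scale-$s$ interval. The chain rule gives $D(\mu\|\nu) = \sum_s \Ex[D(\mu_s\|\mathrm{Unif}_M)]$. Pinsker then gives, per interval, $\md_M/\text{mass} = 2\,\mathrm{TV}(\mu_s,\mathrm{Unif}) \le \sqrt{2 D(\mu_s\|\mathrm{Unif})}$. Summing with Cauchy--Schwarz over the $S$ scales, weighted by mass, gives
$$\treemd_{M,S}(A) \le \|A\|_1\sum_s \Ex\sqrt{2D_s} \le \|A\|_1\sqrt{2S\sum_s \Ex D_s} = \|A\|_1\sqrt{2S\,D(\mu\|\nu)},$$
which is exactly the claim for nonnegative $A$. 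This Pinsker-plus-chain-rule route is the one I would actually commit to.

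For signed $A$, I would note that $\md_M(A_{(l,r]})$ is at most the value obtained by running the same argument on $|A|$. This is the remaining delicate point, since $\md$ of a signed sequence can exceed $\md$ of its absolute value. I would handle it via $\md_M(A) \le \md_M(A^+)+\md_M(A^-)$, and then apply concavity of $\sqrt{\cdot}$ across the two parts to recover $\|A\|_1\sqrt{2SL}$.
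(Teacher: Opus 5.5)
\textbf{The nonnegative case is fine.} The route you commit to is the paper's argument and is correct for nonnegative $A$: chain rule for $D(\mu\|\nu)$ over the recursion tree, Pinsker on each interval, Jensen within a scale, Cauchy--Schwarz across scales, and finally $D(\mu\|\nu)\le \ln(n\|A\|_\infty/\|A\|_1)$.

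\textbf{The signed case cannot be closed, because the statement is false there.} Take $M=2$, $S=1$, $A=(1,-1)$. Then $\treemd_{2,1}(A)=\md_2(A_{(0,2]})=|1-0|+|-1-0|=2$. But $\|A\|_1=2$ and $\|A\|_\infty=1$, so the right-hand side is $2\sqrt{2\ln 1}=0$. The same happens whenever the magnitudes are constant and the signs do not cancel at every scale. The paper's proof hides this in ``without loss of generality $a_x\ge 0$ (otherwise replace $a_x$ with $|a_x|$)''. That is exactly the reduction you correctly flagged as invalid: $\treemd_{M,S}$ is not monotone under $a_x\mapsto |a_x|$. Your first paragraph's justification for the same WLOG is invalid for the same reason.

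\textbf{Where your final step fails.} The failure is the claim that concavity recovers $\|A\|_1\sqrt{2SL}$ with $L=\ln(n\|A\|_\infty/\|A\|_1)$. Subadditivity of $\md_M$ and the nonnegative case give
\[
\treemd_{M,S}(A)\le \|A^+\|_1\sqrt{2SL^+}+\|A^-\|_1\sqrt{2SL^-},\qquad L^\pm=\ln\bigl(n\|A^\pm\|_\infty/\|A^\pm\|_1\bigr).
\]
Put $w=\|A^+\|_1/\|A\|_1$. Then $L^+\le L-\ln w$ and $L^-\le L-\ln(1-w)$. Jensen therefore yields only $\treemd_{M,S}(A)\le \|A\|_1\sqrt{2S\,(L+H(w))}$, where $H(w)=-w\ln w-(1-w)\ln(1-w)\le \ln 2$. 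The example above shows this additive $\ln 2$ cannot be removed.

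\textbf{What your argument actually proves.} It gives the lemma exactly as stated for nonnegative sequences. For general real sequences it gives the weaker $\treemd_{M,S}(A)\le \|A\|_1\sqrt{2S\ln(2n\|A\|_\infty/\|A\|_1)}$, which is what the statement should say in that case. Your discarded Donsker--Varadhan route handles signs directly and produces the same $\ln 2$. It is strictly weaker, though, because it also keeps the factor $2$ from \cref{claim:totalmeandeviationtool} and so gives $2\|A\|_1\sqrt{2S(L+\ln 2)}$.
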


\begin{proof}
    Without loss of generality, assume $a_x \ge 0$ for all $x$ (otherwise, replace $a_x$ with $|a_x|$). We define a probability distribution $P$ over the indices $x \in [n]$ where $p_x = a_x / {\|A\|_1}$. Let $U$ denote the uniform distribution over $[n]$, i.e., $u_x = 1/n$.
    
    Recall that the Kullback-Leibler (KL) divergence between $P$ and $U$ is defined as:
    $$ D(P \| U) = \sum_{x=1}^n p_x \ln\left(\frac{p_x}{1/n}\right). $$
    
    We view the recursive partitioning of $A$ as a tree $\mathcal{T}$ of depth $S$. For any recursion interval $I \in \mathcal{T}$ (a node in the tree), let ${\sigma}(I) := \sum_{x \in I} p_x$ be the total probability mass in that interval. For a scale-$s$ interval $I$ (where $s < S$), let its $M$ sub-intervals on scale $s+1$ be $I_1, \dots, I_M$. We define the \emph{local distribution} at $I$ as the vector $q_I \in \mathbb{R}^M$ where the $k$-th component is $q_I(k) = {\sigma}(I_k) / {\sigma}(I)$. Note that if the mass were distributed uniformly, this local distribution would be $u_{\text{local}} = (1/M, \dots, 1/M)$.
    
    By the chain rule for KL divergence, the total divergence $D(P \| U)$ decomposes into the sum of local divergences weighted by the mass of the parent intervals:
    \begin{equation} \label{eq:chain_rule}
        D(P \| U) = \sum_{s \in [S]} \sum_{I \in s} {\sigma}(I) \cdot D(q_I \| u_{\text{local}}).
    \end{equation}
    
    Now, consider the mean deviation term $\md_M(A_I)$ for a specific interval $I$. Substituting $A_{\Sigma I_k} := {\|A\|_1} \cdot {\sigma}(I_k)$, we have:
    \begin{align*}
        \md_M(A_I) &= \sum_{k=1}^M \left| A_{\Sigma I_k} - \frac{1}{M} A_{\Sigma I} \right| \\
        &= {\|A\|_1} \sum_{k=1}^M \left| {\sigma}(I_k) - \frac{{\sigma}(I)}{M} \right| \\
        &= {\|A\|_1} \cdot {\sigma}(I) \sum_{k=1}^M \left| \frac{{\sigma}(I_k)}{{\sigma}(I)} - \frac{1}{M} \right| \\
        &= {\|A\|_1} \cdot {\sigma}(I) \sum_{k=1}^M \left|q_I(k) - \frac{1}{M} \right|.
    \end{align*}
    The summation term $\sum |q_I(k) - 1/M|$ is exactly twice the Total Variation distance between the local distribution $q_I$ and the uniform distribution $u_{\text{local}}$. Using Pinsker's inequality ($2 \text{TV}(P, Q)^2 \le D(P \| Q)$), we bound this term:
    $$ \sum_{k=1}^M \left| q_I(k) - \frac{1}{M} \right| = 2 \text{TV}(q_I, u_{\text{local}}) \le \sqrt{2 D(q_I \| u_{\text{local}})}. $$
    Therefore, the contribution of interval $I$ to the total deviation is bounded by:
    $$ \md_M(A_I) \le {\|A\|_1} \cdot {\sigma}(I) \sqrt{2 D(q_I \| u_{\text{local}})}. $$
    
    Let $C_s$ be the sum of deviations on a scale $s$. Summing over all intervals $I$ on scale $s$:
    $$ C_s = \sum_{I \in s} \md_M(A_I) \le {\|A\|_1} \sum_{I} {\sigma}(I) \sqrt{2 D(q_I \| u_{\text{local}})}. $$
    Since $\sum_{I} {\sigma}(I) = 1$, we can apply Jensen's inequality (using the concavity of the square root function) to move the summation inside the square root:
    $$ C_s \le {\|A\|_1} \sqrt{\sum_{I} {\sigma}(I) \cdot 2 D(q_I \| u_{\text{local}})}. $$
    Let $G_s := \sum_{I \in s} {\sigma}(I) D(q_I \| u_{\text{local}})$ be the expected information gain on scale $s$. Then $C_s \le {\|A\|_1} \sqrt{2 G_s}$.
    
    Finally, we sum over all scales $s \in [S]$ to bound $\treemd_{M, S}(A)$. Using the Cauchy-Schwarz inequality:
    $$ \treemd_{M, S}(A) = \sum_{s=1}^S C_s \le {\|A\|_1} \sum_{s=1}^S \sqrt{2 G_s} \le {\|A\|_1} \sqrt{S \sum_{s=1}^S 2 G_s}. $$
    From \cref{eq:chain_rule}, we know $\sum G_s = D(P \| U)$. Thus:
    $$ \treemd_{M, S}(A) \le {\|A\|_1} \sqrt{2S \cdot D(P \| U)}. $$
    
    It remains to bound $D(P \| U)$. We maximize $\sum p_x \ln(n p_x)$ subject to $0 \le p_x \le \|A\|_\infty / {\|A\|_1}$ and $\sum p_x = 1$. The divergence is maximized when the distribution is most concentrated (minimum entropy). Since $p_x \le \|A\|_\infty / \|A\|_1$ for all $x$, we have:
    $$ D(P \| U) = \sum_{x} p_x \ln(n p_x) \le \sum_{x} p_x \ln\left(n \frac{\|A\|_\infty}{\|A\|_1}\right) = \ln\left(\frac{n \|A\|_\infty}{\|A\|_1}\right). $$
    Plugging this back into the tree deviation bound yields the lemma.
\end{proof}

\section{A Counter-Example to a Ratio-Free Version of \cref{lemma:treerd}} \label{appendix:treerdold}
We provide an example showing that the $\alpha$-dependence of \cref{lemma:treerd} is necessary, even for $M = 2$: The authors are grateful to an anonymous reviewer for this construction.
\begin{lemma}
    We can construct two sequences $A = (a_1, a_2, \ldots, a_n)$ and $B = (b_1, b_2, \ldots, b_n)$ of non-negative real numbers where $n = 2 ^ S$, such that
    \begin{equation}
    \treerd_{2, S}(A, B) = \Omega\left(S\sum_x{b_x}\right).
    \end{equation}
\end{lemma}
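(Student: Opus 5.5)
We would prove this by an explicit construction in which $B$ is a point mass and $A$ is maximally unbalanced, with a fixed orientation, on every interval containing that point. Since $\treerd_{2,S}(A,B)$ only receives contributions from recursion intervals with $B_{\Sigma(l,r]}>0$, we put all of $B$'s mass on the single index $x=1$. That is, $b_1=1$ and $b_x=0$ otherwise, so $\sum_x b_x=1$. The only contributing intervals are then the $S$ prefixes $(0,2^s]$ for $s\in[S]$. It suffices to make each of them contribute $\Omega(1)$, that is,
\[
\frac{1}{A_{\Sigma(0,2^s]}}\,\md_2\bigl(A_{(0,2^s]}\bigr)=\Omega(1).
\]

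For $M=2$ we have $\md_2(A_{(l,r]})=|\sigma_L-\sigma_R|$, where $\sigma_L,\sigma_R$ are the sums of $A$ over the two halves. So we want the right half $(2^{s-1},2^s]$ of every prefix to carry a constant factor more mass than the left half $(0,2^{s-1}]$. Take geometric weights: set $a_1=1$, and put total mass $3^{s}$ on the block $(2^{s-1},2^s]$ for each $s\in[S]$. The mass may be placed at a single index of the block or spread uniformly; it does not matter. All entries are non-negative.

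The verification is then a short computation. The left half of $(0,2^s]$ is exactly the prefix $(0,2^{s-1}]$, with total mass
\[
\sigma_L = 1+\sum_{j<s}3^j \le \tfrac12\, 3^s + \tfrac12 .
\]
The right half has $\sigma_R = 3^s$. Hence
\[
|\sigma_R-\sigma_L| \ge \tfrac12\, 3^s - \tfrac12
\qquad\text{and}\qquad
A_{\Sigma(0,2^s]} = \sigma_L+\sigma_R \le \tfrac32\, 3^s + \tfrac12 .
\]
Their ratio is at least a fixed constant; it is $\ge 1/4$ for $s\ge 1$. Also $B_{\Sigma(0,2^s]}=1$ and $A_{\Sigma(0,2^s]}>0$, so the degenerate case of the definition of $\rd$ never arises.

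Summing over the $S$ prefixes gives $\treerd_{2,S}(A,B)\ge S/4=\Omega\bigl(S\sum_x b_x\bigr)$, as claimed. The main (mild) point to get right is the growth rate. A factor-$2$ growth would make $\sigma_R\approx\sigma_L$ and kill the deviation, which is why we use base $3$, or any base $>2$. It is also worth remarking that here the $B$-to-$A$ ratio on the prefix $(0,2^S]$ is $\alpha\approx 3^{-S}$. Thus $\log(1/\alpha)=\Theta(S)$, which shows that the $\bigl(1+\log(1/\alpha)\bigr)$ factor in \cref{lemma:treerd} cannot be dropped and is tight in this example.
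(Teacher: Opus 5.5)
Your construction is correct and proves the statement as written, but it differs from the paper's. The paper takes $B$ to be all ones and $a_i = 2^{\text{popcount}(i-1)}$. Then \emph{every} recursion interval splits its $A$-mass $1:2$ between its halves, so $\rd_2(A(l,r],B(l,r]) = (r-l)/3$ for all $(l,r]$ and $\treerd_{2,S}(A,B) = nS/3$ exactly. You instead concentrate $B$ at index $1$. Only the $S$ prefixes $(0,2^s]$ then contribute, and you need $A$ to be unbalanced along that single chain, which geometric block masses achieve.

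There is a harmless slip. Your displayed bounds give the ratio $(3^s-1)/(3^{s+1}+1)$, which equals $1/5$ at $s=1$, not something $\ge 1/4$. The exact ratio is $(3^s+1)/(3^{s+1}-1) \ge 1/3$, so the conclusion stands.

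What the paper's version buys is that it stays inside the hypotheses of \cref{lemma:treerd}. It has $b_x \le a_x$ pointwise, and $B_\Sigma \ge \alpha A_\Sigma$ on every recursion interval with $\alpha = 2^{-\Theta(S)} > 0$. That is what makes it a witness that the $\alpha$-dependence is needed.

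Your closing remark is wrong on this point. For $S \ge 2$, the scale-$1$ interval $(2,4]$ has $B_{\Sigma(2,4]} = 0 < \alpha A_{\Sigma(2,4]}$, so no $\alpha > 0$ satisfies the hypothesis. The value $3^{-S}$ you computed is the ratio only on the contributing prefixes. As it stands, your example says nothing about the $\log(1/\alpha)$ factor, and in particular not that it is tight.

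This is easy to repair by adding a background $3^{-S} a_x$ to every $b_x$. The repair does the following:
\begin{itemize}
\item it keeps $\sum_x b_x = \Theta(1)$;
\item it can only increase $\treerd_{2,S}$, which is linear in $B$ with nonnegative terms;
\item it preserves $B_\Sigma \le A_\Sigma$ on every recursion interval, since $A_\Sigma \ge 4$ on the prefixes;
\item it yields $\alpha = 3^{-S}$, so $\log(1/\alpha) = \Theta(S)$ as you intended.
\end{itemize}
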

\begin{proof}
    We let $B = \{1, 1, \ldots, 1\}$ simply be the length-$n$ sequence of $1$'s. We let $A = \{1,2,2,4,2,4,4,8, \ldots\}$ be the sequence where
    \[
    a_i = 2 ^ {\text{number of 1's in the binary representation of $(i - 1)$}}.
    \]

    Notice that for every recursion interval $(l, r]$, we have
    $\frac{A_{\Sigma(l, (l+r)/2]}}{A_{\Sigma((l+r)/2, r]}} = \frac{1}{2}$. Therefore $\frac{\md_2(A(l, r])}{A_{\Sigma(l, r]}} = 1/3$. Since $B_{\Sigma(l, r]} = r - l$, we have $\rd_2(A(l, r], B(l, r]) = \frac{r - l}{3}$. Thus
    \[\treerd_{2, S}(A, B) = \sum_{s \in [S]}\sum_{(l,r] \in s}{\rd_M(A(l, r], B(l, r])} = \sum_{s \in [S]}{n / 3} = nS / 3 = S\sum_x{b_x} / 3  = \Omega\left(S\sum_x{b_x}\right).\]
\end{proof}

\end{document}